\newcolumntype{C}[1]{>{\centering\arraybackslash}p{#1}} 
\newcolumntype{M}[1]{>{\centering\arraybackslash}m{#1}}
\newcolumntype{N}{@{}m{0pt}@{}}
\newtheorem{theorem}{Theorem}[chapter] 
\newtheorem{corollary}[theorem]{Corollary}
\newtheorem{conjecture}{Conjecture}[chapter]
\newtheorem{definition}[theorem]{Definition}
\newcommand{\reffig}[1]{\mbox{Fig.~\ref{#1}}}
\newcommand{\efb}{E_\text{FB}}
\newcommand{\mI}{\mathbb{I}}
\newcommand{\dH}{\delta H_1}
\newcommand{\vpsi}{\vec{\psi}}
\newcommand{\bpsi}[1]{\bra{\psi_{#1}}}
\newcommand{\kpsi}[1]{\ket{\psi_{#1}}}
\newcommand{\evpsi}[3]{\mel{\psi_{#1}}{#2}{\psi_{#3}}}
\newcommand{\vphi}{\vec{\varphi}}
\newcommand{\bphi}[1]{\bra{\varphi_{#1}}}
\newcommand{\kphi}[1]{\ket{\varphi_{#1}}}
\newcommand{\evphi}[3]{\mel{\varphi_{#1}}{#2}{\varphi_{#3}}}
\newcommand{\kev}[1]{\ket{e_{#1}}}
\newcommand{\PCLS}{\Psi_\text{CLS}}
\newcommand{\EFB}{E_\text{FB}}
\newcommand{\tw}{\tilde{w}}
\begin{document}

\setstretch{1.3}

\nobibliography*
\newcounter{cont}


\thispagestyle{empty}

\begin{center}
	\vspace*{1cm}
	
	\begin{spacing}{2}
		{\large \textbf{Ph.D. Thesis}}
	\end{spacing}
	
	\vspace*{3cm}
	
	{\Huge \textbf{Flatband generators}}
	
	\vfill
	
	{\large Wulayimu Maimaiti}
	
	\vspace{0.7cm}
	
	
	
	{\large Basic Science}
	
	
	
	
	
	\vspace{0.7cm}
	
	{\Large UNIVERSITY OF SCIENCE AND TECHNOLOGY}
	
	\vspace{0.7cm}
	
	
	
	{\large December 2019}
	
\end{center}

\newpage
\thispagestyle{empty}
\mbox{}

\thispagestyle{empty}

\begin{center}
	\vspace*{3cm}

	{\Huge \textbf{Flatband generators}}
	
	\vspace{4cm}
	
	{\large Wulayimu Maimaiti}
	
	\vspace{4cm}
	
	{\large A Dissertation Submitted in Partial Fulfillment of Requirements\\ For the Degree of Doctor of Philosophy}
	
	\vspace{2.5cm}
	
	{\large December 2019}
	
	\vspace{2.5cm}
	
	
	

	{\Large UNIVERSITY OF SCIENCE AND TECHNOLOGY} 
	
	{\large Major of: \textbf{Basic Science}}
	
	\vspace{0.7cm}
	
	
	
	{\large Supervisor: Sergej Flach \\ Co-supervisor: Alexei Andreanov}
	
\end{center}

\newpage
\thispagestyle{empty}
\mbox{}

\thispagestyle{empty}

\begin{center}
	
    \vspace*{1cm}%
    {\huge \textbf{We hereby approve the Ph.D.
    \\ thesis of "Wulayimu Maimaiti".} } \\ %
   \vspace{3cm}
    {\large December 2019}%
      \vspace{3cm} 
      
      \includegraphics[width=1\linewidth]{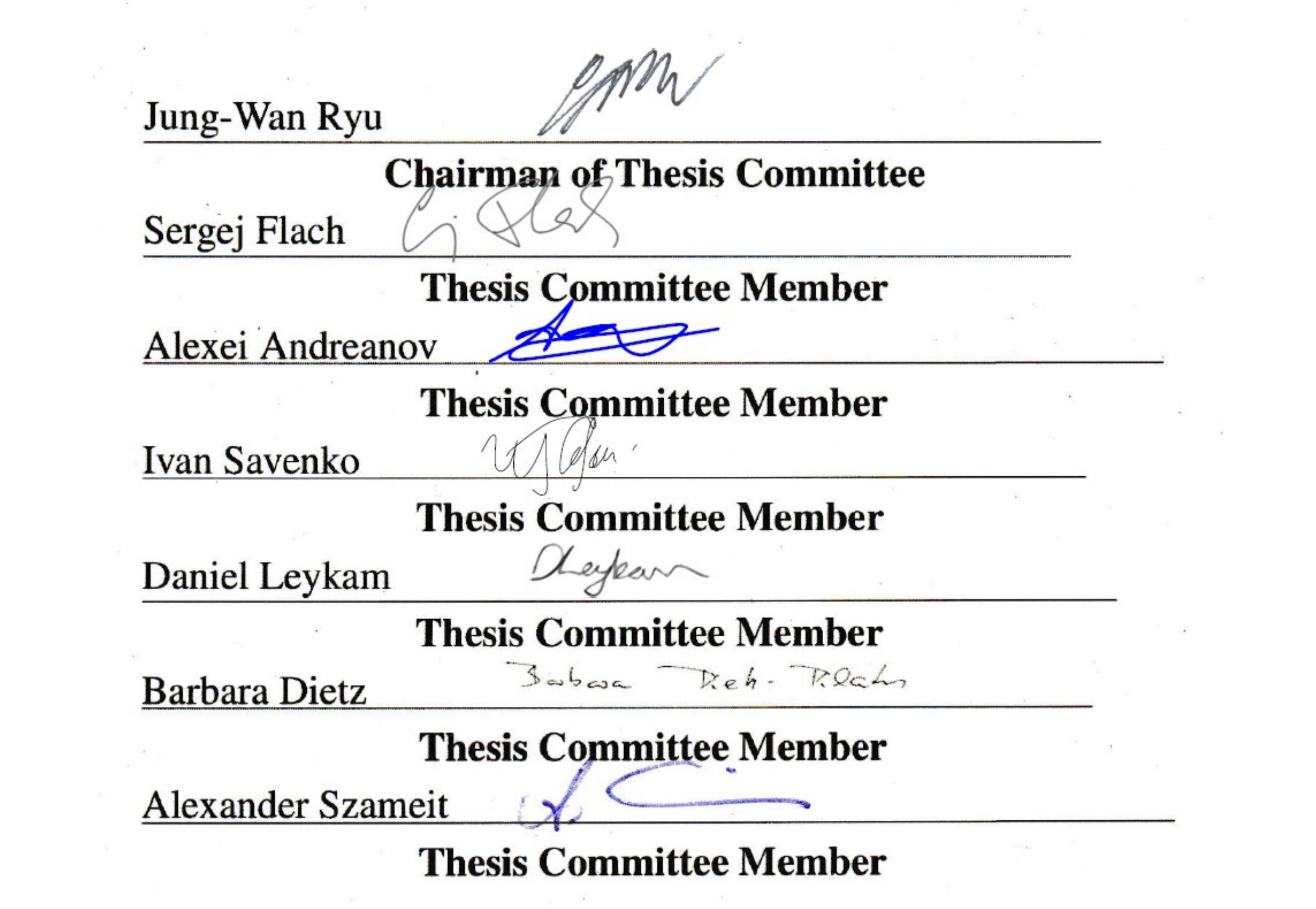}
	
	
	\vfill
	

	{\Large UNIVERSITY OF SCIENCE AND TECHNOLOGY}

\end{center}

\newpage
\thispagestyle{empty}
\mbox{} 

\thispagestyle{empty} 

\vspace*{1cm}

\begin{center}
    {\Huge ACKNOWLEDGEMENT}
\end{center} 

\vspace{2cm}

Professor Sergej Flach and Professor Alexei Andreanov have been great advisors during these years, and I am really fortunate to work with them. I thank them for providing me with nice ideas and great suggestions when I met difficult problems. They showed great persistance in supporting me with patient explanations when I didn't understand some points. The broad vision and accurate hints from Sergej, and the mathematical skills and smart ideas from Alexei were highly valuable to me, while at the same time I was given the freedom to think and work independently.

I thank all my colleagues and friends who supported me during my PhD studies. I am really lucky to work in the Center for Theoretical Physics of Complex Systems, where I received great support from everyone. I had nice discussions with my officemates Dr. Ajith Ramachandran and Dr. Carlo Danieli over the years; these discussions were a great help. I would also like to thank Professor Barbara Dietz for her support during our collaboration.  And I thank Professor Joel Rasmussen for his help to revise my thesis and strengthen the language.

These years of PhD studies have been very challenging for me. I have gone through difficult and stressful times, and during such times, my colleagues and friends gave a lot of support. Sergej always offered great guidance, and I especially thank my friend Dr. Sadiq Seytniyaz who provided me great emotional support during his time in Korea. Their encouragement always gave me energy and courage. 

I acknowledge the financial assistance of the Institute for Basic Science, and the great support and outstanding research environment of the Center for Theoretical Physics of Complex Systems.

At the end, and most importantly, I thank my parents and all my family members for their endless support from the very beginning. I also thank my wife, who has always been my supporter, for all her help during my studies. My family has been my pillar to lean on, and without their support, it would be impossible for me to reach this stage. I ascribe my achievements to my parents and my family. Thank you.

\newpage
\thispagestyle{empty}
\mbox{} 

\vspace*{1cm}

\begin{center}
    {\Huge ABSTRACT} \\ 
    \vspace{2cm} 
    {\large\textbf{Flatband generators}}
\end{center} 

\vspace{2cm} 

Flatbands (FBs) are dispersionless energy bands in the single-particle spectrum of a translational invariant tight-binding network. The FBs occur due to destructive interference, resulting in macroscopically degenerate eigenstates living in a finite number of unit cells, which are called compact localized states (CLSs). Such macroscopic degeneracy is in general highly sensitive to perturbations, such that even slight perturbation lifts the degeneracy and leads to various interesting physical phenomena. 

In this thesis, we develop an approach to identify and construct FB Hamiltonians in 1D, 2D Hermitian, and 1D non-Hermitian systems. First, we introduce a systematic classification of FB lattices by their CLS properties, and propose a scheme to generate tight-binding Hamiltonians having FBs with given CLS properties---a FB generator. Applying this FB generator to a 1D system, we identify all possible FB Hamiltonians of 1D lattices with arbitrary numbers of bands and CLS sizes. Extending the 1D approach, we establish a FB generator for 2D FB Hamiltonians that have CLSs occupying a maximum of four unit cells in a $2\times2$ plaquette. Employing this approach in the non-Hermitiaon regime, we realize a FB generator for a 1D non-Hermitian lattice with two bands. Ultimately, we apply our methods to propose a tight-binding model that explains the spectral properties of a microwave photonic crystal. 

Our results and methods in this thesis further our understanding of the properties of FB lattices and their CLSs, provide more flexibility to design FB lattices in experiments, and open new avenues for future research. 

\vfill

\noindent \underline{\makebox[15.5cm][s]{}} \\ %
   $^\ast${\small A thesis submitted to committee of the University of Science and 
  Technology in a partial fulfillment of the requirement for the degree of
  Doctor of Philosophy conferred in February 2020. } 
  
  \vspace{0.5cm}

\newpage
\thispagestyle{empty}
\mbox{} 

\begin{center}
    \includegraphics[width=1\linewidth]{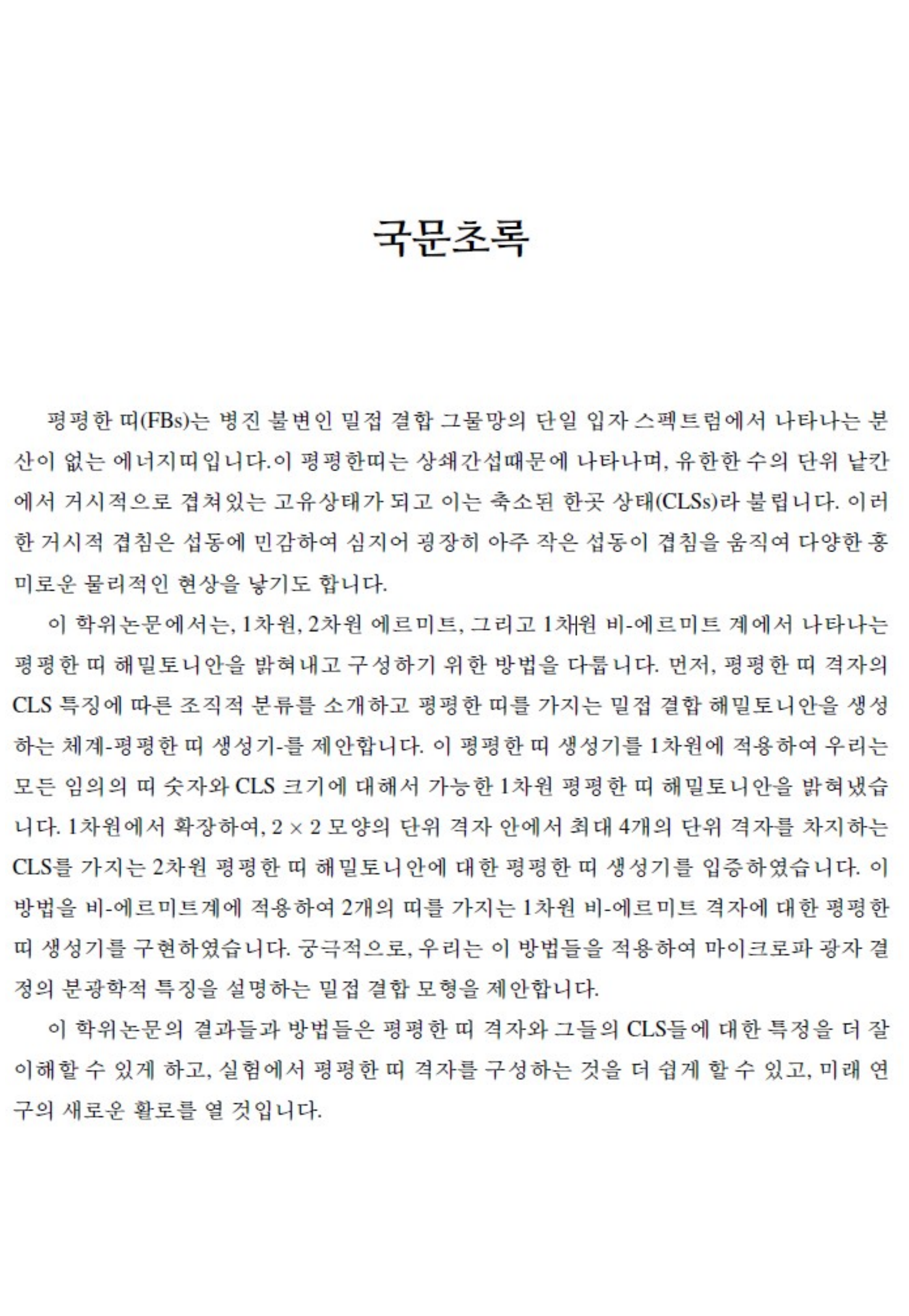}
\end{center}

\newpage
\thispagestyle{empty}
\mbox{} 



\frontmatter
\renewcommand{\chaptermark}[1]{\markboth{#1}{}}
\renewcommand{\sectionmark}[1]{\markright{\thesection\ #1}}
\pagestyle{frontmatter}

\tableofcontents 


\newpage
\thispagestyle{empty}
\mbox{}



\listoffigures
\listoftables



\printnomenclature

\mainmatter
\renewcommand{\chaptermark}[1]{\markboth{#1}{}}
\renewcommand{\sectionmark}[1]{\markright{\thesection\ #1}}
\pagestyle{mainmatter}


\chapter{Motivation and outline}

\ifpdf
    \graphicspath{{Chapter1/Figs/Raster/}{Chapter1/Figs/PDF/}{Chapter1/Figs/}}
\else
    \graphicspath{{Chapter1/Figs/Vector/}{Chapter1/Figs/}}
\fi


Localization is widespread in condensed matter physics, such as Anderson localization in the presence of disorder, states localized to the edges of topological insulators, and so on. In general, disorder is needed to ensure localization, yet there is a special type of localization in translational invariant tight-binding networks where lattice geometry or symmetry leads to destructive interference, which results in the compact localization of wave functions in a finite number of lattice sites. Such states are called \emph{compact localized states} (CLSs), and their most striking property is their ability to generate dispersionless energy bands, i.e. flatbands (FB), which are macroscopically degenerate. 

Systems with macroscopic degeneracies are rare in physical systems, since the high degree of symmetry or fine-tuning needed to support them is easily destroyed by weak perturbations. However, this fragility is the very reason that makes macroscopic degeneracies attractive. Small perturbations in such a fine-tuned system will typically lift the degeneracy and yield uniquely defined eigenstates, thereby leading to exotic physical phenomena that may qualitatively differ for different perturbations (see Fig. \ref{fig:fb-vs-perturbations}). Thus, macroscopic degeneracies could host endpoints of various phase transition lines, and promise rich physics in their neighborhood. Therefore, searching for FB lattice models has been the focus of a wide range of theoretical and experimental research. Nowadays, manufacturing technologies are approaching the realization of such fine-tunings---perhaps not with exact precision, but with some level of control.

\begin{figure}[htb!]
    \centering
    \includegraphics[width=0.7\linewidth]{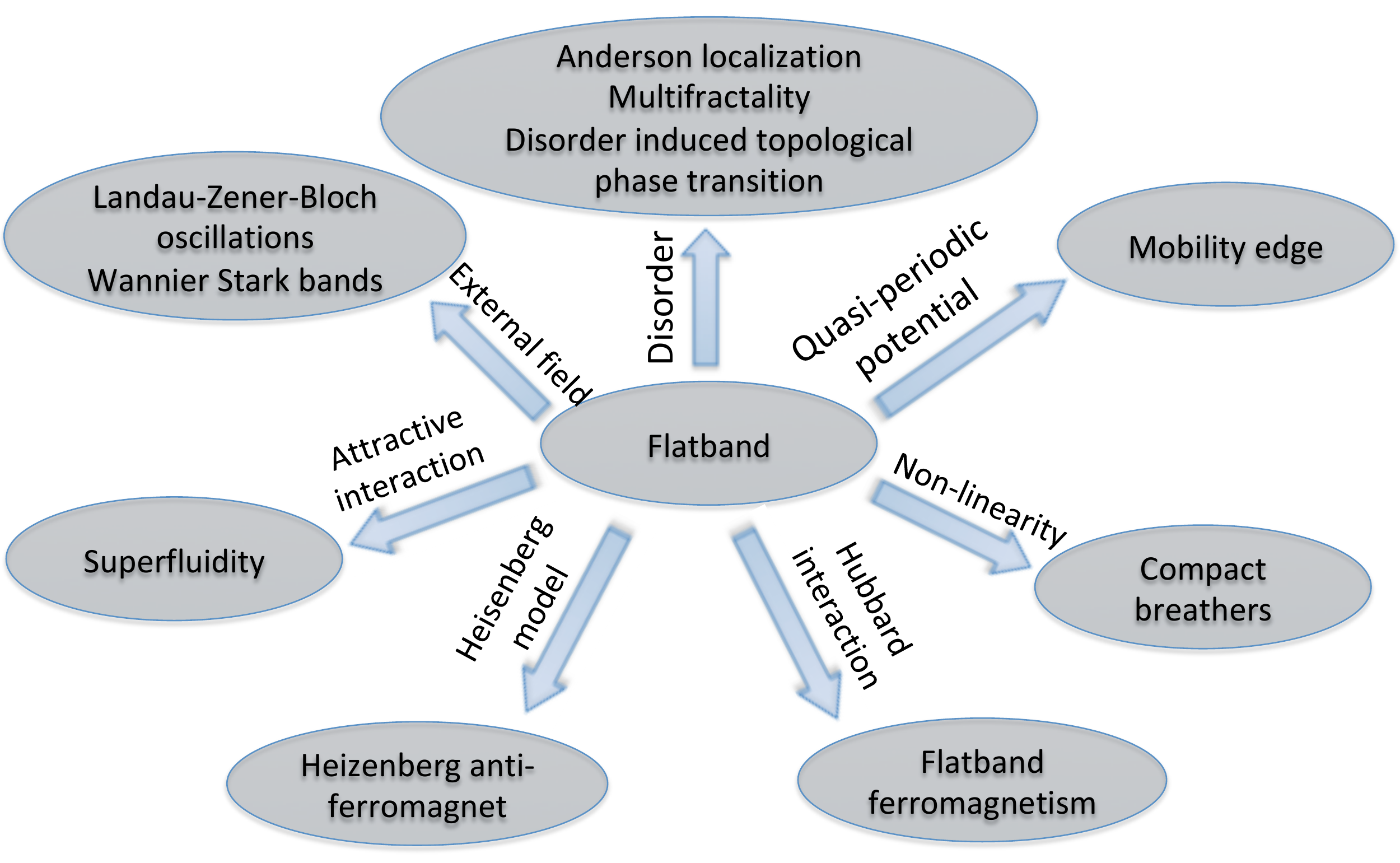}
    \caption[Flatband systems and perturbations]{Various physical phenomena in flatband systems under different perturbations.}
    \label{fig:fb-vs-perturbations}
\end{figure}

Studies of FBs started in the early 1990s, when Mielke showed that a special type of lattices (known as line graphs in mathematics) can have a ferromagnetic ground state if the lowest band is flat~\cite{mielke1991ferromagnetism}. Subsequently, the same type of ferromagnetism associated with FBs was found in further types of lattices by Tasaki ~\cite{tasaki1992ferromagnetism}. Flatband models have also been applied in Heisenberg spin models to show the existence of fully polarized anti-ferromagnetic ground states under an external magnetic field~\cite{Schnack2001independent,schulenburg2002macroscopic,richter2004exact}. Currently, FBs are being studied in various fields of condensed matter physics, and have been experimentally realized in 1D, 2D, and 3D systems (see Fig. \ref{fig:fb-in-various-systems}). For example, on the atomic scale and in the ultra low temperature region, FBs have been shown in cold atoms in optical lattices~\cite{apaja2010flat,taie2017spatial}. In the room temperature regime, FBs are studied theoretically and experimentally in photonic crystal waveguide networks and optical waveguide arrays~\cite{schulz2017photonic,vicencio2015observation,mukherjee2015observation}. Flatbands have also been observed in electronic systems such as superconducting wire networks and nano-engineered atomic lattices on metallic surfaces~\cite{naud2001aharonov,slot2017experimental,drost2017topological}. In cavity quantum electrodynamics (QED) setups, FBs have been realized in exciton--polariton condensates~\cite{baboux2016bosonic,klembt2017polariton}.

\begin{figure}[htb!]
    \centering
    \includegraphics[width=0.7\linewidth]{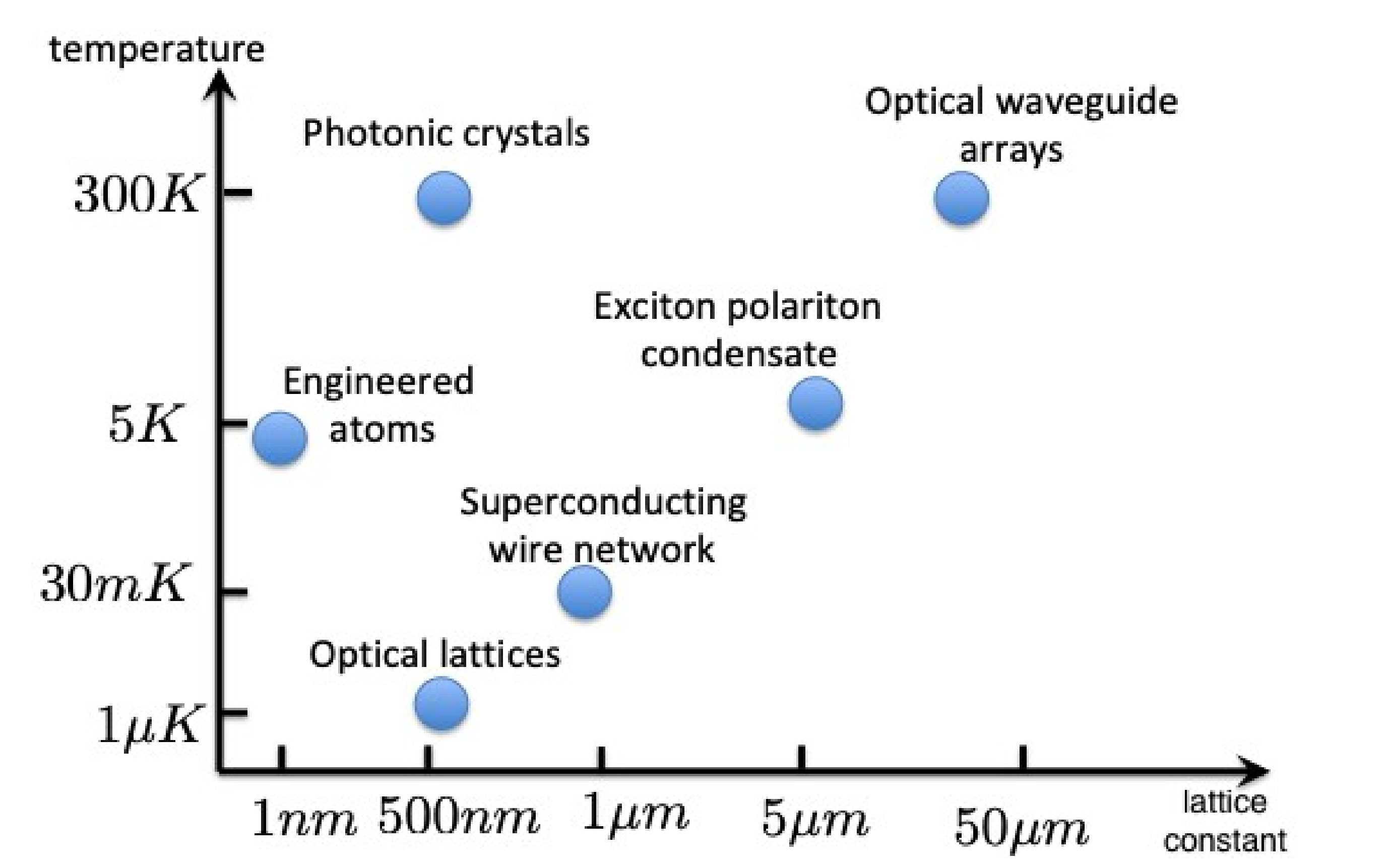}
    \caption[Flatbands in different setups]{Flatbands in various setups on different scales and in different fields. }
    \label{fig:fb-in-various-systems}
\end{figure}

Theoretically, many different approaches have been devised to construct FB lattices, such as Mielke's line graph approach~\cite{mielke1991ferromagnetism}, Tasaki's cell construction method~\cite{tasaki1992ferromagnetism}, origami rules~\cite{dias2015origami}, local unitary transformation~\cite{flach2014detangling}, local symmetry partitioning~\cite{roentgen2018compact}, chiral symmetry~\cite{ramachandran2017chiral}, repetition of mini arrays~\cite{morales2016simple}, etc. Most of these construction methods are limited to either exploiting specific symmetries, like chiral symmetry, or considering specific geometric properties, like those in line graphs. Therefore, known FB models, in most cases, correspond to some highly symmetric points in the parameter space of FB Hamiltonians. By adjusting system parameters though, it becomes possible to achieve FBs without the support of symmetry or special geometry, leading to increased flexibility in both model design and experimental realization. In terms of CLSs, while they play a decisive role in the behaviour of FB models, little is known about how they are linked to lattice properties, or how they determine the response to various perturbations, etc. In this regard, systematically classifying and constructing FB Hamiltonians based on CLSs is highly desirable.

In this thesis, we consider a single particle (non-interacting) Schr\"{o}dinger equation (Hamiltonian) for a discrete translational invariant tight-binding network that possesses at least one FB. We explore the properties of the CLSs and systematically construct FB Hamiltonians that have the required CLS properties. The FB generator is a scheme that generates all possible FB Hamiltonians possessing CLSs with specific sizes and shapes. This involves solving a set of eigenvalue problems subjected to some constraints.

In Chapter \ref{chapter2}, we introduce the basic concepts used in the rest of the thesis. Starting with a discrete translational invariant system, we review the tight-binding model, Bloch theorem, and band structure. Then, using a simple 1D example, we elucidate the origins of FBs, CLSs, and macroscopic degeneracy. Next, we cover the existing FB construction methods and discuss the related applications and experimental realizations. 

In Chapter \ref{chapter3}, we introduce a method to classify FB lattices according to their CLSs. Then we show various properties of CLSs, and propose a block matrix representation of tight-binding Hamiltonians, through which we mathematically formulate the conditions for the existence of CLSs. Based on these conditions, we present our core idea of FB generation. 

In Chapter \ref{chapter4}, we study FB generators in 1D systems. Starting from a two-band problem, we show the full parameterization of FB Hamiltonians and identify the parameter space in which we can obtain FBs. Moving to a higher number of bands, we introduce an inverse eigenvalue method that yields analytic and numerical solutions for FB Hamiltonians. 

In Chapter \ref{chapter5}, we extend the 1D FB generator to 2D. We introduce a classification scheme for CLSs that occupy a maximum of 4 unit cells. Borrowing the methods from the 1D case, we derive analytic solutions for FB Hamiltonians for different CLS classes. Our results cover all known examples, and a vast amount of new models can be generated from our solutions. 

In Chapter \ref{chapter6}, we extend the idea of FB generation to the non-Hermitian regime, where we only consider a 1D non-Hermitian lattice with two bands. Using $k$-space analysis, we achieve the full parameterization of non-Hermitian FB Hamiltonians with both complete and partial FBs, as well as bands in which the modulus is flat.

In Chapter \ref{chapter7}, we apply our methodologies for FB generation to the design of a tight-binding model that explains the experimental results from a microwave photonic crystal. We develop a numerical algorithm to fit the experimental data, and explain the singularity in the experimentally observed spectrum using the FB in our tight-binding model. 

We conclude the thesis with a summary of our results along with their significance, before discussing interesting open problems and future research directions.

\chapter{Introduction: Flatbands in discrete systems}
\label{chapter2}

\ifpdf
    \graphicspath{{Chapter2/Figs/}{Chapter2/Figs/PDF/}{Chapter2/Figs/}}
\else
    \graphicspath{{Chapter2/Figs/}{Chapter2/Figs/}}
\fi

Systems with translational symmetry are common throughout physics, for example periodic arrangements of atoms in a solid crystal. A single particle in such a system can be described by a Schr\"{o}dinger equation with periodic potential, as  
\begin{equation}
    \left( - \frac{\hbar^2 \nabla^2}{2 m} + V(\vec{r}) \right) \psi(\vec{r}) = E \psi(\vec{r}),
    \label{eq:sch-eq-per-pot}
\end{equation} 
where the potential $V$ is periodic with periodicity $\vec{R}$, such that
\begin{equation}
    V(\vec{r}+\vec{R})=V(\vec{r}) \;.
\end{equation} 
For a 1D system, $\vec{R}$ is an integer multiple of the smallest periodicity. For higher dimensions, $\vec{R}$ is a vector sum of the integer multiples of the smallest periodicities in each spacial dimension. Periodic systems are described by a Bloch wave~\cite{Ashcroft76,Kittel2004,singleton2001band}, which is the solution of Eq. \eqref{eq:sch-eq-per-pot}.

In Bloch representation, the energies of the system depend on the wave vector, which is termed dispersion relation and forms an infinite number of energy bands. In practice though, physicists are interested in finite numbers of bands that can be addressed separately from the rest of the bands (e.g. gapped bands). When a finite number of bands is considered, the system is well-described by the tight-binding model~\cite{Ashcroft76,slater1954simplified}. In this model, particles are considered to be localized around a discrete set of periodically arranged points in space, which is called a lattice. In general, lattices have translational invariance, with the periodically arranged points referred to as lattice sites. We focus on a special type of lattice that possesses dispersionless energy bands, which we call flatband (FB) lattices. 

This thesis is dedicated to identifying FB Hamiltonians in discrete translational invariant systems under tight-binding approximation, i.e. lattice systems. In this chapter, we cover the fundamentals of band structure in such systems as well as FBs. We begin with the tight-binding model and the Bloch theorem in Section \ref{section2.2}, before turning to FB lattices in Section \ref{section2.3} where we introduce FBs, compact localized states (CLSs), and macroscopic degeneracy. Then, in Section \ref{section2.4}, we briefly describe FB construction methods, and in Section \ref{section2.5} we review applications of FBs in various fields, from solid-state physics to photonics. The chapter is summarized in Section \ref{section2.6}.


\section{Tight-binding model for discrete systems}
\label{section2.2} 

\nomenclature[FB]{FB}{Flatband}   

\nomenclature[CLS]{CLS}{Compact localized states}
\nomenclature[$U$]{$U$}{Size of compact localized states}

\nomenclature[$\mathbf{U}$]{$\mathbf{U}$}{Shape vector of a compact localized state}   

\nomenclature[$\nu$]{$\nu$}{Number of sites per unit cell (number of bands)}   
\nomenclature[$\mu$]{$\mu$}{Band index}  

\nomenclature[CFB]{CFB}{Chiral flatband}



Generally, in discrete translational invariant lattice systems, the lattice sites are well separated such that the following approximation is commonly applied.

\begin{definition}
\emph{Tight-binding approximation}: When the overlaps of the wave functions of all neighboring sites are small, the overlaps can be neglected. As a result, the wave functions are considered to be well-localized around their lattice sites.
\label{def:tb-aprox}
\end{definition}

%
%

In this approximation, the wave functions live in discrete points in space, and thus can be labeled by the index of their lattice sites.



\subsection{Tight-binding model}

The time-independent Schr\"{o}dinger equation for a translational invariant lattice is written as 
\begin{equation}
    H \vert \Psi \rangle = E \vert \Psi \rangle 
    \label{eq:t-indp-schr-eq},
\end{equation}
where $H$ is the Hamiltonian matrix of the system and $\vert \Psi \rangle$ is a state vector. If there are $N$ lattice sites, then $H$ is an $N \times N$ matrix and $\vert \Psi \rangle$ is an $N$-component vector.


As previously stated, the overlaps between the wave functions of neighboring sites are zero in the tight-binding approximation. This allows us to introduce a basis vector
\begin{equation}
    \vert n \rangle = \begin{pmatrix} \vdots\\ 0 \\ 1 \\ 0 \\ \vdots \end{pmatrix} ,
\end{equation} 
which represents the particle living in the $n$th lattice site, where $1$ is the $n$th element of $\vert n \rangle$. Therefore, $\vert n \rangle$ is a {\bf position eigenstate}, whose dimension is equal to the total number of sites in the lattice. Then the set $\{\vert n \rangle \}$ forms a complete basis of the Hilbert space of the lattice as
\begin{equation}
    \langle n \vert m \rangle = \delta_{n,m}\ , \quad \sum_n \vert n \rangle \langle n \vert = 1.
\end{equation}
Since the wave functions are localized in the tight-binding approximation, transport only occurs via hoppings between neighboring lattice sites. 
Hopping can be understood as a quantum tunneling effect, and is defined as a matrix element of the tight-binding Hamiltonian:
\begin{equation}
    t_{n m} = \langle n \vert H \vert m \rangle. 
\end{equation} 
This equation gives the hopping strength between the $n$th and $m$th sites, where $H$ is the Hamiltonian of the lattice. Therefore, the Hamiltonian in Eq. \eqref{eq:t-indp-schr-eq} can be written in tight-binding form as 
\begin{equation}
    H = \sum_{n} \epsilon_{n} \vert n \rangle \langle n \vert + \sum_{n,m} t_{n m } \vert n \rangle \langle m \vert \; ,
    \label{eq:TB-Hamiltonian}
\end{equation} 
where $\epsilon_n=\langle n \vert n \rangle$ is the onsite energy of the $n$th lattice site. In this tight-binding Hamiltonian, if only nearest neighbor hoppings are considered, $m$ runs over all nearest neighboring sites.


We can also write the state vector $\vert \Psi \rangle$ of the entire lattice as linear combinations of the basis $\vert \psi_n \rangle$ as
\begin{equation}
    \vert \Psi \rangle =  \sum_n \phi_n \vert n \rangle = \frac{1}{\sqrt{N}} \begin{pmatrix} \vdots\\ \phi_{n-1}\\ \phi_{n}\\ \phi_{n+1}\\ \vdots \end{pmatrix}, \quad \sum_n \vert \phi_n \vert^2 = 1
    \label{eq:state-vec}
\end{equation} 
where $\phi_n $ is the wave function of the $n$th site. Thus, the probability of finding a particle in the $n$th site is $\vert \phi_n \vert^2$.

Putting Eqs. \eqref{eq:TB-Hamiltonian} and \eqref{eq:state-vec} into Eq. \eqref{eq:t-indp-schr-eq}, we get single-particle wave function $\phi_n$ at the $n$th site that satisfies
\begin{equation}
    \epsilon_n \phi_{n} + \sum_{m} t_{n m} \phi_{m} = E \phi_{n} 
    \label{eq:TB-eig-prob}.
\end{equation} 

If there are $\nu$ sites per unit cell, it is convenient to label each site with a unit cell index. The basis vectors then become 
\begin{equation}
    \vert n, j \rangle = \begin{pmatrix} \vdots\\ 0 \\ 1 \\ 0 \\ \vdots \end{pmatrix},\quad \langle n,j \vert m,j^\prime \rangle = \delta_{n,m}\delta{j,j^\prime}, \quad \sum_{n,j} \vert n,j \rangle \langle n,j \vert = 1 \;,
\end{equation} 
which represents the particle living in the $j$th site of the $n$th unit cell. Then it is helpful to introduce a state vector for a single unit cell, as
\begin{equation}
    \vert \Phi_n \rangle = \sum_{j} \phi_{n,j} \vert n,j \rangle = \begin{pmatrix} \vdots\\ 0 \\ \vec{\psi}_n \\ 0 \\ \vdots \end{pmatrix}\;, 
    \label{eq:single-cell-state-vec}
\end{equation}
where the single unit cell wave function $\vec{\psi}_n$ is a $\nu$-component vector 
\begin{equation}
    \vec{\psi}_n = \begin{pmatrix} \phi_{n,1} \\ \phi_{n,2} \\ \vdots \\ \phi_{n,\nu} 
    \end{pmatrix} \; ,
    \label{eq:single-uc-wf}
\end{equation} 
whose component $\phi_{n,j=1,\dots,\nu}$ is the wave function of the $j$th site in the $n$th unit cell. As before, then the probability of finding the particle in the $j$th site of the $n$th unit cell is $\vert \phi_{n,j} \vert^2$. 

The set $\{ \vert \Phi_n \rangle \}$ now forms the orthonormal basis 
\begin{equation}
    \langle \Phi_n \vert \Phi_m \rangle = \delta_{n,m}, \quad \sum_n \vert \Phi_n \rangle \langle \Phi_n \vert = 1 \;. 
    \label{eq:orth-nrm-cond-uc-rep}
\end{equation}
The state vector of the entire lattice can be written in terms of the single unit cell state vectors $\vert \Phi_n \rangle$ as
\begin{equation}
    \vert \Psi \rangle =  \sum_n \vert \Phi_n \rangle = \begin{pmatrix} \vdots\\ \vec{\psi}_{n-1}\\ \vec{\psi}_{n}\\ \vec{\psi}_{n+1}\\ \vdots \end{pmatrix} \; .
    \label{eq:state-vec-uc-rep}
\end{equation} 
Then, the tight-binding Hamiltonian reads 
\begin{equation}
    H = \sum_{n,j} \epsilon_j \vert n,j \rangle \langle n,j \vert  + \sum t_{n,j,m,j^\prime} \vert n,j \rangle \langle m,j^\prime \vert, 
    \label{eq:TB-Ham-uc-rep}
\end{equation}
where 
\begin{equation}
    t_{n,j,m,j^\prime} = \langle n,j \vert H \vert m,j^\prime \rangle \; 
\end{equation}
is the hopping between the $j$th site of the $n$th unit cell and the $j^\prime$th site of the $m$th unit cell. Note that $t_{n,j,n,j^\prime}$ is intracell hopping, or the hoppings between sites inside the same unit cell. 

Putting Eqs. \eqref{eq:state-vec-uc-rep} and \eqref{eq:TB-Ham-uc-rep} into Eq. \eqref{eq:t-indp-schr-eq}, we arrive at the eigenvalue problem for the tight-binding model:
\begin{equation}
    \epsilon_j \phi_{n,j} + \sum_{m,j^\prime} t_{n,j, m,j^\prime} \phi_{m,j^\prime} = E \phi_{n,j} , \quad j=1,\dots,\nu.
    \label{eq:TB-eig-prob-1}
\end{equation} 
We will use this notation extensively throughout the thesis.

\subsection{Bloch theorem and band structure} 

The Bloch theorem states that the solution of the Schr\"{o}dinger equation \eqref{eq:t-indp-schr-eq} for discrete translational invariant systems with $\nu$ sites per unit cell can be written as Bloch waves (plane waves multiplied by a periodic function with a periodicity equaling lattice periodicity) as 
\begin{equation}
    \vec{\psi}_n (\vec{k}) = \vec{u}(\vec{k}) e^{i \vec{k} \cdot \vec{R}_n}
    \label{eq:bloch-func},
\end{equation} 
where $\vec{\psi}_n$ and Bloch function $\vec{u}(\vec{k})$ are $\nu$-component vectors, $\vec{R}_n$ is the lattice translation vector for the $n$th unit cell, and $\vec{k}$ is the wave vector. The Bloch function is invariant under lattice translations, i.e. $\vec{u}(\vec{k},\vec{R}_0)=\vec{u}(\vec{k},\vec{R}_0+\vec{R}_n)$. 

Since single-site wave functions form a complete basis, the extended Bloch function can be approximated by a linear combination of single-site wave functions. For simplicity, we take a lattice with a single site per unit cell, then  
\begin{equation}
  u (\vec{k}) = \frac{1}{\sqrt{A}} \sum_n \phi_n e^{i \vec{k} \cdot \vec{R}_n}
  \label{eq:bloch-func-r-space-rep},
\end{equation} 
where $\vec{R}_n$ is the lattice translation vector to the $n$th site, $A$ is a normalization factor, and $\phi_n$ is the wave function of the $n$th site. Note that Eq. \eqref{eq:bloch-func-r-space-rep} is a Fourier transform of $\phi_n$. Then, $\phi_n$ can also be written as an inverse Fourier transform of $u (\vec{k})$ as
\begin{equation}
    \phi_n = \frac{1}{\sqrt{B}} \sum_{\vec{k}} u (\vec{k}) e^{- i \vec{k} \cdot \vec{R}_n}, 
    \label{eq:phi_n-k-representation}
\end{equation}
where $B$ is a normalization factor. Putting Eq. \eqref{eq:phi_n-k-representation} into Eq. \eqref{eq:TB-eig-prob} and canceling out the normalization factor and $\sum_{\vec{k}} u (\vec{k}) e^{- i \vec{k} \cdot \vec{R}_n}$, we get 
\begin{equation}
    E (\vec{k}) = \epsilon + \sum_{m} t_{n m} e^{-i \vec{k} \cdot \vec{R}_{m-n}}
    \label{eq:dispersion-generic},
\end{equation} 
where $\epsilon$ is onsite energy, $\vec{R}_{m-n}=\vec{R}_m -\vec{R}_n$, and we use $e^{-i \vec{k} \cdot \vec{R}_{m}}=e^{-i \vec{k} \cdot \vec{R}_{n}} e^{-i \vec{k} \cdot \vec{R}_{m-n}}$.  In  Eq. \eqref{eq:dispersion-generic}, $E(\vec{k})$ for all $\vec{k}$s forms the {\bf energy band} of the lattice. For example, if we have a 1D lattice with nearest neighbor hopping $t$, then Eq. \eqref{eq:dispersion-generic} gives
\begin{equation}
    E (\vec{k}) = \epsilon + 2 t \cos(k).
\end{equation}  

When there are multiple $\nu$ sites per unit cell, one can also rewrite Eqs. \eqref{eq:bloch-func-r-space-rep} and \eqref{eq:phi_n-k-representation} in terms of the single unit cell state vector from Eq. \eqref{eq:single-cell-state-vec}. In this case, the Bloch function for each band also becomes a $\nu$-component vector, $\vec{u}(\vec{k})$. The Bloch function $\vec{u}_{\mu=1,\dots,\nu}(\vec{k})$ of the $\mu$th band can be written as a Fourier transform of single unit cell wave functions by
\begin{equation}
  \vec{u}_{\mu} (\vec{k}) = \frac{1}{\sqrt{A}} \sum_{n,j} \vec{\psi}_{n} e^{i \vec{k} \cdot \vec{R}_n} \;,
  \label{eq:bloch-func-r-space-rep-1}
\end{equation} 
where $\nu$-component vector $\vec{\psi}_n$ is the single unit cell wave function of the $n$th unit cell. Now, $\vec{\psi}_n$ can be written as an inverse Fourier transform of $\vec{u}_{\mu} (\vec{k})$,
\begin{equation}
    \vec{\psi}_n = \frac{1}{\sqrt{B}} \sum_{\vec{k}} u_{\mu} (\vec{k}) e^{- i \vec{k} \cdot \vec{R}_n}, \quad \mu=1,\dots,\nu.
    \label{eq:phi_n-k-representation-1}
\end{equation}
Putting Eq. \eqref{eq:phi_n-k-representation-1} into Eq. \eqref{eq:TB-eig-prob-1}, we obtain the following equation for every $\vec{k}$: 
\begin{equation}
    E_{\mu} (\vec{k}) u_{\mu j}(\vec{k}) = \sum_j \epsilon_j u_{\mu j}(\vec{k}) + \sum_{m,j^\prime} t_{n,j, m,j^\prime} u_{\mu j^\prime}(\vec{k}) e^{-i \vec{k} \cdot \vec{R}_{m-n}}, \quad j=1,\dots,\nu \;,
    \label{eq:disc-cent-eq}
\end{equation}
where $u_{\mu j} (\vec{k})$ is the $j$th component of $\vec{u}_{\mu} (\vec{k})$, $\epsilon_j$ is onsite energy of the $j$th site in a unit cell, and $\vec{R}_{m-n} = \vec{R}_m - \vec{R}_n$. The eigenvalue $E_{\mu} (\vec{k})$ of Eq. \eqref{eq:disc-cent-eq} for all $\vec{k}$ forms the $\mu$th energy band of the lattice, with $\mu=1,\dots,\nu$. 

For finite lattices, only the set of wave vectors $\vec{k}$ is allowed, determined by Born--von Karman periodic boundary conditions, which are written as 
\begin{equation}
    \vec{\psi}_n(\vec{k})=\vec{u}(\vec{k}) e^{i \vec{k} \cdot \vec{R}_n}=\vec{u}(\vec{k}) e^{i (\vec{k} \cdot \vec{R}_n + N_j \vec{a}_j) } \;,
\end{equation} 
where $\psi_n(\vec{k})$ is the Bloch wave, $j$ runs over the dimensions of the lattice, $\vec{a}_j$ are the primitive vectors of the lattice, and $N_j$ are integers (assuming the lattice has $N$ cells where $N=N_1 N_2 N_3$). This implies that 
\begin{equation}
    e^{i N_j \vec{k} \cdot \vec{a}_j} = 1 \;,
\end{equation}
and so the allowed wave vectors are 
\begin{equation}
	\label{eq:allowed-ks}
    	\vec{k} = 2\pi \sum_{j=1}^3 \frac{m_j}{N_j a_j}, \quad m_j=1,2,\dots N_j  \; .
\end{equation}
The set of allowed $\vec{k}$s forms a space called reciprocal space or \emph{$k$-space}, which is also defined as the Fourier transform of a direct lattice. As it can be seen from Eq.\eqref{eq:allowed-ks}, there are as many number of allowed $\vec{k}$s as the number of unit cells in the lattice.

\section{Flatbands, compact localized states, and macroscopic degeneracy}  
\label{section2.3} 


Flatbands are dispersionless energy bands ($E_{\mu}(k)=const$) of translational invariant tight-binding networks, which are the consequence of destructive interference. In this section, we take a 1D cross-stitch lattice as an example to show how FBs are formed.


We start from the most trivial case of isolated lattice sites, i.e. all hoppings are zero. If the unit cell contains a single site, then each site is described by the following Schr\"{o}dinger equation, 
\begin{equation}
    H_{a} \phi_a = \epsilon \phi_a,
    \label{eq:single-site-sch-eq}
\end{equation} 
where $\epsilon$ is onsite energy and $\phi_a$ is the local wave function. The Hamiltonian of this system is a diagonal matrix with a highly degenerate eigenvalue $\epsilon$. Therefore, the whole system has only one energy $E=\epsilon$ that gives a dispersionless energy band---a flatband. Note that this is the only possibility where one can get a FB in a single band system. Similarly, lattices with isolated unit cells with $\nu$ sites per unit cell give $\nu$ FBs. For example, Fig. \ref{fig:trivial-fb} (a) shows a lattice with isolated unit cells of two sites, giving two trivial FBs. When the unit cells get closer and the hoppings between them are non-zero due to tunneling, as in Fig. \ref{fig:trivial-fb} (b), the bands generally become dispersive. 

\begin{figure}[htb!]
    \centering
    \includegraphics[width=0.6\columnwidth]{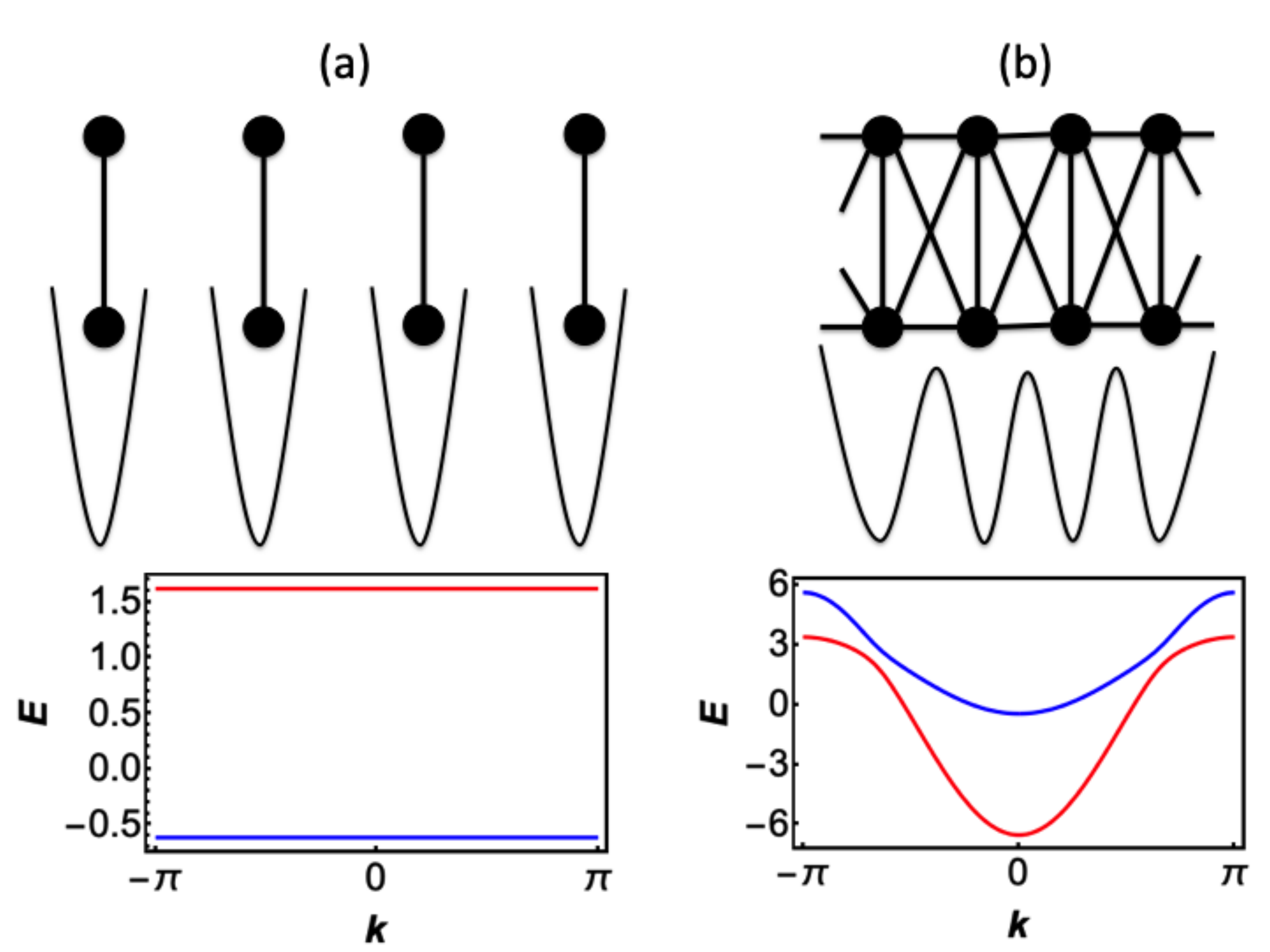}
    \caption[Lattices with isolated unit cells and interacting unit cells]{Lattices and corresponding band structures for (a) isolated unit cells and (b) interacting unit cells. The curved lines below the lattices illustrate the potential around the sites. In (a), the band structure is obtained with $\epsilon=0,1$ and intracell hopping $1$. In (b), the band structure is obtained with $\epsilon=0,1$, intra-cell hoppings $1$, horizontal hoppings $-2$, and diagonal hoppings $1$. }
    \label{fig:trivial-fb}
\end{figure}

Besides this trivial case of isolated unit cells, there are lattices with non-zero hoppings between unit cells that still have FBs. In such systems, due to lattice geometry or symmetry, wave functions hopping from neighboring sites possess opposite phases with equal amplitudes, and so interfere destructively at certain sites---this is called \emph{destructive interference}. As a result, the wave functions are trapped in a finite number of lattice sites and are strictly zero elsewhere (see Fig. \ref{fig:cross-stich-fb}).  Such strictly localized states are called {\bf compact localized states} (CLS), which cause net hoppings to vanish, giving an analogous effect of isolated unit cells. We explain this situation in a simple example below.


\begin{figure}[htb!]
    \centering
    \includegraphics[width=0.8\linewidth]{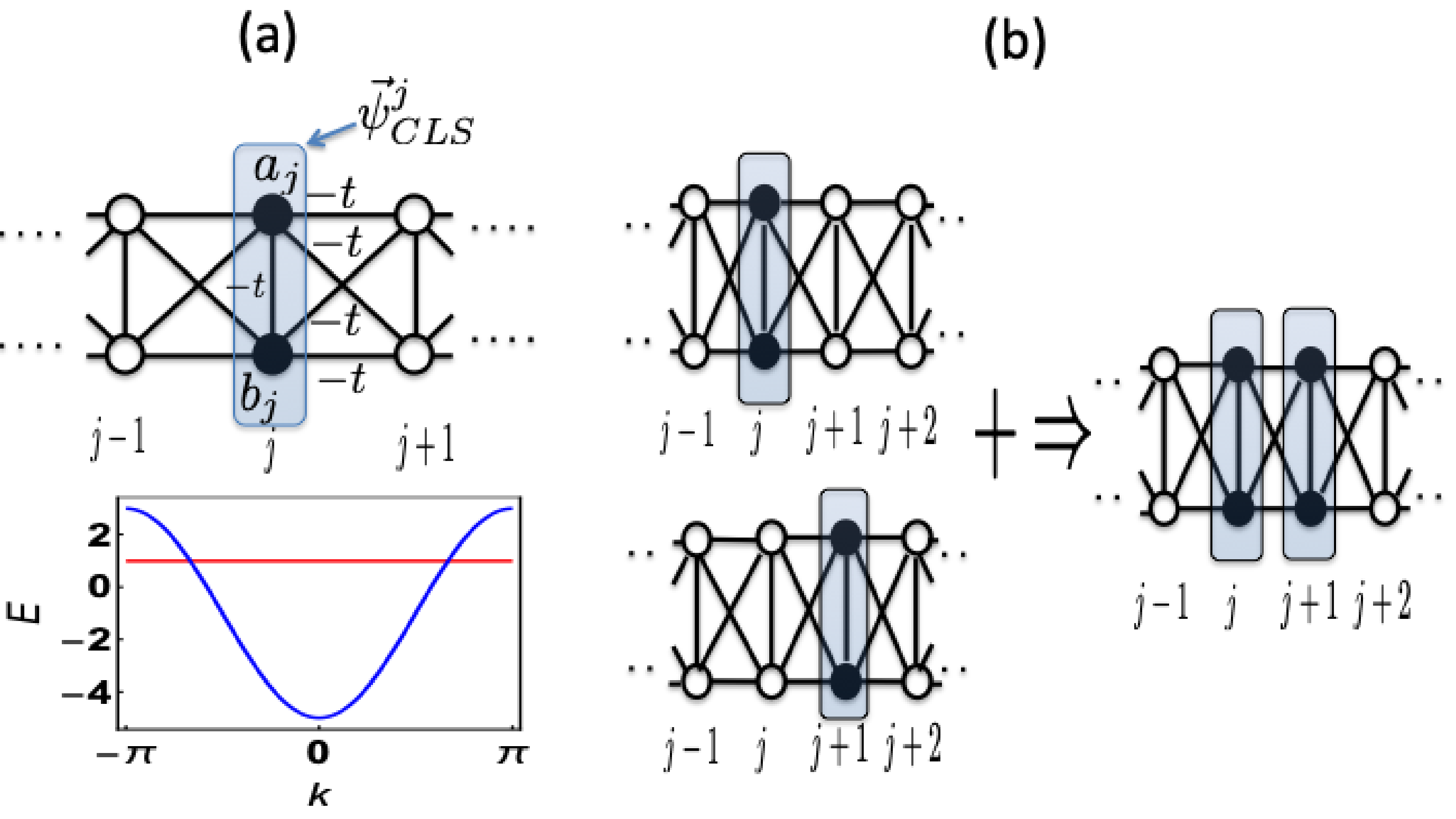}
    \caption[Cross-stitch lattice with one flatband]{A cross-stitch lattice with one FB. The filled circles show the locations of the CLSs, and the empty circles show the sites with destructive interference where the wave function is strictly zero. $j$ is the unit cell index, and $a_j,b_j$ are the wave amplitudes at each lattice site inside the $j$th unit cell. (a) The CLS is located in the $j$th unit cell, and its band structure corresponds to $t=1, a_j=-b_j=1$. (b) Superposition of two lattice translations of the CLS in (a). }
    \label{fig:cross-stich-fb}
\end{figure}

 Consider a 1D cross-stitch lattice, as shown in Fig. \ref{fig:cross-stich-fb} (a). If we set onsite energies to zero and tune the hoppings as
 \begin{equation}
     t_{nm}=-t \delta_{|n-m|,1} \;,
 \end{equation}
then Eq. \eqref{eq:TB-eig-prob-1} gives the following eigenvalue problem
\begin{equation}
\begin{aligned}
    E a_j &= \epsilon_j^a a_j - t a_{j-1} - t a_{j+1} - t b_{j-1} - t b_{j+1} - t b_j,\\
    E b_j &= \epsilon_j^a b_j - t b_{j-1} - t b_{j+1} - t a_{j-1} - t a_{j+1} - t a_j.
\end{aligned}
\label{eq:cross-stich-fb-eig-prob}
\end{equation}
This has one FB at energy $E_{FB} = t$. As shown in Fig. \ref{fig:cross-stich-fb} (a), one possible eigenstate of the tight-binding Hamiltonian corresponding to this FB energy is 
\begin{equation}
    \vert \Psi_j \rangle = \begin{pmatrix} \vdots \\ 0 \\ 1 \\ -1 \\ 0 \\ \vdots \end{pmatrix} =  \begin{pmatrix} \vdots\\ 0 \\ \vec{\psi}_{CLS}^j \\ 0 \\ \vdots \end{pmatrix},
    \label{eq:cross-stich-u1-eig-st}
\end{equation} 
where $\vec{\psi}_{CLS}^j=(a_j,b_j)^T=(1,-1)^T$, and index $j$ refers to the location of the non-zero amplitudes at the $j$th unit cell. 

In Fig. \ref{fig:cross-stich-fb} (a), the CLS (in Eq. \eqref{eq:cross-stich-u1-eig-st}) is located in the $j$th unit cell and is the consequence of destructive interference in the neighboring sites. More precisely, the sum of the hopping amplitudes from $a_j,b_j$ to the neighboring sites is zero, i.e. $a_{j+1} = -t a_j - t b_j = 0$, which indicates that wave amplitudes at the $j$th unit cell must satisfy $a_j=-b_j$ in order to have a CLS. 

The lattice translations of the CLS in Eq. \eqref{eq:cross-stich-u1-eig-st} are also solutions of Eq. \eqref{eq:cross-stich-fb-eig-prob}; i.e. for $\forall j$, these lattice translations are eigenstates of the tight-binding Hamiltonian \eqref{eq:TB-Ham-uc-rep} corresponding to $E_{FB}=t$. Equation \eqref{eq:t-indp-schr-eq} then reads
\begin{equation}
    H \vert \Psi_j \rangle = E_{FB} \vert \Psi_j \rangle \rightarrow \quad \text{for}\ \forall j.
\end{equation} 


The superposition of all lattice translations of $\vert \Psi_j \rangle$ is also an eigenstate of the tight-binding Hamiltonian corresponding to $E_{FB}=t$ (see Fig. \ref{fig:cross-stich-fb} (b)), as
\begin{equation}
    H \vert \Psi \rangle = E_{FB} \vert \Psi \rangle,\quad \vert \Psi \rangle = \sum_j c_j \vert \Psi_j \rangle.
    \label{eq:degen-eig-st}
\end{equation} 
Therefore, the whole system has one $k$-independent energy at $E_{FB}=t$, which forms the FB. 

Now suppose there are $N$ unit cells in this example lattice, giving $N$ lattice translations of the CLS which are all eigenstates of the FB. The superpositions of these copies are also eigenstates, thereby forming \emph{macroscopic degeneracy}.


We note that a given FB eigenstate could be a linear combination of smaller eigenstates, as shown in Eq. \eqref{eq:degen-eig-st}. However, one can always define an irreducible eigenstate for a given FB lattice that cannot be decomposed into the superposition of smaller eigenstates. For example, in the above 1D example, Eq. \eqref{eq:cross-stich-u1-eig-st} is an irreducible eigenstate. In the next chapter, we will discuss this irreducibility in detail, as it determines most of the properties of FB lattices.  

More generally, the parameters in the example shown in Fig. \ref{fig:cross-stich-fb} are not the only parameters that give a FB in a cross-stitch lattice. For example, let's suppose the horizontal hoppings are $t_1$, diagonal hoppings are $t_2$, and $t_1 \ne t_2$. In order to achieve destructive interference in the $a_{j+1}$ site, we need $t_1 a_j + t_2 b_j=0$. In the same way, we can achieve destructive interference at all other neighboring unit cells such that the state is strictly localized at the $a_j,b_j$ sites (i.e. the $j$th unit cell). Therefore, FB networks are \emph{fine-tuned} such that we can control the hoppings as well as the wave amplitudes to achieve destructive interference and thus the FB. This is one of the reasons why we aim in this thesis at identifying all possible FB Hamiltonians for a given type of lattice. For example, in Chapter \ref{chapter4} we identify all possible 1D FB Hamiltonians with two bands.

Above, we took a simple 1D example to explain the origin of FBs, which naturally extends to higher lattice dimensions, where FBs also originate from the same mechanism (destructive interference). Our explanations on the concepts of CLSs, macroscopic degeneracy, and fine-tuning also apply to higher lattice dimensions.


\section{Flatband construction methods} 
\label{section2.4} 

In this section, we briefly review existing methods for designing FB lattices. The central idea in modeling FB lattices is to achieve destructive interference that supports CLSs. 

\subsection{Geometrical methods}

One direct approach to achieving destructive interference is through lattice geometry. This approach is shown through the following four examples.

\begin{figure}[htb!]
    \centering
    \includegraphics[width=0.8\linewidth]{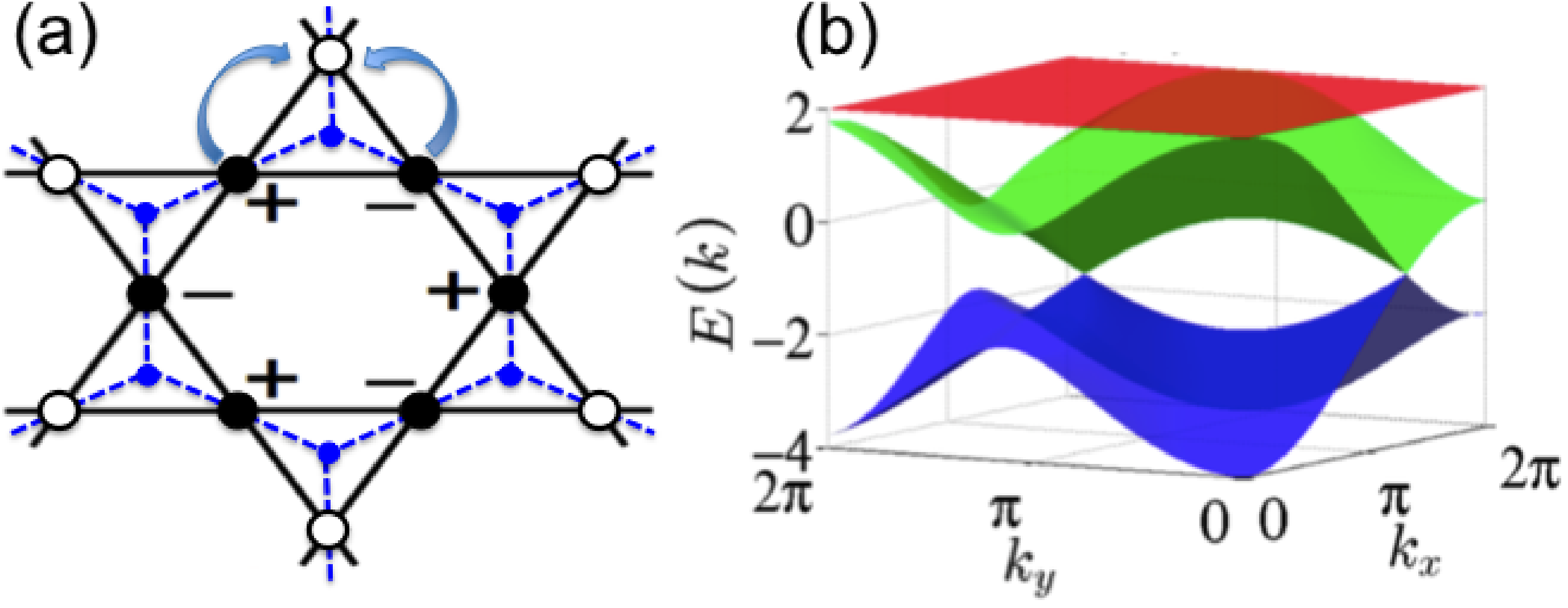}
    \caption[Kagome lattice]{(a) Kagome lattice and (b) its band structure. The original honeycomb lattice is shown in blue dotted lines and blue circles, and the kagome lattice is shown in black solid lines and circles. Filled and empty black circles correspond to lattice sites with non-zero and zero wave amplitudes, respectively. The non-zero amplitudes of the CLS are located in an elementary cycle with alternating "+" and "--" signs, and arrows indicate the two wave functions interfering destructively. The band structure has one FB (red). }
    \label{fig:linegraph-kagome}.
\end{figure}

\begin{itemize}
    \item {\bf Line graph approach:} The line graph is a special type of mapping of an original lattice, which is basically a bond-site exchange. More precisely, given an original lattice, a line graph is obtained by first assigning a site for each of the bonds in the original lattice, and then connecting the sites in the line graph if the corresponding bonds in the original lattice share a common site. Mielke~\cite{mielke1991ferromagnetic,mielke1991ferromagnetism,mielke1991ferromagnetic,mielke1993ferromagnetism} pointed out that destructive interference is a common feature of line graphs. Compact localized states are located in a vertex-disjoint elementary cycle with alternating signs of amplitudes~\cite{mielke1992exact,motruk2012bose}, which gives destructive interference (see Fig. \ref{fig:linegraph-kagome}). The vertex-disjoint cycles in graph theory are cycles that do not share common sites. From 1991--1993, Mielke~\cite{mielke1991ferromagnetism,mielke1991ferromagnetic,mielke1993ferromagnetism} showed that, in line graphs with Hubbard interaction, these CLSs can form a highly degenerate ferromagnetic ground state for a certain electron density. One typical example is a 2D Kagome lattice, which is a line graph of a honeycomb lattice, as shown in Fig.  \ref{fig:linegraph-kagome}. 
    
    \item {\bf Cell construction:} As proposed by Tasaki \cite{tasaki1992ferromagnetism,tasaki2008hubbard,tasaki1998fbferromagnetism}, cell construction starts from an elemental cell consisting of a single internal site and two or more external sites. These cells are then assembled to form a lattice by sharing the external sites, where destructive interference occurs. One typical example of the cell construction method is a 1D sawtooth chain, in which the elemental cell is a triangle, as shown in Fig. \ref{fig:cell_const_sawtooth}. These triangular elemental cells are arranged in such a way that they share the external sites with neighboring elemental cells.
    \begin{figure}[htb!]
        \centering
        \includegraphics[width=0.6\linewidth]{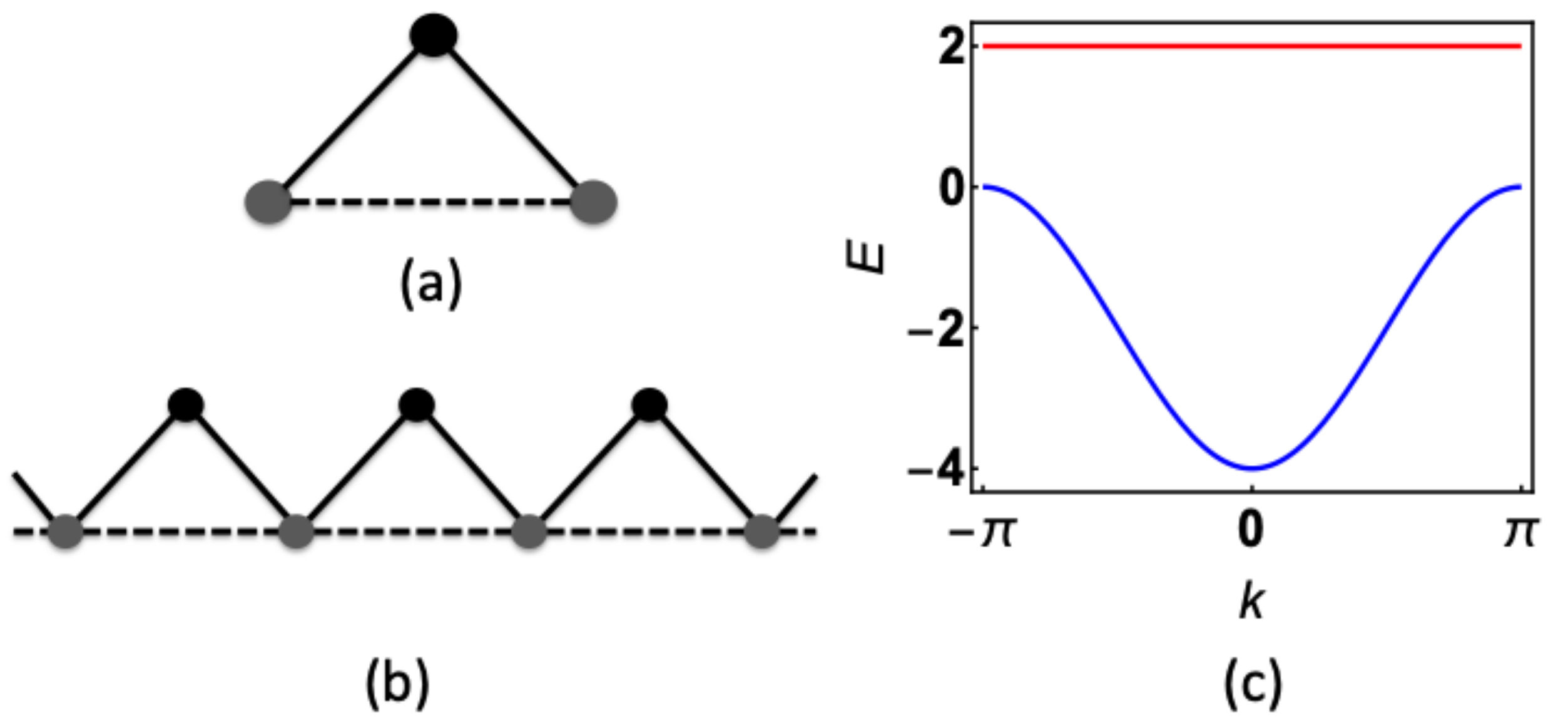}
        \caption[Example of a cell construction method -- sawtooth chain]{(a) Elemental cell, (b) 1D sawtooth chain, and (c) its band structure. The elemental cell has two external sites (gray circles) and one internal site (black circle). The sawtooth chain is constructed by assembling elemental cells with shared external sites (gray circles). When onsite energies are the same and the ratio between diagonal hoppings (solid lines) and baseline hoppings (dashed line) is $\sqrt{2}$, the sawtooth lattice has one FB, as shown in (c).}
        \label{fig:cell_const_sawtooth}
    \end{figure}
    
    \item {\bf Origami rules:} Traditionally a technique to fold paper into beautiful shapes, origami was popularized in mathematics by Fus\`{e} through "modular" or unit origami  \cite{fuse1990unit,fuse1998fabulous}. In 2015, Dias et al.~\cite{dias2015origami} introduced a method to construct localized eigenstates in a Hubbard model using origami rules. In this method, they started from a plaquette or a set of plaquettes with a higher symmetry than that of the whole lattice, with compact localized modes living on these plaquettes that result in FBs. Then a simple set of rules were applied to divide, fold, or unfold the tight-binding localized states in such plaquette(s) to new plaquette geometries.
    
    \item {\bf Repetition of miniarrays:} In 2016, Morales-Inostroza et al.~\cite{morales2016simple} introduced a simple way to construct FB lattices using a repetition of miniarrays. This method starts from a miniarray and consecutively adds new miniarrays using a connector site to form 1D and 2D lattices having single or multiple FBs. Using lattice geometry and by tuning the hoppings and wave functions of each site, destructive interference is achieved at the connector sites.
\end{itemize}


A common feature among the above methods is that destructive interference takes place at the corner sites in lattices composed of corner-sharing triangles. As a result, CLSs surrounded by these triangles are formed that give rise to FBs. 

\subsection{Flatband generation from local unitary transformations} 
\label{section2.4.2}

For FB lattices with a CLS occupying the $U=1$ unit cell, it is always possible to detangle FB eigenstates from the dispersive part of the spectrum~\cite{flach2014detangling}. Reverting this procedure leads to one of the most generic FB construction methods for a CLS occupying the $U=1$ unit cell---the entangling method~\cite{flach2014detangling}. More precisely, the method starts from a lattice with isolated sites in each unit cell, then applies a unitary transformation to couple these sites to the rest of the lattice. This procedure preserves the FB that is formed by these isolated sites~\cite{flach2014detangling}. Following Ref. \cite{flach2014detangling}, we present an example to illustrate this method. 

\begin{figure}[htb!]
    \centering
    \includegraphics[width=0.6\linewidth]{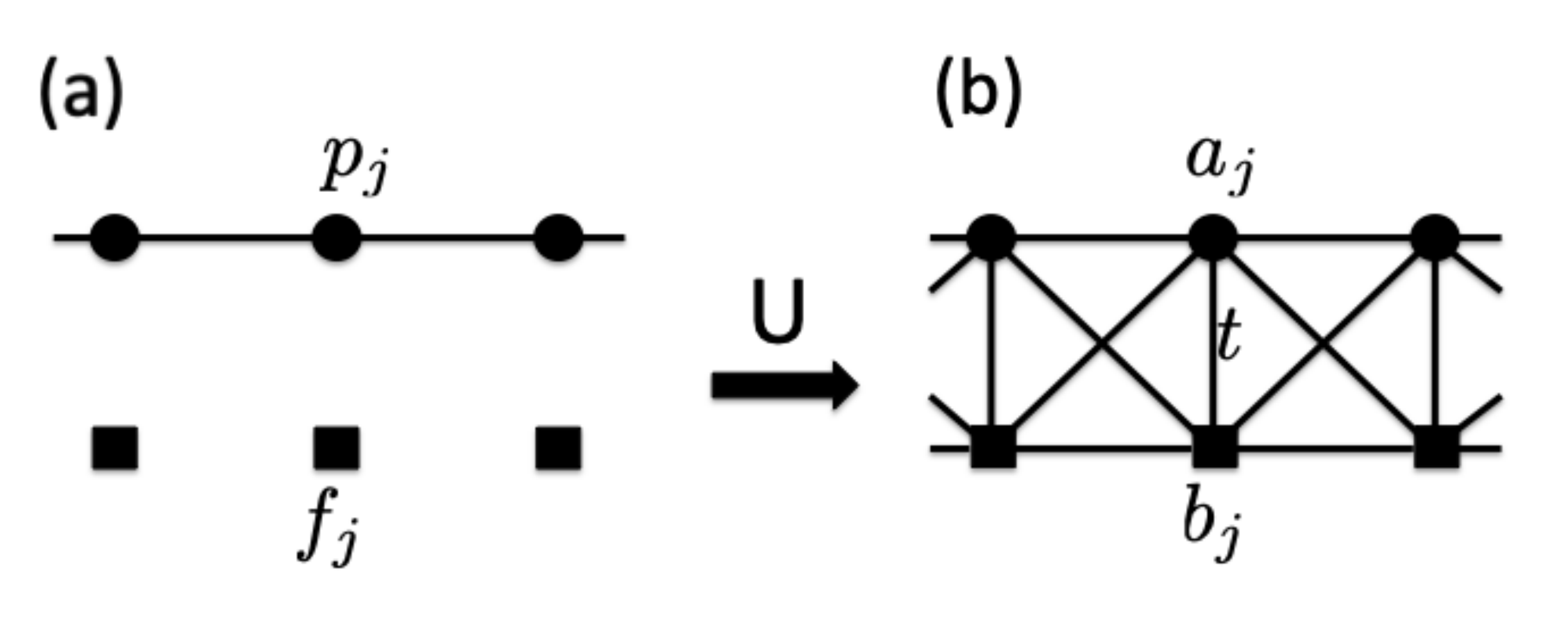}
    \caption[Coupling an isolated lattice to a dispersive chain]{Coupling (a) an isolated site to a dispersive chain to achieve (b) a cross-stitch lattice with a FB.}
    \label{fig:fano-ent-cross-stich}
\end{figure}

Consider a 1D chain with one isolated site nearby each lattice site in the chain, as shown in Fig. \ref{fig:fano-ent-cross-stich} (a). If we set the hoppings between nearest neighboring sites to be $-1$, then the eigenvalue problem reads 
\begin{equation}
\begin{aligned}
    E p_j &= \epsilon_p p_j - p_{j-1} - p_{j+1},\\
    E f_j &= \epsilon_f f_j \;.
\end{aligned}
\label{eq:u1-cross-stich-det}
\end{equation} 
It is convenient to introduce the following hopping matrices, 
\begin{equation}
    H_0 = \begin{pmatrix} \epsilon_p & 0 \\ 0 & \epsilon_f \end{pmatrix}, \quad H_1 = \begin{pmatrix} -1 & 0 \\ 0 & 0 \end{pmatrix}, \quad \vec{\psi}_j = \begin{pmatrix} p_j \\ f_j \end{pmatrix},
\end{equation}
where $H_0$ gives the hoppings and onsite energies inside the unit cell, $H_1$ gives the hoppings between nearest neighboring unit cells, and $\vec{\psi}_j$ is wave function of the $j$th unit cell. The eigenvalue problem in Eq. \eqref{eq:u1-cross-stich-det} becomes 
\begin{equation}
    H_0 \vec{\psi}_j + H_1^\dagger \vec{\psi}_{j-1} + H_1 \vec{\psi}_{j+1} = E \vec{\psi}_j.
    \label{eq:u1-cross-stich-det-mat-form}
\end{equation} 
The lattice described by this equation has one dispersive band formed by the connected sites and one FB formed by the isolated sites, as 
\begin{equation}
    E_{FB} = \epsilon_f,\quad E(k) =\epsilon _p - 2 \cos (k)\;.
    \label{eq:det-cross-stitch-band}
\end{equation}

We can apply unitary transformation to the system, which keeps band structure unchanged. More precisely, if we apply local rotation in the space $\{p_n,f_n\}$,
\begin{equation}
\begin{aligned}
    \tilde{H}_0 &= \hat{U}  H_0  \hat{U}^\dagger,\quad \tilde{H}_1 = \hat{U}  H_1  \hat{U}^\dagger,\quad  \tilde{\psi}_j = \hat{U} \vec{\psi}_j\\
    \hat{U} &= \begin{pmatrix} \cos (\theta) & -\sin (\theta) \\
                              \sin (\theta) & \cos (\theta)        
            \end{pmatrix}\;,
\end{aligned} 
\label{eq:cross-stich-rotation}
\end{equation}
the eigenvalue problem after unitary transformation reads, 
\begin{equation}
    E \tilde{\psi}_j = \tilde{H}_0 \tilde{\psi}_j + \tilde{H}_1^\dagger \tilde{\psi}_{j-1} + \tilde{H}_1 \tilde{\psi}_{j+1},
    \label{eq:rot-crsstch-eig-prob}
\end{equation}
which gives the same band structure as in Eq. \eqref{eq:det-cross-stitch-band}. 

If we choose $\theta = \nicefrac{\pi}{4}$, we have 
\begin{equation}
    \tilde{H}_0 = \frac{1}{2} \begin{pmatrix} 
 \left(\epsilon _f+\epsilon _p\right) &  \left(\epsilon _p-\epsilon _f\right) \\
  \left(\epsilon _p-\epsilon _f\right) &  \left(\epsilon _f+\epsilon _p\right)  \end{pmatrix} , \quad 
 \tilde{H}_1 = \frac{1}{2} \begin{pmatrix}
 -1 & -1 \\ -1 & -1  \end{pmatrix} , \quad 
 \tilde{\psi}_j = \frac{1}{\sqrt{2}} \begin{pmatrix} p_j+f_j \\ p_j - f_j \end{pmatrix} \; .
 \label{eq:rotated-crsstch-hopp-mat}
\end{equation} 
If we make the following variable replacements,
\begin{equation}
    \begin{aligned}
        a_j &= \frac{1}{\sqrt{2}}(p_j + f_j), \quad b_j = \frac{1}{\sqrt{2}}(p_j - f_j) , \\
        \epsilon &= \epsilon_f + \epsilon_p \ \quad t = \epsilon_p - \epsilon_f \; ,
    \end{aligned}
\end{equation} 
we get a cross-stitch lattice as shown in Fig. \ref{fig:fano-ent-cross-stich} (b).

This procedure transformed a 1D chain with isolated sites into a cross-stitch lattice using local unitary operator $U$, while keeping the same band structure, and thus the FB. A similar procedure can be applied to other lattices with CLS size $U=1$;
this procedure can also be generalized to any dimension, as described below. 

\begin{figure}[htb!]
    \centering
    \includegraphics[width=0.6\linewidth]{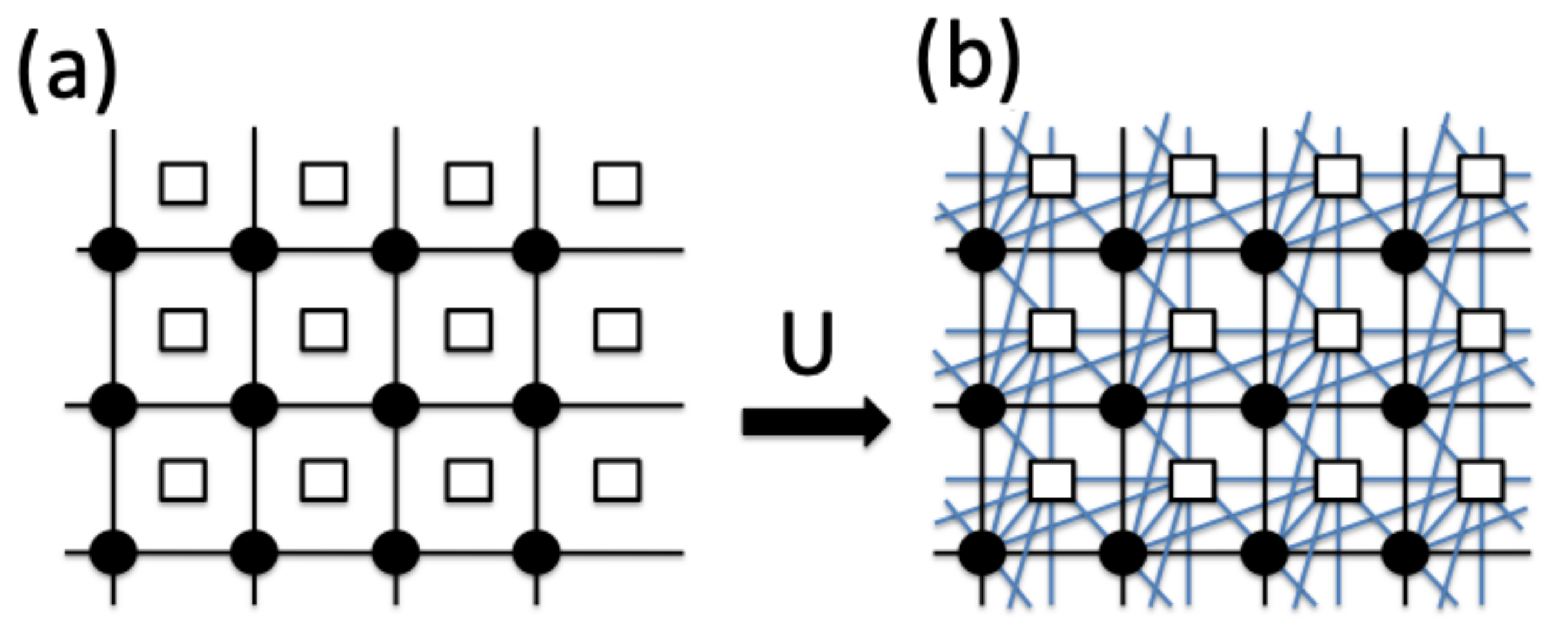}
    \caption[Generalization of the entangling procedure]{Generalization of the entangling procedure. (a) A 2D square lattice with isolated sites, and (b) rotated version. Bonds show only the connectivity, not the actual values. This figure is taken from \cite{flach2014detangling}.}
    \label{fig:fano-ent-gen}
\end{figure}

{\bf Entangling method for flatband construction}: Consider a $d$-dimensional dispersive lattice with $m$ bands ($m$ sites per unit cell), then assign $n$ decoupled sites to each unit cell. By applying local unitary transformation, we can couple these isolated sites to the remaining sites and form a lattice with $n$ FBs, which has a CLS of size $U=1$~\cite{flach2014detangling}. The graphical illustration of the simplest transformation in 2D with $m=n= 1$ is shown in Fig. \ref{fig:fano-ent-gen} (b).  

This method is the most generic method to construct FB lattices with a CLS occupying a single unit cell. For CLS sizes larger than 1, one cannot decouple the FB into isolated sites because the CLSs do not form an orthogonal basis~\cite{flach2014detangling}; this is discussed in the next chapter. Therefore, the inverse procedure (entangling procedure) is not possible for CLS size $U\ge2$.
In later chapters, we introduce other methods for constructing FB lattices with CLS size $U \ge 2$.

\subsection{Flatband construction from symmetry}
\label{section2.4.3} 

As mentioned previously, in most known cases, the existence of FBs depends on special lattice geometry or symmetry, and we have already reviewed the lattice geometry-based FB construction methods. In this section, we discuss symmetry-based FB construction methods.

\subsubsection*{Chiral symmetry}

Bipartite lattices host FBs that are protected by chiral sysmmetry, and a systematic way to construct chiral FBs has been proposed by Ramachandran et al.~\cite{ramachandran2017chiral}. Based on this work, we discuss this chiral FB-generating principle.

{\bf Bipartite lattice}: A lattice consisting of two sublattices A and B, such that lattice sites in A are coupled only to sites in B, and vice versa. 

{\bf Chiral symmetry}: In lattice systems, chiral symmetry refers to both particle-hole and time-reversal symmetry. This concept originated from quantum chromodynamics (QCD)~\cite{schwinger1957theory,schwinger1967chiral,kantor1968chiral}. The consequence of chiral symmetry is: if $\{ \psi^A, \psi^B \}$ is an eigenvector of eigenenergy $E$, then there is an eigenvector $\{ \pm \psi^A, \mp \psi^B \}$ corresponding to eigenenergy $-E$.

A bipartite lattice with chiral symmetry is called a chiral lattice. From Lieb's theorem~\cite{lieb1989two}, it can be inferred that chiral lattices with an odd number of bands always possess at least one chiral FB at energy $E=0$. More precisely, if the number $N_A$ of sublattice A sites is larger than the corresponding number $N_B$ of sublattice B sites, then there are at least $N=|N_A - N_B|$ states $\{ \psi^A, 0\}$ at energy $E=0$~\cite{lieb1989two,sutherland1986localization}, which only live in sublattice A. The sublattice A (B) with larger (smaller) number of sites is called the majority (minority) sublattice, with non-zero amplitudes of the eigenstates of energy $E=0$ occupying the majority sublattice only. This result leads to a systematic classification of chiral FBs. 

Consider a $d$-dimensional translational invariant bipartite lattice with odd number $\nu=\mu_A+\mu_B$ of sites per unit cell, with $\mu_A$ sites belonging to sublattice A and $\mu_B$ sites belonging to sublattice B. Suppose sublattice A is the majority sublattice and B is the minority sublattice, such that $1<\mu_B<\mu_A<\nu$. The $\mu_A$ sites in any unit cell are only connected to the remaining $\mu_B$  sites (some of them possibly belonging to other unit cells) by nonzero hopping terms $t_{lm}$. Then, the eigenvalue problem reads
\begin{equation}
    E \psi_l^{A,B} = -\sum_m t_{lm} \psi_m^{B,A},
    \label{eq:bipart-eig-eq}
\end{equation}
 and chiral symmetry leads to a symmetric band structure, i.e. any band $E_{\mu}(k)$ which does not cross $E=0$ is either positive or negative, and has a symmetry-related partner band $E_{\mu^\prime}(k)=-E_{\mu}(k)$. Due to the odd number of bands, there is at least one band without a symmetric partner band. Therefore, under chiral symmetry action, this \emph{unpaired} band must transform into itself, becoming a FB at energy $E=0$. 
 
Since $\nu$ is odd, the difference in the number of sites on sublattices A and B is $\Delta N = N_{uc} (2 \mu_A - \nu) \ne 0$, where $N_{uc} = L^d$ is the number of unit cells, and $L$ is the linear dimension. This indicates a macroscopic degeneracy at $E = 0$, which is possible only when there are $(2 \mu_A - \nu)$ FBs at $E =0$. Classification of chiral FBs by the imbalance of minority and majority sites follows from this observation, and can be used to generate chiral FBs~\cite{ramachandran2017chiral}. While this work considered Hermitian systems, the concepts can apply to non-Hermitian systems as well. 
 
 
 \subsubsection*{Local symmetry partitioning} 
 
Schmelcher et al.~\cite{roentgen2018compact} proposed a framework to design CLSs using local symmetry properties of a discrete Hamiltonian, which can be used to construct tunable FB lattices in 1D and 2D. In this framework, the Hamiltonian of the system can be block partitioned using two recent theorems from graph theory~\cite{barrett2017equitable,francis2017extensions,fritscher2016exploring}. Such partitioning of the Hamiltonian is induced by the symmetry of a given system under local site permutations. Based on Ref.~\cite{roentgen2018compact}, we briefly discuss this local symmetry partitioning method for constructing FB lattices.

Consider a three-site system as shown in Fig. \ref{fig:local-sym-part} (a). The Hamiltonian of this system is invariant under the permutation of sites 2 and 3. When the wave functions at sites 2 and 3 have opposite phases, they interfere destructively at site 1, leading to the localization of wavefunctions at sites 2 and 3. If an arbitrary system is connected to site 1, as shown in Fig. \ref{fig:local-sym-part} (b), the wave functions still localize at sites 2 and 3. The eigenvalue corresponding to this localized state in the enlarged system is the same as the original three-site system. Sites 2 and 3 together are called a symmetric subsystem, and site 1 with the attached system is called a non-symmetric subsystem. Then, the Hamiltonian of the system can be partitioned into blocks. 
\begin{figure}[htb!]
	\centering
	\includegraphics[width=0.8\linewidth]{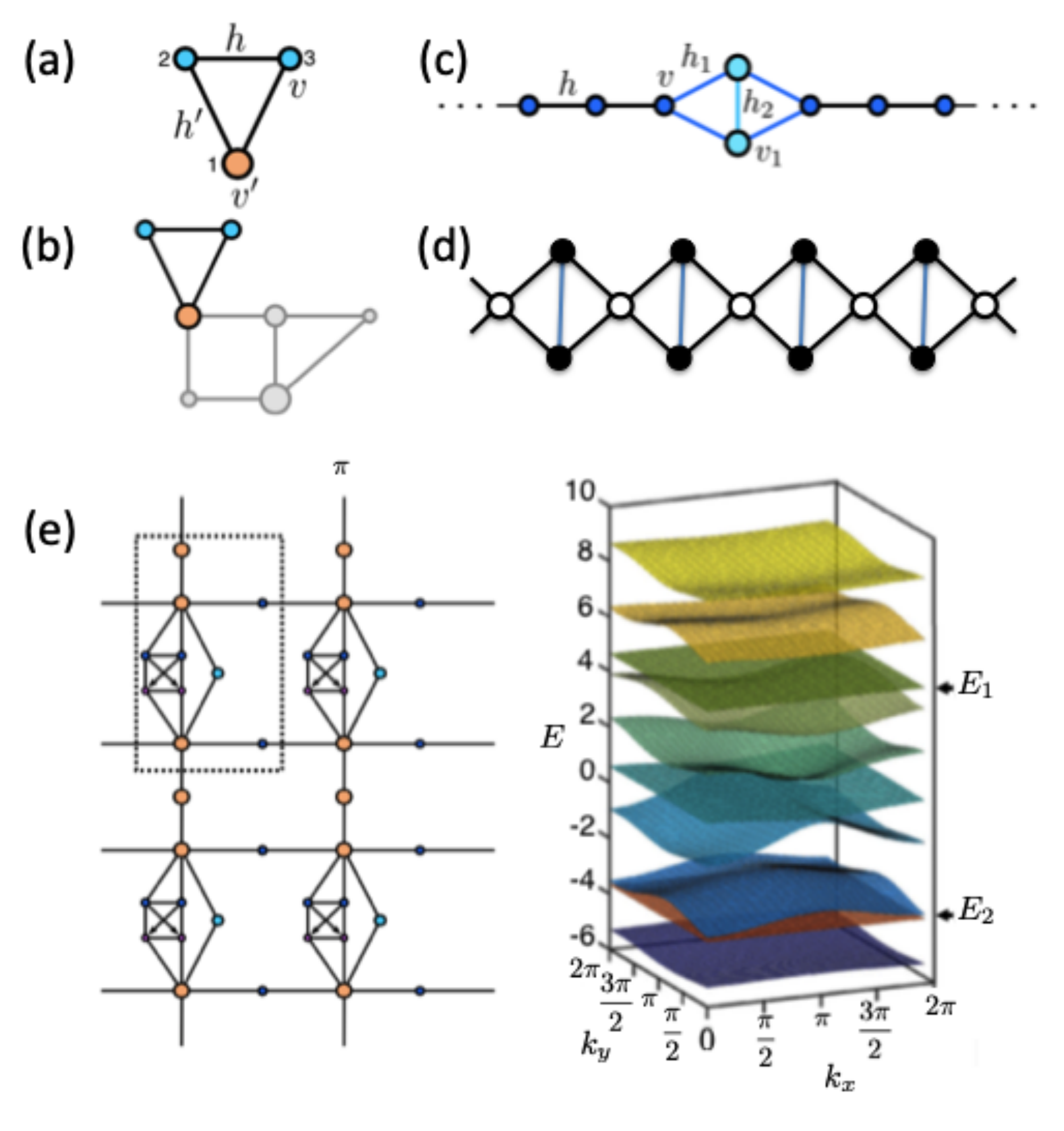}
	\caption[Flatband construction by local symmetry partitioning]{Flatband construction by local symmetry partitioning. (a) A three-site system, which is symmetric under the permutation of sites 2 and 3. (b) The extended system with an arbitrary subsystem (gray) attached to site 1 (which is fixed under permutation). (c) A dispersive chain locally perturbed by a symmetric dimer defect with indicated onsite and hopping elements. (d) A diamond chain formed by periodically adding dimers to the dispersive chain in (c). (e) A 2D FB lattice constructed by the local symmetry partitioning method and its band structure having two FBs at $E_1,\ E_2$. Figures (a)--(c) and (e) are taken from Ref.~\cite{roentgen2018compact}. }
	\label{fig:local-sym-part}
\end{figure}

Now, if a dimer is added to a dispersive chain, with the hoppings and onsite energies as shown in Fig. \ref{fig:local-sym-part} (c), such that the dimer has permutation symmetry, there will be a CLS living in the dimer. Then the system Hamiltonian can be partitioned into blocks corresponding to the symmetric part (dimer) and the non-symmetric part (dispersive chain). When the system parameters are tuned such that the spectrum of the non-symmetric part of the Hamiltonian coincides with the spectrum of the original dispersive chain, the spectrum of the system consists of the band structure of the unperturbed chain augmented by the energy of the CLS. Since, in the present case, the CLS does not interact with the extended states, one can add such dimers periodically to form a lattice, as shown in Fig. \ref{fig:local-sym-part} (d), whose spectrum consists of a FB from the eigenenergy of the CLS.

As discussed in Ref.~\cite{roentgen2018compact}, the local symmetry (permutation) operators can be identified as commutative and non-commutative, depending on the commutation of these operators with the system Hamiltonian. As in the above case, generally, the Hamiltonians of such systems with local symmetry can be partitioned into blocks corresponding to symmetric and non-symmetric subsystems. Following a similar construction method as above, more complicated FB lattices can be designed in 1D and 2D, as shown in Fig. \ref{fig:local-sym-part} (e).

\section{Applications and experimental realizations of flatband systems}
\label{section2.5}

Flatband lattices are fine-tuned, and the eigenstates of a FB are macroscopically degenerate. Fine-tuning of the parameters of the FB Hamiltonian suggests their extreme sensitivity to perturbations, which might change drastically the properties of the system and even trigger phase transitions leading to a plethora of interesting physical phenomena. However as is demonstrated in the following chapters there are actually many perturbations that preserve the flatness of the band. These perturbation induced reach physical phenomena has led to FB lattices being the focus of many experiments in different scales, different disciplines, and different energy scales as well. In this section, we provide a brief review of such phenomena and experimental realizations.


\subsection{Perturbation as a playmaker}


Amidst the search for different FB models, other theoretical studies have been focusing on the effects of different perturbations in FB lattices.
In one early example, the effect of repulsive Coulomb interaction in FB models was studied by Mielke and Tasaki, who independently showed that some FB Hubbard models host a ferromagnetic ground state \cite{mielke1991ferromagnetic,mielke1991ferromagnetism,mielke1993ferromagnetism,tasaki1992ferromagnetism}. Flatband ferromagnetism has been an interesting topic of various studies \cite{tasaki1994stability,tasaki2008hubbard,maksymenko2012flatband}; as discussed in the previous section, their approaches also provide a method for constructing FB models.  

Disorder is another important class of perturbation that has been widely studied in FB systems. Except for symmetry-protected FBs, such as chiral FBs, the majority of them are sensitive to disorder. Adding onsite disorder to a FB lattice will induce Anderson localization, which breaks the FB and causes the CLS to become exponentially localized. In Ref. \cite{flach2014detangling}, the authors added onsite disorder to FB lattices, and found energy-dependent localization length scaling in terms of the Fano resonances. When locally correlated disorder and quasiperiodic potentials are added in FB lattices, they show vanishing localization lengths for arbitrarily weak disorder and mobility edges for quasiperiodic perturbations \cite{bodyfelt2014flatbands,danieli2015flatband}.

In higher dimensions, more interesting phenomena have been found. In 2D, Chalker et al. reported multifractality of the FB eigenstates in the weak disorder limit~\cite{chalker2010anderson}, which is induced by the long-range decay of the projected interaction. In 3D disordered FBs, Goda et al.~\cite{goda2006inverse,nishino2007flat} numerically demonstrated an \emph{inverse} Anderson transition, where all states are localized at first, then delocalize at a critical disorder strength before localizing again when a second critical disorder strength is reached. 
Such transitions are also seen in the level spacing statistics of certain 2D FBs~\cite{shukla2017criticality}.
Recently, a number of new phenomena in disordered FB systems are attracting more attention, including FBs under nonquenched (evolving) disorder~\cite{radosavljevic2017light}, disorder-induced topological phase transitions~\cite{chen2017disorder}, and the temporal dynamics of disordered FB states~\cite{gneiting2018lifetime}. 

The effect of external fields has also been studied. Khomeriki et al. reported that FB lattices under an external field show sinusoidal Bloch oscillations in the band structures. This study was conducted on a 1D diamond chain \cite{khomeriki2016landau}, and it was observed that the FB almost completely stops the Bloch oscillations for a substantial time when the wavefunction is trapped in the original FB part of the unperturbed bands, made possible by Landau--Zener tunneling \cite{khomeriki2016landau}. 

\begin{figure}[htb!]
    \centering
    \includegraphics[width=0.8\linewidth]{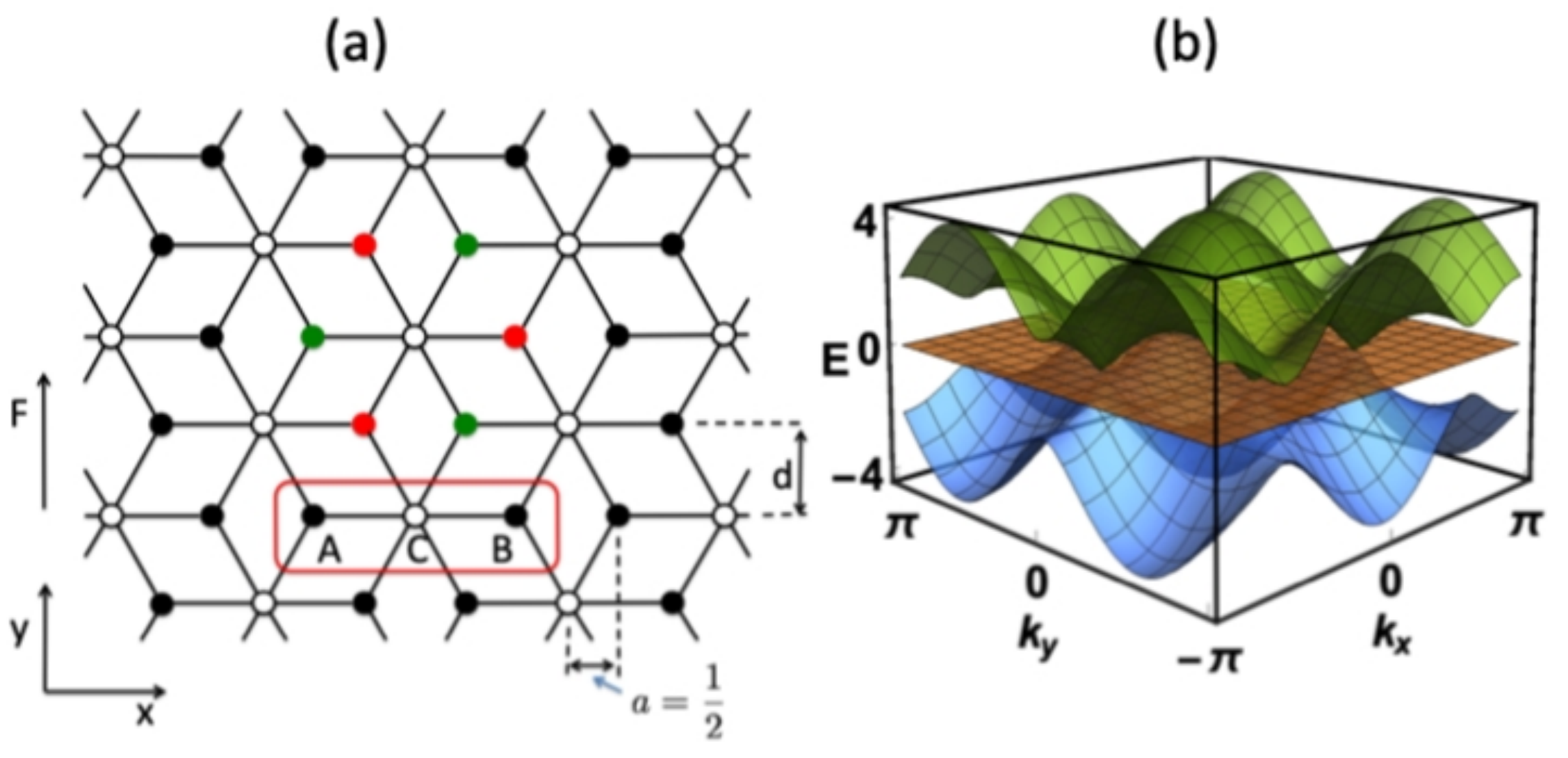}
    \caption[Dice lattice and its band structure]{(a) Dice lattice and (b) its band structure without an external DC field. The unit cell consists of three sites denoted by A, B, and C. The red and green circles represent the amplitudes $+ \nicefrac{1}{6}$ and $- \nicefrac{1}{6}$, respectively, of a CLS. The sites in the red box indicate an uncoupled trimer in the limit of infinitely strong DC bias~\cite{kolovsky2018topological}.}
    \label{fig:dice-lattice}
\end{figure}

More interestingly, there are cases where FBs can be partially preserved under an external DC field. Kolovsky et al.~\cite{kolovsky2018topological} showed that, under such field, the energy spectrum of a 2D dice lattice (see Fig. \ref{fig:dice-lattice}) forms a new band structure consisting of the periodic repetition of 1D energy band multiplets, with one of them being flat (see Fig. \ref{fig:dice-ws-band}). Without an external field, the dice lattice has a FB supported by a CLS. When an external field is applied, the FB in each multiplet is supported by noncompact exponentially localized states~\cite{kolovsky2018topological}. 


\begin{figure}[htb!]
    \centering
    \includegraphics[width=0.6\linewidth]{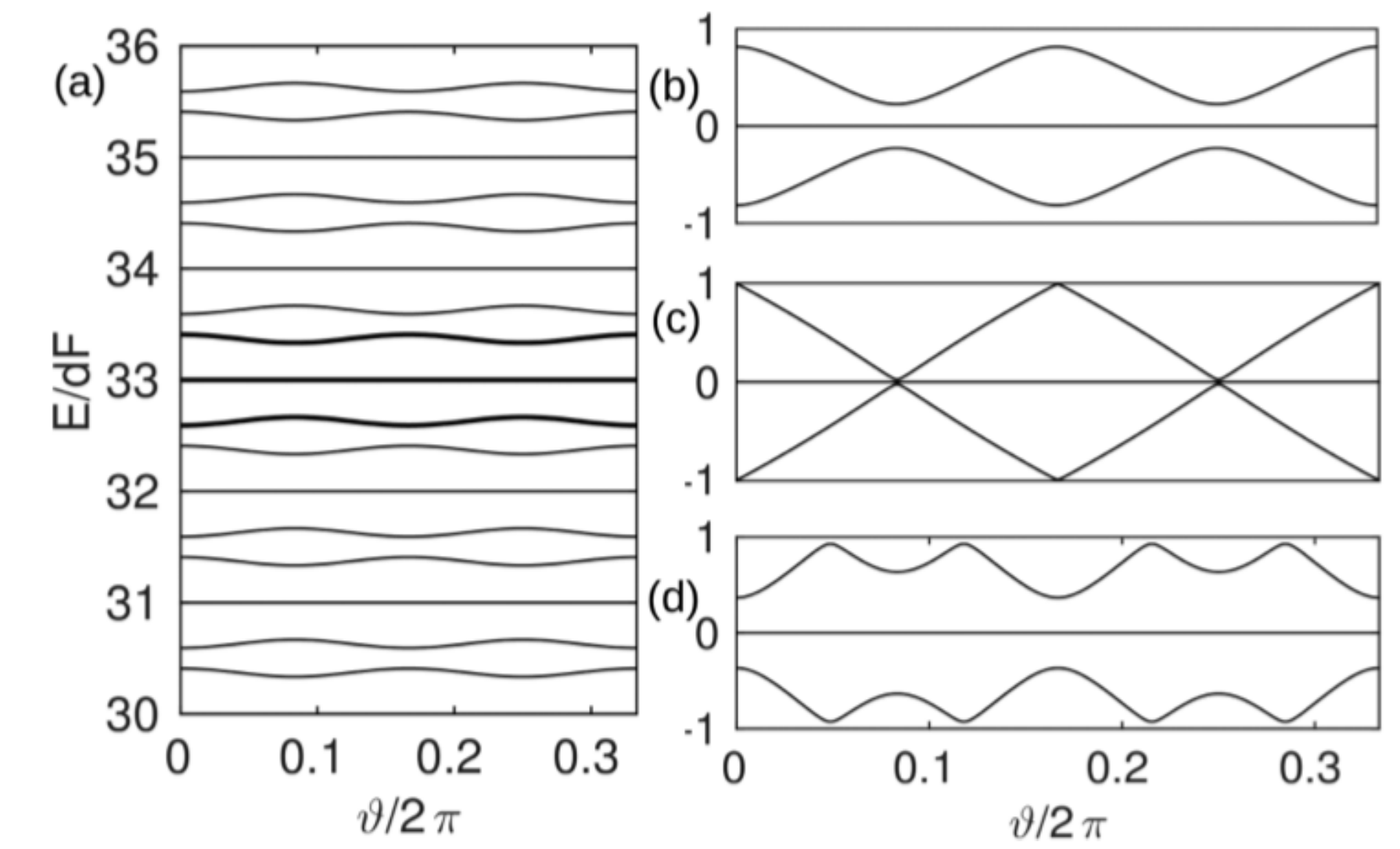}
    \caption[Wannier--Stark ladder of a dice lattice]{(a) Wannier--Stark (WS) band ladder of the biased dice lattice in Fig. \ref{fig:dice-lattice} with ${\bf F}$ along the $y$ axis. The thick lines highlight the irreducible triplet of three WS bands. (b)--(d) Irreducible triplets for different field strengths~\cite{kolovsky2018topological}. This figure is courtesy of Ramachandran et al. \cite{kolovsky2018topological}}
    \label{fig:dice-ws-band}
\end{figure}

As it is known, adding nonlinearity lifts the spectral band structure of linear networks. 
Interestingly though, it has recently been shown that local Kerr nonlinearity may preserve the destructive interference of a FB network, yielding \emph{compact breather solutions}. 
These solutions are periodic in time and compact in space, and they have been discovered in several sample nonlinear FB networks~\cite{johansson2015compactification, belicev2017localized}. 
Moreover, compact breathers have also been observed in the framework of ultra-cold atoms in a diamond chain lattice network, where nonlinear terms and spin-orbit coupling are simultaneously present \cite{gligoric2016nonlinear}.
Likewise, Perchikov and Gendelman studied compact time-periodic solutions in nonlinear mechanical cross-stitch networks \cite{perchikov2017flat}. 
In 2018, Danieli et al. established the necessary and sufficient conditions for the continuability of linear CLSs as compact breathers in the nonlinear regime of FB networks \cite{danieli2018breather}. 
Proven to be linearly stable once their frequency is tuned off-resonance with linear dispersive states and overlapping linear CLS, 
compact breather solutions have been recently shown to act as Fano scatterers for dispersive waves \cite{ramachandran2018fano}.

 
Further, recent studies have shown that, if local attractive interactions are considered, FB systems exhibit superfluidity\cite{murad2018performed,peotta2015superfluidity,julku2016geometric,tovmasyan2016effective}. In 2015, Peotta et al.~\cite{peotta2015superfluidity} demonstrated the promising applications of topologically nontrivial FBs to increase the critical temperature of the superconducting transition. In 2016, Julku et al.~\cite{julku2016geometric} studied the transport properties of the Lieb lattice FB in the presence of an attractive Hubbard interaction, and showed that topologically trivial FBs also have the potential to achieve high-$T_c$ superconductivity.

\subsection{Experimental realizations and applications} 

The fine-tuned nature of FB lattices requires special configurations of lattice geometry and hoppings. Therefore, real materials with macroscopically degenerate FBs are rare in nature, which makes the search for such materials an attractive and ambitious task for many experimentalists. With the development of fabrication technologies, artificial lattices with FBs have been designed and tested in various systems. The effort to observe FBs in laboratory was started soon after theoretical models for FB ferromagnetism were first established. Here, we briefly review experimental schemes to realize FBs in different frameworks; more details regarding experimental realizations of FBs in artificial lattices are given in a review paper by Leykam et al.~\cite{leykam2018artificial}.

As mentioned, the earliest efforts to realize FBs systems started from ferromagnetism. Even though no direct observation of a FB ferromagnet has been reported so far, it is worth bringing up some important theoretical proposals and experiments that demonstrate the possibility of realizing FB ferromagnetism.  
Tamura et al.\cite{tamura2002ferromagnetism} theoretically showed the potential to observe FB ferromagnetism in semiconductor dot arrays using existing fabrication technology.
Other theoretical models have also suggested quantum dot arrays and quantum atomic wires formed on solid surfaces for this purpose~\cite{kimura2002magnetic,shiraishi2004theoretical,ichimura1998deltachain}. Experimental realization remains challenging though because these setups require ultra-cold temperatures. Later, the theoretical proposal to experimentally realize organic FB ferromagnets based on polymers~\cite{arita2002gateinduced,arita2003flatband,suwa2003flatband,aoki2004design} also turned out to be very challenging. In 2015, FBs were observed in the electronic structure of silicene~\cite{Hatsugai2015flatband}, with which FB ferromagnetism may be possible. More recently, hole-doped pyrochlore oxides $Sn_2Nb_2O_7$ and $Sn_2Ta_2O_7$ were reported by I. Hase et al.~\cite{hase2018possibility} to be candidates to realize FB ferromagnetism. For more information about FBs in spin systems, readers can refer to the review paper by Derzhko et al.~\cite{derzhko2015strongly}.  

Although no FB ferromagnet has yet been found, experimental FBs have been realized in many other platforms, such as electronic systems~\cite{tadjine2016from,qiu2016designing,drost2017topological,slot2017experimental}, optical lattices~\cite{taie2015coherent,apaja2010flat,ozawa2017interaction,an2017flux}, photonic lattices~\cite{vicencio2015observation,mukherjee2015observation,weimann2016transport}, and exciton-polariton condensates~\cite{masumoto2012exciton,jacqmin2014direct,baboux2016bosonic}.

\paragraph{Flatbands in electronic systems:}

In 1998, Vidal et al.~\cite{vidal1998aharonov} found a completely flat spectrum induced by a critical magnetic field strength in certain periodic electronic networks.
This flat spectrum is the result of an interference effect, analogous to the Aharonov--Bohm (AB) effect~\cite{aharanov1959significance}, which is induced by a $\pi$ magnetic flux threaded through each unit cell. This effect is thus referred to as AB caging, and the paper suggested the the possibility of observing this effect in superconducting wire networks. 
Even though the direct observation of single-particle band structure is not possible in such networks, and measurements are limited to transport properties, the indirect observation of AB caging was reported by Abilio et al.~\cite{abilio1999magnetic}. Later, in 2001, indications of AB caging were observed in the magnetoresistance oscillations of a low-temperature $(30\ mK)$ normal metal dice lattice~\cite{naud2001aharonov}, where a lattice period of $\approx \mu m$ was used to reach the required magnetic field.

Progress in nano-fabrication methods, such as lithography and atomic manipulation techniques~\cite{tadjine2016from,qiu2016designing}, have enabled experimentalists to devise artificial FB lattices for electrons on the nano-scale. Some works have employed a scanning tunnelling microscope (STM) to embed a 2D Lieb lattice onto a substrate surface. In one example, Slot et al.~\cite{slot2017experimental} designed an electronic Lieb lattice formed by the surface state electrons of Cu(111), where they used an array of carbon monoxide molecules positioned with an STM to confine the electrons and observed the predicted characteristic electronic structure of the Lieb lattice (see Fig. \ref{fig:exp-real-atomic-optical-fb} (b)).
Using a low-temperature STM, Drost et al.~\cite{drost2017topological} created a vacancy lattice by removing atoms from a chlorine monolayer, and designed analogues of the poly-acetylene (dimer) chain and Lieb lattice. 

\begin{figure}[hbt!]
    \centering
    \includegraphics[width=0.9\linewidth]{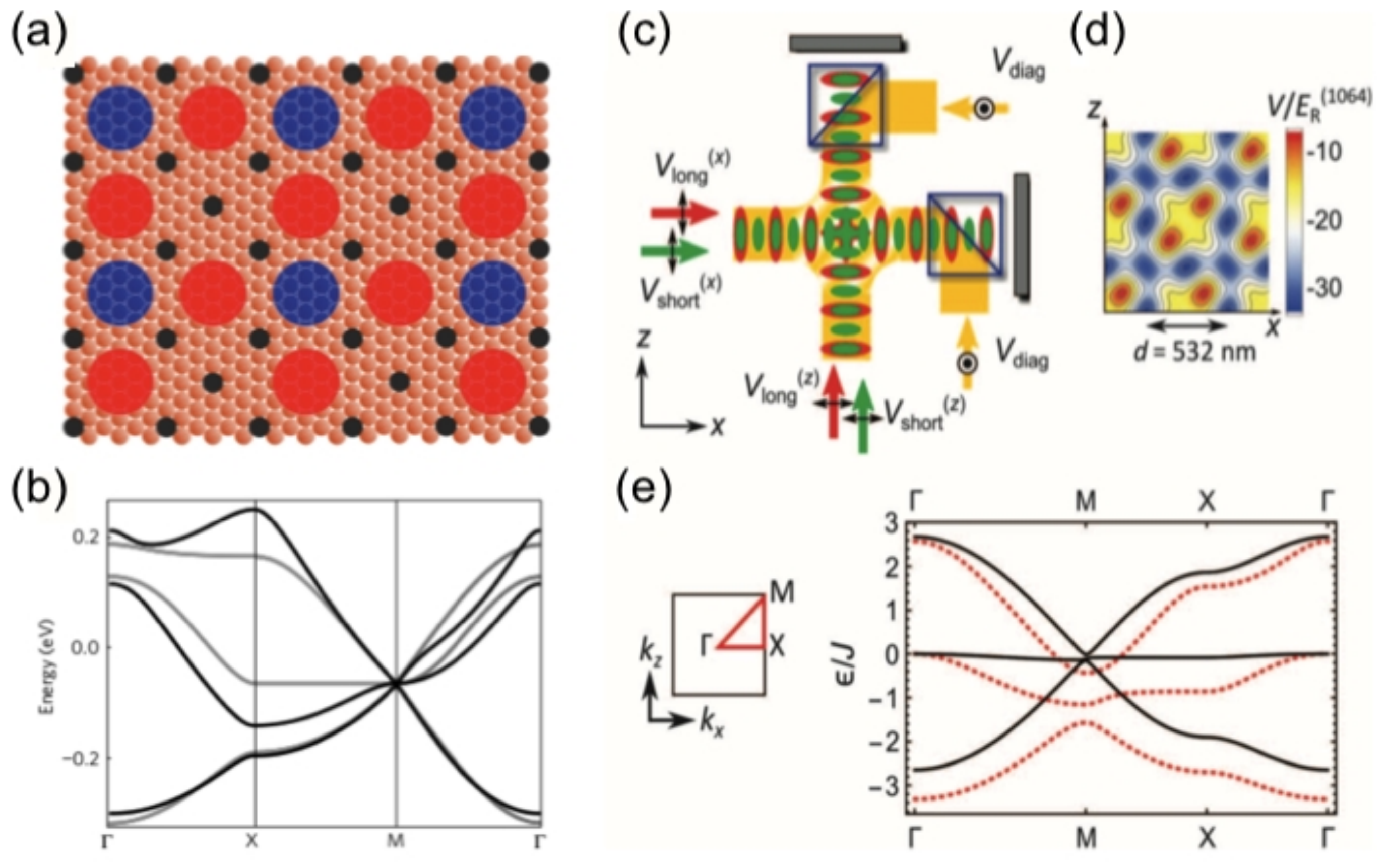}
    \caption[Experimental realizations of flatbands in engineered atomic lattices and optical lattices.]{Experimental realization of FBs in an engineered atomic lattice and optical lattice as taken from Refs.~\cite{slot2017experimental,taie2015coherent}. (a) Schematics of a Lieb lattice formed from surface state electrons of Cu(111) (red and blue circles) confined by an array of carbon monoxide molecules (black circles) from~\cite{slot2017experimental}, with (b) corresponding band structure. Due to next nearest neighboring interactions, the FB become dispersive. (c) Experimental realization of a Lieb optical lattice from~\cite{taie2015coherent}. Black arrows indicate the polarizations of the lattice beams. (d) Lattice potential profile, and (e) band structures for different lattice potentials.}
    \label{fig:exp-real-atomic-optical-fb}
\end{figure}

\paragraph{Flatbands in optical lattices:}

The development of laser cooling techniques, ion trapping, and optical lattices has provided a new platform for the realization of FB lattices. Tunability of the potential in these techniques has made it possible to devise FB lattices by arranging atoms and tuning the hoppings. Lieb lattices are the most-studied FB model in optical lattices due to their relatively simple geometry and novel properties under interactions. 

An optical Lieb lattice was first realized by Shen et al. \cite{shen2010single} in order to experimentally observe massless Dirac fermionic behavior in the vicinity of the Dirac cone. In optical lattices with relatively shallow optical potential, next nearest neighbor hoppings are unavoidable, leading to a non-zero width of their 'flat' band. The exact Bloch wave spectrum for such optical lattices has been computed numerically by Apaja et al.~\cite{apaja2010flat}, who found that a 'flat' band with a width of only $1.5 \%$ of the total bandwidth could be achieved. 
Later, Takahashi et al.~\cite{taie2015coherent} reported the experimental realization of an optical Lieb lattice with bosonic condensation (see Fig. \ref{fig:exp-real-atomic-optical-fb} (c--e). 
By manipulating the optical potential, they were able to engineer the population and phase of each lattice site, which enabled them to coherently transfer atoms into the FB and observe the CLS. Manipulating system parameters enabled them to control the delocalization transition of the CLS, and detect the presence of FB-breaking perturbations. As predicted by Apaja et al.~\cite{apaja2010flat}, they observed interactions that induced a decay of the condensate into the lower dispersive band.

Later, in 2017, Ozawa et al.~\cite{ozawa2017interaction} found that FB energy at the edge of the Brillouin zone is shifted by interactions, leading to a non-zero dispersion. These measurements were in accordance with the tight-binding limit of the Gross--Pitaevski equation. Subsequently, Taie et al.~\cite{taie2017spatial} used fermionic cold atoms in a Lieb lattice to demonstrate the transport of particles adiabatically between the horizontal and vertical rim sites through a dark state.

Aside from Lieb lattices, optical lattices have been realized in the form of 1D sawtooth chains by An et al. in ~\cite{an2017flux}.

\begin{figure}[htb!]
    \centering
    \includegraphics[width=0.9\linewidth]{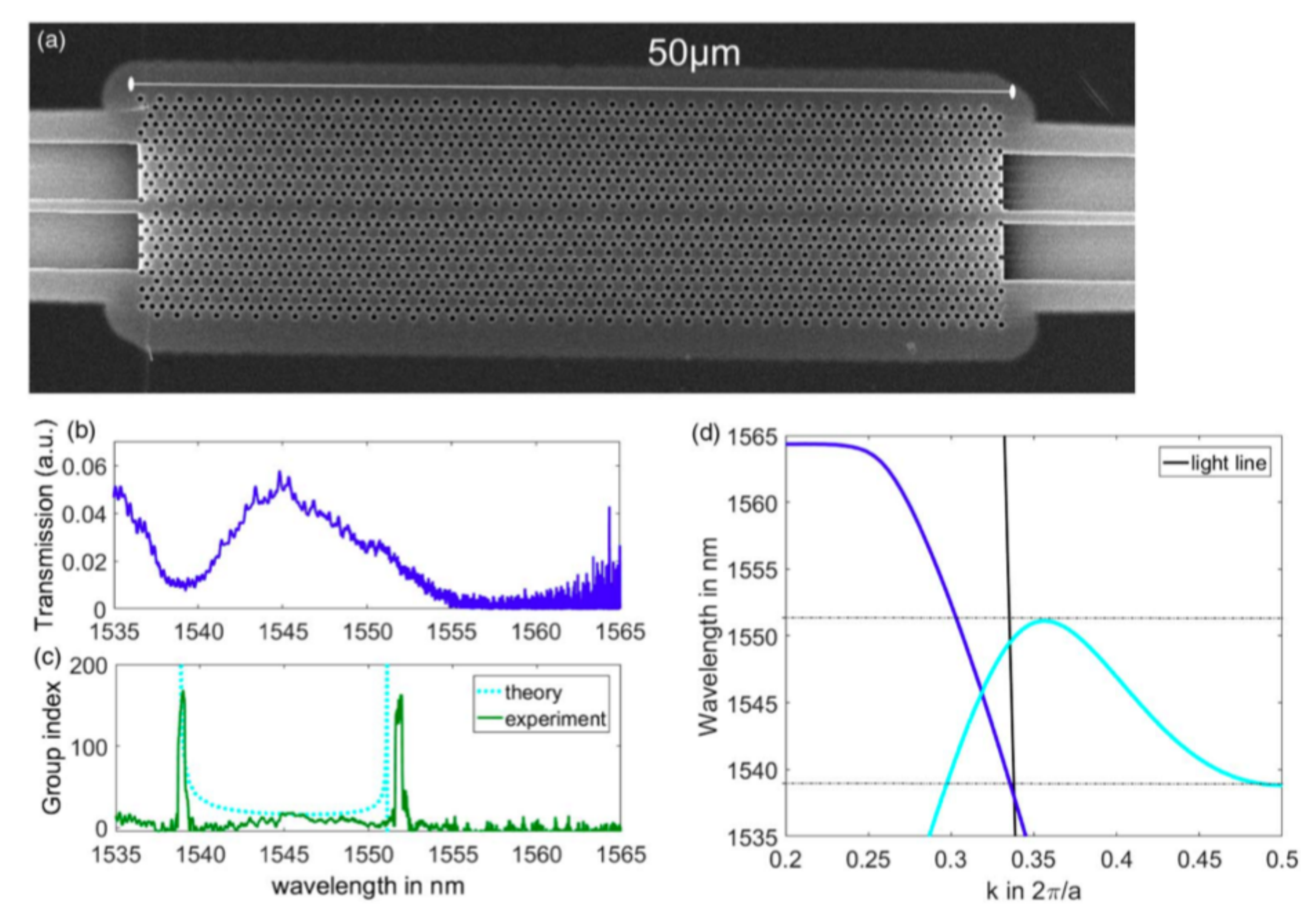}
    \caption[Photonic crystal waveguide of a kagome lattice.]{(a) Scanning electron microscope image of an air-bridge kagome-lattice photonic crystal waveguide. (b) Experimental transmission spectrum of the waveguide shown in (a). (c) Experimental (solid green) and theoretical (dashed light-blue) group index spectra. The corresponding band structure is shown in (d). This figure is from Schulz et al. \cite{schulz2017photonic}.}
    \label{fig:pcw-slowlight-kagome}
\end{figure}

\paragraph{Flatbands in photonic systems:} 

In photonics, FBs have significant applications in the technologically important concept of slow light, which could be advantageous for the buffering and time-domain processing of optical signals. From 2001, it has been known that slow light can be generated by strong dispersion close to the photonic band edge of a photonic crystal waveguide (PCW)~\cite{letartre2001group,notmi2001extremely}. However, slow light in PCWs has two issues to be considered: the frequency bandwidth of the effect and higher-order dispersion. To begin with, a balance must always be achieved between reducing group velocity and maintaining a useful operation bandwidth. Furthermore, higher-order dispersion must be avoided, as it severely distorts optical signals. Flatbands can solve higher-order dispersion, making them valuable in PCW-based slow light applications. 

An early proposal for achieving FBs in photonic crystals came from Takeda et al.~\cite{takeda2004flat}. Their model was composed of circular rods, with electromagnetic waves localizing at the rods that formed photonic FBs. This proposal was not successfully pursued though due to fabrication difficulties. Instead, in slow light schemes, previous research was mostly focused on 1D photonic waveguide arrays or 2D lattices designed through defects in triangular photonic crystals and optimized by numerical or intuitive approaches~\cite{baba2008slow,li2008systematic,xu2015design}. In 2017, Schulz et al. presented an alternative to the standard triangular-structured waveguides, based on the kagome lattice, that demonstrated an improved reduction of group velocity and even stopped light~\cite{schulz2017photonic} (see Fig. \ref{fig:pcw-slowlight-kagome}). The kagome-lattice waveguides inspired further slow light engineering in PCWs~\cite{feigenbaum2010resonant,endo2010tight,nixon2013observing,nakata2012observation,kajiwara2016observation}.

Another active approach to realize FBs is through the use of the femtosecond laser writing technique to fabricate optical waveguide networks ~\cite{szameit2010discrete}. The advantages of this technique over other photonic systems is the long propagation distance and arbitrarily tunable coupling between waveguides, which allows for the exploration of time-periodic lattice dynamics, and Bloch oscillations induced by weak potential gradients. 
In 2014, Guzm\'{a}n-Silva et al.~\cite{guzman2014experimental} were the first to fabricate a Lieb lattice based on phonic waveguide arrays, and studied the effect of the FB on the transport properties of the lattice. However, the presence of the FB could only be deduced indirectly, because a superposition of all bands is excited by the single waveguide input in this original experiment. In subsequent studies, Vicencio and Mukherjee et al.~\cite{vicencio2015observation,mukherjee2015observation} used a multi-waveguide input beam to directly excite the CLS and observe nondiffracting photons (see Fig. \ref{fig:ow-lieb-and-ep-stub} (a)). Weimann et al. \cite{weimann2016transport} studied the transport properties of a photonic waveguide sawtooth lattice. The optical induction technique~\cite{efremidis2002optinduction,makasyuk2006optinduction,schwartz2007photonic} was used by Xia et al. \cite{xia2016demonstration} to create photonic Lieb lattices, where they demonstrated distortion-free image transmission using the CLS. Further related studies are ongoing.

\begin{figure}[htb!]
    \centering
    \includegraphics[width=0.9\linewidth]{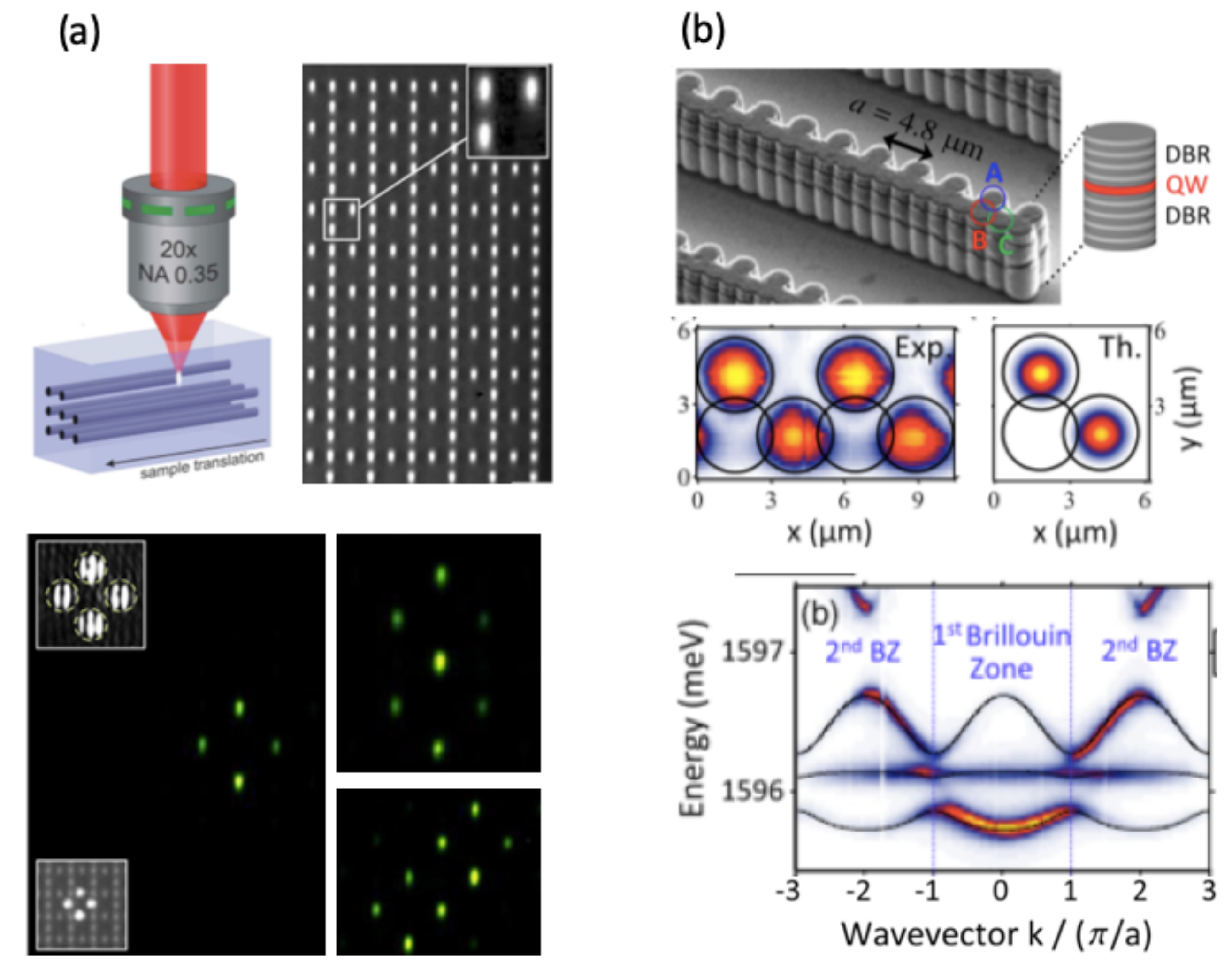}
    \caption[Optical waveguide Lieb lattice and exciton-polariton stub lattice.]{(a) Optical waveguide Lieb lattice fabricated by the femtosecond laser writing technique and the observed compact localized modes. This figure is from~\cite{vicencio2015observation}. (b) A 1D stub lattice made by exciton-polariton condensate and its band structure. This figure is from~\cite{baboux2016bosonic}.}
    \label{fig:ow-lieb-and-ep-stub}
\end{figure}

\paragraph{Flatbands in exciton-polariton condensates:} 

Microcavity exciton-polariton condensates provide another way of achieving FB lattices. In 2012, Masumoto et al. \cite{masumoto2012exciton} realized a 2D kagome lattice with exciton-polariton condensates and observed a weakly dispersive band. Due to limitations in fabrication techniques, the potential shell was shallow. Therefore, a balance needs to be found between smaller lattice periodicity and the minimum next nearest neighbor hoppings in order to resolve the single band while keeping its flatness. Furthermore, condensation in the FB could not be achieved, because the FB was not the lowest band.

Later, improved fabrication techniques such as micropillar cavity etching made it possible to experimentally achieve ideal FBs. In 2014, Jacqmin et al. \cite{jacqmin2014direct} was the first to report polariton FBs in a honeycomb array of micropillar cavities.
Then, also using micropillar optical cavities, a 1D stub lattice was realized in 2016 by Baboux et al. \cite{baboux2016bosonic}; they were able to achieve polariton condensation into a FB  (see Fig. \ref{fig:ow-lieb-and-ep-stub} (b)). More recently, exciton-polariton condensation into a FB was also demonstrated in a 2D Lieb lattice of micropillars \cite{klembt2017polariton,whittaker2018exciton}. 

\paragraph{Future perspective:} 

Beyond these examples, while a number of new platforms have been theoretically proposed, many of them have yet to be experimentally demonstrated. One of them is FB localization in an opto-mechanical system with controllable photons and phonons, which might bring wide applications in information storage, transfer, and the engineering solid-state quantum devices. In 2017, Wan et al. \cite{wan2017hybrid} reported the appearance of a FB in a general bipartite opto-mechanical lattice, and in the same year, Wan et al. \cite{wan2017controllable} also observed photon and phonon localization in an opto-mechanical Lieb lattice. 

Another platform where FBs have been proposed is cavity QED systems at microwave and optical frequencies~\cite{lehur2016many}. Many novel quantum correlations of the FB states, which are induced by dissipations and interactions, have been reported~\cite{biondi2015incompressible,schmidt2016frustrated,yang2016circuit,casteels2016probing,owen2017dissipation,rota2017on}. Moreover, the effect of multi-photon interactions on FBs can be explored in optical waveguide arrays~\cite{rojas-rojas2017quantum}. 
Finally, FBs in graphene systems have also been reported~\cite{marchenkoeaau2018extremely,lee2009flatband,pierucci2015evidence}.

With the development of fabrication technology, artificial FB systems will become a more interesting topic of study. The previously mentioned approaches will advance with improved precision and quality, as will the application of FB physics into new systems including micro- and nano-scale devices.

\section{Summary}  
\label{section2.6} 

In this chapter, we have introduced the concept of the flatband and how it forms. Starting from the Bloch theorem, we introduced the tight-binding model for discrete translational invariant networks. The origin of FBs has been explained in detail, and the concepts of CLSs and macroscopic degeneracy were introduced. We reviewed existing FB construction methods, applications, and experimental realizations in various systems. 

As discussed, existing methods of constructing FBs are limited to intuitive geometric or symmetry-based procedures. However, the fine-tuned nature of FB networks allows for the freedom to adjust system parameters, such as hopping strength, onsite energy, and wave amplitude, to achieve FBs. Consequently, there are a vast number of uncovered FB lattices, and moreover, the properties of CLSs and their decisive role in the behaviour of FB lattices have yet to be systematically studied. Therefore, it is desirable to establish the systematic construction and complete classification of FB Hamiltonians. In the next chapter, we will present our systematic approach to generating FB lattices based on CLS properties.


\chapter{Classification and construction of flatbands by compact localized states }
\label{chapter3}

\ifpdf
    \graphicspath{{Chapter3/Figs/Raster/}{Chapter3/Figs/PDF/}{Chapter3/Figs/}}
\else
    \graphicspath{{Chapter3/Figs/Vector/}{Chapter3/Figs/}}
\fi

Compact localized states (CLSs) play an important role in the physical characteristics of flatband (FB) lattices. Despite this, little has been known about CLS properties, which we focus on at the beginning of this chapter. First, we introduce the classification of FB lattices by CLSs and discuss their various properties. Then we develop a matrix representation---the main FB generator tool. At the end, we introduce destructive interference conditions and CLS existence conditions, and the core ideas of our FB generator. 


\section{Classification of flatband networks by compact localized states (CLSs)}  
\label{section3.1}


In this section, we discuss how to classify FB lattices by the size and shape of their CLSs. First, we introduce shape vector $\mathbf{U}$ that specifies these characteristics. 

\begin{definition}
    Shape vector $\mathbf{U}$ of a CLS is a vector whose components are integers that give the span of the CLS along each lattice dimension. The CLS size $U$ is the total number of unit cells occupied by a CLS. 
\end{definition} 
For example, in 1D, $\mathbf{U}$ is an integer, i.e. $\mathbf{U}=U$, which is the number of unit cells occupied by the CLS (see Fig. \ref{fig:1d-uclass}). In 2D, we can write $\mathbf{U}=(U_1,U_2)$, which tells that the CLS is occupying the unit cells in a $U_1 \times U_2$ rectangular area, as seen in Fig. \ref{fig:u-classification-2d}.   

\begin{figure}[htb!]
    \centering
    \includegraphics[width=0.7\linewidth]{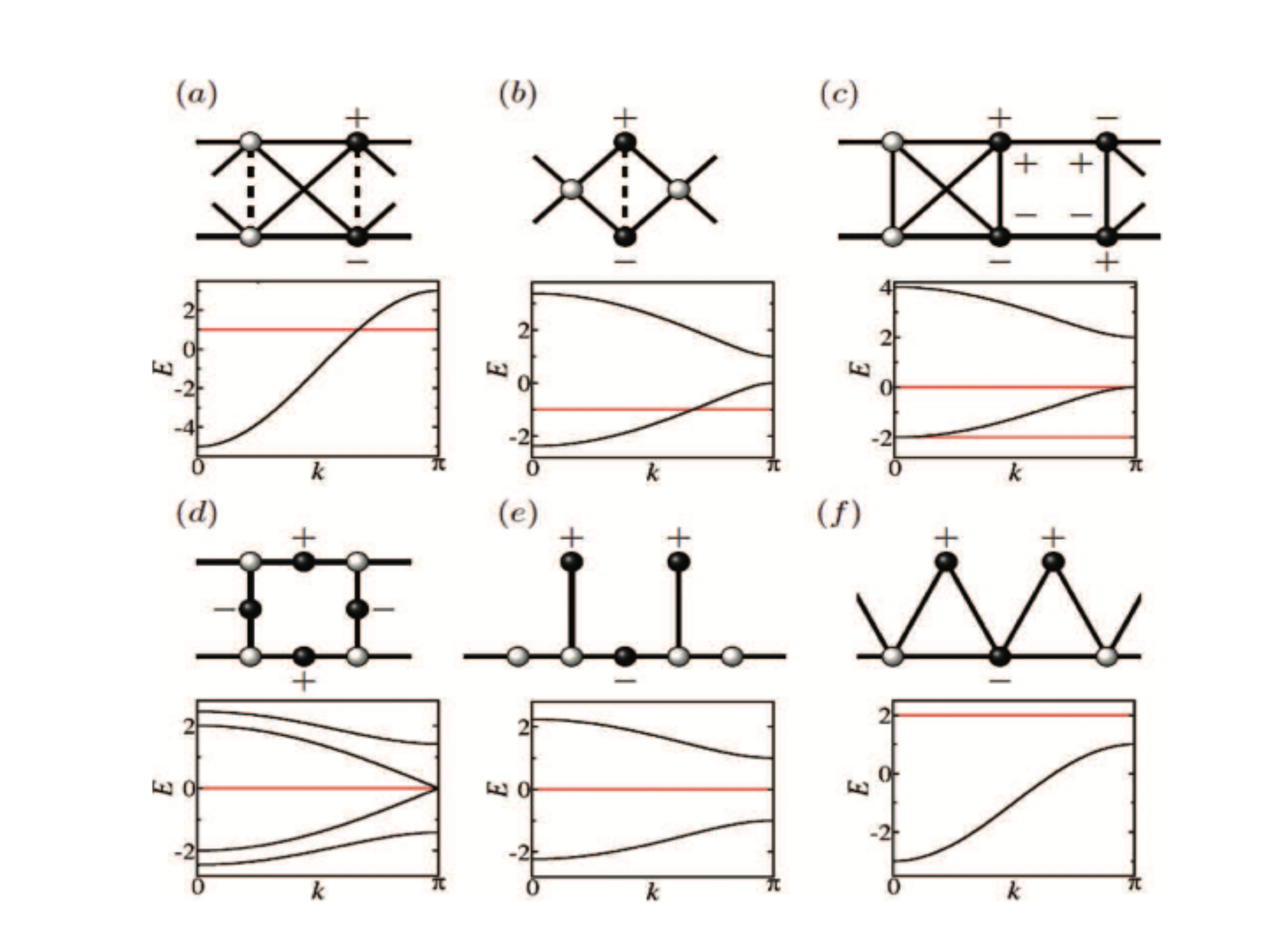}
    \caption[1D examples of $\mathbf{U}$ classification]{Various examples of 1D FB lattices and their $\mathbf{U}$ classification. (a) Cross-stitch, $U=1$, (b) diamond chain, $U=1$, (c) 1D pyrochlore, $U=1$, (d) 1D Lieb, $U=2$, (e) stub, $U=2$, and (f) sawtooth chain, $U=2$. This figure is courtesy of S. Flach et al. \cite{flach2014detangling}. }
    \label{fig:1d-uclass}
\end{figure} 

Given a CLS of a FB, the linear combination of its lattice translations is also an eigenstate of the same FB. It is therefore important to identify the CLS that cannot be decomposed into smaller CLSs. 

\begin{definition}
    An {\bf irreducible CLS} is a flatband eigenstate that occupies the smallest possible number of adjacent unit cells. 
\end{definition} 
From here on, if we do not specify, the notation CLS implies the irreducible one.

An important observation is that all known FB models have a unique irreducible CLS. This leads to the following conjecture. 

\begin{conjecture}
    \label{conj:lin-comb-diff-cls}
    The irreducible CLS of a given flatband is unique. 
\end{conjecture}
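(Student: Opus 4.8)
The plan is to translate the statement into momentum space, where compactness and flatness both become algebraic conditions. A CLS occupying finitely many unit cells has, by Eq.~\eqref{eq:bloch-func-r-space-rep-1}, a Bloch amplitude $\vec{u}(\vec{k})$ whose $\nu$ components are finite Fourier sums, i.e. Laurent polynomials in the variables $z_j = e^{i k_j}$. The flatband condition says precisely that this $\vec{u}(\vec{k})$ is an eigenvector of the Bloch Hamiltonian $H(\vec{k})$ at the fixed eigenvalue $E_{FB}$ for every $\vec{k}$. Conversely, any eigenvector field $\vec{u}(\vec{k})$ with Laurent-polynomial components corresponds, via inverse Fourier transform, to a compact eigenstate, and its span (the CLS shape $\mathbf{U}$) is read off from the multidegrees of those polynomials. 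The whole question then becomes: classify, up to an overall scalar, the Laurent-polynomial eigenvector fields of $H(\vec{k})$ at energy $E_{FB}$, and show that the one of minimal total degree is unique up to normalization and rigid translation.

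First I would treat the generic (non-degenerate) case, where $E_{FB}$ is a simple eigenvalue of $H(\vec{k})$ for all but finitely many $\vec{k}$. Then the eigenline is one-dimensional almost everywhere, so any two eigenvector fields differ by a scalar function $\lambda(\vec{k})$. A canonical polynomial representative is furnished by any nonzero column of $\operatorname{adj}(H(\vec{k})-E_{FB}\mathbb{I})$, whose entries are Laurent polynomials in the hoppings and in the $z_j$, and whose columns lie in the kernel of $H(\vec{k})-E_{FB}\mathbb{I}$ exactly when that matrix is singular. Working in the Laurent polynomial ring $R = \mathbb{C}[z_1^{\pm 1},\dots,z_d^{\pm 1}]$, which is a UFD, I would factor out the greatest common divisor of the components of this representative to obtain a \emph{primitive} eigenvector field $\vec{u}_0(\vec{k})$ with coprime components. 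Coprimality forces any polynomial eigenvector field to have the form $p(\vec{k})\,\vec{u}_0(\vec{k})$ with $p\in R$: writing $\lambda=a/b$ in lowest terms, $b$ would have to divide every component of $\vec{u}_0$, hence divide their GCD, hence be a unit. The CLS size is monotone in the multidegree of $p$, so it is minimized exactly when $p$ is a unit of $R$, i.e. a monomial $c\,z_1^{m_1}\cdots z_d^{m_d}$; such units encode precisely an overall normalization $c$ and a lattice translation by $(m_1,\dots,m_d)$. This gives the desired uniqueness.

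The hard part will be the \emph{degenerate} case, and this is where I expect the argument---and perhaps the conjecture itself---to become delicate. If several bands are flat at the same energy $E_{FB}$, or if the flatband touches a dispersive band along a positive-dimensional subset of the Brillouin zone, then the eigenspace at $E_{FB}$ has dimension $\ge 2$ for generic $\vec{k}$, and the polynomial eigenvectors form an $R$-module of rank $\ge 2$ rather than a rank-one object. One can no longer simply divide out a GCD: minimal-degree generators of a higher-rank module need not be unique even up to units, so genuinely inequivalent irreducible CLSs may coexist. In dimension $d=1$ one can still hope to close the gap, since after clearing a monomial $R$ becomes a principal ideal domain and finitely generated modules admit a Smith-normal-form description that pins down minimal generators. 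For $d\ge 2$, $R$ is only a UFD and not a PID, so I would have to invoke the module structure directly---e.g. freeness of projective modules over polynomial rings in the spirit of Quillen--Suslin---and even then minimality need not single out a unique element. Isolating the precise hypothesis (non-degeneracy of the flatband, or irreducibility of the associated kernel variety) under which uniqueness survives is, I expect, the crux, and the reason the statement is offered here as a conjecture rather than a theorem.
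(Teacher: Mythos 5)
The first thing to say is that the paper does not prove this statement at all: it is offered purely as a conjecture, motivated by the empirical observation that every known flatband model has a unique irreducible CLS, and no argument is given anywhere in the thesis. So there is no ``paper's proof'' to compare against, and your proposal should be judged as an attempt to settle an open claim rather than as an alternative route to a known result. On that footing, your strategy is sound and, in my view, essentially closes the generic case. The dictionary between compact eigenstates and Laurent-polynomial vectors in the kernel of $H(\vec k)-E_{FB}\mathbb{I}$ is exactly right, and in the case where $E_{FB}$ is a simple eigenvalue for all but a measure-zero set of $\vec k$ the primitive-generator argument (adjugate column, divide out the GCD in the UFD $R=\mathbb{C}[z_1^{\pm1},\dots,z_d^{\pm1}]$, Gauss-lemma step to show every polynomial eigenvector is $p\,\vec u_0$ with $p\in R$) is correct and forces the minimal solution to be unique up to a unit, i.e. up to normalization and a lattice translation. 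Note that this already covers the paper's standard ``band-touching'' examples (Lieb, kagome), since there the flatband meets the dispersive bands only at isolated $\vec k$ and the generic kernel is still one-dimensional. You have also correctly located the genuine obstruction: when the kernel has rank $\ge 2$ generically (several coincident flatbands), the polynomial solutions form a higher-rank $R$-module with no canonical minimal generator, and uniqueness can plausibly fail -- which is consistent with the paper treating this only as a conjecture.

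Two points deserve tightening before this could be called a proof even in the generic case. First, the step ``CLS size is monotone in the multidegree of $p$'' is clean only for $d=1$, where the width of a product of Laurent polynomials is the sum of the widths; for $d\ge 2$ you need the Minkowski-sum property of Newton polytopes, and you should also confront the fact that ``occupies the smallest possible number of adjacent unit cells'' is not a total order on shapes, so even a unique primitive generator could in principle admit several incomparable ``minimal'' translates of itself -- harmless for uniqueness of the state, but worth saying. Second, you should check that the primitive generator $\vec u_0$ really is the Fourier transform of an \emph{irreducible} CLS in the paper's sense (not further decomposable into translates of a smaller compact eigenstate); this follows from your own argument, since any smaller compact eigenstate would again be $q\,\vec u_0$ with $q$ a non-unit, contradicting minimality, but the loop should be closed explicitly.
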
 

Consequently, all eigenstates of FB energy can be formed by the linear combination of lattice translations of a unique irreducible CLS. Additionally, since the irreducible CLS of a FB is unique, it can be used to identify different FBs. We can therefore classify FB lattices by the shape vector $\mathbf{U}$ of their irreducible CLSs, which we call \emph{$\mathbf{U}$ classification}. 

\begin{figure}[htb!]
    \centering
    \includegraphics[width=0.9\linewidth]{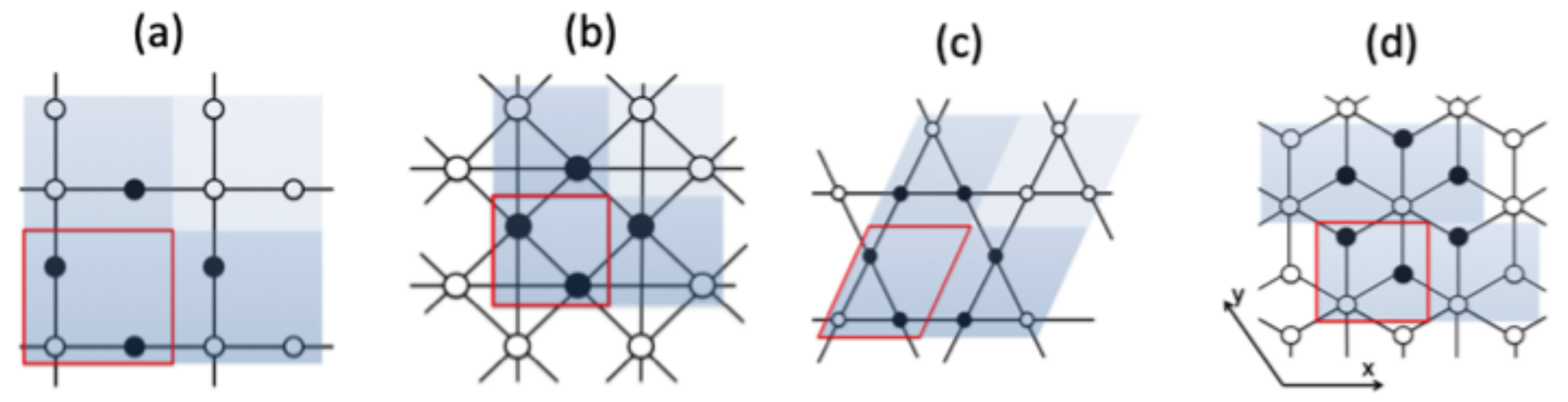}
    \caption[2D examples of $\mathbf{U}$ classification]{Classifications of some known 2D FB lattices. The shaded areas represent the shapes of the CLSs, and the red boxes show the unit cell. In each case, the CLS stretches two unit cells along both $x,y$ directions, and so $\mathbf{U}=(2,2)$. (a) Lieb, (b) checkerboard, (c) kagome, and (d) dice. }
    \label{fig:u-classification-2d}
\end{figure}

\begin{definition}
    A {\bf class $\mathbf{U}$ CLS} is an irreducible CLS with shape vector $\mathbf{U}$. If the irreducible CLS of a flatband lattice is a class $\mathbf{U}$ CLS, then we call this flatband a \emph{class $\mathbf{U}$ flatband}, and call the lattice a \emph{class $\mathbf{U}$ flatband lattice} for lattices with a single flatband.
\end{definition}

In 1D, there is only one shape, and thus FB lattices are classified by their CLS size $U$. For example, a 1D sawtooth chain is a class $U=2$ FB lattice, because its CLS occupies two unit cells. Figure \ref{fig:1d-uclass} illustrates some examples of different class $U$ FB lattices in 1D. Examples of the $\mathbf{U}$ classification of some of the known 2D FB lattices are shown in Fig. \ref{fig:u-classification-2d}. 


Flatband lattices of different $\mathbf{U}$ classes have different properties; this makes $\mathbf{U}$ classification useful to identify these properties. For example, as we discussed in Section \ref{section2.4.2}, the $U=1$ FB eigenstate can be detangled from the dispersive part of the lattice~\cite{flach2014detangling}, but the $U>1$ FB eigenstate cannot be detangled~\cite{flach2014detangling}. 



\section{Properties of CLSs}
\label{section3.2}


In this section, we discuss CLS properties mainly in the context of 1D Hermitian systems with $\nu$ sites per unit cell and nearest neighbor hoppings. Some of the results here hold in higher dimensions, as we will discuss. 

\nomenclature[z-$d$]{$d$}{Lattice dimension}  

\subsection{Relation between CLSs and Bloch wave functions} 


Consider a 1D flatband lattice with $\nu$ sites per unit cell. Suppose the $\nu$ component vector $\vec{\psi}_j$ is the wave function of the $j$th unit cell, and the components give wave functions at the lattice sites in the unit cell. Then, the CLS is given by $\vec{\Psi}_l=\left(\dots0,0,\vec{\psi}_1,\vec{\psi}_2,\dots,\vec{\psi}_U,0,0,\dots\right)$, and $\vec{\psi}_1$ is located in the $l$th unit cell. Since all lattice translations of the CLS share the same eigenenergy, we can construct a Bloch eigenstate (up to normalization, see Section \ref{section2.2}) with
\begin{equation}
    \vec{u}(k)\sim\sum_{l=-\infty}^{\infty}\mathrm{e}^{ikl}\vec{\Psi}_l \sim \sum_{l=1}^U\vec{\psi}_l e^{i(U-1)k}\;. 
    \label{eq:bloch-eig-st-construction}
\end{equation}
Note that, in above equation, the infinite sum in $l$  is truncated due to the compactness of the CLS wave function, $\vec{\psi}_{1\leq l\leq U}=0$. The inverse of \eqref{eq:bloch-eig-st-construction} is also true if the Bloch function for a certain band can be expressed as
\begin{equation}
    \vec{u}(k)=N(k)\sum_{l=1}^U\vec{\psi}_l e^{i(U-1)k},
    \label{eq:1d-cls-and-bloch-relation}
\end{equation}
where $N(k)$ is a common prefactor.

A similar but more involved decomposition can be written for CLSs in higher dimensions, where size alone does not fix the shape of the CLS.

\subsection{Irreducibility condition of CLSs} 

As previously mentioned, a CLS of a given FB could be the linear combination of the lattice translations of the irreducible CLS. So how do we know if a given CLS is irreducible? The answer is to check for linear dependence of the CLS components. 

\begin{figure}[htb!]
    \centering
    \includegraphics[width=0.8\linewidth]{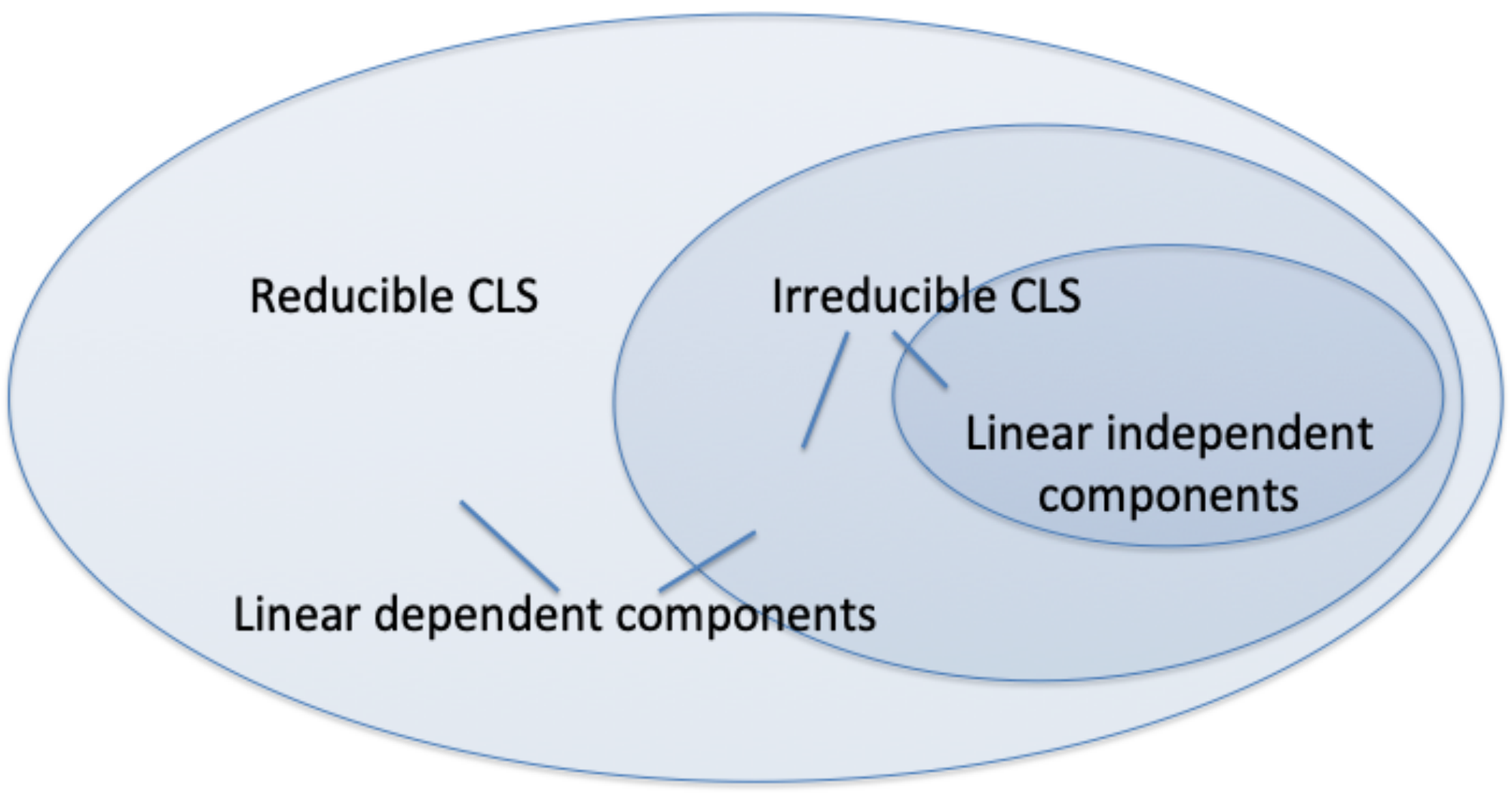}
    \caption[Relationship between reducible and irreducible CLSs, and linear dependence.]{Relationship between reducible and irreducible CLSs, with linear dependence of CLS components.}
    \label{fig:reducible-irreducible-cls}
\end{figure}

\begin{theorem}
    \label{theo:irreducibility-cond}
    Given CLS $\Psi_{U}=(\vec{\psi}_1,\vec{\psi}_2,\cdots ,\vec{\psi}_{U})$ occupying $U$ unit cells, if CLS components $\vec{\psi}_{i=1,\dots,U}$ are linearly independent, then $\Psi_{U}$ is irreducible.
\end{theorem}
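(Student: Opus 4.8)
The plan is to argue by contraposition: instead of deriving irreducibility from linear independence directly, I would show that \emph{reducibility} forces the components $\vec{\psi}_1,\dots,\vec{\psi}_U$ to be linearly \emph{dependent}. By the definition of reducibility (and, in the multi-block case, by Conjecture \ref{conj:lin-comb-diff-cls}, which lets every reducible CLS be written through translates of a single irreducible building block), if $\Psi_U$ is reducible then it is a superposition of lattice translations of a strictly smaller CLS $\Phi=(\vec{\varphi}_1,\dots,\vec{\varphi}_{U'})$ with $U'<U$. The whole proof then reduces to turning this decomposition into a dimension count in $\mathbb{C}^\nu$.

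First I would fix the bookkeeping of the shifts. Writing $\Psi_U=\sum_s c_s\,T^s\Phi$, where $T$ is the single-cell translation operator, the requirement that the total support be exactly the $U$ cells $1,\dots,U$ restricts the admissible shifts to $s=0,1,\dots,U-U'$. Reading off the $i$th unit-cell component gives
\begin{equation}
    \vec{\psi}_i=\sum_{s=0}^{U-U'} c_s\,\vec{\varphi}_{i-s},\qquad i=1,\dots,U,
\end{equation}
with the convention $\vec{\varphi}_j=0$ whenever $j\le 0$ or $j>U'$; this truncation is simply the compactness of $\Phi$, exactly as the sum in Eq. \eqref{eq:bloch-eig-st-construction} is truncated.

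The decisive observation is then purely linear-algebraic: every component $\vec{\psi}_i$ lies in $\operatorname{span}\{\vec{\varphi}_1,\dots,\vec{\varphi}_{U'}\}$, a subspace of $\mathbb{C}^{\nu}$ of dimension at most $U'$. Since $U'<U$, the $U$ vectors $\vec{\psi}_1,\dots,\vec{\psi}_U$ all live in a space of dimension strictly smaller than $U$, so they must be linearly dependent. Contraposing this statement yields precisely the theorem: if the components $\vec{\psi}_1,\dots,\vec{\psi}_U$ are linearly independent, no such decomposition can exist, and $\Psi_U$ is irreducible.

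I expect the only genuine subtlety to be the precise formulation of ``reducible.'' One must guarantee that a reducible CLS can always be reduced to translates of a \emph{single} smaller CLS occupying fewer than $U$ cells, rather than to several unrelated blocks whose combined span could reach dimension $U$ or more. This is exactly where the uniqueness of the irreducible CLS in Conjecture \ref{conj:lin-comb-diff-cls} does the work, since then the single building block $\Phi$ controls the span through the one integer $U'$. Once that point is settled, the remaining index bookkeeping and the rank argument are routine.
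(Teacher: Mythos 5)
Your proposal is correct, and it proves the same contrapositive as the paper (reducible $\Rightarrow$ components linearly dependent) starting from the same decomposition into translates of a single smaller irreducible CLS, justified in both cases by Conjecture \ref{conj:lin-comb-diff-cls}. The key step, however, is genuinely different. The paper treats explicitly the two-translate case $\Psi_U=\alpha\,\Phi_{U-1}+\beta\,T\Phi_{U-1}$, pairs terms to obtain $\sum_j c_j\vec{\psi}_j=\sum_i(c_i\alpha+c_{i+1}\beta)\vec{\phi}_i$, and then argues that the homogeneous system $c_i\alpha+c_{i+1}\beta=0$ ($U-1$ equations in $U$ unknowns) is underdetermined, so a nonzero dependence $\{c_j\}$ exists; this step invokes the linear independence of the building block's components $\vec{\phi}_i$, and the general multi-translate case is only asserted to follow by the same procedure. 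Your argument replaces all of this with a single dimension count: every $\vec{\psi}_i$ lies in $\operatorname{span}\{\vec{\varphi}_1,\dots,\vec{\varphi}_{U'}\}$, whose dimension is at most $U'<U$, so the $U$ components cannot be independent. This buys two things: it handles any number of translates uniformly, and it does not require the smaller CLS to have linearly independent components (an assumption the paper makes but which, as the Lieb example in the same section shows, irreducibility alone does not guarantee). The cost is that your argument produces no explicit dependence coefficients, whereas the paper's construction does; for the purposes of this theorem that is immaterial. Your closing caveat about needing a single building block is exactly the right place to flag the reliance on Conjecture \ref{conj:lin-comb-diff-cls}, which the paper shares.
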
 

Theorem \ref{theo:irreducibility-cond} can also be stated as follows: \emph{If the matrix $(\vec{\psi}_1 \vec{\psi}_2 \cdots  \vec{\psi}_{U})$ is full rank, then CLS $\Psi_{U}=(\vec{\psi}_1,\vec{\psi}_2,\cdots ,\vec{\psi}_{U})$ is irreducible}. 


Below we prove the following equivalent statement of Theorem \ref{theo:irreducibility-cond}: \emph{If a CLS is reducible, then its components are linearly dependent}.
According to Conjecture \ref{conj:lin-comb-diff-cls}, we assume that if a CLS is reducible then it should be the linear combination of the unique irreducible CLS (i.e., it cannot be written as the linear combination of different class irreducible CLSs). Here, we prove a 1D case and sketch how it can be extended to higher dimensions.

\begin{proof}
Consider a 1D CLS $\Psi_{U}$ which is composed of the lattice translations of irreducible CLS $\Phi_{U-1}=(\vec{\phi}_1,\vec{\phi}_2,\cdots ,\vec{\phi}_{U-1})$, whose components are linearly independent. Then, 
\begin{equation}
    \Psi_{U} = \begin{pmatrix} \vec{\psi}_1 \\ \vec{\psi}_2 \\ \cdots \\ \vec{\psi}_{U} \end{pmatrix} = \alpha \begin{pmatrix} \vec{\phi}_1 \\ \vec{\phi}_2 \\ \cdots \\ \vec{\phi}_{U-1} \\ 0 \end{pmatrix} + \beta \begin{pmatrix} 0 \\ \vec{\phi}_1 \\ \vec{\phi}_2 \\ \cdots \\ \vec{\phi}_{U-1}  \end{pmatrix} , \quad \alpha,\beta \ne 0.
    \label{eq:1d-cls-lin-comb}
\end{equation} 
If $\Psi_U$ can be decomposed into an even smaller-sized irreducible CLS, then there will be more terms in Eq. \eqref{eq:1d-cls-lin-comb}; the following procedure can apply to this case as well. 

From Eq. \eqref{eq:1d-cls-lin-comb}, we can see that 
\begin{equation}
    \begin{cases}
      \vec{\psi}_1 = \alpha \vec{\phi}_1 \\
      \vec{\psi}_{U} = \beta \vec{\phi}_{U-1} \\
      \vec{\psi}_l = \alpha \vec{\phi}_l + \beta \vec{\phi}_{l-1}, \quad 2 \le l \le U-2.
    \end{cases}
\end{equation} 
If Theorem \ref{theo:irreducibility-cond} is true, then the components of $\Psi_U$ must be linearly dependent. More precisely, there should exist a non-zero set $\{c_j \}$ which satisfies
\begin{equation}
    \sum_{j=1}^U c_j \vec{\psi}_j = \sum_{i=1}^{U-1} (c_i \alpha + c_{i+1} \beta)\vec{\phi}_i = 0.
    \label{eq:cls-comp-lin-comb}
\end{equation} 
In the above equation, we paired the non-zero components in Eq. \eqref{eq:1d-cls-lin-comb}. The index $i$ above runs from 1 to $U-1$, because there are $U-1$ pairs in the sum. 

Since the components of $\Phi_{U-1}$ are linearly independent, the only solution of Eq. \eqref{eq:cls-comp-lin-comb} is
\begin{equation}
    c_i \alpha + c_{i+1} \beta = 0,\quad \text{for all}\ 1\le i \le U-1,
\end{equation} 
in which the $U$ coefficients and $U-1$ equations give an under-determined problem, so we can always find a set $\{c_j \}$ that satisfies the linear dependence condition (Eq. \eqref{eq:cls-comp-lin-comb}). This shows that \emph{if a CLS is the linear combination of smaller CLSs, then its components are linearly dependent}. This proves Theorem \ref{theo:irreducibility-cond}. 
\end{proof}

This proof can be extended to higher dimensions once the more complicated shapes of the CLSs are taken into account.


Note that the inverse statement of Theorem \ref{theo:irreducibility-cond}, which is: "\emph{If CLS components are linearly dependent, then the CLS is reducible}", does not hold (see Fig. \ref{fig:reducible-irreducible-cls}). For example, the irreducible CLS of a Lieb lattice has linearly dependent components. 

This result, together with the observation that CLS size completely fixes the shape in 1D , leads to the following conjecture.

\begin{conjecture}
    In 1D, linear dependence of CLS components implies the reducibility of the CLS.
    \label{conj:lin-dep-redblty-1d}
\end{conjecture}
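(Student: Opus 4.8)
The plan is to recast everything in terms of a single vector-valued polynomial and read off reducibility as a factorization. First I would encode the CLS $\Psi_U=(\vec{\psi}_1,\dots,\vec{\psi}_U)$ by the generating polynomial $\vec{\Psi}(z)=\sum_{l=1}^{U}\vec{\psi}_l\,z^{\,l-1}$, gauge-fixed so that $\vec{\psi}_1\neq 0$ and $\vec{\psi}_U\neq 0$. Multiplying the tight-binding eigenvalue equation $H_0\vec{\psi}_j+H_1^{\dagger}\vec{\psi}_{j-1}+H_1\vec{\psi}_{j+1}=E\vec{\psi}_j$ by $z^{\,j-1}$ and summing over all $j$ (the cells just outside $[1,U]$ supplying exactly the two destructive-interference conditions $H_1\vec{\psi}_1=0$ and $H_1^{\dagger}\vec{\psi}_U=0$ as the extreme coefficients) shows that the full set of CLS conditions is equivalent to the single Laurent-polynomial identity $(H_0-E+H_1^{\dagger}z+H_1z^{-1})\vec{\Psi}(z)=0$, i.e.\ to $Q(z)\vec{\Psi}(z)=0$ with the quadratic matrix pencil $Q(z)=H_1^{\dagger}z^{2}+(H_0-E)z+H_1$ (nearest-neighbour hopping gives degree two). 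This is just the matrix form of the CLS--Bloch correspondence \eqref{eq:1d-cls-and-bloch-relation}: $\vec{\Psi}(e^{ik})$ is a flatband Bloch eigenvector for every $k$.

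Next I would translate reducibility into algebra. A lattice translation of the CLS corresponds to multiplying $\vec{\Psi}(z)$ by $z$, so a superposition of translates corresponds to multiplying by a scalar polynomial $p(z)$. Hence $\Psi_U$ is reducible precisely when $\vec{\Psi}(z)=p(z)\,\vec{\Phi}(z)$ with $\deg p\ge 1$ and $\vec{\Phi}$ a lower-degree vector polynomial; since $\mathbb{C}[z]$ has no zero divisors, $\vec{\Phi}$ then automatically satisfies $Q(z)\vec{\Phi}(z)=0$ and is a genuine smaller CLS. After the gauge-fixing $\vec{\psi}_1\neq0$ this is the same as saying the $\nu$ scalar components of $\vec{\Psi}(z)$ share a common root $z_0\neq0$. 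With this dictionary, Theorem \ref{theo:irreducibility-cond} is the easy direction: a common root $z_0$ places the Vandermonde vector $(1,z_0,\dots,z_0^{U-1})^{T}$ in the kernel of the coefficient matrix $M=(\vec{\psi}_1|\cdots|\vec{\psi}_U)$, forcing linear dependence. The conjecture is the converse, $\ker M\neq0\ \Rightarrow\ M$ has a \emph{Vandermonde} kernel vector.

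To attack this I would exploit the flatband assumption $\det Q(z)\equiv 0$. For a single flatband $Q(z)$ has rank $\nu-1$ over $\mathbb{C}(z)$, so its polynomial right null vectors form a torsion-free module over the PID $\mathbb{C}[z]$ that is free of rank one, generated by a unique primitive (coprime-component) minimal-degree null vector $\vec{\Psi}_{\min}(z)$; every polynomial null vector equals $p(z)\vec{\Psi}_{\min}(z)$. This simultaneously re-proves Conjecture \ref{conj:lin-comb-diff-cls} in the single-band case and reduces the present conjecture to one clean claim: \emph{the coefficient vectors of the primitive generator $\vec{\Psi}_{\min}$ are linearly independent}, equivalently $U_{\min}\le\nu$ with no nontrivial relation among $\vec{\psi}_1,\dots,\vec{\psi}_{U_{\min}}$ (if $\vec{\Psi}=p\,\vec{\Psi}_{\min}$ has $\deg p\ge1$ it is already reducible, so only the generator needs checking). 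To finish I would linearize the degree-two pencil $Q(z)$ to a $2\nu$-dimensional linear pencil and read off the Kronecker canonical form, whose singular (minimal-index) chains consist of linearly independent vectors; the plan is to show those chain vectors map precisely onto the coefficient vectors of $\vec{\Psi}_{\min}$.

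The \textbf{main obstacle} is exactly this last bridge between \emph{coefficient} dependence and \emph{root} (Vandermonde) structure. Coprimality of the components alone does not force independence of the coefficients --- for instance $(1+z^2,\,z)$ has coprime entries yet its coefficient vectors $(1,0),(0,1),(1,0)$ are dependent --- so the recurrence $Q(z)\vec{\Psi}=0$ must be used in an essential way to rigidify a generic kernel vector of $M$ into Vandermonde form. The difficulty is sharpened by the fact that for flatbands $H_1$ is typically singular, which rules out the clean transfer-matrix recursion $\vec{\psi}_{m+1}=-H_1^{-1}[(H_0-E)\vec{\psi}_m+H_1^{\dagger}\vec{\psi}_{m-1}]$ and forces one into the singular Kronecker structure, where bounding the minimal index (hence establishing $U_{\min}\le\nu$) is the delicate crux; I expect this is precisely why the statement is left as a conjecture. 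It is also worth flagging that the one-dimensional hypothesis is essential: in higher dimensions CLS size no longer fixes shape, and the $2$D Lieb lattice already furnishes an irreducible CLS with linearly dependent components, so any successful argument must genuinely use the single-variable polynomial structure.
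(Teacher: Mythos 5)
Your proposal is a framework plus an honestly flagged open problem, not a proof, and the gap you identify at the end is real: linear dependence of the coefficient vectors $\vec{\psi}_1,\dots,\vec{\psi}_U$ does not by itself produce a scalar factor $p(z)$ of $\vec{\Psi}(z)$ (your own example $(1+z^2,z)$ shows coprime components with dependent coefficients), and nothing in your outline actually uses the recurrence $Q(z)\vec{\Psi}(z)=0$ to rule this out. That said, you should be aware that the paper does not prove the general statement either --- it is stated as a conjecture, and the paper only supplies proofs for $U=2$ (main text) and $U=3$ (Appendix A), both by direct substitution: for $U=2$ one inserts $\vec{\psi}_1=\alpha\vec{\psi}_2$ into the two eigenvalue equations and concludes that $H_1$ and $H_1^\dagger$ must annihilate $\vec{\psi}_2$, reducing to $U=1$; for $U=3$ one inserts $\vec{\psi}_2=\alpha\vec{\psi}_1+\beta\vec{\psi}_3$ and finds the equations force $\alpha\beta=1$, whereupon the state splits as a sum of two translated $U=2$ CLSs. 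Your polynomial language actually recovers $U=2$ for free (with only two coefficients, dependence means $\vec{\Psi}(z)=(\alpha+z)\vec{\psi}_2$, and cancellation in the integral domain $\mathbb{C}[z]$ gives $Q(z)\vec{\psi}_2\equiv 0$, i.e.\ a $U=1$ CLS), which is a cleaner derivation than the paper's.

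The concrete missing step, then, is the $U\ge 3$ rigidification: you must show that $Q(z)\vec{\Psi}(z)=0$ together with a nontrivial relation $\sum_l c_l\vec{\psi}_l=0$ forces the scalar components of $\vec{\Psi}(z)$ to share a common root. The paper's $U=3$ computation is exactly this step done by hand --- the eigenvalue equations convert the arbitrary relation into the constraint $\alpha\beta=1$, which is precisely the condition that $z^2+ \alpha^{-1}z\cdot 0\dots$ (more precisely, that $\vec{\Psi}(z)=(\,\vec{\psi}_1+\beta\vec{\psi}_3 z)(1+\alpha^{-1}z)$-type factorization exists). If you want to push your approach through, the natural next move is to translate that computation into your pencil language and see whether the linearized Kronecker structure bounds the minimal index by $\nu$ in general; as it stands, your Kronecker-canonical-form step is asserted but not carried out, and the claim that the singular chain vectors coincide with the coefficient vectors of $\vec{\Psi}_{\min}$ is unsupported. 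Your closing remarks about higher dimensions and the Lieb lattice are correct and consistent with the paper's Corollary on band touching.
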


This conjecture can be rigorously proved for the U = 2 and U = 3 cases, and it holds for all the other known $d=1$  examples. We present here the proof for the $U=2$ case, and put the proof for the $U=3$ case in Appendix \ref{app:reducibility-1d}, which uses the block matrix representation that we introduce later in Section \ref{section3.3}. 

\begin{proof}
Consider the $U=2$ class CLS $\vec{\Psi}=\left(\dots,0,\vec{\psi}_{1},\vec{\psi}_{2},0,\dots \right)$,
with nearest neighbor unit cell hoppings. Suppose two components $\vec{\psi_{1}},\vec{\psi_{2}}$
are linearly dependent, such that $\vec{\psi}_{1}=\alpha\vec{\psi}_{2}$.
From the eigenvalue problem in Eq. \eqref{eq:TB-eig-prob-1}, we have  
\begin{equation}
\begin{split}
    \epsilon_j \phi_{1,j} + \sum_{j^\prime} t_{j,j^\prime} \phi_{2,j^\prime} = E \phi_{1,j} , \quad j,j^\prime=1,\dots,\nu \\,
    \epsilon_j \phi_{2,j} + \sum_{j^\prime} t_{j,j^\prime} \phi_{1,j^\prime} = E \phi_{2,j} , \quad j,j^\prime=1,\dots,\nu.
\end{split}
\label{eq:u2-lin-dep-eig-pb}
\end{equation} 
From the linear combination $\vec{\psi}_{1}=\alpha\vec{\psi}_{2}$, we have $\phi_{1,j}=\alpha \phi_{2,j}$, and plugging this
into Eq. \eqref{eq:u2-lin-dep-eig-pb} we get 
\begin{equation}
\begin{aligned}
    \epsilon_j \phi_{1,j} + \alpha \sum_{j^\prime} t_{j,j^\prime} \phi_{1,j^\prime} = E \phi_{1,j} , \quad j,j^\prime=1,\dots,\nu \\,
    \alpha \epsilon_j \phi_{1,j} + \sum_{j^\prime} t_{j,j^\prime} \phi_{1,j^\prime} = \alpha E \phi_{1,j} , \quad j,j^\prime=1,\dots,\nu.
\end{aligned}
\end{equation} 
The above equations only involve $\phi_{i,j}$. For a non-zero $\alpha$, the above equations hold only when 
\begin{equation}
    \sum_{j^\prime} t_{j,j^\prime} \phi_{1,j^\prime} = \sum_{j} t_{j,j^\prime} \phi_{1,j}=0 \; ,
\end{equation} 
which means that there are no hoppings to neighboring unit cells. As a result, we have a $U=1$ CLS. 
\end{proof}


Assuming Conjecture \ref{conj:lin-dep-redblty-1d} is true, it leads to the following conjecture. 

\begin{conjecture}
    In 1D, the CLS class is $U \le \nu$. 
\end{conjecture}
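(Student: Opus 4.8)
The plan is to combine Conjecture~\ref{conj:lin-dep-redblty-1d} with an elementary dimension-counting argument. Recall that in a 1D lattice with $\nu$ sites per unit cell, each component $\vec{\psi}_i$ of a CLS $\Psi_U=(\vec{\psi}_1,\vec{\psi}_2,\dots,\vec{\psi}_U)$ is a $\nu$-component vector, i.e. an element of a $\nu$-dimensional vector space. The whole question therefore reduces to a statement about the maximal number of linearly independent vectors that can appear among the $\{\vec{\psi}_i\}$.

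First I would invoke Theorem~\ref{theo:irreducibility-cond} together with the assumed Conjecture~\ref{conj:lin-dep-redblty-1d} to establish the equivalence that, in 1D, a CLS is irreducible if and only if its components $\vec{\psi}_1,\dots,\vec{\psi}_U$ are linearly independent. Theorem~\ref{theo:irreducibility-cond} supplies one direction (independence $\Rightarrow$ irreducibility), while Conjecture~\ref{conj:lin-dep-redblty-1d} supplies the contrapositive of the other (dependence $\Rightarrow$ reducibility). This two-sided statement is the engine of the whole argument.

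Next I would apply the pigeonhole principle for vector spaces: any collection of more than $\nu$ vectors in a $\nu$-dimensional space is necessarily linearly dependent. Hence if $U>\nu$, the $U$ components $\vec{\psi}_1,\dots,\vec{\psi}_U$ cannot all be linearly independent, and the equivalence above forces the corresponding CLS to be reducible. An irreducible CLS---which is exactly what fixes the CLS class $U$---must therefore satisfy $U\le\nu$, as claimed. (Zero components, which may occur for degenerate shapes, only make the dependence easier, so they do not weaken the argument.)

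The argument itself is routine once the equivalence is in place; the only genuine obstacle is that this equivalence rests on Conjecture~\ref{conj:lin-dep-redblty-1d}, which at this stage has been proved only for $U=2$ and $U=3$ and checked on the remaining known $d=1$ examples. Consequently the bound $U\le\nu$ inherits the conjectural status of its premise: it is a statement conditional on Conjecture~\ref{conj:lin-dep-redblty-1d}, and an unconditional proof would require first settling that conjecture for arbitrary $U$ and $\nu$.
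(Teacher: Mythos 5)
Your argument is correct and follows essentially the same route as the paper: assume $U>\nu$, note that $U$ vectors in a $\nu$-dimensional space must be linearly dependent, and invoke Conjecture~\ref{conj:lin-dep-redblty-1d} to conclude the CLS is reducible, so any irreducible CLS has $U\le\nu$. Your explicit remark that the result remains conditional on Conjecture~\ref{conj:lin-dep-redblty-1d} matches the paper's own framing (the invocation of Theorem~\ref{theo:irreducibility-cond} for the converse direction is superfluous but harmless).
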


\begin{proof}
Suppose there is a CLS of size $U>\nu$; then there are $U>\nu$ CLS components that are $\nu$-dimensional vectors. In other words, there are $U>\nu$ vectors in $\nu$-dimensional vector space, which always leads to the linear dependence of these vectors. As the result, according to Conjecture \ref{conj:lin-dep-redblty-1d}, the CLS is reducible as long as $U>\nu$. Therefore, the maximum size of the irreducible CLS is $U=\nu$, or in other words, the CLS class in 1D is $U \le \nu$.
\end{proof}

As discussed in Section \ref{section2.4.2}, unitary transformation does not change the band structure while modifying CLS components. Moreover, depending on the choice of unit cell, CLS size can differ. In some cases, one can reduce CLS size by unitary transformation and redefining the unit cell. 

\begin{theorem}
    In 1D, a CLS occupying $U$ unit cells with $\vec{\psi}_1 \perp \vec{\psi}_U$ can be reduced to the $U-1$ class. 
    \label{theo:u-reduce-u-1}
\end{theorem}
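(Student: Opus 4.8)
The plan is to combine the two operations singled out just before the statement: a local unitary rotation of the on-cell basis, which leaves the band structure (and hence the flatband) untouched as explained in Section~\ref{section2.4.2}, and a sublattice shift that redefines the unit cell. The orthogonality $\vec{\psi}_1 \perp \vec{\psi}_U$ is precisely the ingredient that lets these two steps cooperate to empty one of the two boundary cells of the CLS. Note first that the hypothesis forces $U\ge 2$ (for $U=1$ one would need $\vec{\psi}_1\perp\vec{\psi}_1$, i.e.\ $\vec{\psi}_1=0$), and together with the conjectured bound $U\le\nu$ this guarantees $\nu\ge 2$, so that two orthogonal nonzero vectors exist in the on-cell space.

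First I would rotate. Since $\vec{\psi}_1$ and $\vec{\psi}_U$ are orthogonal nonzero vectors in the $\nu$-dimensional on-cell space, there is a local unitary $\hat{U}$, applied identically in every unit cell, with $\hat{U}\vec{\psi}_1 = \lvert\vec{\psi}_1\rvert\,\vec{e}_1$, where $\vec{e}_1$ is the first coordinate vector. Because $\hat U$ preserves inner products, the image $\hat{U}\vec{\psi}_U$ automatically has vanishing first component, $\langle \vec{e}_1,\hat{U}\vec{\psi}_U\rangle = \langle \vec{\psi}_1,\vec{\psi}_U\rangle/\lvert\vec{\psi}_1\rvert = 0$. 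Writing $\tilde{\vec{\psi}}_l=\hat{U}\vec{\psi}_l$, the rotated state $(\tilde{\vec{\psi}}_1,\dots,\tilde{\vec{\psi}}_U)$ is still a flatband eigenstate, its first cell is supported on site $1$ alone, and its last cell carries no amplitude on site $1$.

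Next I would redefine the unit cell by shifting the site-$1$ sublattice forward by one cell, reassigning the site-$1$ amplitude of old cell $n$ to new cell $n+1$ while leaving sites $2,\dots,\nu$ in place. Under this relabeling the sole occupant of the old first cell (its site-$1$ amplitude, proportional to $\vec{e}_1$) is pushed into the old second cell, so the new first cell is empty; and since the old last cell carries no site-$1$ amplitude, nothing is pushed beyond it into a spurious $(U{+}1)$th cell. All amplitude therefore lands in the $U-1$ consecutive cells $2,\dots,U$, exhibiting the state as a flatband eigenstate occupying $U-1$ unit cells, i.e.\ a class $U-1$ CLS.

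The main obstacle is not this bookkeeping but justifying that the sublattice shift is a legitimate operation. I would argue that relabeling which unit cell a fixed sublattice belongs to is a pure change of the unit-cell convention: it is translationally invariant by construction, keeps $\nu$ sites per cell (each new cell takes site~$1$ from one old cell and sites $2,\dots,\nu$ from the next), and represents the \emph{same} physical operator in a different cell basis, so it merely reshuffles the hopping matrices $H_0,H_1,\dots$ (possibly changing their range) while leaving the spectrum, and in particular the flatness of the band, intact. One subtlety worth flagging is that the reduced state is guaranteed to occupy \emph{at most} $U-1$ cells; whether it is then genuinely irreducible of class $U-1$ (rather than reducing further) would need the irreducibility criterion of Theorem~\ref{theo:irreducibility-cond} applied to the shifted components, but this is not required for the statement as phrased.
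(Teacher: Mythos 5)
Your proof is correct and follows essentially the same route as the paper: a local unitary rotation aligning $\vec{\psi}_1$ with a single site (orthogonality then forcing $\vec{\psi}_U$ to vanish on that site), followed by a sublattice shift redefining the unit cell so that the support collapses to $U-1$ consecutive cells. Your added remarks on why the shift is a legitimate change of cell convention and on the ``at most $U-1$'' caveat go slightly beyond the paper's argument but do not change the approach.
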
 

\begin{proof}
Suppose we have CLS $\vpsi_{cls} = (\vpsi_1,\vpsi_2,\dots,\vpsi_U)^T$ of size $U$ with $\vpsi_1 \perp \vpsi_U$. Then, we can apply unitary transformation $R$ to the CLS such that $\vec{\phi}_{i=1,\dots,U}=R \vpsi_{i=1,\dots,U}$ and
\begin{equation}
  \vec{\phi}_1= \begin{bmatrix}
           1 \\
           0\\
           \vdots \\
           0
         \end{bmatrix}, \quad 
  \vec{\phi}_2= \begin{bmatrix}
           \phi_{2,1} \\
           \phi_{2,2}\\
           \vdots \\
           \phi_{2,\nu}
         \end{bmatrix}, \dots,\quad
  \vec{\phi}_U=\begin{bmatrix}
           0 \\
           \phi_{U,2}\\
           \vdots \\
           \phi_{U,\nu}
         \end{bmatrix} , \;
\end{equation}
where $\nu$ is the number of sites per unit cell. Due to the unitary transformation $R$, the eigenvalue problem (Eq. \eqref{eq:TB-eig-prob-1}) does not change. Next, we redefine the unit cell as illustrated in Fig. \ref{fig:u-reduce-u-1} as
\begin{gather*}
    \vec{\varphi}_1= \begin{bmatrix}
        1 \\
        \phi_{2,2} \\
        \vdots \\
        \phi_{2,\nu}
        \end{bmatrix}, \ 
    \vec{\varphi}_2= \begin{bmatrix}
        \phi_{2,1} \\
        \phi_{3,2}\\
        \vdots \\
        \phi_{3,\nu}
        \end{bmatrix}, \dots,\ 
    \vec{\varphi}_{U-1}=\begin{bmatrix}
        \phi_{U-1,1} \\
        \phi_{U,2}\\
        \vdots \\
        \phi_{U,\nu}
    \end{bmatrix},
\end{gather*} 
and $\vec{\varphi}_U = 0$. Therefore, after the unitary transformation $R$ and redefinition of the unit cell, the class of the CLS reduces to $U-1$. Illustration of this procedure are shown in Fig.~\ref{fig:u-reduce-u-1}. 
\end{proof}
\begin{figure}[htb!]
    \centering
    \includegraphics[width=0.6\linewidth]{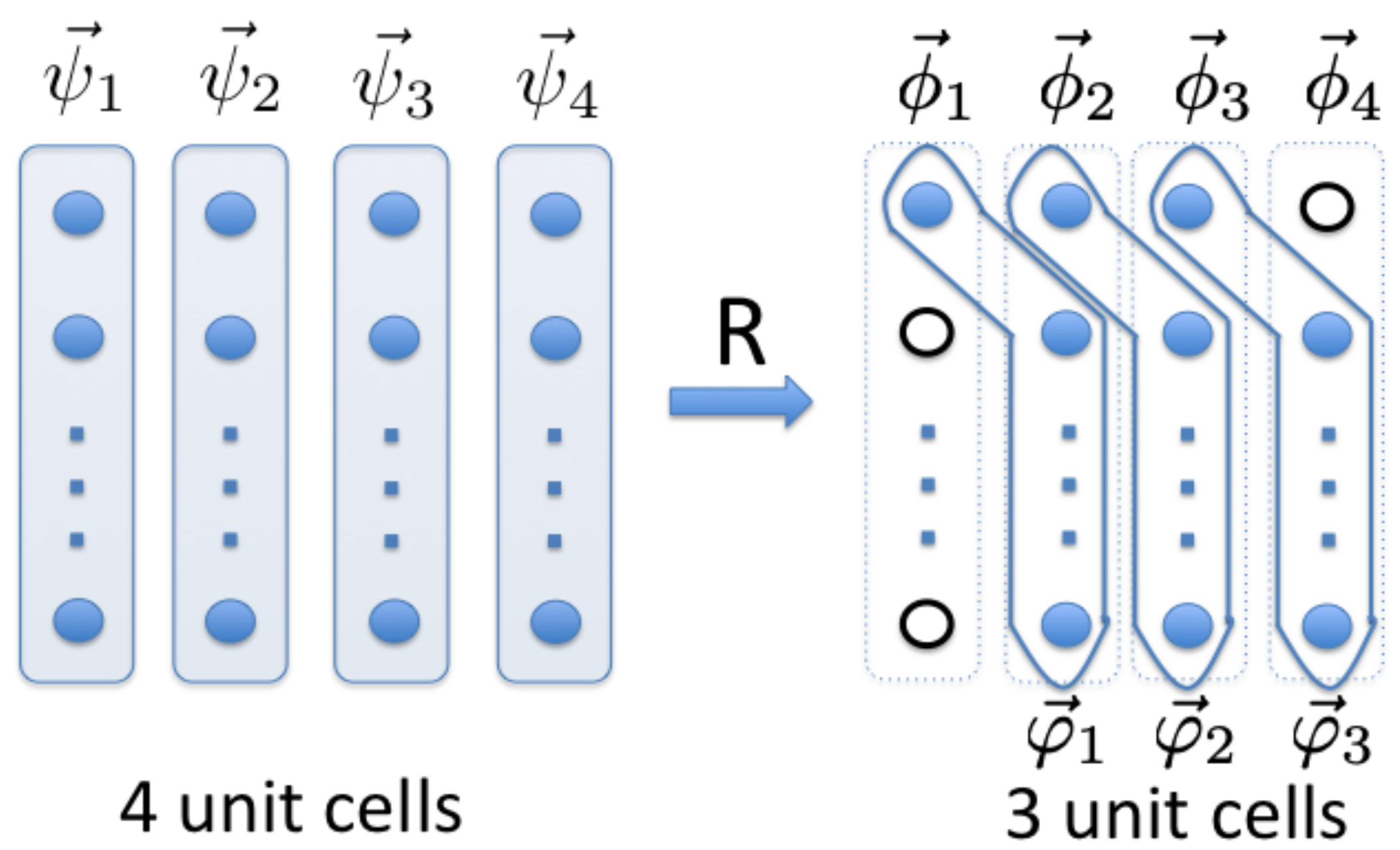}
    \caption[Reduction of $U$ class CLS to $U-1$]{(Color online) Schematics showing how a CLS of class $U=4$ reduces to $U=3$, when $\vpsi_1 \perp \vpsi_4$. Each rectangle stands for one unit cell. The filled circles stand for non-zero wave function components, and the open circles for the zero wave function components.}
    \label{fig:u-reduce-u-1}
\end{figure}

\subsection{Completeness of CLSs}

The Bloch eigenstate of a FB can always form a complete basis for the Hilbert space of the FB. Can CLSs also form such a complete basis? In this section, we discuss this question.

Given a finite-size FB lattice with $N$ sites, lattice translations of its CLS form $(N-1)$ different copies. If these $N$ copies are linearly independent, they form a complete basis for the Hilbert space of the FB. Then, what is the condition for the linear independence of the set of $N$ CLSs?

Suppose the CLS of a given FB lattice occupies $U$ unit cells, and we write the CLS whose first component is located in the $l$th unit cell as $\vec{\Psi}_l=\left(\dots,0,\vec{\psi}_{l},\vec{\psi}_{l+1},\dots,\vec{\psi}_{l+U-1},0,\dots\right)$. The CLS component $\vec{\psi}_{j=l,l+1,\dots,l+U-1}$ is a $\nu$-dimensional vector, with each component giving the wave function of each site in the $j$th unit cell. In this case, all lattice translations form a set $\{ \vec{\Psi}_{l=1,\dots,N} \}$, which leads to the following theorem.

\begin{theorem}
    \label{theo:lin-indep-set-cls}
   If the set of CLS components $\{\vec{\psi}_{j=l,\dots,l+U-1}\}$ is linearly independent, then the set of CLS translations $\{ \vec{\Psi}_{l=1,\dots,N} \}$ is also linearly independent, and forms a complete basis to span the Hilbert space of the flatband.
\end{theorem}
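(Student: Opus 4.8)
The plan is to prove linear independence first, by directly matching unit-cell components, and then to obtain completeness by a dimension count against the flatband subspace. I take the lattice to have $N$ unit cells with periodic boundary conditions (the self-consistent reading of ``$N$ copies''), so there are exactly $N$ distinct translates $\vec{\Psi}_{l=1,\dots,N}$, and $U<N$ so that no translate overlaps itself. Suppose $\sum_{l=1}^N c_l \vec{\Psi}_l = 0$. Reading off the block of this vector equation living in unit cell $m$, the banded (Toeplitz) structure of the translates gives
\begin{equation}
    \sum_{s=1}^U c_{m-s+1}\,\vec{\psi}_s = 0, \qquad m=1,\dots,N,
\end{equation}
with cell indices understood modulo $N$. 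Since the components $\{\vec{\psi}_s\}_{s=1}^U$ are linearly independent by hypothesis, each coefficient in this combination must vanish; in particular the $s=1$ term forces $c_m=0$. Letting $m$ range over all cells yields $c_l=0$ for every $l$, so the $N$ translates are linearly independent. (This already requires $U\le\nu$, consistent with the earlier bound on the CLS class.)

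For completeness, each $\vec{\Psi}_l$ is by construction an eigenstate at the flatband energy, so all $N$ translates lie in the flatband subspace $\mathcal{H}_{\mathrm{FB}}$. That subspace is spanned by the flatband Bloch eigenstates $\vec{u}(k)$, one for each of the $N$ allowed momenta of \eqref{eq:allowed-ks}, so $\dim\mathcal{H}_{\mathrm{FB}}=N$. Having exhibited $N$ linearly independent vectors inside an $N$-dimensional space, I conclude they form a basis, which is exactly the asserted completeness.

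The step requiring the most care is completeness rather than independence: independence is essentially forced by the linear independence of the components, whereas completeness hinges on the flatband subspace being exactly $N$-dimensional and on the translates not collapsing into a proper subspace. The cleanest way to make this airtight—and the route I would actually write up—is the Bloch/Fourier reformulation: the discrete Fourier transform $\sum_l e^{ikl}\vec{\Psi}_l$ reproduces (up to the prefactor $N(k)$) the Bloch state $\vec{u}(k)$ of \eqref{eq:1d-cls-and-bloch-relation}, and linear independence of $\{\vec{\psi}_s\}$ guarantees $\sum_{s}\vec{\psi}_s e^{iks}\neq 0$ for every allowed $k$. Hence all $N$ Bloch states are nonzero with distinct translation eigenvalues $e^{ik}$, so they are independent and span $\mathcal{H}_{\mathrm{FB}}$; since the translates are obtained from them by an invertible $N\times N$ DFT, the translates form a basis as well. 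This single computation delivers both independence and completeness at once and avoids any separate boundary bookkeeping, so it is the form in which I would present the final argument.
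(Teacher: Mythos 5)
Your proof is correct, and for the linear-independence step it is essentially the paper's argument: project the hypothesized dependency $\sum_l c_l \vec{\Psi}_l = 0$ onto a single unit cell, obtain a vanishing linear combination of the components $\vec{\psi}_1,\dots,\vec{\psi}_U$, and use their linear independence to kill the coefficients. (The paper phrases this contrapositively; your direct version is slightly cleaner because it never has to argue that the per-cell combination is nontrivial, and your explicit index bookkeeping $c_{m-s+1}$ with periodic boundary conditions fixes a sloppiness in the paper's subscripting.) Where you go beyond the paper is completeness: the paper stops at linear independence and treats ``$N$ independent translates therefore span the flatband Hilbert space'' as immediate, whereas you supply the dimension count and, more usefully, the DFT reformulation in your last paragraph, in which linear independence of the components forces $\sum_s \vec{\psi}_s e^{iks}\neq 0$ for every allowed $k$, so all $N$ flatband Bloch states are recovered from the translates by an invertible $N\times N$ transform. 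That Fourier version is the one worth writing up: the bare dimension count in your second paragraph tacitly assumes the translates lie in an $N$-dimensional eigenspace, which is delicate when the flatband touches a dispersive band (the full eigenspace of $H$ at $E_{FB}$ can then exceed $N$ dimensions), and the Bloch decomposition is precisely what closes that gap --- it is also the mechanism behind the paper's later corollary linking component independence to removable band touchings.
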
 
For convenience, in this thesis, if the set of lattice translations of a CLS forms a complete basis, then we refer to the CLS as \emph{complete}.

Theorem \ref{theo:lin-indep-set-cls}, which applies in any dimension, can be equivalently stated as: \emph{If the set of CLS translations $\{ \vec{\Psi}_{l=1,\dots,N} \}$ is linearly dependent, then the set of CLS components $\{\vec{\psi}_{j=l,\dots,l+U-1}\}$ is also linearly dependent}. Here, we prove this equivalent statement for a 1D case, and sketch an extension for higher dimensions.

\begin{proof}
Consider a linearly dependent set of CLS translations $\{ \vec{\Psi}_{l=1,\dots,N} \}$ of a 1D FB lattice, such that
\begin{equation}
    \label{eq:lin-dep-cls}
    \sum_{l=1}^N \alpha_l \vec{\Psi}_l = 0,\ \ l\in\mathbb{Z},
\end{equation}
where $\vec{\Psi}_l=\left(\dots,0,\vec{\psi}_{l},\vec{\psi}_{l+1},\dots,\vec{\psi}_{l+U-1},0,\dots\right)$, and $\vec{\psi}_{j=l,l+1,\dots,l+U-1}$ are $\nu$-component vectors whose components are the wave functions on all lattice sites in $j=l,l+1,\dots,l+U-1$th unit cells. A necessary condition for Eq. \eqref{eq:lin-dep-cls} is 
\begin{equation}
    \sum_{j=l}^{l+U-1}\alpha_{j+l}\vec{\psi}_j = 0\;,
\end{equation}
and thus set $\{\vec{\psi}_{j=l,\dots,l+U-1}\}$ has to be linearly dependent. Therefore, \emph{if CLS component set $\{\vec{\psi}_{j=l,\dots,l+U-1}\}$ is linearly independent, then CLS translation set $\{ \vec{\Psi}_{l=1,\dots,N} \}$ is linearly independent as well}. This proves Theorem \ref{theo:lin-indep-set-cls} in 1D.
\end{proof}
The proof for higher dimensions can be performed by a similar procedure, but CLS shape has to be first taken into account. 

 
When a CLS is not complete, i.e. the number of linearly independent CLS translations is less than the dimension of the Hilbert space of its FB, there must be missing states. These can be explained by the band touching or crossing properties of the FB, which we discuss in the following section.  

\subsection{Relation between CLS completeness and band touching}
\label{section3.2.3} 

In this section, we discuss how FBs interact with dispersive bands, which in turn is related to CLS properties. When FBs are gapped from dispersive bands, their CLSs are complete \cite{rhim2018classification}; however, when FBs touch or cross dispersive bands, their CLSs may not be complete.


Consider the simplest CLS class $U=1$. As we discussed in Section \ref{section2.4.3}, we know that such a CLS and its dispersive states can be decoupled \cite{flach2014detangling}. Then, band touching or crossing can be removed by shifting the FB without disturbing its flatness, and thus the CLSs are complete. This type of band touching or crossing is called \emph{removable band touching}. Therefore, we may write the following theorem for $U=1$ CLSs.

\begin{theorem}
    \label{theo:u1-non-singular}
    The $U=1$ class CLSs in any dimension are always complete as the band touchings or crossings between $U=1$ flatbands and dispersive bands are removable.
\end{theorem}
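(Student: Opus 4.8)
The plan is to leverage the entangling/detangling structure of class $U=1$ flatbands established in Section~\ref{section2.4.2} and then read off completeness from the detangled picture, where it becomes manifest. The two clauses of the statement---removability of the band touching and completeness of the CLSs---are handled in that order, since the first supplies the mechanism that forces the second.

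First I would recall from Ref.~\cite{flach2014detangling} that any class $U=1$ flatband lattice, in any dimension and with any number $n$ of flatbands, can be brought by a cell-local unitary transformation $\hat{U}$ to a detangled lattice in which the flatband degrees of freedom sit on $n$ genuinely isolated sites per unit cell, fully decoupled from the remaining $\nu-n$ sites that carry the dispersive spectrum. In this detangled basis those sites carry onsite energy $\EFB$ and no hopping whatsoever, so I can add an arbitrary constant $\delta$ to the onsite energies of precisely these isolated sites: this shifts $\EFB \to \EFB + \delta$ rigidly while leaving every other band, and flatness itself, untouched. Choosing $\delta$ so that $\EFB+\delta$ lies outside the range of all dispersive bands opens a spectral gap, which shows that any touching or crossing of a $U=1$ flatband with a dispersive band is removable.

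For completeness I would work in the same detangled basis. There the irreducible CLS is simply an isolated single-site state occupying one unit cell, so its $N$ lattice translations have pairwise disjoint support, are trivially linearly independent, and hence span the $N$-dimensional flatband Hilbert space; equivalently, the single CLS component is a nonzero vector, i.e. a linearly independent set of size one, so Theorem~\ref{theo:lin-indep-set-cls} yields completeness at once. Since $\hat{U}$ is unitary and the diagonal shift $\delta$ does not alter eigenvectors, the linear independence and the span of the CLS translations are preserved under the inverse transformation, so the CLSs of the original lattice are complete as well; alternatively, once the gap is open one may invoke the gapped-flatband completeness result of Ref.~\cite{rhim2018classification} directly. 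I expect the only genuinely non-trivial ingredient to be the existence of the global detangling unitary $\hat{U}$ for an arbitrary $U=1$ lattice in arbitrary dimension---but this is exactly the content of the cited construction, so the present argument reduces to transporting completeness across $\hat{U}$, which is routine.
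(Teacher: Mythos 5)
Your proof is correct and follows essentially the same route as the paper: both arguments rest on the detangling result of Ref.~\cite{flach2014detangling} for $U=1$ CLSs, remove any band touching by rigidly shifting the flatband energy in the detangled basis, and then deduce completeness from the manifestly independent isolated-site states transported back through the local unitary. You merely spell out the linear-independence and transport steps that the paper leaves as a one-line assertion.
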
 

In the $U=2$ case, our analytic results for 1D FB lattices with two bands~\cite{maimaiti2017compact} show that such FBs are always gapped from dispersive bands (see Chapter \ref{chapter4}). Likewise, in our later study on 1D FB generators~\cite{maimaiti2019universal}, all $U>1$ examples show gapped FBs (see Chapter \ref{chapter4}). This leads to the following conjecture.

\begin{conjecture}
    \label{conj:1d-u-gapped-fb}
    In 1D, $U>1$ flatbands are gapped from dispersive bands, and their CLSs are complete.
\end{conjecture}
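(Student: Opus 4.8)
The statement has two halves---the $U>1$ flatband is gapped from the dispersive bands, and its CLS is complete---and the plan is to treat completeness as essentially a corollary and to concentrate the real work on the gap. I would work in the Bloch picture: for nearest-neighbor unit-cell hopping the Bloch Hamiltonian $H(k)=H_0+H_1 e^{ik}+H_1^\dagger e^{-ik}$ is a Hermitian $\nu\times\nu$ matrix, a Laurent polynomial in $z=e^{ik}$, with $E_{FB}$ an eigenvalue for every $k$. By the CLS--Bloch correspondence \eqref{eq:1d-cls-and-bloch-relation}, the flatband Bloch eigenvector is the vector polynomial $\vec{u}(k)=\sum_{l=1}^{U}\vec{\psi}_l e^{ikl}$ of degree $U-1$ in $z$. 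The first fact I would record is that irreducibility of the CLS is equivalent to the components $u_j(z)$ having no common root in $\mathbb{C}$: a common root $z_0$ would let me write $\vec{u}(z)=(z-z_0)\vec{v}(z)$, and since $(H(z)-E_{FB})\vec{u}(z)\equiv 0$ is a polynomial identity, $\vec{v}(z)$ would again be a flatband Bloch vector, i.e. the transform of a CLS of size $U-1$, contradicting irreducibility (consistently with Theorem \ref{theo:irreducibility-cond}). Completeness then follows at once: either from Theorem \ref{theo:lin-indep-set-cls} combined with Conjecture \ref{conj:lin-dep-redblty-1d} (irreducible $\Rightarrow$ independent components $\Rightarrow$ complete), or, once the gap is known, directly from the result that gapped flatbands have complete CLS \cite{rhim2018classification}.

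For the gap I would argue by contradiction. A touching of the flatband with a dispersive band at $k_0$ means $E_{FB}$ is a degenerate eigenvalue of the Hermitian matrix $H(k_0)$, so $M(z_0):=H(z_0)-E_{FB}I$ has $\mathrm{rank}\le \nu-2$ at $z_0=e^{ik_0}$, whereas generically it has rank $\nu-1$ with kernel spanned by $\vec{u}$. Writing the dispersive spectrum through $\det(EI-H(z))=(E-E_{FB})\,q(E,z)$, a touching is exactly a zero of $q(E_{FB},z)$ on the unit circle; equivalently, using the Hermitian identity $\mathrm{adj}\,M(z)=\pm\,q(E_{FB},z)\,\vec{u}(z)\vec{u}(z)^\dagger/\lVert\vec{u}(z)\rVert^2$ together with $\lVert\vec{u}(z)\rVert^2>0$ on $|z|=1$ (the no-common-root property), a touching forces $\mathrm{adj}\,M(z_0)=0$.

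This splits into two cases. If the touching is \emph{singular}, i.e. $\vec{u}(z_0)=0$, then $z_0$ is a common root and the CLS reduces, contradicting $U>1$ irreducibility; this is the same no-common-root argument and rules out every singular touching (and reproves completeness). The hard case is a \emph{non-singular} touching, $\vec{u}(z_0)\neq 0$: here the degeneracy supplies a second eigenvector $\vec{w}\perp\vec{u}(z_0)$ at energy $E_{FB}$, but $\vec{w}$ originates from a dispersive band and is not compactly supported, so neither the uniqueness of the irreducible CLS (Conjecture \ref{conj:lin-comb-diff-cls}) nor the completeness count excludes it. I expect this to be the main obstacle: completeness is perfectly consistent with a non-singular touching, so the gap is strictly stronger than completeness and cannot be read off from CLS counting alone.

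To close that gap I would abandon CLS counting and instead try to show that the real Laurent polynomial $q(E_{FB},z)=\prod_{\mu}(E_{FB}-E_\mu(z))$ has no zeros on $|z|=1$ by exploiting the special structure of flatband Hamiltonians. Two concrete routes seem promising: (i) seek a sign-definite or Schur-type representation of $q(E_{FB},z)$ forced by Hermiticity and the no-common-root property of $\vec{u}$, which would keep the dispersive continuum away from $E_{FB}$; or (ii) pass to the explicit normal form produced by the 1D flatband generator of Chapter \ref{chapter4}, where every $U>1$ Hamiltonian arises from the inverse-eigenvalue construction, and verify there that $E_{FB}$ always lies outside the range of the dispersive bands. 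Since the claim is already established for $U=2$ (and $U=3$) precisely through such explicit solutions, the most realistic path to the general conjecture is to distill from the generator a structural, $U$-independent reason why the flatband energy is expelled from the dispersive continuum.
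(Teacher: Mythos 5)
The statement you are trying to prove is presented in the thesis as a \emph{conjecture}, not a theorem: the paper offers no proof, only evidence, namely the closed-form two-band result of Chapter \ref{chapter4} (Eqs. \eqref{efbu2}--\eqref{bw}, which show $\Delta_g>0$ for $\nu=2$, $U=2$ away from the degenerate points where the CLS collapses to $U=1$) plus the observation that every $U>1$ example produced by the generator of Ref.~\cite{maimaiti2019universal} is gapped. So there is no ``paper's own proof'' for your argument to match; your route (ii) --- verifying the gap on the explicit inverse-eigenvalue normal forms --- is in fact precisely the evidence the thesis itself relies on. Your treatment of completeness is consistent with the paper's logic, but be explicit that it is conditional: the chain irreducible $\Rightarrow$ linearly independent components rests on Conjecture \ref{conj:lin-dep-redblty-1d}, which the thesis only proves for $U=2,3$, and only then does Theorem \ref{theo:lin-indep-set-cls} deliver completeness. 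Your polynomial framing (irreducibility $\Leftrightarrow$ no common root of the components of $\vec{u}(z)$ on $\mathbb{C}$, and the adjugate identity forcing $q(E_{FB},z_0)=0$ at a touching) is a genuinely different and sharper formulation than anything in the thesis, and it cleanly disposes of singular touchings; that part is a real contribution.

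The gap is the one you yourself name: a non-singular band touching, where $\vec{u}(z_0)\neq 0$ but $E_{FB}$ is a degenerate eigenvalue of $H(k_0)$ with a second, extended eigenvector. Nothing in the CLS machinery --- uniqueness, irreducibility, completeness, or your no-common-root lemma --- excludes this, and you correctly observe that the gap claim is strictly stronger than completeness. Neither of your two proposed closures is carried out: route (i) (a sign-definite representation of $q(E_{FB},z)$ on $|z|=1$) is not substantiated by any structural property you have established, and route (ii) reduces to case-checking the generator's output, which is exactly why the thesis leaves the statement as a conjecture rather than a theorem. So your proposal should be read as a promising reformulation of the problem together with a correct identification of the obstruction, not as a proof; as it stands it establishes no more than the paper does, and the conjecture remains open beyond the explicitly solved $\nu=2$ (and small-$U$) cases.
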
 
Combining Theorem \ref{theo:u1-non-singular} and Conjecture \ref{conj:1d-u-gapped-fb}, we conclude the following: \emph{Band touchings or crossings of 1D FBs are always removable or gapped, and thus their CLSs are always complete}.

For higher dimensions $d>1$, there are some types of band touchings that cannot be removed without destroying the FB. We call such band touchings as \emph{irremovable}, and these can be identified through the linear dependence of the CLS components. The CLS of a FB with irremovable band touchings is not complete~\cite{rhim2018classification}. Then, Theorem \ref{theo:lin-indep-set-cls} leads to the following corollary.

\begin{corollary}
    \label{lem:lin-indep-non-singular}
    In $d>1$, if the components of a CLS are linearly independent, then the CLS is complete and the flatband is either gapped or has removable band touching. 
\end{corollary}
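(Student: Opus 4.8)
The plan is to derive both halves of the statement by combining Theorem~\ref{theo:lin-indep-set-cls} with the known correspondence between CLS completeness and band touching established in Ref.~\cite{rhim2018classification}; the corollary is essentially a repackaging of these two ingredients, so no genuinely new computation is expected.

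First, I would obtain completeness directly. By hypothesis the CLS components $\{\vec{\psi}_{j=l,\dots,l+U-1}\}$ are linearly independent. Theorem~\ref{theo:lin-indep-set-cls} asserts precisely that linear independence of the CLS components forces the full set of lattice translations $\{\vec{\Psi}_{l=1,\dots,N}\}$ to be linearly independent and to span the Hilbert space of the flatband. By the definition of completeness adopted above, this is exactly the statement that the CLS is complete. Since the theorem is stated to hold in any dimension, this step goes through verbatim for $d>1$.

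Second, for the classification of the band touching I would argue by contraposition. Ref.~\cite{rhim2018classification} establishes that a flatband carrying an \emph{irremovable} touching with a dispersive band necessarily has an \emph{incomplete} CLS. Having just shown that the CLS is complete, I conclude that the flatband admits no irremovable touching. Every touching between a flatband and a dispersive band falls, by the removable/irremovable dichotomy introduced above, into exactly one of two types; ruling out the irremovable type therefore leaves only two scenarios: either the flatband is gapped from the dispersive bands (no touching at all), or any touching that is present is removable. This is the second conclusion.

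The point requiring the most care—rather than a computational obstacle—is that the second half rests entirely on the external characterization in Ref.~\cite{rhim2018classification}. I would need to check two things: that the notion of \textbf{completeness} used there coincides with the one defined here via linear independence of the CLS translations, and that the removable/irremovable classification of band touchings is genuinely exhaustive, so that excluding the irremovable case really leaves only the gapped or removable alternatives. Once these identifications are granted, the corollary follows without any estimation, as a clean logical consequence of Theorem~\ref{theo:lin-indep-set-cls} and the cited result.
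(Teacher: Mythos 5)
Your argument matches the paper's own reasoning exactly: completeness follows from Theorem~\ref{theo:lin-indep-set-cls} applied to linearly independent components, and the absence of irremovable touchings follows by contraposition from the result of Ref.~\cite{rhim2018classification} that irremovable band touching forces an incomplete CLS. The paper treats the corollary as an immediate consequence of these two ingredients, just as you do, so your proposal is correct and takes essentially the same route.
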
 

To the best of our knowledge, if the components of a CLS are linearly dependent, then the set of the CLS translations is linearly dependent (there is no counter example for this), which leads to the following conjuncture.
\begin{conjecture}
A CLS with linearly dependent components is not complete, and so the corresponding flatband has irremovable band touching.
\label{conj:lin-dep-band-touch}
\end{conjecture}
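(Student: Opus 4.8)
The plan is to attack the statement in two stages: first establish the equivalence between completeness and a non-vanishing $k$-space representation of the CLS, and then connect a vanishing $k$-space representation to an irremovable band touching. For the first stage I would start from the Bloch construction in Eq.~\eqref{eq:1d-cls-and-bloch-relation} and form the vector-valued trigonometric polynomial $\vec{U}(k)=\sum_{l=1}^{U}\vec{\psi}_l\,e^{ikl}$, whose value at an allowed wave vector $k$ is, up to the prefactor $N(k)$, the flatband Bloch eigenvector at that $k$. Because a single flatband contributes exactly one eigenstate per allowed $k$, the span of all lattice translations $\{\vec{\Psi}_l\}$ coincides, in $k$-space, with the span of $\{\vec{U}(k)\}$ over the allowed wave vectors. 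Hence the CLS is complete precisely when $\vec{U}(k)\neq 0$ for every allowed $k$: a single allowed $k_0$ with $\vec{U}(k_0)=0$ produces a flatband eigenstate that no superposition of translations can reach, i.e.\ a missing state and therefore incompleteness. This is the clean converse companion to Theorem~\ref{theo:lin-indep-set-cls}.

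With this reformulation, the target reduces to showing that if the components $\{\vec{\psi}_l\}$ are linearly dependent, then the polynomial $\vec{U}(z)$ with $z=e^{ik}$ acquires a common root on the unit circle at one of the allowed wave vectors. The next step is to feed in the flatband structure: the components are not arbitrary, since the requirement that every translation be an eigenstate at the flatband energy $\EFB$ imposes the destructive-interference and eigenvalue constraints, which in the block-matrix language of Section~\ref{section3.3} relate consecutive components through the inter-cell hopping matrices. I would use these relations to rewrite an arbitrary dependence $\sum_l c_l\vec{\psi}_l=0$ in a form whose coefficients are forced onto the unit circle, $c_l\propto e^{ik_0 l}$ for some $k_0$, thereby identifying the common root and the missing Bloch state. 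Once that root $k_0$ is located, the band-touching claim follows: there the flatband eigenvector degenerates with a dispersive band, and because completeness fails this crossing cannot be lifted by any flatness-preserving deformation, since a removable touching would restore a complete CLS basis and contradict the incompleteness just established. This is exactly the singular-flatband scenario of~\cite{rhim2018classification}, and it closes Conjecture~\ref{conj:lin-dep-band-touch}.

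The hard part, and the reason the statement is posed as a conjecture rather than a theorem, is the middle step: passing from generic linear dependence of the components to the special dependence with coefficients $e^{ik_0 l}$ that signals a root of $\vec{U}$ on the unit circle. For unconstrained vectors this implication is simply false---one can choose linearly dependent $\vec{\psi}_l$ whose polynomial $\vec{U}(z)$ has every root strictly off the unit circle---so the argument must crucially exploit the flatband Hamiltonian constraints and cannot be purely linear-algebraic. I expect the cleanest route is to show that the eigenvalue constraints force the dependence relation among the $\vec{\psi}_l$ to be compatible with the cyclic shift induced by lattice translation, so that the admissible dependences are spanned by Fourier modes $e^{ik_0 l}$; making this precise for general $U$ and $\nu$, beyond the $U=2,3$ cases already handled in Section~\ref{section3.2} and Appendix~\ref{app:reducibility-1d}, is the genuine obstacle. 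It is also where the restriction to irreducible CLSs and the interplay with Corollary~\ref{lem:lin-indep-non-singular} must be invoked to rule out spurious dependences that do not correspond to a vanishing of $\vec{U}$ at an allowed wave vector.
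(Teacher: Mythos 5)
The paper does not prove this statement: it is posed as a conjecture, supported only by the observation that no counterexample is known and by the examples of the Lieb and kagome lattices, whose CLSs have linearly dependent components, are incomplete, and whose flatbands touch dispersive bands irremovably. So there is no proof in the paper to compare yours against, and the relevant question is whether your proposal closes the conjecture. It does not, and you say so yourself. Your first stage --- completeness of the translation set is equivalent to the Bloch vector $\vec{U}(k)=N(k)^{-1}\vec{u}(k)$ being nonzero at every allowed wave vector --- is the correct reformulation and matches the singular-flatband picture of \cite{rhim2018classification} that the paper leans on in Section~\ref{section3.2.3}. But the entire content of the conjecture sits in your middle step: an arbitrary linear dependence $\sum_l c_l\vec{\psi}_l=0$ is strictly weaker than a dependence with coefficients forming a geometric progression $c_l\propto z_0^l$, which is what a common root of $\vec{U}(z)$ requires, and weaker still than such a root lying on the unit circle at an allowed $k$. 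You correctly note that for unconstrained vectors the implication is false and that the flatband constraints must be exploited, but you offer no mechanism by which the eigenvalue and destructive-interference conditions force the dependence onto Fourier modes. Asserting that ``the admissible dependences are spanned by $e^{ik_0 l}$'' is a restatement of what must be shown, not an argument.

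There is a second structural problem. The conjecture's nontrivial regime is $d\ge 2$: the paper's own Conjecture~\ref{conj:1d-u-gapped-fb}, combined with Theorem~\ref{theo:u1-non-singular}, asserts that in 1D all CLSs are complete, so in 1D a CLS with linearly dependent components should instead be reducible (Conjecture~\ref{conj:lin-dep-redblty-1d}). Your sketch is written entirely in the one-dimensional language of a single polynomial $\vec{U}(z)$ in $z=e^{ik}$, whereas the cases that actually motivate the conjecture (Lieb, kagome) require a multivariate Bloch vector $\vec{U}(k_1,k_2)$ whose vanishing locus is a variety rather than a finite set of roots; the root-counting intuition does not transfer. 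Finally, your closing step --- that the touching at $k_0$ cannot be removed because removal would restore completeness --- is circular as stated: it assumes the equivalence between incompleteness and irremovable touching that is part of what the conjecture asserts. The proposal is a reasonable research plan, but it leaves the conjecture exactly as open as the paper does.
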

For example, Lieb and kagome lattices have CLSs with linearly dependent components and their CLSs are not complete, while flatbands in these lattices have irremovable band touchings.


\section{CLS existence conditions and the flatband generator}
\label{section3.3}

In FB lattices, there should be at least two sites per unit cell in order to achieve destructive interference, because a FB in a single-band system only exists when the lattice sites are isolated. Therefore, we will only deal with multiple sites per unit cell, where it is convenient to represent the single-particle quantum states of the system in terms of single unit cell state vectors (Eq. \eqref{eq:single-cell-state-vec}) and single unit cell wave functions (Eq. \eqref{eq:single-uc-wf}). This single unit cell representation leads to the decomposition of the tight-binding Hamiltonian into a block matrix form, that we introduce in this section. The FB generator, the main concern of this thesis, is also based on matrix formalism and will be covered below.

\subsection{Block matrix representation of the tight-binding model}

For simplicity, we first consider a 1D translational invariant tight-binding network with $\nu$ sites per unit cell. This is followed by extension to higher dimensions, for which one has to consider hoppings in different directions.

We introduce $\nu \times \nu$ matrix $H_0$ to describe onsite energies and hoppings inside a unit cell, as well as $\nu \times \nu$ matrix $H_{m}$ to describe hoppings between $m$th neighboring unit cells. 
Then, we can write the eigenvalue problem \eqref{eq:TB-eig-prob-1} of the lattice in matrix form as
\begin{equation}
	\label{eq:eig-prob-mat-rep-1d}
     H_0 \vec{\psi}_{n} + \sum_{m=-\infty}^\infty H_{m} \vec{\psi}_{n+m}=E\vec{\psi}_n,\ \ m\ne 0 \; ,
\end{equation}
where the single unit cell wave function $\vec{\psi}_n$ of the $n$th unit cell is given by Eq. \eqref{eq:single-uc-wf}. For Hermitian systems, $H_0$ is Hermitian and $H_{\chi,-m}=H_{\chi,m}^\dagger$. For non-Hermitian systems, there are no restrictions for either $H_0$ or $H_m$. In the case of finite-range hoppings, $-m_c \le m \le m_c$, where $m_c$ is the maximum hopping range. Note that, in matrix representation, \emph{neighbor refers only to neighboring unit cells}, which is different from conventional definitions; we use this notation throughout this thesis unless stated otherwise.

Band structure can be achieved using $k$-space (reciprocal space) representation. Putting the Bloch wave from Eq. \eqref{eq:phi_n-k-representation} into Eq. \eqref{eq:eig-prob-mat-rep-1d} we get
\begin{equation}
  \sum_{k} \left( H_0 + \sum_{m=1}^{m_c} H_m^\dagger e^{i m k} + \sum_{m=1}^{m_c} H_m e^{-i m k }\right) \vec{u}(k) e^{-i n k} = E \sum_{k} \vec{u}(k) e^{-i n k} \; ,
  \label{eq:mat-rep-FT-1d}
\end{equation} 
where we take the lattice constant to be 1. Canceling out $e^{-i n k}$ in Eq. \eqref{eq:mat-rep-FT-1d} we get the following, for $\forall k$, 
\begin{equation}
   \left( H_0 + \sum_{m=1}^{m_c} H_m^\dagger e^{i m k} + \sum_{m=1}^{m_c} H_m e^{-i m k }\right) \vec{u}(k)  = E \vec{u}(k)  \; . 
  \label{eq:tb-k-rep-1d}
\end{equation} 
Therefore, we can write the $k$-space Hamiltonian as 
\begin{equation}
    H(k) = H_0 + \sum_{m=1}^{m_c} H_m^\dagger e^{i m k} + \sum_{m=1}^{m_c} H_m e^{-i m k}.
    \label{eq:k-space-Ham-1d}
\end{equation} 
The eigenvalues of $H(k)$ give the band structure.

Most of the time we consider nearest neighbor hoppings $m_c=1$, in which case $H_1$ describes the hoppings between nearest neighboring unit cells. Then Eq. \eqref{eq:eig-prob-mat-rep-1d} becomes 
\begin{equation}
    \label{eq:1d-nn-eig-prob-mat-rep}
     H_0 \vec{\psi}_{n} + H_1^\dagger \vec{\psi}_{n-1} +  H_1 \vec{\psi}_{n+1}=E\vec{\psi}_n \;.
\end{equation} 
Therefore, the Hamiltonian matrix for 1D nearest neighbor unit cell hoppings is a tri-diagonal block matrix,
\begin{equation}
    H = \begin{pmatrix} \ddots & \ddots & 0 & \dots &    \\
        \ddots & H_0 & H_1   & 0 & \dots \\
        0 & H_1^\dagger & H_0 & H_1 & 0   \\
        \dots & 0 & H_1^\dagger & H_0 & \ddots  \\
          & \dots & 0 & \ddots & \ddots \end{pmatrix},
    \label{eq:tri-diag-block}    
\end{equation}
where $H_1$ and $H_0$ are $\nu \times \nu$ matrices. Then the $k$-space Hamiltonian (Eq.\eqref{eq:k-space-Ham-1d}) for nearest neighbor hoppings becomes 
\begin{equation}
    H(k) = H_0 + H_1^\dagger e^{i k} + H_1 e^{-i k}
    \label{eq:1d-k-Hamiltonian}.
\end{equation}

\begin{figure}[htb!]
    \centering
    \includegraphics[width=0.6\linewidth]{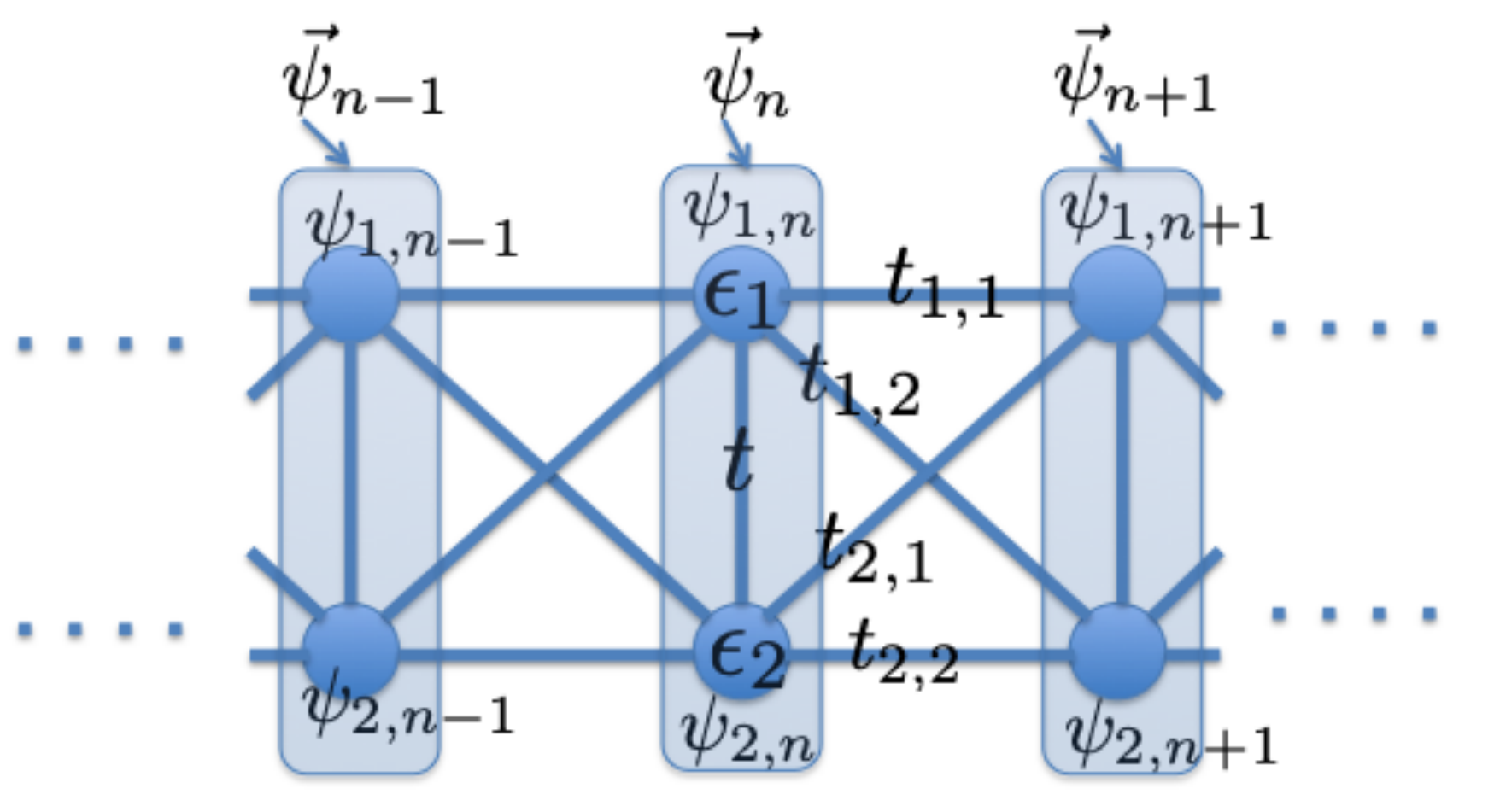}
    \caption[A cross-stitch lattice]{Schematic of a cross-stitch lattice. }
    \label{fig:cross-stich}
\end{figure}

Take a simple 1D lattice with two sites per unit cell and nearest neighbor hoppings, as shown in Fig. \ref{fig:cross-stich}. Then the wave function of the $n$th unit cell and hopping matrices are 
\begin{equation}
    \vec{\psi}_{n}=\begin{pmatrix} \psi_{1,n}\\ \psi_{2,n} \end{pmatrix},\quad H_0=\begin{pmatrix} \epsilon_1 & t\\ t & \epsilon_2 \end{pmatrix},\quad  H_1=\begin{pmatrix} t_{1,1} & t_{1,2}\\ t_{2,1} & t_{2,2} \end{pmatrix}.
\end{equation} 
The eigenvalue problem in Eq. \eqref{eq:1d-nn-eig-prob-mat-rep} now becomes 
\begin{equation}
\begin{aligned}
    \epsilon_1 \psi_{1,n} + t \psi_{2,n} + t_{1,1} \psi_{1,n+1} + t_{1,1} \psi_{1,n-1} + t_{1,2} \psi_{2,n+1} + t_{1,2} \psi_{2,n-1} &= E \psi_{1,n}\\ 
    \epsilon_2 \psi_{2,n} + t \psi_{1,n} + t_{2,1} \psi_{1,n+1} + t_{2,1} \psi_{1,n-1} + t_{2,2} \psi_{2,n+1} + t_{2,2} \psi_{2,n-1} &= E \psi_{1,n},
\end{aligned}    
\end{equation}
and the $k$-space Hamiltonian becomes
\begin{equation}
    H(k) = H_0 + H_1^\dagger e^{i k} + H_1 e^{-i k} = 
    \begin{pmatrix}
     \epsilon _1+2 \cos (k) t_{1,1} & t+e^{-i k} t_{1,2}+e^{i k} t_{2,1} \\
     t+e^{i k} t_{1,2}+e^{-i k} t_{2,1} & \epsilon _2+2 \cos (k) t_{2,2} \\
    \end{pmatrix}.
\end{equation} 
Then, the energy bands are given by the eigenvalues of $H(k)$. 

The block matrix representation naturally extends to higher dimensions, which contain multiple hopping directions, either along the primitive lattice translation vectors or any other lattice translation vectors (as linear combinations of primitive lattice translation vectors). If we consider nearest neighbor hoppings along each hopping direction, we can introduce a hopping matrix for each hopping direction. Suppose $H_{\chi}$ is the nearest neighbor hopping matrix for the $\chi$th direction, and then the tight-binding eigenvalue problem from Eq. \eqref{eq:TB-eig-prob-1} reads
\begin{equation}
     H_0 \vec{\psi}_{n} + \sum_{\chi} H_{\chi}^\dagger \vec{\psi}_{l_{\chi}^\prime} + \sum_{\chi} H_{\chi} \vec{\psi}_{l_{\chi}}=E\vec{\psi}_n \; ,
     \label{eq:nn-mat-rep}
\end{equation} 
where $l_{\chi}$ and $l_{\chi}^\prime$ are the indices of the nearest neighboring unit cells along the $\chi$th direction. Then, the $k$-space Hamiltonian can be written as 
\begin{equation}
    H(k) = H_0 + \sum_{\chi} H_{\chi}^\dagger e^{i \vec{R}_{\chi} \cdot \vec{k}} + \sum_{\chi} H_{\chi} e^{-i \vec{R}_{\chi} \cdot \vec{k}}\; ,
    \label{eq:nn-k-space-ham}
\end{equation} 
where $\vec{R}_{\chi}$ is the lattice vector of the nearest neighbor along the $\chi$th hopping direction.

\subsection{CLS existence conditions and flatband tester} 
\label{section3.4}

As learned in Section \ref{section2.3}, FBs are the consequence of destructive interference. Accordingly, achieving CLSs via destructive interference is the key to constructing FB lattices. In this section, we cover CLS existence conditions (or destructive interference conditions) and a FB tester before introducing our FB generator in the next subsection.

\begin{figure}[htb!]
    \centering
    \includegraphics[clip,width=0.5\columnwidth]{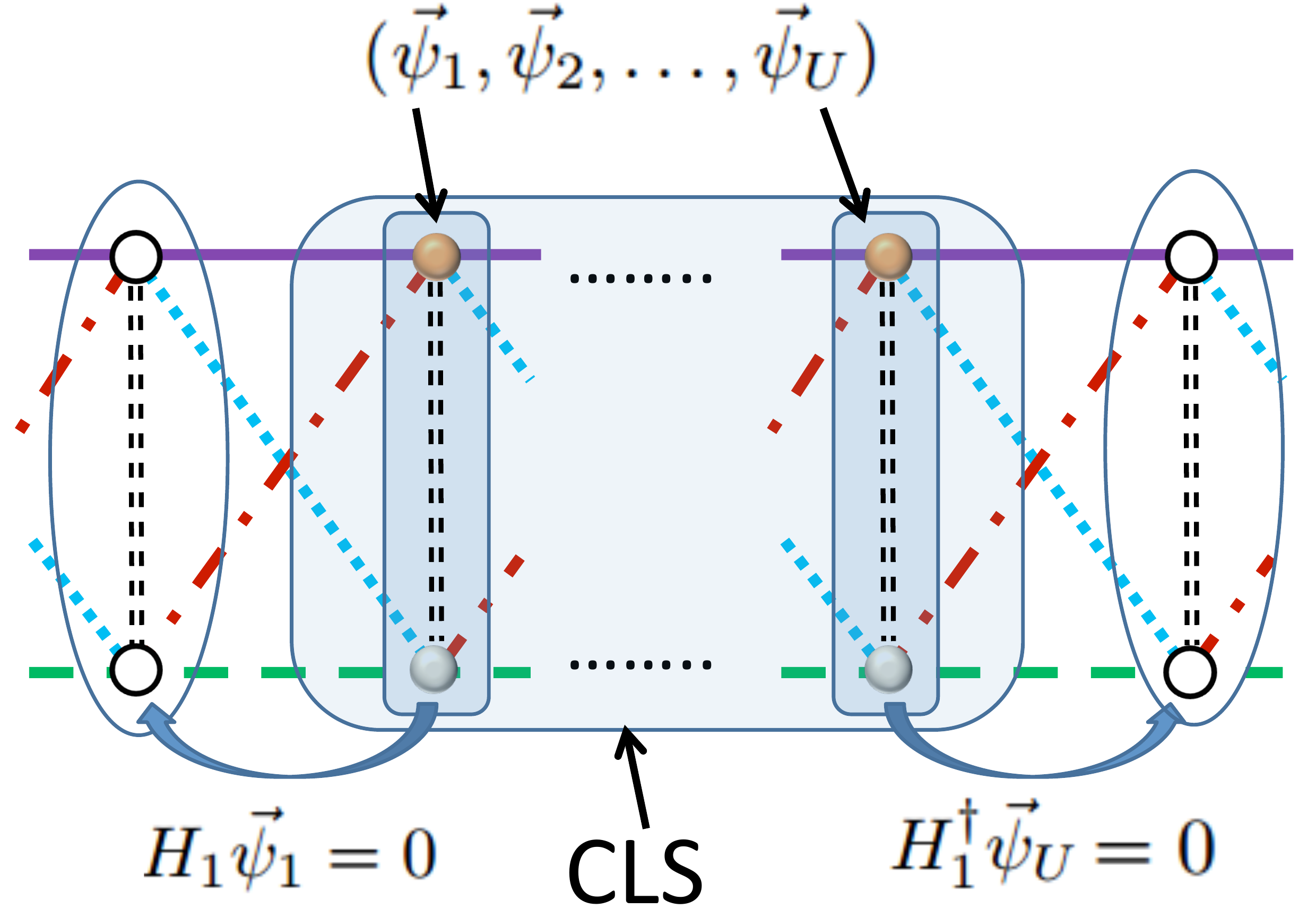}
    \protect\caption[Schematic of a CLS in a two-band lattice.]{(color online) Schematics of a compact localized state with nearest neighbor hopping ($m_c=1$) and $\nu=2$.}
    \label{fig2}
\end{figure}

For simplicity, we consider a 1D FB lattice with nearest neighbor hoppings. Suppose the lattice has a CLS occupying $U$ unit cells, as shown in Fig. \ref{fig2}. Here, the hopping from the first unit cell to the left and hoppings from the $U$th unit cell to the right must be zero, which gives the following \emph{destructive interference conditions} (see Fig. \ref{fig2})

\begin{equation}
    H_1 \vec{\psi}_1 = 0 , \quad H_1^\dagger \vec{\psi}_U = 0 \;.
	\label{eq:dest-int-cond-1d}
\end{equation} 
This tells us that, in 1D, the hopping matrix $H_1$ must have eigenvalues equaling zero. We can then write the following theorem.
\begin{theorem}
    In a 1D lattice with nearest neighbor hoppings, if the lattice has a CLS, then the nearest neighbor hopping matrix must be singular:
    \begin{equation}
        \det H_1 = 0.
    \end{equation}
  \label{theo:nes-cond-cls-1d}
\end{theorem}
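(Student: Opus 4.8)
The plan is to read the singularity of $H_1$ directly off the destructive interference conditions in Eq.~\eqref{eq:dest-int-cond-1d}. The essential observation is that those conditions already exhibit an explicit nonzero null vector of $H_1$, so no real computation is needed---only a careful justification of why that null vector is genuinely nonzero.

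First I would invoke the destructive interference conditions $H_1 \vec{\psi}_1 = 0$ and $H_1^\dagger \vec{\psi}_U = 0$, which were established just above by demanding that the amplitude leaking from the leftmost occupied unit cell to its left neighbor, and from the rightmost occupied cell to its right neighbor, both vanish. The key input is that the boundary component $\vec{\psi}_1$ is nonzero. This follows from the meaning of a CLS occupying $U$ unit cells: by convention $\vec{\psi}_1$ is the amplitude in the first occupied unit cell, so if $\vec{\psi}_1 = 0$ the state would actually occupy fewer than $U$ cells, contradicting the hypothesis (and, for an \emph{irreducible} CLS, contradicting minimality of the occupied span). Hence $\vec{\psi}_1 \ne 0$.

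Next, since $\vec{\psi}_1$ is a nonzero vector annihilated by $H_1$, the kernel of $H_1$ is nontrivial, so $H_1$ is not invertible and therefore $\det H_1 = 0$, which is exactly the claim. As a consistency check one can also use the second condition: $H_1^\dagger \vec{\psi}_U = 0$ with $\vec{\psi}_U \ne 0$ gives $\det H_1^\dagger = \overline{\det H_1} = 0$, yielding the same conclusion from the right boundary of the CLS.

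The argument is essentially immediate once the interference conditions are in hand, so there is no serious computational obstacle. The only point that requires care---and the one I would state explicitly---is the nonvanishing of the boundary components $\vec{\psi}_1$ and $\vec{\psi}_U$: the entire content of the theorem rests on the fact that a genuine CLS has nonzero amplitude on its boundary unit cells, so that the homogeneous equation $H_1 \vec{\psi}_1 = 0$ truly forces a zero eigenvalue of $H_1$ rather than being satisfied by the trivial vector.
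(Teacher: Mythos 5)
Your proposal is correct and follows exactly the paper's own route: the paper derives the theorem directly from the destructive interference conditions $H_1\vec{\psi}_1 = 0$ and $H_1^\dagger\vec{\psi}_U = 0$, concluding that $H_1$ has a zero eigenvalue and hence vanishing determinant. Your explicit justification that the boundary components $\vec{\psi}_1,\vec{\psi}_U$ are nonzero is a point the paper leaves implicit, but it is the same argument.
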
 

The destructive interference conditon in Eq. \eqref{eq:dest-int-cond-1d} and Theorem \ref{theo:nes-cond-cls-1d} imply that the eigenvalue problem (Eq. \eqref{eq:1d-nn-eig-prob-mat-rep}) of a class $U$ FB lattice must have a solution in the following form
\begin{equation}
	\vec{\Phi}_U = (\dots,0,\vec{\psi_1},\dots,\vec{\psi}_U,0,\dots) \; .
\end{equation}
Then, the eigenvalue problem can be written as
\begin{equation}
	H_U \vec{\Psi}_U = E_{FB} \vec{\Psi}_U \; ,
	\label{eq:1d-eig-prob-HU}
\end{equation}
where $E_{FB}$ is FB energy, and $H_U$ is a $U\times U$ tri-diagonal block matrix
\begin{equation}
    H_U = \begin{pmatrix} H_0 & H_1 & 0 & \dots &  0 & 0   \\
       H_1^\dagger & H_0 & H_1   & 0 & \dots  & 0\\
        0 & \ddots & \ddots & \ddots  & 0 & \vdots  \\
	\vdots & \dots &\ddots & \ddots & \ddots & 0  \\
        0 & \dots & 0 & H_1^\dagger & H_0 & H_1   \\
         0 & 0 & \dots & 0 & H_1^\dagger & H_0 \end{pmatrix}, 
    \label{eq:H-U-1d}    
\end{equation}
and 
\begin{equation}
	\vec{\psi}_U= (\vec{\psi_1},\dots,\vec{\psi}_U) \;.
	\label{eq:psi-U-1d}
\end{equation}

Given a FB Hamiltonian, we can test the class of the FB using the following procedure. For simplicity, we consider a 1D case. 

\begin{definition}
    \emph{Flatband tester}: Given a 1D FB Hamiltonian with nearest neighboring unit cell hoppings, we can find $\vec{\psi}_1$ and $\vec{\psi}_U$ from $H_1 \vec{\psi}_1 = H_1^\dagger \vec{\psi}_U=0$, and the eigenvalue problem \eqref{eq:1d-eig-prob-HU} has a solution in the form \eqref{eq:psi-U-1d} corresponding to eigenvalue $E_{EB}$. Starting from $U=1$, we solve the eigenvalue problem. If there is no solution for $U=1$, we successively increase to $U=2,3,\dots$ until we obtain a solution. In this way, we get the smallest $U$ that gives an irreducible CLS; this $U$ is the class of the FB lattice.
    \label{def:fb-tester}
\end{definition}
A similar procedure can be employed for higher dimensions, with more complexity. 


From this test procedure, we can infer the following necessary and sufficient conditions for the existence of a CLS in a given 1D lattice Hamiltonian, as stated in the theorem below.

\begin{theorem}
    Given a 1D lattice with intracell hoppings $H_0$ and nearest neighbor hoppings $H_1$, if the tri-diagonal block matrix (Eq. \eqref{eq:H-U-1d}) has an eigenvector (Eq. \eqref{eq:psi-U-1d}) that corresponds to eigenvalue $E_{FB}$ and satisfies the destructive interference conditions (Eq. \eqref{eq:dest-int-cond-1d}), then a CLS of size $U$ exists in this lattice.
\label{theo:sufficient-cond-cls-1d}
\end{theorem}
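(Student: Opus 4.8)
The plan is to exhibit the CLS explicitly and verify directly that it solves the full infinite-lattice eigenvalue problem. Concretely, I would take the finite eigenvector $\vec{\Psi}_U=(\vec{\psi}_1,\dots,\vec{\psi}_U)$ of $H_U$ from Eq.~\eqref{eq:psi-U-1d} and embed it into the infinite lattice by zero padding, defining the candidate state
\begin{equation}
    \vec{\Phi}_U = (\dots,0,\vec{\psi}_1,\dots,\vec{\psi}_U,0,\dots),
\end{equation}
with the block $\vec{\psi}_1$ placed in a reference unit cell (say $n=1$) and all blocks outside $1\le n\le U$ set to zero. The goal is to show that $\vec{\Phi}_U$ satisfies the bulk recurrence \eqref{eq:1d-nn-eig-prob-mat-rep} at \emph{every} unit cell $n$ with the single eigenvalue $E_{FB}$, which is precisely the statement that $\vec{\Phi}_U$ is an eigenstate of the full Hamiltonian \eqref{eq:tri-diag-block} localized on $U$ cells, i.e.\ a CLS of size $U$.

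The verification splits into three groups of cells. First, for the cells strictly inside the support and for the two edge cells $n=1$ and $n=U$, the recurrence \eqref{eq:1d-nn-eig-prob-mat-rep} reduces---after using $\vec{\psi}_0=0$ and $\vec{\psi}_{U+1}=0$---to exactly the block rows of the finite eigenvalue problem \eqref{eq:1d-eig-prob-HU}. These hold by assumption, since $\vec{\Psi}_U$ is by hypothesis an eigenvector of $H_U$ in \eqref{eq:H-U-1d} with eigenvalue $E_{FB}$. Second, for the two flanking cells $n=0$ and $n=U+1$ that border the support, the onsite term and one hopping term vanish because the corresponding blocks are zero, and the recurrence collapses to $H_1\vec{\psi}_1=0$ and $H_1^\dagger\vec{\psi}_U=0$ respectively; these are exactly the destructive interference conditions \eqref{eq:dest-int-cond-1d} assumed in the hypothesis, so they too are satisfied. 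Third, for all remaining cells ($n\le -1$ or $n\ge U+2$) every term involves only zero blocks, so the equation reads $0=0$ trivially.

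Since the recurrence holds at every cell, $\vec{\Phi}_U$ is a genuine eigenstate of the infinite Hamiltonian supported on exactly $U$ adjacent unit cells, establishing the existence of a CLS of size $U$; as all its lattice translations share the energy $E_{FB}$, their span furnishes the associated flatband. I do not expect a serious obstacle here, since the argument is a direct consistency check; the one point requiring care is the \emph{bookkeeping at the boundary}, namely verifying that the top and bottom block rows of $H_U$ in \eqref{eq:H-U-1d} line up with the $n=1$ and $n=U$ recurrences once the zero padding is inserted, and that the leakage into the flanking cells $n=0,U+1$ is precisely what the two interference conditions annihilate---this clean split between ``interior rows $\Leftrightarrow$ finite eigenvalue problem'' and ``flanking cells $\Leftrightarrow$ interference conditions'' is the crux of the proof. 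I would also remark that the argument only guarantees a CLS occupying $U$ cells, not necessarily an irreducible one; irreducibility would then be decided separately via the rank criterion of Theorem~\ref{theo:irreducibility-cond}.
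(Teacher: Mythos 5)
Your proposal is correct and is essentially the argument the paper intends: the text preceding the theorem already sets up exactly this decomposition, with the finite block eigenvalue problem \eqref{eq:1d-eig-prob-HU} handling the cells inside the support and the destructive interference conditions \eqref{eq:dest-int-cond-1d} annihilating the leakage into the flanking cells, so the zero-padded vector solves the full recurrence \eqref{eq:1d-nn-eig-prob-mat-rep} everywhere. Your closing caveat that this only yields a CLS of size $U$, not necessarily an irreducible one, matches the paper's own remark that $U$ need not be the smallest and that the tester must be used to reduce it.
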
 
Note that, in this theorem, $U$ is not necessarily the smallest one; if needed, the tester \ref{def:fb-tester} can be used to find the smallest $U$.

 Theorem \ref{theo:sufficient-cond-cls-1d} can be extended to higher dimensions using Eq. \eqref{eq:nn-mat-rep} and by taking into account the shapes of the CLSs, which complicates the destructive interference conditions. This is stated in the following theorem.



\begin{theorem}
    Given a $d>1$ dimensional lattice with intracell hoppings $H_0$ and nearest neighbor hoppings $H_{\chi}$ in the $\chi$th direction, the necessary and sufficient conditions for the existence of a CLS of size $U$ are (i) the eigenvalue problem 
    \begin{equation}
        H_0 \vec{\psi}_{l} + \sum_{\chi} H_{\chi}^\dagger \vec{\psi}_{l_{\chi}^\prime} + \sum_{\chi} H_{\chi} \vec{\psi}_{l_{\chi}} = E\vec{\psi}_l \quad l=1,\dots,U \; ,
        \label{eq:nn-suf-cond-eig-prob}
\end{equation}
having an eigenvector $\Psi_{U} = \left( \vec{\psi}_1, \vec{\psi}_2, \dots, \vec{\psi}_{U} \right)^T$ with eigenvalue $E_{FB}$, and (ii) an eigenvector satisfying the destructive interference conditions 
\begin{equation}
    \sum_{\chi,m} H_{\chi} \psi_{m} = 0, \quad \sum_{\chi,m^\prime} H_{\chi}^\dagger \psi_{m^\prime} = 0 
    \label{eq:suf-cond-destruct-cond}
\end{equation} 
at the boundaries, i.e. where $m,m^\prime$ runs over the boundary unit cells. 
\label{theo:sufficient-cond-cls}
\end{theorem}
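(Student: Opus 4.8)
The plan is to mirror the one-dimensional argument of Theorem~\ref{theo:sufficient-cond-cls-1d}, proving both directions of the equivalence by matching the full infinite-lattice eigenvalue equation \eqref{eq:nn-mat-rep} against its restriction to the support of the candidate state. Let $S$ denote the finite set of $U$ unit cells on which the CLS lives (its bounding box having shape $\mathbf{U}$), and extend any vector indexed by $S$ to the whole lattice by setting $\vec{\psi}_n=0$ for $n\notin S$. The key structural fact I would exploit is that, for nearest-neighbor hoppings, the eigenvalue equation at any unit cell $n$ couples $n$ only to its neighbors $n\pm\vec{R}_\chi$; hence the equations split into three disjoint groups: cells interior to $S$, cells just outside $S$ but adjacent to it (the boundary layer $\partial S$), and cells far from $S$.

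For necessity, suppose a CLS of size $U$ with energy $E_{FB}$ exists. Writing \eqref{eq:nn-mat-rep} at each $n\in S$ and discarding the identically-zero contributions of neighbors outside $S$ yields precisely the finite block eigenvalue problem \eqref{eq:nn-suf-cond-eig-prob}, so condition (i) holds. Writing the same equation at each boundary cell $n\in\partial S$, where $\vec{\psi}_n=0$, forces the incoming hopping contributions from the adjacent interior cells to cancel, which is exactly \eqref{eq:suf-cond-destruct-cond}; at cells far from $S$ the equation is trivially $0=0$. Thus conditions (i)--(ii) are necessary.

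For sufficiency I would run this argument in reverse: given $\Psi_{U}$ satisfying (i) and (ii), its zero-extension satisfies \eqref{eq:nn-mat-rep} at every $n$ --- interior cells by (i), boundary cells because both sides vanish (the left-hand side by the destructive-interference conditions (ii), the right-hand side because $E_{FB}\vec{\psi}_n=0$), and far cells trivially. Hence the zero-extension is a genuine, compactly supported eigenstate of the infinite Hamiltonian, i.e. a CLS of size $U$. If one additionally wants the smallest such $U$, I would invoke the tester of Definition~\ref{def:fb-tester}.

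The main obstacle --- and essentially the only difference from the one-dimensional case --- is the geometric bookkeeping of the boundary layer $\partial S$ in $d>1$. In 1D the support is an interval and $\partial S$ consists of just the two flanking cells, giving the clean pair $H_1\vec{\psi}_1=H_1^\dagger\vec{\psi}_U=0$. For $d>1$ the support need not be a filled rectangle, $\partial S$ can wrap around it in a shape-dependent way, and each boundary cell may receive contributions along several hopping directions $\chi$ at once; so \eqref{eq:suf-cond-destruct-cond} is really a family of vector equations indexed by $\partial S$ rather than two equations. I would therefore state carefully which interior cells couple into each boundary cell (equivalently, how the block structure generalizing \eqref{eq:H-U-1d} is assembled from the $H_\chi$), and verify that the interior/boundary/far decomposition is exhaustive and disjoint for an arbitrary admissible support shape.
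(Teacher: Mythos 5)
Your proof is correct and follows essentially the same route the paper takes: the theorem is presented there as a direct extension of the one-dimensional Theorem~\ref{theo:sufficient-cond-cls-1d}, obtained by restricting the tight-binding equations to the support of the candidate state and imposing cancellation of the hopping contributions at the adjacent exterior cells, which is exactly your interior/boundary/far decomposition run in both directions. Your write-up is in fact more explicit than the paper's, which does not spell out the boundary bookkeeping and defers the shape-dependent form of the destructive interference conditions to the case-by-case treatment in Chapter~\ref{chapter5}.
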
 

The destructive interference conditions (Eq. \eqref{eq:suf-cond-destruct-cond}) vary depending on the shape of the CLS, as we discuss later in Chapter \ref{chapter5}.

These necessary and sufficient conditions are formulated in real space. Other work studying the necessary and sufficient conditions for the existence of FBs was based on $k$-space representation \cite{toikka2018necessary}.

Now, by inverting the procedure in the FB tester \ref{def:fb-tester}, assuming the lattice has a CLS of class $U$ and asking what is the Hamiltonian that satisfies Eq. \eqref{eq:1d-eig-prob-HU}, we arrive at our core idea of the FB generator.

\subsection{Flatband Generator}
\label{sec:fb-tester-and-generator}

Consider a 1D translational invariant tight-binding lattice (or network) with $\nu$ sites per unit cell, with intracell hoppings $H_0$ and nearest neighbor hoppings $H_1$. We want to get a FB of class $U\le\nu$. We therefore assume the lattice has an irreducible CLS $\Psi_{U}=(\vec{\psi}_1,\vec{\psi}_2,\cdots ,\vec{\psi}_{U})$, and look for the matrices $H_0$, $H_1$, and FB energy $E_{FB}$ that satisfy the necessary and sufficient conditions for CLS existence (Eqs. \eqref{eq:dest-int-cond-1d} and \eqref{eq:1d-eig-prob-HU}):
\begin{equation}
    \begin{aligned}
        H_0 \vec{\psi}_1 + H_1 \vec{\psi}_2 &= E \vec{\psi}_1, \\
        H_0 \vec{\psi}_{l} + H_1^\dagger \vec{\psi}_{l-1} +  H_1 \vec{\psi}_{l+1} &= E\vec{\psi}_l \quad l=2,\dots,U-1 , \\
        H_0 \vec{\psi}_U + H_1^\dagger \vec{\psi}_{U-1} &= E \vec{\psi}_U \\
        H_1 \vec{\psi}_1 &= 0\\
        H_1^\dagger \vec{\psi}_U &=0 \;.
    \end{aligned}
    \label{eq:nn-eig-plus-cls-cond}
\end{equation}
This is a set of non-linear equations that are in general difficult but solvable, as we demonstrate in later chapters. We may now formally define our FB generator as follows.
\begin{definition}
    A \emph{flatband generator} is a scheme that generates a set of all possible hopping matrices $\{H_0,H_1\}$, flatband energy $E_{FB}$, and irreducible CLSs $\Psi_{U}$ that satisfies the eigenvalue problem and destructive interference conditions (Eq.\eqref{eq:nn-eig-plus-cls-cond}). 
    \label{def:FB-gen}
\end{definition}
This scheme systematically generates all possible FB Hamiltonians that possess CLSs of class $U$. 

Extension of the above FB generator to higher dimensions is straightforward, where one can generate FB Hamiltonians by solving Eqs. \eqref{eq:nn-suf-cond-eig-prob} and \eqref{eq:suf-cond-destruct-cond}. Depending on the dimension and shapes of the CLSs, the solutions vary (see later chapters). In general, the control parameters of the FB generator are as follows:
\begin{itemize}
	\item Lattice dimension $d$: In this thesis, $d=1,2$.
	\item Hopping range $m_c$: In this thesis, we focus on nearest neighboring unit cell hoppings, i.e. $m_c=1$.
	\item The number of sites per unit cell = the number of bands $\nu$.
	\item For $d=1$: CLS size = the number unit cells $U$ occupied by the irreducible CLS. 
	\item For $d\ge2$ and $U\ge2$: size $U$ and CLS shape.
\end{itemize} 
We aim to generate FB Hamiltonians for given lattice dimension $d$, number of bands $\nu$, CLS size $U$, and CLS shape (for $d\ge2$).


\section{Summary}  
\label{section3.5} 

In this chapter, we first introduced $\mathbf{U}$-classification of FB lattices according to their CLS class. Next, we discussed the properties of CLSs, including the conditions for irreducibility, band touching or crossing properties of FBs of different classes, and completeness of CLSs. We then introduced block matrix representation of the tight-binding Hamiltonian. At the end, we formulated CLS existence conditions, a FB tester, and our FB generator. In the following chapters, we apply the FB generator to different lattice dimensions and different systems, starting from a 1D system.



\chapter{Flatband generator in one dimension} 
\label{chapter4}

\ifpdf
    \graphicspath{{Chapter4/Figs/Raster/}{Chapter4/Figs/PDF/}{Chapter4/Figs/}}
\else
    \graphicspath{{Chapter4/Figs/Vector/}{Chapter4/Figs/}}
\fi 

In the previous chapters, we introduced $U$ classification of CLSs and FB lattices, a FB generator for $U=1$, and FBs with $U=\infty$ as well as chiral FBs. In this chapter, we expand the idea presented in the last chapter and discuss the $U>1$ FB generator in 1D. Starting from a two-band problem, we use CLS existence conditions and the FB generator to achieve full parameterization of FB Hamiltonians in 1D two-band networks. Extending the approach used in the two-band case to an arbitrary number of bands, we introduce the inverse eigenvalue method, and obtain analytical and numerical solutions for the FB Hamiltonians.

\section[Introduction]{1D tight-binding Hamiltonian}
\label{section4.1} 

\nomenclature[gST]{gST}{Generalized sawtooth}   



We consider a 1D translational invariant lattice with $\nu>1$ lattice sites per unit cell. The time-independent Schr\"odinger equation on such a network is given by
\begin{equation}
	\label{fncls:hopping}
    \sum_{m=-\infty}^\infty H_m\vec{\psi}_{n+m}=E\vec{\psi}_n \; ,
\end{equation}
where the $\nu\times\nu$ matrices $H_m=H_{-m}^\dagger$ describe the hopping (tunneling) between sites from unit cells at distance $m$. As mentioned in the previous chapter, $H_0$ is Hermitian while $H_m$ with $m\neq0$ are not in general. We further classify networks according to the largest hopping range $m_c$: $H_m\equiv0$ for $|m|>m_c\geq1$. Note that $H_0$ describes intracell connectivities and $H_{m \neq 0}$ intercell links.


As introduced in Section \ref{section2.3}, an irreducible CLS is a solution of Eq. \eqref{fncls:hopping} with $\vec{\psi}_n\neq0$ only on the smallest possible finite number $U$ of adjacent unit cells, and zero everywhere else~\cite{flach2014detangling}. The corresponding eigenenergy is denoted as $\efb$. If such an eigenstate exists, then its translations along the lattice are also eigenstates, leading to a macroscopic degeneracy of $\efb$. The resulting band is flat, i.e. $E_\mu(k)=\efb$ is independent of $k$. As discussed in Section \ref{sec:fb-tester-and-generator}, for 1D, the control parameters that classify FB networks are the hopping range $m_c$, the number of bands $\nu$, and the CLS class $U$. 

The existence of CLSs in a FB lattice can now be used to design a simple local test routine as to whether a given network has a FB of class $U$ or not. Consider the $U\times U$ block matrix
\begin{gather}
    H_U=\left(\begin{array}{cccccc}
        H_0 & H_1 & H_2 & H_3 & \dots & H_U\\
        H_1^\dagger & H_0 & H_1 & H_2 & \dots & H_{U-1}\\
        \vdots & \ddots & \ddots & \ddots & \ddots & \vdots\\
        \vdots & ~ & ~ & ~ & ~ & \vdots\\
        H_{U-1}^\dagger & \dots & H_2^\dagger & H_1^\dagger & H_0 & H_1\\
        H_U^\dagger & \dots & H_3^\dagger & H_2^\dagger & H_1^\dagger & H_0
    \end{array}\right),
    \label{tcls:block}
\end{gather}
and an eigenvector $(\vec{\psi}_1,\vec{\psi}_2,\dots,\vec{\psi}_U)$ with eigenvalue $\efb$ such that
\begin{equation}
    \sum_{m=-m_c}^{m_c} H_m \vec{\psi}_{p+m} = 0 \;,\; \vec{\psi}_{l \leq 0} = \vec{\psi}_{l > U} = 0,
    \label{test}
\end{equation}
for all integers $p$ with $-m_c+1 \leq p \leq 0$ and $U+1 \leq p \leq U+m_c$. Similar equations hold for $H_m^\dagger$. These two sets of equations ensure $\vec{\psi}_{l \leq 0} = \vec{\psi}_{l > U} = 0$. Then the Hamiltonian has a FB of class $U$. As an example, consider $m_c=1$, see Fig.~\ref{fig2}. The corresponding condition simplifies to $H_1^\dagger \vec{\psi}_U = H_1 \vec{\psi}_1 = 0$. 

With that, we arrive at our core result---a novel systematic local FB generator based on CLS properties. For convenience, we set $m_c=1$ which corresponds to nearest neighbor hopping and is one of the most typical cases considered both experimentally and theoretically. Then we have to find those $\nu\times\nu$ matrices $H_0,H_1$ that solve the following set of equations for $1 \leq l \leq U$:
\begin{eqnarray}
	\label{mc1-1}
    H_1^\dagger\vec{\psi}_{l-1} + H_0\vec{\psi}_l + H_1\vec{\psi}_{l+1}  =  \efb\vec{\psi}_l\;,\\
    \label{mc1-2}
    H_1^\dagger\vec{\psi}_1 = H_1\vec{\psi}_U=0\;,\; \vec{\psi}_0 = \vec{\psi}_{U+1} = 0\;.
\end{eqnarray}
Choosing a set of $H_0,H_1$, we need to solve the eigenvalue problem Eq. \eqref{mc1-1} under the constraint of Eq. \eqref{mc1-2}, which makes $H_1$ singular and $\vec{\psi}_1$ and $\vec{\psi}_U$ the left and right eigenvectors of the zero mode(s) of $H_1$. We do so by considering increasing values of $\nu$ and $U$ in the following sections.

\section{Two-band problem}
\label{section4.3} 

The lattice and band structure of a cross-stitch lattice with $U=1$, $\nu=2$, $m_c=1$ was reported in Ref.~\cite{flach2014detangling} and corresponds to 
\begin{equation}
    H_0=\left( \begin{array}{cc} 0 & 0 \\ 0 & 0 \end{array} \right),\quad H_1=-\left( \begin{array}{cc} 1 & 1 \\ 1 & 1 \end{array} \right).
\end{equation}
In Fig.~\ref{fig1} (b) the sawtooth lattice (ST1) with $U=2$, $\nu=2$, and $m_c=1$ is shown~\cite{flach2014detangling} together with its CLS and band structure, with 
\begin{equation}
    H_0= -\left( \begin{array}{cc} 0 & \sqrt{2} \\ \sqrt{2} & 0 \end{array} \right),\quad H_1= -\left( \begin{array}{cc} 0 & \sqrt{2} \\ 0 & 1 \end{array} \right)\; .
\end{equation}
ST1 and its FB were recently experimentally probed in photonic waveguide lattices~\cite{weimann2016transport}. 

\begin{figure}[htb!]
    \centering
    \includegraphics[clip,width=0.8\columnwidth]{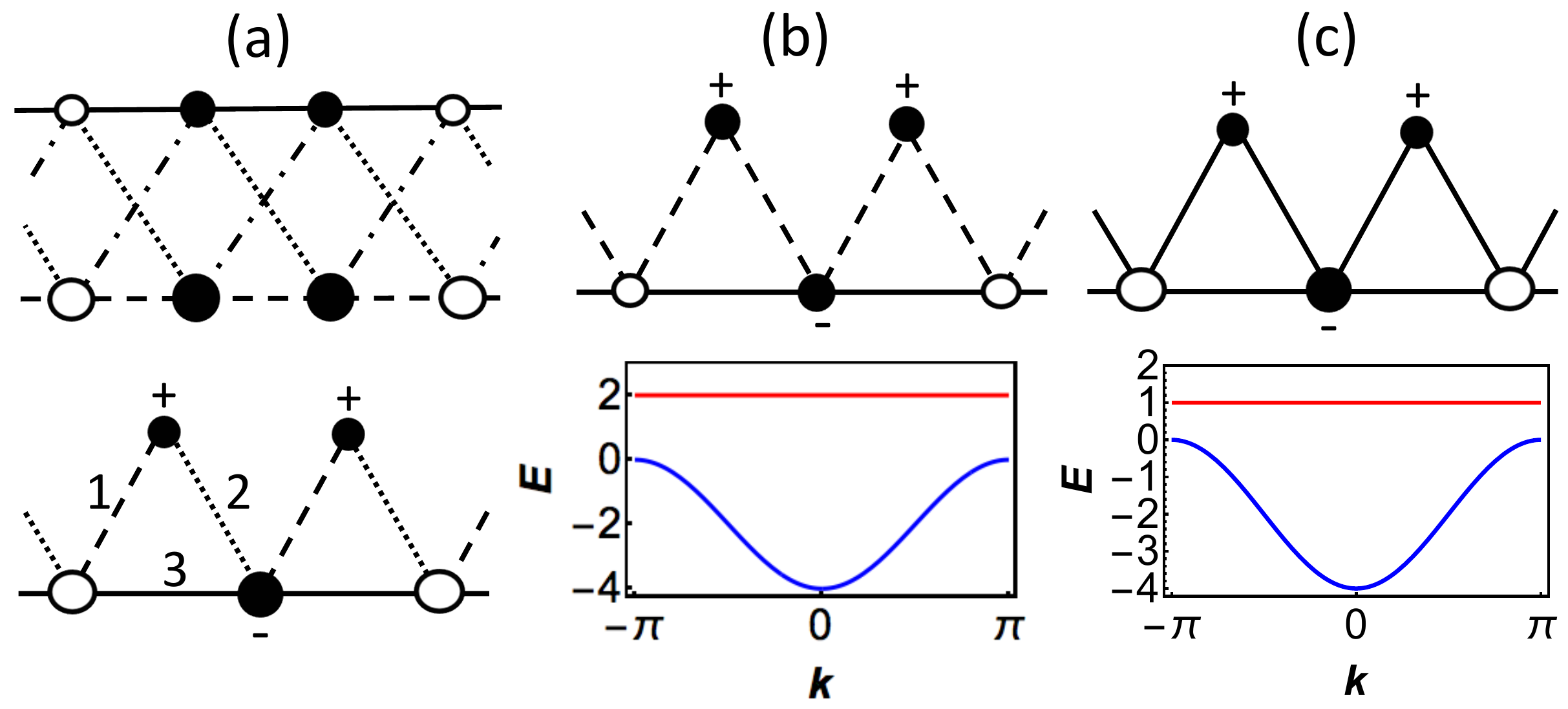}
    \protect\caption[Canonical $\nu=2$ chain, generalized sawtooth, ST1, and ST2 chains]{ (a) Top: canonical $\nu=2$ chain for $U=2$. Circles denote lattices sites (different sizes correspond to different onsite energies), lines denote hopping connections (different lines correspond to different hopping strengths), and filled circles denote the locations of a CLS. Bottom: generalized sawtooth chain after basis rotation (see text for details). Signs indicate the signs of the CLS amplitudes. (b) The known sawtooth ST1 chain. (c) New sawtooth ST2 chain. Top of (b,c): lattice structure, bottom of (b,c): band structure.
}
    \label{fig1}
\end{figure}

Without a loss of generality, we will use a \emph{canonical form} of $H$ for a generic two-band network: a unitary transformation on each unit cell will diagonalize $H_0$ sorting its diagonal elements (eigenvalues) $H_{\mu\mu}$ monotonically increasing with $\mu$. A trivial gauge $H \rightarrow H+\zeta \mathcal{I}$ (with $\mathcal{I}$ the identity matrix) sets $H_{11} = 0$, and a subsequent rescaling of $H \rightarrow \kappa H$ ensures $H_{22} = 1$(the case of a completely degenerate $H_0$ will be treated separately in Appendix \ref{as:generator:H0-degenerate}). In the simplest yet nontrivial case of two bands $\nu=2$, which completely fixes the non-degenerate matrix $H_0$:
\begin{equation}
	\label{nu=2}
    H_0=\left(\begin{array}{cc}
        0 & 0\\
        0 & 1
    \end{array}\right)
    \;,\;
    H_1=\left(\begin{array}{cc}
        a & b\\
        c & d
    \end{array}\right).
\end{equation}
Since $H_1$ is singular (as required by theorem \ref{theo:nes-cond-cls-1d}) and of size 2, it has exactly one zero mode, and can be parameterized by its spectral decomposition $H_1 =\alpha\vert\theta,\delta\rangle\langle\varphi,\gamma\vert$ as follows:
\begin{equation}
	\label{res:H1-angles}
    \begin{aligned}H_1 = \alpha\left(
        \begin{array}{cc}
            \cos\theta\cos\varphi & e^{i\gamma}\cos\theta\sin\varphi\\
            e^{-i\delta}\sin\theta\cos\varphi & e^{-i\left(\delta-\gamma\right)}\sin\theta\sin\varphi
        \end{array}\right) \;.
    \end{aligned}
\end{equation}
The prefactor $\alpha = |\alpha | e^{i\phi_\alpha}$ can be complex, and $\vert\varphi,\gamma\rangle$ and $\vert\theta,\delta\rangle$ are the left and right eigenvectors of the non-zero eigenvalue of $H_1$ (see Appendix~\ref{as:generator}). The upper plot in Fig.~\ref{fig1} (a) illustrates this canonical network structure. A rotation of the unit cell basis by an angle $\omega$ shifts the angles $\theta \rightarrow \theta +\omega$ and $\varphi \rightarrow \varphi+\omega$, and modifies $H_0=\left( \begin{array}{cc} \cos^2 \omega & \cos \omega \sin \omega \\ \cos \omega \sin \omega & \sin^2 \omega \end{array} \right)$. As we previously discussed, FB networks are defined up to unitary transformations. Applying $\theta \rightarrow \theta +\omega$ and $\varphi \rightarrow \varphi+\omega$, we find a \textsl{generalized sawtooth} (gST) chain with three different hoppings $t_{1,2,3}$ per triangle, and an onsite energy detuning (see bottom of Fig.\ref{fig1} (a) and Appendix~\ref{as:gsc} for details).


\subsection{U=1 case}

We start with $U=1$ (Fig.~\ref{fig1} (a) in Ref.~\cite{flach2014detangling}). Equations~(\ref{mc1-1}--\ref{mc1-2}) reduce to 
\begin{equation}
H_0 \vec{\psi}_1= \efb \vec{\psi}_1, \quad H_1\vec{\psi}_1=H_1^{\dagger}\vec{\psi}_1=0 \;.
\end{equation} 
Then the FB energy is $\efb=0$ or $\efb=1$. For $\efb=0$ it follows $\theta=\pi/2$ or $3\pi/2$ and $\varphi=\pi/2$ or $3\pi/2$. Respectively for $\efb=1$ we find $\theta=0$ or $\pi$ and $\varphi=0$ or $\pi$. The canonical form of $H_1$ has exactly one nonzero element on the diagonal, e.g.  for $\efb=0$ it  is $H_1=\left(\begin{array}{cc} 0 & 0\\ 0 & |\alpha|e^{i\phi_{\alpha}} \end{array}\right)$. We therefore obtain the detangled structure of the cross-stitch lattice (see Fig. \ref{fig:fano-ent-cross-stich} (a)). The dispersive band energy is given by $E(k)=C+2|\alpha|\cos\left(k+\phi_{\alpha}\right)$ where $C=0$ for $\efb=1$ and $C=1$ for $\efb=0$. The case of degenerate $H_0\equiv 0$ does not change the structure of $H_1$ and leads to $\efb=0$ and $C=0$. Interestingly, the cross-stitch lattice family in Ref.~\cite{flach2014detangling} was characterized by three parameters: the location of the FB energy, the width of the dispersive band, and an overall gauge. Here we obtain a four-dimensional control parameter space. The first three are the overall gauge $\zeta$, the rescaling $\kappa$, and the band width control $|\alpha|$, which reproduce the findings from Ref.~\cite{flach2014detangling}. The additional fourth control parameter is the phase $\phi_\alpha$. It corresponds to a time-reversal symmetry breaking effective magnetic field in 1D, and completes the class of $m_c=1,\nu=2,U=1$ FB lattices. Remarkably, there is another hidden $U=1$ case with two FBs for which $H_1$ has precisely one nonzero element on one of the two off-diagonals: $\theta=0, \varphi=\pi/2$ or $\theta=\pi/2, \varphi=0$. To observe this, one must also redefine the unit cell (see Appendix~\ref{as:generator} for details).

\subsection{U=2 case}

We proceed to the nontrivial $U=2$ case. Here, the Hamiltonian $H_{U=2}$ is a $2\times2$ block matrix
\begin{equation}
	\label{u2:mh_u2}
    H_2 = \left(\begin{array}{cc}
        H_0 & H_1\\
        H_1^\dagger & H_0,
    \end{array}\right)
\end{equation}
where $H_0$ is given by Eq. \eqref{nu=2} and $H_1$ is given by Eq. \eqref{res:H1-angles}.

The equations Eqs. (\ref{mc1-1}--\ref{mc1-2}) read
\begin{align}
	\label{eq:eigenvalue_eq2}
    H_0\vec{\psi}_1 + H_1 \vec{\psi}_2 & =  \efb\vec{\psi}_1 \;, \\
    H_1^\dagger\vec{\psi}_1 + H_0\vec{\psi}_2 & =  \efb\vec{\psi}_2 \;, \\
    H_1\vec{\psi}_1 & =  0 \;,\\
    \label{eq:cls_con2}
    H_1^\dagger\vec{\psi}_2 & = 0 \;.
\end{align}
The details of solving the above equations are given in Appendix~\ref{as:generator}. The final result reads:
\begin{equation}
    \delta  =  \gamma,\,\,
    |\alpha|  =  \frac{\sqrt{-\sin(2\theta)\sin(2\varphi)}}{|\sin(2(\theta-\varphi))|}.
    \label{u2:sol}
\end{equation}
The solutions in Eq. \eqref{u2:sol} and the Hamiltonian $H$ are invariant under the transformation $\{\varphi \rightarrow \varphi+p \pi\;,\; \theta \rightarrow \theta + q \pi\;, \phi_{\alpha} \rightarrow \phi_\alpha+(p+q)\pi\}$ with $p,q$ being integers. The irreducible angle parameter space therefore reduces to $0 \leq \varphi,\theta \leq \pi$. Since $|\alpha|$ is real, the solutions only exist for $0\le\theta\le\frac{\pi}{2}\cap\frac{\pi}{2}\le\varphi\le\pi$ or $\frac{\pi}{2}\le\theta\le\pi\cap0\le\varphi\le\frac{\pi}{2}$, i.e. two disjointed regions are shown for the FB energy $\efb$ in Fig.~\ref{fig3}. The corresponding band structure is given by (see Appendix~\ref{as:generator} for details)
\begin{align}
	\label{efbu2}
    \efb & =  \frac{\cos(\theta)\cos(\varphi)}{\cos(\theta-\varphi)},\\
    \label{e(k)u2}
    E(k) & =  \frac{\sin(\theta) \sin(\varphi)}{\cos(\theta-\varphi)} + 2 | \alpha | \cos(\theta-\varphi)\cos(k+\phi_\alpha).
\end{align}

\begin{figure}[htb!]
    \centering
    \includegraphics[clip,width=0.8\columnwidth]{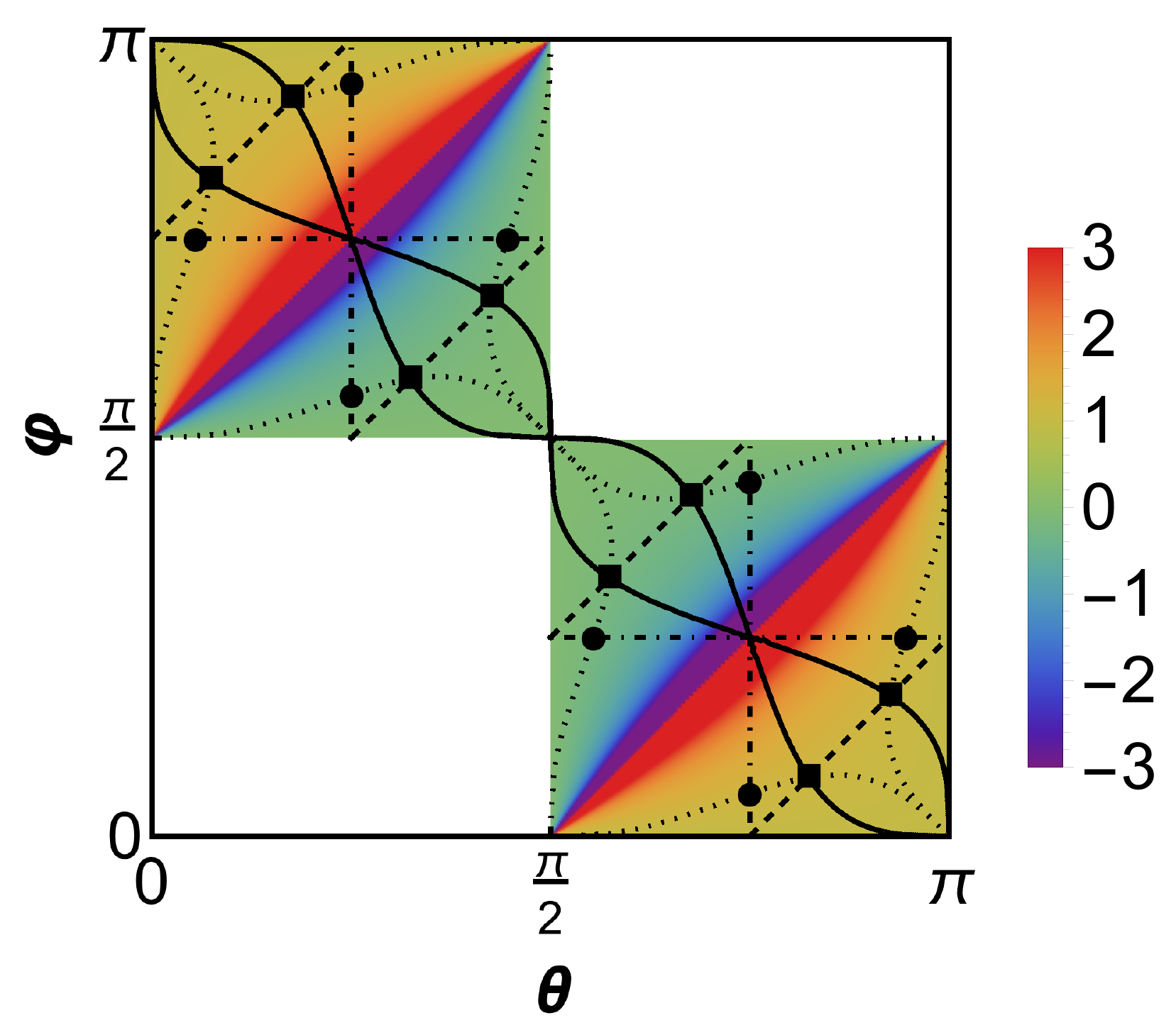}
    \protect\caption[Parameter space of a flatband in a 1D two-band network]{ Flatband energy $\efb \left(\theta,\varphi\right)$ for $m_c=1,\nu=2,U=2$. The colored squares host FB networks, while the white ones do not. The color code shows the energy of the FB. Dashed-dotted lines denote same onsite energies, dotted lines denote $t_1=t_2$, dashed lines denote $t_2=t_3$, and solid lines denote $t_1=t_3$, all in the gST chain. Filled circles represent the ST1 chain in Fig.~\ref{fig1} (b) and filled squares the ST2 chain in Fig.\ref{fig1} (c).
}
    \label{fig3}
\end{figure}

The bandwidth $\Delta_w$ of the dispersive band is given by
\begin{equation}
	\label{bw}
    \Delta_w = 2 \frac{\sqrt{-\sin(2\theta)\sin(2\varphi)}}{| \sin(\theta - \varphi) |},
\end{equation}
and is always bounded by $|\Delta_w| \leq 2$.

The FB energy is always gapped away from the dispersive band by a gap $\Delta_g= \Delta E - \frac{\Delta_w}{2} $ with $\Delta E$ being the distance between the FB energy and the dispersive band center (except for a few isolated points discussed below). The ratio $\Delta_w/ \Delta E$ is shown in Fig.~\ref{fig4}. This ratio is zero for $\theta = \pi/2+\varphi$. There, the FB energy is gapped infinitely far away from the dispersive band. Using a proper rescaling parameter $\kappa$ and a gauge $\zeta$, we can always renormalize the band gap to a finite number, at the expense of flattening the dispersive band. This special line corresponds to the case of degenerate $H_0$ and two FBs of class $U=1$  (see Appendix~\ref{as:generator:H0-degenerate} for details). On the boundary lines $\varphi,\theta = 0,\pi/2,\pi$ the band width $\Delta_w$ strictly vanishes, reducing the problem to a trivial $H_1=0$ case with two FBs of class $U=1$. One exception is the points $\{ \theta = \pi-\varphi \;,\; \varphi=0,\pi/2,\pi \}$ where the band width $\Delta_w$ stays finite but the gap $\Delta_g$ vanishes. Here, the FB becomes class $U=1$ and touches the dispersive band of finite width (see Fig.~\ref{fig4}).

\begin{figure}[!tbp]
    \centering
    \includegraphics[clip,width=0.8\columnwidth]{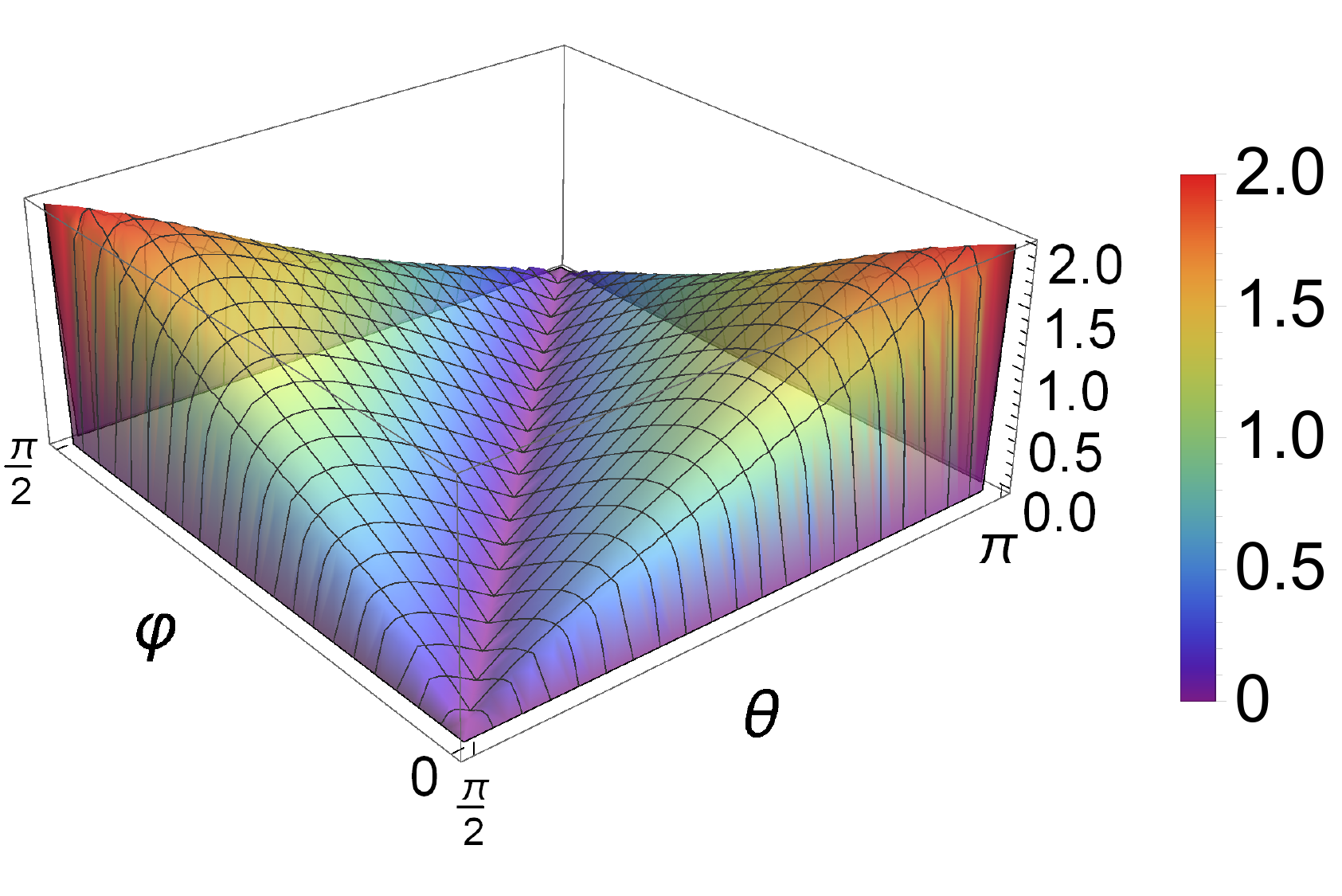}
    \protect\caption[Band width vs. band gap in a 1D two-band FB network]{Ratio of the dispersive band width to the distance between the FB and the dispersive band center $\Delta_w/\Delta E$ versus  $\left(\theta,\varphi\right)$ for $m_c=1,\nu=2,U=2$ in one irreducible quadrant. The color code shows the value of the ratio.}
    \label{fig4}
\end{figure}

The class $U=2$ is the largest possible irreducible CLS for $\nu=2,\ m_c=1$ networks, as can be straightforwardly checked using the above generator construction (see Appendix~\ref{app:appendix-A1}). In particular, we find all one-parameter families of gST chains for which either the onsite energies are equal (dashed-dotted lines in Fig.~\ref{fig3}), or a pair of the hoppings $t_{1,2,3}$ is equal (solid, dashed, and dotted lines in Fig.~\ref{fig3}). The known ST1 chain is given by the intersection of dotted and dashed-dotted lines, in which the two hoppings are equal $t_1=t_2 \neq t_3$ and the onsite energies are equal. We discover a novel intersection point (square symbols in Fig.~\ref{fig3}) where all three hoppings are equal $t_1=t_2=t_3$ (but the onsite energies differ). This is a new ST2 chain (Fig.~\ref{fig1}(c)), which should be easily realized experimentally: simple geometry allows all hoppings to be made equal, an external DC bias can fine-tune the onsite energy differences, and the CLS is easily addressed having identical absolute amplitude values on occupied sites.

In this section, we introduced a novel FB generator for 1D translational invariant tight-binding networks with two bands. We construct the whole FB family of two-band networks with nearest neighbor hoppings, and we found that the largest possible CLS size is $U=2$. To solve the eigenvalue problem, we considered $H_0$ is given and parameterized $H_1$, and then identified the parameter space that gives FBs. However, in the case of higher numbers of bands, $H_0$ cannot be considered as a given variable and the parameterization of $H_1$ involves more parameters, resulting in an overcomplete problem that is hard to solve. If we fix the CLS and $H_0$, solving the eigenvalue problem for $H_1$ becomes easier. In the following section, we use the latter approach to generate FB Hamiltonians for an arbitrary number bands and arbitrary CLS size $U$ in 1D.

\section{Arbitrary number of bands}
\label{section4.4} 

In the case of more than two bands, i.e. $\nu>2$, we have to find $\nu \times \nu$ matrices $H_0, H_1$ and the CLS that satisfy the eigenvalue problem (Eq. \eqref{mc1-1}) and destructive interference conditions (Eq. \eqref{mc1-2}). It is convenient to write the eigenvalue problem as follows: 
\begin{eqnarray}
    H_1\vpsi_2 &=& (\EFB - H_0)\vpsi_1 \label{eig-1} \\
    H_1^\dagger\vpsi_{l-1} + H_1\vpsi_{l+1} &=& (\EFB - H_0)\vpsi_l , 2 \le l \le U-1
    \label{eig-2}\\
    \label{eig-3}
    H_1^\dagger\vpsi_{U-1} &=& (\EFB - H_0)\vpsi_U\\
    H_1\vpsi_1 &=& H_1^\dagger\vpsi_U = 0
    \label{eig-4}\\
    \vpsi_l &=& 0 \;,\;  l<0,\,l>U \;.
    \label{eig-5}
\end{eqnarray}
This set of equations is the starting point of our FB generator. Our goal is to generate all possible matrices $H_1$ which allow for the existence of a FB, given a particular choice of $H_0$. Note that $H_0$ can be diagonal (canonical form), but any non-diagonal  Hermitian choice of $H_0$ is allowed as well. 

One way to look for solutions is to parameterize $H_1$ and to compute FB energy $\EFB$ and the CLS $\PCLS$ for a given set of $U$ and $\nu$. In order to satisfy Eq. \eqref{eig-4}, we choose $H_1$ from the space $\mathcal{Z}$ of $\nu \times \nu$ matrices with one zero eigenvalue. Then the directions of the vectors $\vpsi_1,\vpsi_U$ are fixed by the choice of $H_1$, leaving their two norms as free variables. Together with the remaining unknown CLS components and the FB energy, we arrive at $V=(U-2)\nu+3$ variables. The total number of equations in Eqs. (\ref{eig-1}--\ref{eig-3}) is $E=U\nu$. Since $\nu \geq 2$, it follows that the set of equations is overdetermined. We need $2\nu-3$ additional constraints which will lead us to the proper codimension$(2\nu-3)$ manifold in the space $\mathcal{Z}$. For $\nu=2$, the codimension(1) manifold was computed explicitly and a closed form of the functional dependence of the CLS and FB energy on $H_1$ was obtained in the previous section. For larger values of $\nu$ (and $U$), the constraint computation becomes difficult, and therefore we will simply invert the approach: we will define the CLS (thereby setting $U$) and $\EFB$ and generate the proper $H_1$ matrix manifold. This will turn an overcomplete set of equations into an undercomplete one, which is easier to be analyzed. One of the difficulties of Eqs. (\ref{eig-2}--\ref{eig-5}) is that $\EFB$ is also unknown. We will now show that it can be easily expressed in terms of the CLS and the Hamiltonian.

Let us assume that $\psi_1$ is not orthogonal to $\psi_U$. Multiplying $\langle \psi_U |$ from the left with Eq. \eqref{eig-1}, the FB energy $\EFB$ follows as \footnote{For $m_c>1$, one has to assume $H_m, m<m_c$ are also input parameters.}  
\begin{equation}
    \label{eq:efb-def}
    \EFB = \frac{\langle \psi_U \vert H_0 \vert \psi_1 \rangle}{\langle \psi_1\vert \psi_U \rangle} \;.
\end{equation}
For practical purposes, we can choose the CLS normalization condition $\langle \psi_1\vert \psi_U \rangle = 1$. Note that if $\psi_1$ is orthogonal to $\psi_U$, the CLS class is reduced to a $U-1$ class by an appropriate unitary transformation including a redefinition of the unit cell (see Theorem ~\ref{theo:u-reduce-u-1}).

Knowing $\EFB$, we can treat the problem of FB generation (Eqs. \ref{eig-1}--\ref{eig-5}) as an inverse eigenvalue problem:~\cite{boley1987survey}  given $\EFB$ and $\PCLS$---as well as part of the Hamiltonian, i.e. $H_0$---we reconstruct the Hamiltonian matrix $H$ in Eq.~\eqref{eq:tri-diag-block}. The idea of finding hopping matrices for a fixed CLS was first introduced by Nishino, Goda, and Kusakabe~\cite{nishino2003flat,nishino2005three}. Our results, even if limited to 1D, are much more systematic; compared to the work of Nishino, Goda, and Kusakabe, we classify a CLS by its size $U$, introduce the constraints on $\PCLS$ ensuring that it is a $U$-class CLS, and show how to resolve these constraints.

\subsection{The generator}
\label{sec:fb-gen}

We consider a 1D translational invariant tight-binding network with nearest neighbor hoppings $m_c=1$, arbitrary number $\nu$ of bands, and the CLS class $U \le \nu$. We propose the following algorithm to construct a Hamiltonian of a FB with the above parameters:
\begin{enumerate}
    \item Fix the number of bands $\nu$ and the size of the CLS $U$.
    \item Choose $H_0$ either as a diagonal (canonical form) or as any Hermitian matrix.
    \item Choose a real $\EFB$.
    \item Choose $\vpsi_1$ (or $\vpsi_U$).
    \item Exclude $H_1$ from Eqs. (\ref{eig-1}--\ref{eig-5}), arrive at a set of two linear and further non-linear constraints, and solve them for the remaining CLS components $\vpsi_l$.
    \item Solve the linear system Eqs. (\ref{eig-1}--\ref{eig-5}) to find $H_1$. 
\end{enumerate}
The system of equations (\ref{eig-1}--\ref{eig-5}) is linear, and therefore it is easy to solve or to show that it has no solution. Typically, if this system has a solution, it will be undercomplete and show up with multiple solutions compatible with the input CLS. It is therefore enough to find a \emph{particular} solution $\bar{H}_1$ to Eqs.~(\ref{eig-1}--\ref{eig-5}). A generic solution is $H_1 = \bar{H}_1 + \dH$, where $\dH$ follows from the following homogeneous system of equations:
\begin{gather}
    \dH\vpsi_2 = 0\notag\\
    \dH^\dagger\vpsi_{l-1} + \dH\vpsi_{l+1} = 0,\ \ \  2 \le l \le U-1\notag\\
    \label{eq:cls-def-ieig-homogeneous}
    \delta H_{U-1}^\dagger\vpsi_{U-1} = 0\\
    \dH\vpsi_1 = \dH^\dagger\vpsi_U = 0\notag\\
    \vpsi_l=0\quad l<0,\,l>U.\notag
\end{gather}
The perturbation $\dH$ is a deformation of the Hamiltonian $H$ that preserves the CLS and the FB energy, and only affects the dispersive part of the spectrum. 

It is also possible to further constrain the network connectivity by choosing specific elements of $H_0$ and/or $H_1$ to be zero. This is easily accounted for in $H_0$, which is an input parameter. The case of $H_1$ is more involved, as discussed in Section~\ref{sec:fb-U-gt-3}. 

\subsection{Solutions}
\label{sec:solutions}

We proceed to classify FBs in the order of increasing $U$. The $U=1$ case has already been completed in Ref.~\cite{flach2014detangling}; therefore we start our classification with $U=2$.

\subsubsection{U=2 case}
\label{sec:fb-U2}

We fix the number of bands to $\nu$, and choose $H_0$, $\EFB$, and $\kpsi{1}$. The inverse eigenvalue problem Eq.~(\ref{eig-1}--\ref{eig-5}) now reads
\begin{align}
    H_1\kpsi{2} & = \left(\EFB - H_0\right)\kpsi{1}\notag\\
    \bpsi{1}H_1 & = \bpsi{2}\left(\EFB - H_0\right)\notag\\
    \label{eq:cls-ieig-U2-H1}
    H_1\kpsi{1} & = 0\\
    \bpsi{2}H_1 & = 0.\notag
\end{align}
The eigenfunction $\PCLS=(\vpsi_1, \vpsi_2)$ cannot be chosen arbitrarily; its second part $\kpsi{2}$ has to satisfy the following set of linear and non-linear compatibility constraints:
\begin{align}
    \langle \psi_1\vert\psi_2\rangle & = 1\notag\\
    \label{eq:cls-ieig-U2-constraints}
    \evpsi{1}{H_0}{2} & = \EFB\\
    \expval{\EFB - H_0}{\psi_1} & = \expval{\EFB - H_0}{\psi_2}.\notag
\end{align}
The first constraint is simply a choice of $\PCLS$ normalization. The second constraint follows from Eq.~\eqref{eq:efb-def} and uses $\EFB$ as an input variable. The last identity results from multiplying the first equation in Eq.~\eqref{eq:cls-ieig-U2-H1} by $\bpsi{2}$ from the left, and multiplying the second equation in Eq.~\eqref{eq:cls-ieig-U2-H1} by $\kpsi{1}$ from the right. It is not possible to solve the third constraint analytically in general, but we present in Appendix~\ref{app:U2-constraints} a numerical algorithm that allows to resolve these constraints and enumerate all the solutions, if existing. If existing, the solution to $\kpsi{2}$ has $\nu-3$ free parameters. For the special case of two bands $\nu=2$, FB energy $\EFB$ can not be chosen arbitrarily and needs to be included into the procedure as a to-be-defined variable. Note that this particular case can be solved in closed analytical form following a different solution strategy (see the previous section).

Once $\PCLS=(\vpsi_1, \vpsi_2)$ is known, we can solve Eq.~\eqref{eq:cls-ieig-U2-H1} for $H_1$. First we note that the last two equations (the destructive interference conditions) can be taken into account with the following ansatz for $H_1$: 
\begin{gather}
    \label{eq:cls-ieig-U2-anzats}
    H_1 = Q_2\,M\,Q_1,\quad Q_i = \mI - \frac{\kpsi{i}\bpsi{i}}{\bpsi{i}\kpsi{i}}.
\end{gather}
Then Eq.~\eqref{eq:cls-ieig-U2-H1} becomes an inverse eigenvalue problem; the details of the derivation are presented in Appendix~\ref{app:ieig-toy} and the solution is
\begin{align}
    H_1 & = G_1 + \delta H_1,\notag\\
    \label{eq:cls-ieig-U2-sol} 
    G_1 & = \frac{\left(\EFB - H_0\right)\kpsi{1}\bpsi{2}\left(\EFB - H_0\right)}{\expval{\EFB - H_0}{\psi_1}} ,\\
    \delta H_1 & = Q_{12} \,K\, Q_{12},\notag
\end{align}

where $K$ is an arbitrary $\nu\times\nu$ matrix and $Q_{12}$ is a joint transverse projector on $\kpsi{1},\kpsi{2}$: $Q_{12}\kpsi{i}=0,\,i=1,2$. If the denominator $\evpsi{1}{\EFB - H_0}{1}\equiv 0$, the above solution is replaced with a more complicated expression involving two different projectors (see Appendix~\ref{app:ieig-toy} for details).

It is instructive to count the number $F$ of free parameters in the above solution, given a fixed $H_0$, $\EFB$, and $\kpsi{1}$ for $\nu \geq 3$. It is the sum of two contributions: the number of free parameters in $\delta H_1$  and in the particular solution $G_1$, which are $(\nu-2)^2$ and $(\nu-3)$, respectively. The final result is $F=\nu^2-3\nu+1$. It then follows that the FB Hamiltonians form a codimension$(2\nu-2)$ subspace, since $H_0$ is arbitrary, $\dim(H_1)=\nu^2$, and the total number of free parameters at fixed $H_0$ is $F_t=F+1+\nu=\nu^2-2(\nu+1)$. This is a remarkable result, since it shows that FB Hamiltonians are only weakly fine-tuned, e.g. for $\nu=3$ we find five free parameters when choosing the nine elements of  $H_1$ for an arbitrarily chosen $H_0$. Note that the above counting does not apply to the $\nu=2$ case, which is studied in the previous section and amounts to two free parameters when choosing the four elements of $H_1$.

Equations \eqref{eq:cls-ieig-U2-constraints} and \eqref{eq:cls-ieig-U2-sol} provide the complete solution to the problem of finding all the 1D nearest neighbor Hamiltonians with one FB and a CLS of class $U=2$. Figure~\ref{fig:u2_examples} shows some examples of $U=2$ and $\nu=3$ Hamiltonians constructed using the above scheme. 
 
\begin{figure}
    \centering
   \subfloat[]{
        \label{fig:u2_nu3_can}
        \includegraphics[scale=0.35]{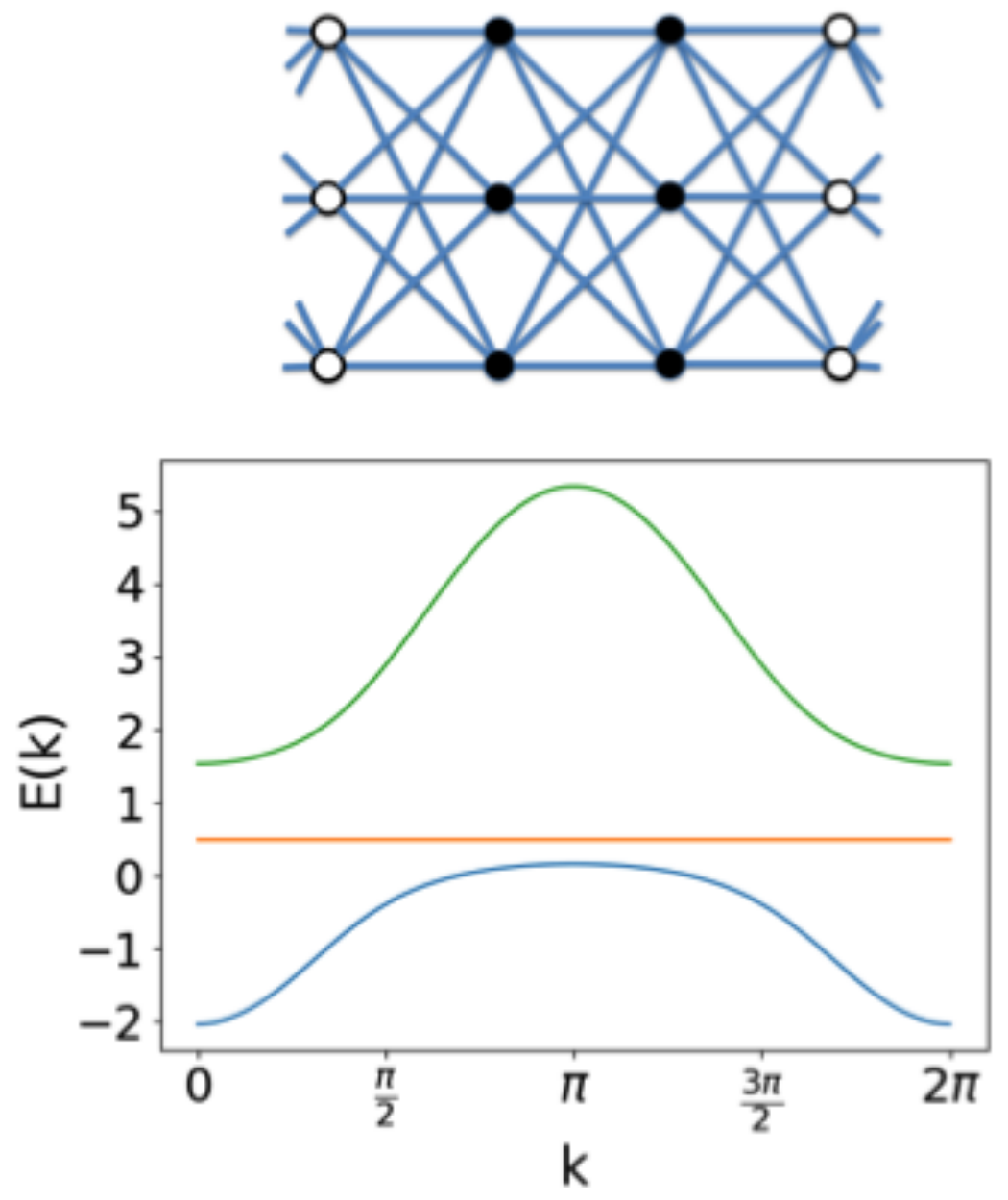}}
   \subfloat[]{
        \includegraphics[scale=0.35]{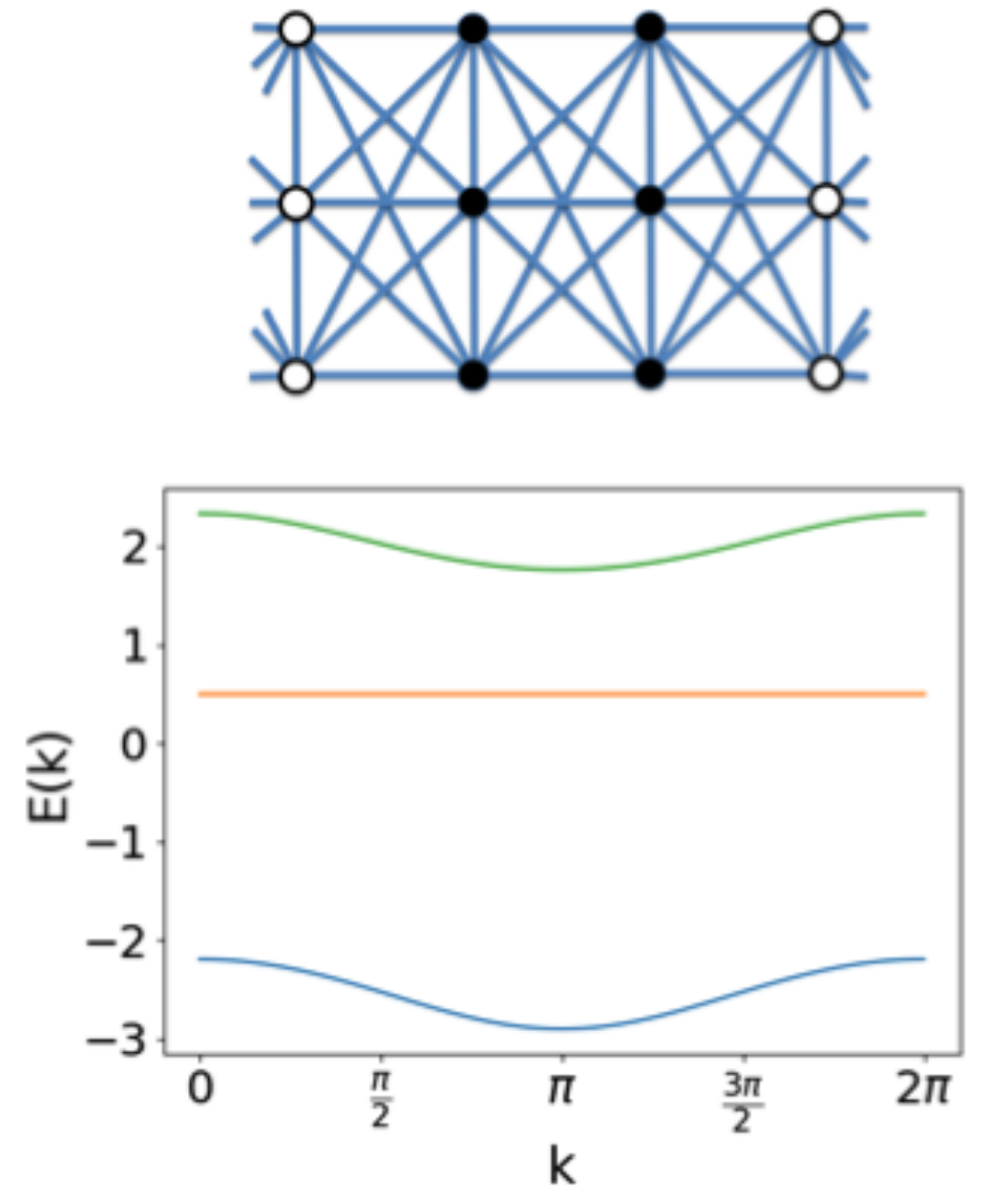}
        \label{fig:u2_nu3_noncan}}
   \subfloat[]{
        \includegraphics[scale=0.35]{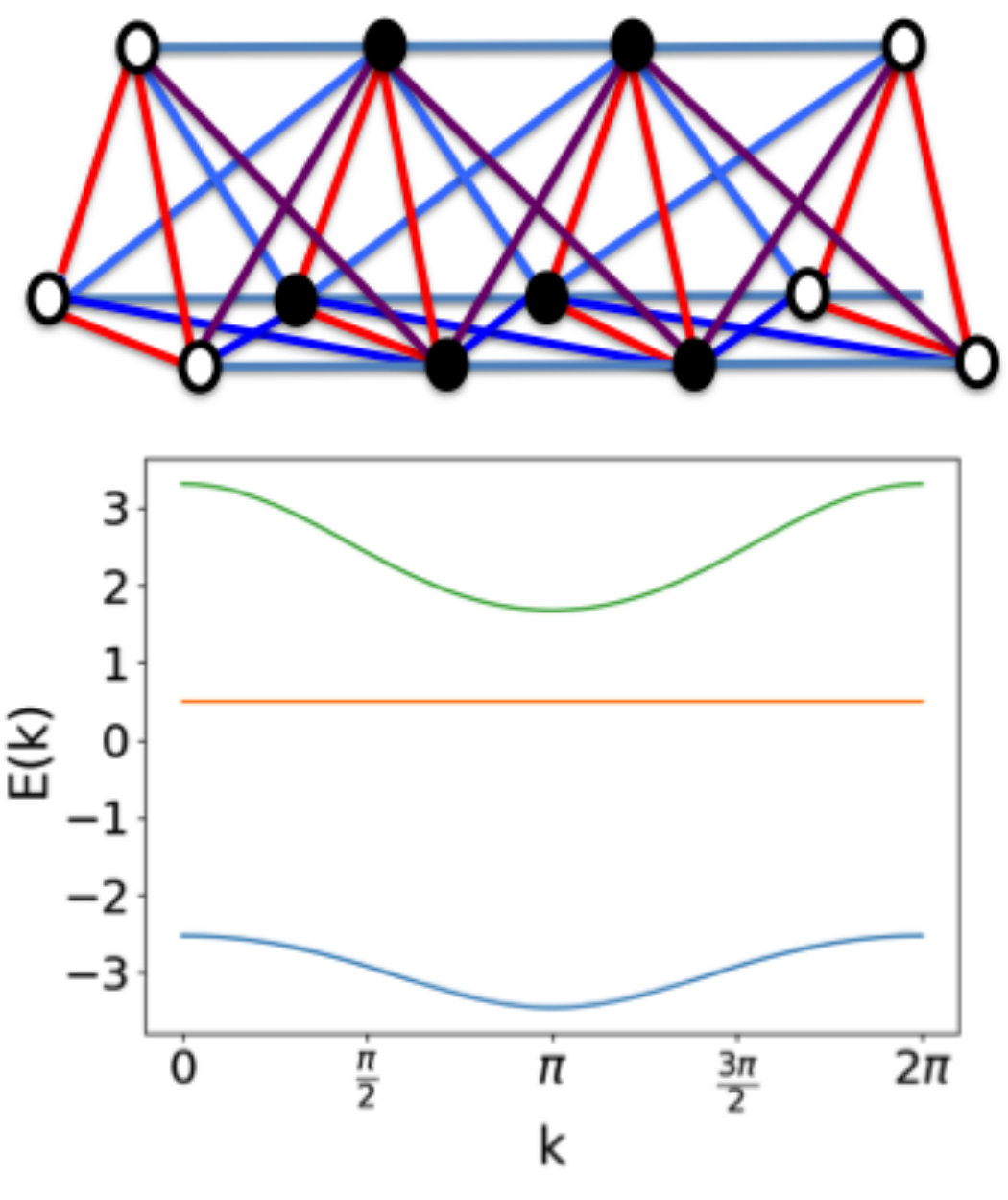}
        \label{fig:u2_nu3_gen}}
    \caption[Examples of flatband Hamiltonians with CLSs of class $U=2$, $\nu=3$]{Examples of flatband Hamiltonians with CLSs of class $U=2$, $\nu=3$. The sites occupied by a CLS are indicated by filled black circles. Each subfigure contains a visualization of the lattice (top) and the band structure (bottom). The FB is colored in orange. (a) Diagonal $H_0$; (b) non-diagonal $H_0$, and (c) non-diagonal and fully connected $H_0$. Appendix~\ref{app:u2-examples} contains a detailed description of the Hamiltonians.}
    \label{fig:u2_examples}
\end{figure}

\subsubsection{$U\ge3$ case}
\label{sec:fb-U-gt-3}

Let us consider larger $U$ values. For simplicity, we use $U=3$ in the examples. First, we fix the number of bands to $\nu$, and choose $H_0$, $\EFB$, and $\kpsi{1}$. Then we have the following inverse eigenvalue problem with $U+2$ equations ($U$ for each CLS-occupied unit cell, and two for the 
destructive interference conditions):
\begin{align}
    H_1 \kpsi{2} & = \left(\EFB - H_0\right) \kpsi{1}\notag\\
    H_1^\dagger \kpsi{1} + H_1 \kpsi{3} & =\left(\EFB - H_0\right) \kpsi{2}\notag\\
    H_1^\dagger \kpsi{2} & = \left(\EFB - H_0\right) \kpsi{3}\notag\\
    \label{eq:cls-ieig-U3-H1}
    H_1 \kpsi{1} & = 0\\
    H_1^\dagger \kpsi{3} & = 0.\notag
\end{align}

The set of constraints for $\PCLS$ reads
\begin{gather}
    \langle \psi_{1}\kpsi{3} = 1\notag\\
    \evpsi{1}{H_0}{3} = \EFB\notag\\
    \label{eq:cls-ieig-U3-constraints}
    \evpsi{1}{\EFB - H_0}{2} = \evpsi{2}{\EFB - H_0}{3}\\
    \evpsi{1}{\EFB - H_0}{1} + \evpsi{3}{\EFB - H_0}{3} =\notag\\
     = \evpsi{2}{\EFB - H_0}{2}.\notag
\end{gather}
Again these identities are derived from Eq.~\eqref{eq:cls-ieig-U3-H1} by multiplying them with $\bpsi{1}$ and  $\bpsi{U}$ and rearranging the terms in order to eliminate $H_1$. Notice that the set of compatibility constraints for $\PCLS$ amounts to $U+1$ equations. Note also that precisely two of those $U+1$ equations, with $\bpsi{1}$ given, are linear, and the remaining are $U-1$ nonlinear equations for the remaining
CLS amplitudes. It is not possible to solve the nonlinear equations analytically in general, but we present in Appendix~\ref{app:U3-constraints} a numerical algorithm that allows to resolve these constraints and enumerate all the solutions, if existing, for the case $U=3$.

Instead of using the ansatz~\eqref{eq:cls-ieig-U2-anzats} for $H_1$, we take a more suitable approach to generate FB Hamiltonians (i.e. matrices $H_1$) for $U \geq 3$. With a given $\PCLS$, which satisfies the constraints from Eq. \eqref{eq:cls-ieig-U3-constraints}, the set of equations in Eq. \eqref{eq:cls-ieig-U3-H1} is a linear system with respect to $H_1$:
\begin{gather}
    \label{eq:cls-ieig-U3-H1-linear}
    T\, h_1 = \Lambda.
\end{gather}
Here, $h_1$ is a $\nu^2$-dimensional vector resulting from the vectorization of the matrix $H_1$, $T$ is a rectangular $\nu(U+2)\times\nu^2$ matrix whose elements are composed of the elements of the CLS, such that the product $T \, h_1$ is the left hand side of Eq.~\eqref{eq:cls-ieig-U3-H1}, and $\Lambda$ is a $\nu(U + 2)$ vector originating from the right hand side of Eq.~\eqref{eq:cls-ieig-U3-H1}:
\begin{gather}
    \label{eq:cls-ieig-U3-lambda}
    \Lambda = (\EFB - H_0)
    \begin{pmatrix}
        \vpsi_1\\
        \vpsi_2\\
        \dots\\
        \vpsi_U\\
        \vec{0}\\
        \vec{0}
    \end{pmatrix}.
\end{gather}
The zero-vector components $\vec{0}$ result from the destructive interference. The linear system of Eq. \eqref{eq:cls-ieig-U3-H1-linear} can then be solved, for example by using a least squares solver. Figure~\ref{fig:u3_examples} shows some examples of $U=3$ FBs, which we generated by resolving the constraints in Eq. \eqref{eq:cls-ieig-U3-constraints} and solving Eq.~\eqref{eq:cls-ieig-U3-H1-linear}.

\begin{figure}[htb!]
    \centering
    \subfloat[]{\includegraphics[width=0.4\columnwidth]{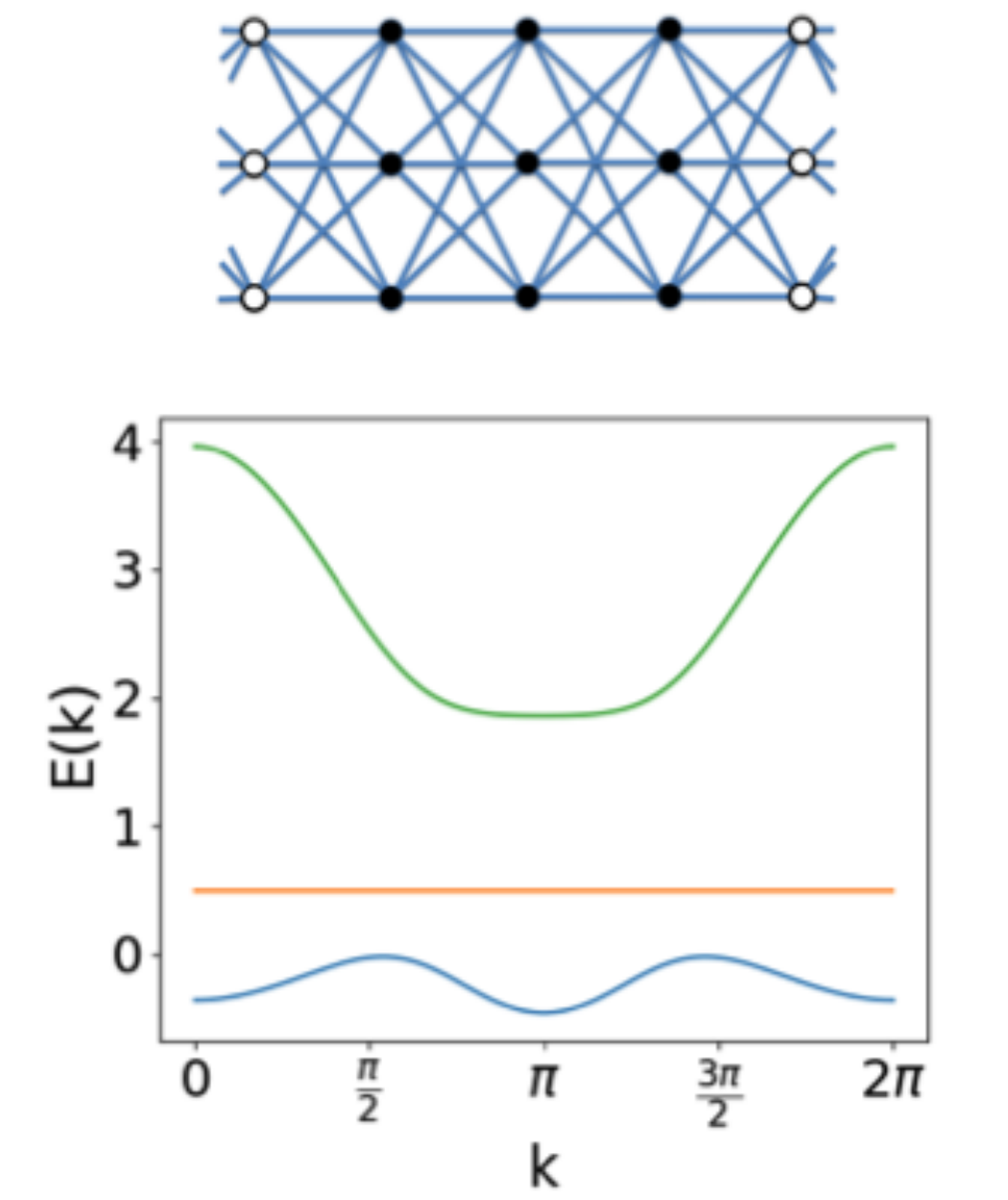}
                        \label{fig:u3_nu3_can}}
    \subfloat[]{\includegraphics[width=0.4\columnwidth]{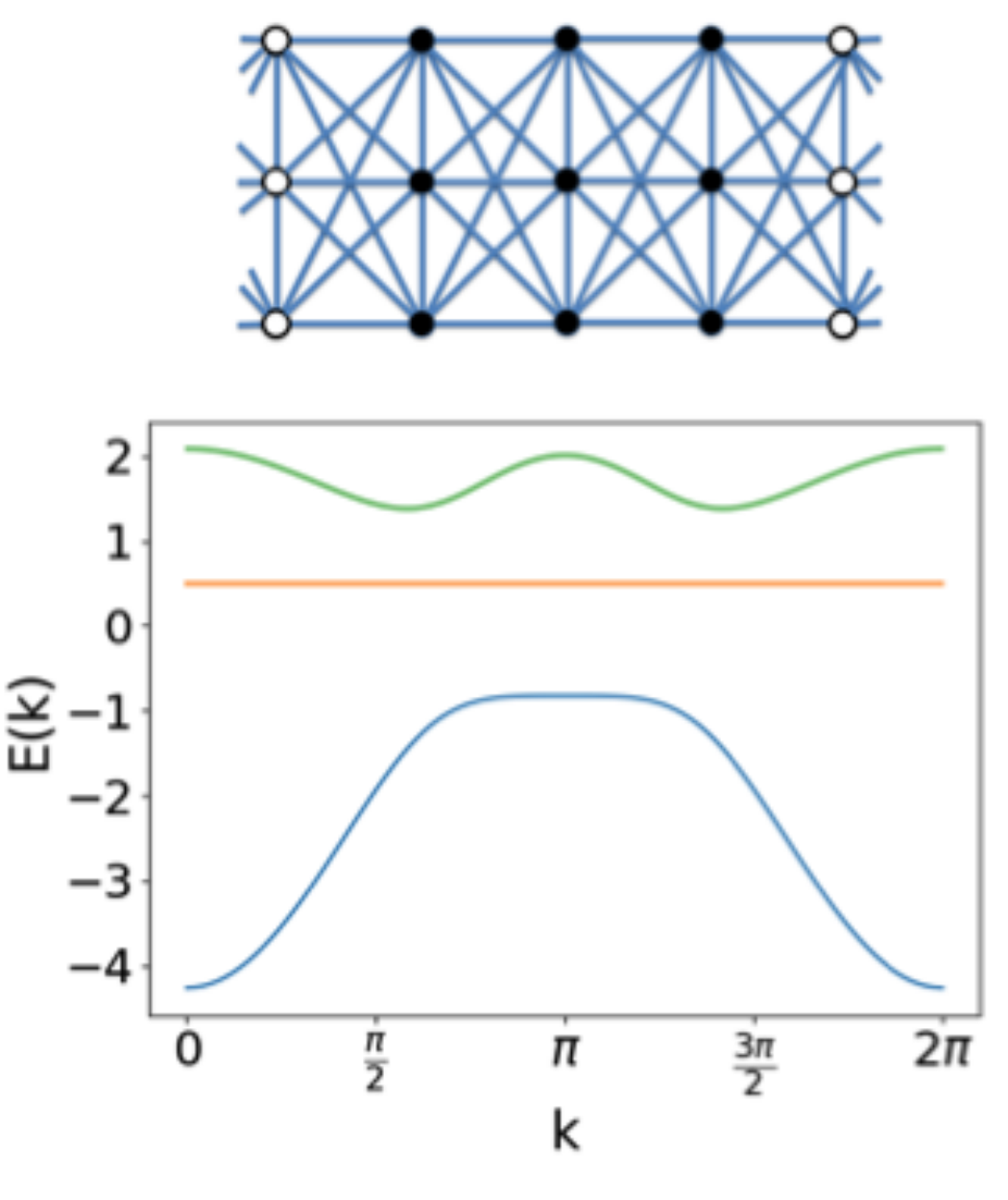}
                        \label{fig:u3_nu3_noncan}}\\
    \subfloat[]{\includegraphics[width=0.4\columnwidth]{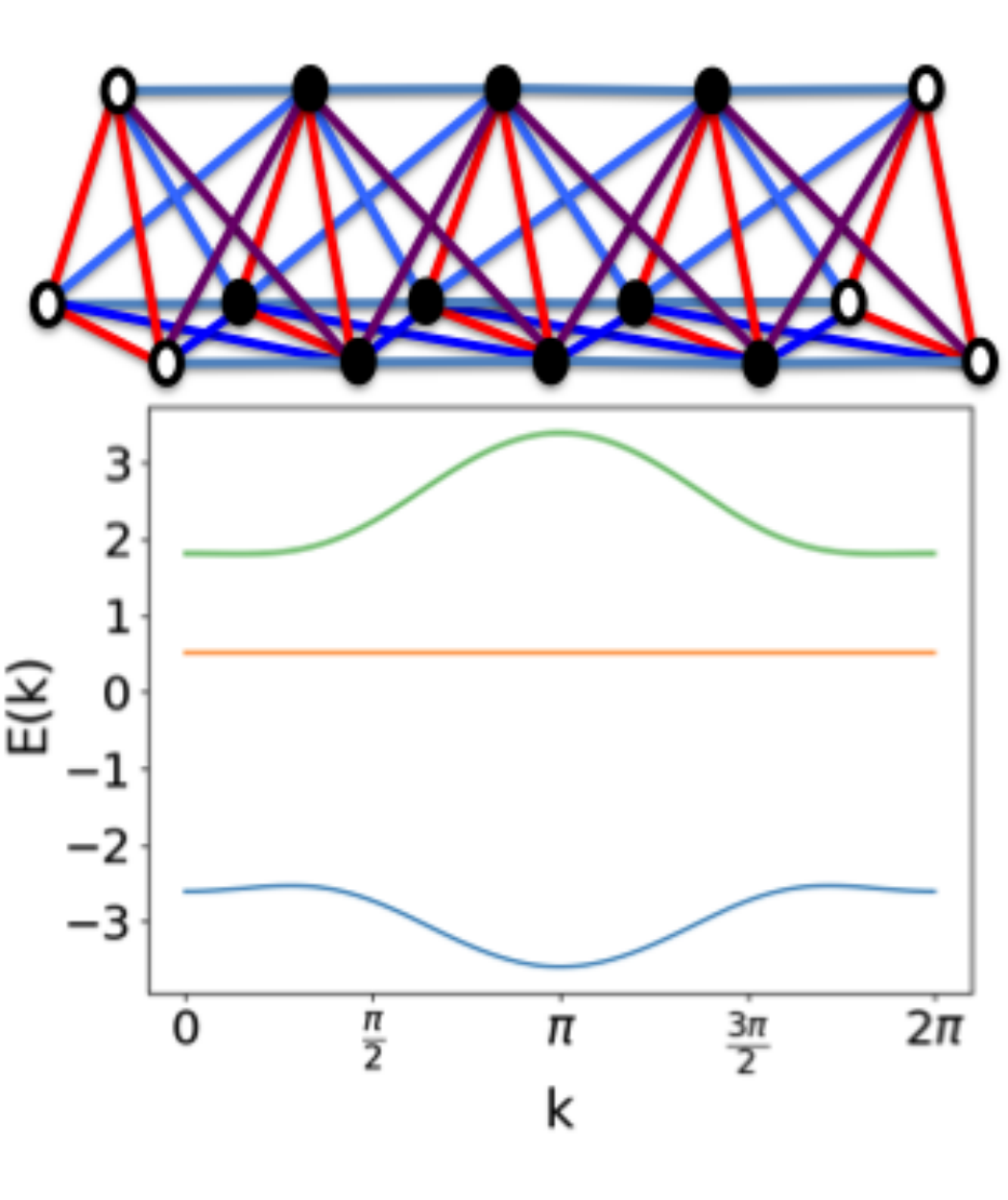}
                        \label{fig:u3_nu3_gen}}
    \subfloat[]{\includegraphics[width=0.4\columnwidth]{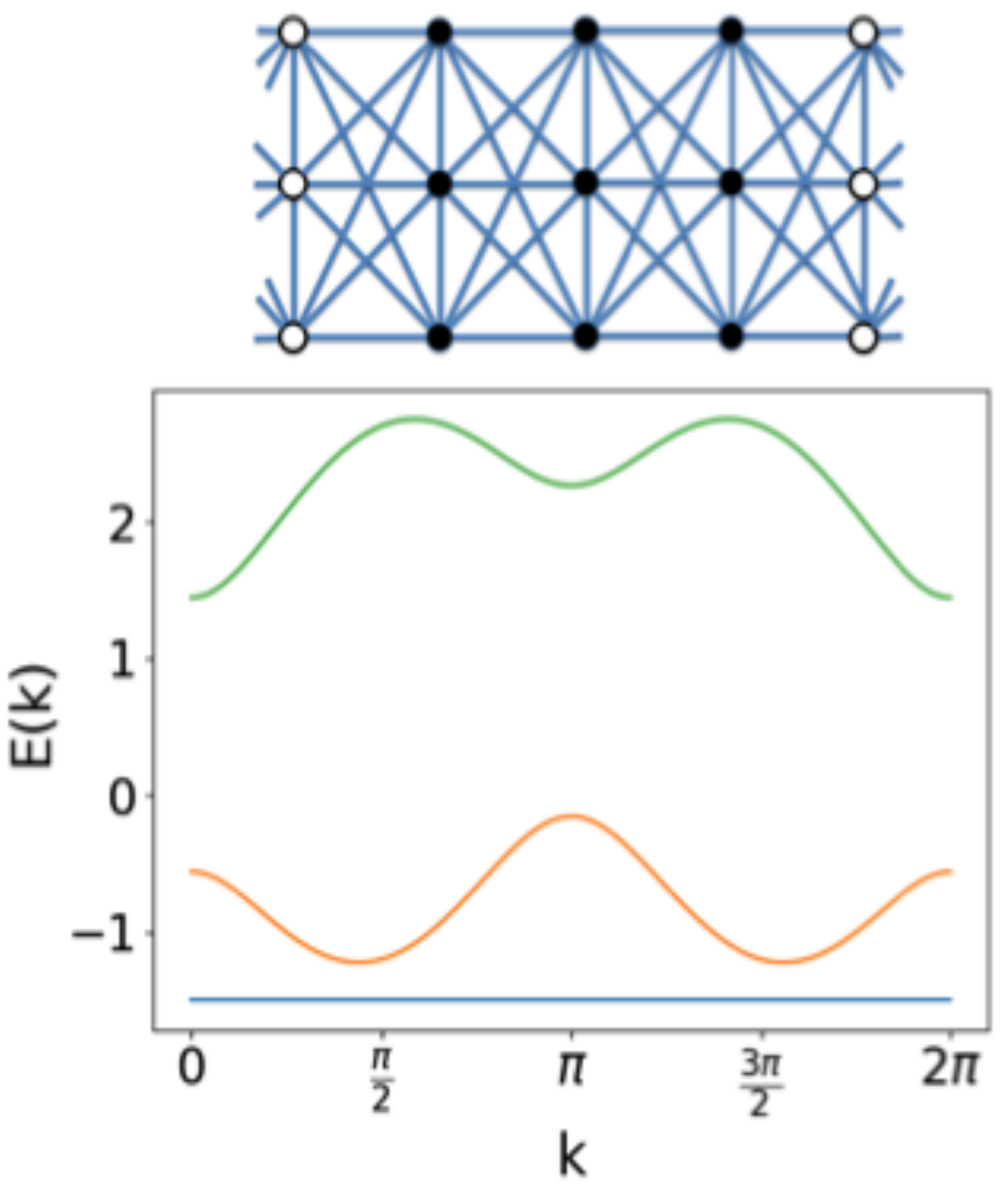}
                        \label{fig:u3_nu3_ground}}
    \caption[Examples of 1D $U=3$ flatband lattices]{ Examples of three-band FB lattices with a CLS of class $U=3$. The sites occupied by a CLS are indicated by black filled circles. (a) Diagonal choice for $H_0$. (b) Chain-like structure for $H_0$. (c) Generic choice for $H_0$. (d) $E_{FB}$ is set negative enough to become the ground state. Details of these examples are presented in Appendix~\ref{app_sec:u3_examples}.}
    \label{fig:u3_examples}
\end{figure} 

\subsection{Chiral symmetry}
\label{sec:fb-U2-cs}

An important subclass of FB networks is those with chiral symmetry (see Section \ref{section2.4.3}), and our generator simplifies in the case of chiral symmetry. Chiral lattices are bipartite networks with minority and majority sublattices. This imposes a specific structure of the hopping integrals and the CLS amplitudes $\vpsi_l$. For that, we split the lattice sites from each unit cell into two subsets, each belonging to one of the two sublattices. This leads to a splitting of each $\vpsi_l$ into two sublattice vectors, as well as to a corresponding block structure of the matrices $H_0,H_1$. As a result, the CLS of a chiral FB will always reside exclusively on the majority sublattice ~\cite{ramachandran2017chiral}:
\begin{gather}
    H_0 = \left(\begin{array}{cc}
        0 & A^\dagger\\
        A & 0
    \end{array}\right),\quad
    H_1 = \left(\begin{array}{cc}
        0 & T^\dagger\\
        S & 0
    \end{array}\right)\notag\\
    \label{eq:H01-psi-cs-def-b}
    \vpsi_l = \left(\begin{array}{c}
        \vphi_l\\
        0
    \end{array}\right),\quad l=1,\dots,U \;.
\end{gather}
Here $A$, $S$, and $T$ are $(\nu-\mu)\times \mu$ matrices, $\mu$ is the number of sites on the majority sublattice in the unit cell, and $\vphi_l$ is a $\mu$-component vector residing on the majority sublattice sites in a unit cell. By definition $\nu - \mu\leq\mu < \nu$. The spectrum of the system enjoys particle-hole symmetry around $E=0$. A chiral flatband has energy $\EFB = 0$ and is symmetry protected. For $\nu < 2\mu$, there are $\mu-\lfloor\nu/2\rfloor$ FBs at $\EFB=0$~\cite{ramachandran2017chiral}. Increasing the range of hopping $m_c > 1$ while preserving chiral symmetry will keep the chiral FBs in place. Moreover, one can preserve the chiral FBs by partially destroying the chiral and sublattice symmetry; this is achieved by adding hopping terms on the minority sublattice only, since the chiral FB CLS is occupying majority sublattice sites only:
\begin{gather}
    H_0 = \left(\begin{array}{cc}
        0 & A^\dagger\\
        A & B
    \end{array}\right),\quad
    H_1 = \left(\begin{array}{cc}
        0 & T^\dagger\\
        S & W
    \end{array}\right)\notag\\
    \label{eq:H01-psi-cs-def}
    \vpsi_l = \left(\begin{array}{c}
        \vphi_l\\
        0
    \end{array}\right),\quad l=1,\dots,U \;,
\end{gather}
where $B$ and $W$ are $(\nu-\mu)\times(\nu-\mu)$ matrices. Note that the overall particle-hole symmetry of the system is lost, but the original chiral FBs are still present at $\EFB = 0$. 

For a bipartite network, the hopping matrix $H_1$ has a specific structure given by Eq.~\eqref{eq:H01-psi-cs-def}, that simplifies Eq.~\eqref{eq:cls-ieig-U2-H1} to
\begin{align}
    \label{eq:bipartite_U2_eq}
    S\kphi{2}  & = -A\kphi{1}\\
    S\kphi{1} & = 0\\
    T\kphi{1} & = -A\kphi{2}\\
    T\kphi{2} & = 0,
\end{align}
where $\EFB = 0$. The minority sublattice hopping matrices $B, W$ dropped out as expected. The above equations are considerably simpler than the generic $U=2$ Eq.~\eqref{eq:cls-ieig-U2-H1}: the above system splits into two independent inverse eigenvalue problems for $S$ and $T$. Details of the solution are presented in Appendix~\ref{app:chiral-inv-eig-prob}, the final answer of which is
\begin{gather}
    S = -\frac{A\kphi{1}\bphi{2} Q_1}{\mel{\varphi_2}{Q_1}{\varphi_2}} + K_S Q_{12}\notag\\
    \label{eq:cls-ieig-U2-cs-sol}
    T = -\frac{A\kphi{2}\bphi{1} Q_2}{\mel{\varphi_1}{Q_2}{\varphi_1}} + K_T Q_{12},
\end{gather}
where $K_T$ and $K_S$ are arbitrary matrices of size $(\nu-\mu)\times\mu$, and $Q_{12}$ is a joint transverse projector on $\kphi{1,2}$. There are no restrictions on the entries of $A, B, W$ and $\kphi{1,2}$---they are all free parameters---in contrast to the generic $U=2$ FB construction. Therefore, the number of free parameters is $(\nu-\mu)(2\nu+\mu-2) - 1$ (see Appendix~\ref{app:chiral-inv-eig-prob} for details). The above solution fails for $\mel{\varphi_2}{Q_1}{\varphi_2} = \mel{\varphi_1}{Q_2}{\varphi_1}\equiv 0$; as a result, $\kphi{2}\propto\kphi{1}$, and the CLS and the FB are of class $U=1$. 

Figure~\ref{fig:u2_bipartite} shows an example of a bipartite lattice with $\nu=4$. There are two sites in the unit cell of each sublattice, and $B\ne0,\ W\ne0$. In this example, the parameters $\vphi_2,\vphi_2,\ A, B, W$ are arbitrarily chosen, and $K_T=0,\ K_S=0$ (see details in Appendix~\ref{app:u2-examples}).

\begin{figure}[htb!]
    \centering
    \includegraphics[width=0.5\columnwidth]{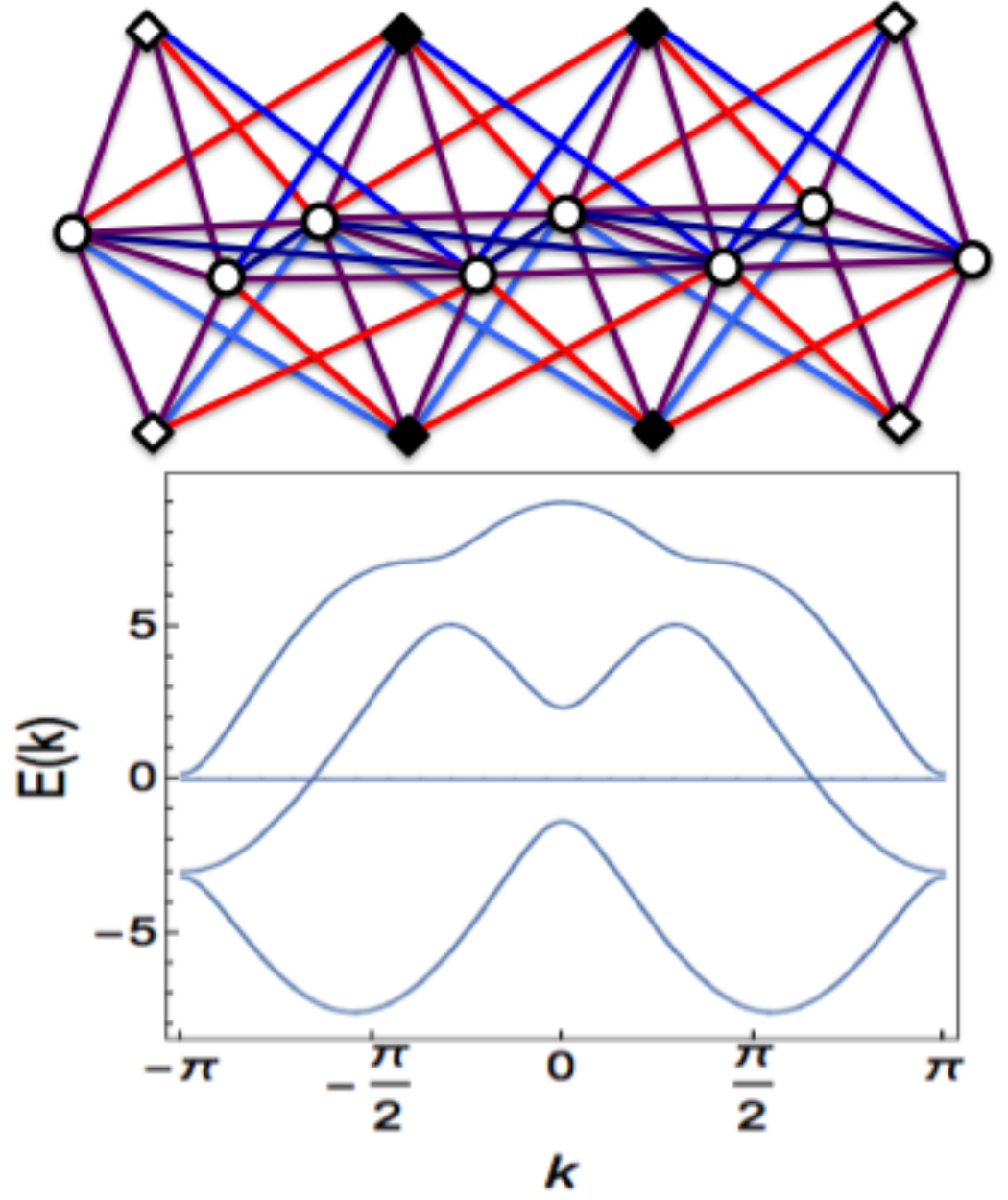}
    \caption[Example of 1D bipartite flatband lattice]{ Example of a bipartite FB Hamiltonian with $U=2$, $\nu=4$. The sites of the CLS are indicated with filled black squares. Links are colored differently for the convenience of visualisation of the chain. In this example, chiral symmetry is broken on the minority sublattice due to the presence of $B\ne0,\ W\ne0$ in Eq.~\eqref{eq:H01-psi-cs-def}. Nevertheless, the chiral FB is preserved. Details are given in Appendix~\ref{app:u2-examples}.}
    \label{fig:u2_bipartite}
\end{figure}

\subsection{Network constraints}
\label{sec:lattices}

For practical purposes,FB fine-tuning of a Hamiltonian network can involve additional network constraints, e.g. the strict vanishing of certain hopping terms between specific sites of the network~\cite{poli2017partial}. This typically happens when arranging network sites in a plane. Let us consider the typical problem of finding a nearest neighbor FB Hamiltonian with specific network constraints. These network constraints dictate the locations of the zero entries in $H_0$ and $H_1$. They can be incorporated into the matrix $T$ of Eq.~\eqref{eq:cls-ieig-U3-H1-linear} as a mask $M$: $T \to TM$ that ensures the zero entries in $H_1$ are in the right positions. The solution of the resulting system is then searched for similarities to the non-constrained case.

Especially when  $H_0$ and $H_1$ are sparse, e.g. the number of variables in $H_1$ is equal to or greater than the number of equations, it is possible to solve Eqs. (\ref{eig-1}--\ref{eig-5}) analytically (see Appendix~\ref{app:imposing_lat_str}). Figure~\ref{fig:imposing_lat_str_examples} shows examples of networks with FBs generated for a 1D kagome chain and for chains with hoppings allowed only inside network plaquettes.

\begin{figure}
    \centering
    \subfloat[]{
        \includegraphics[scale=0.37]{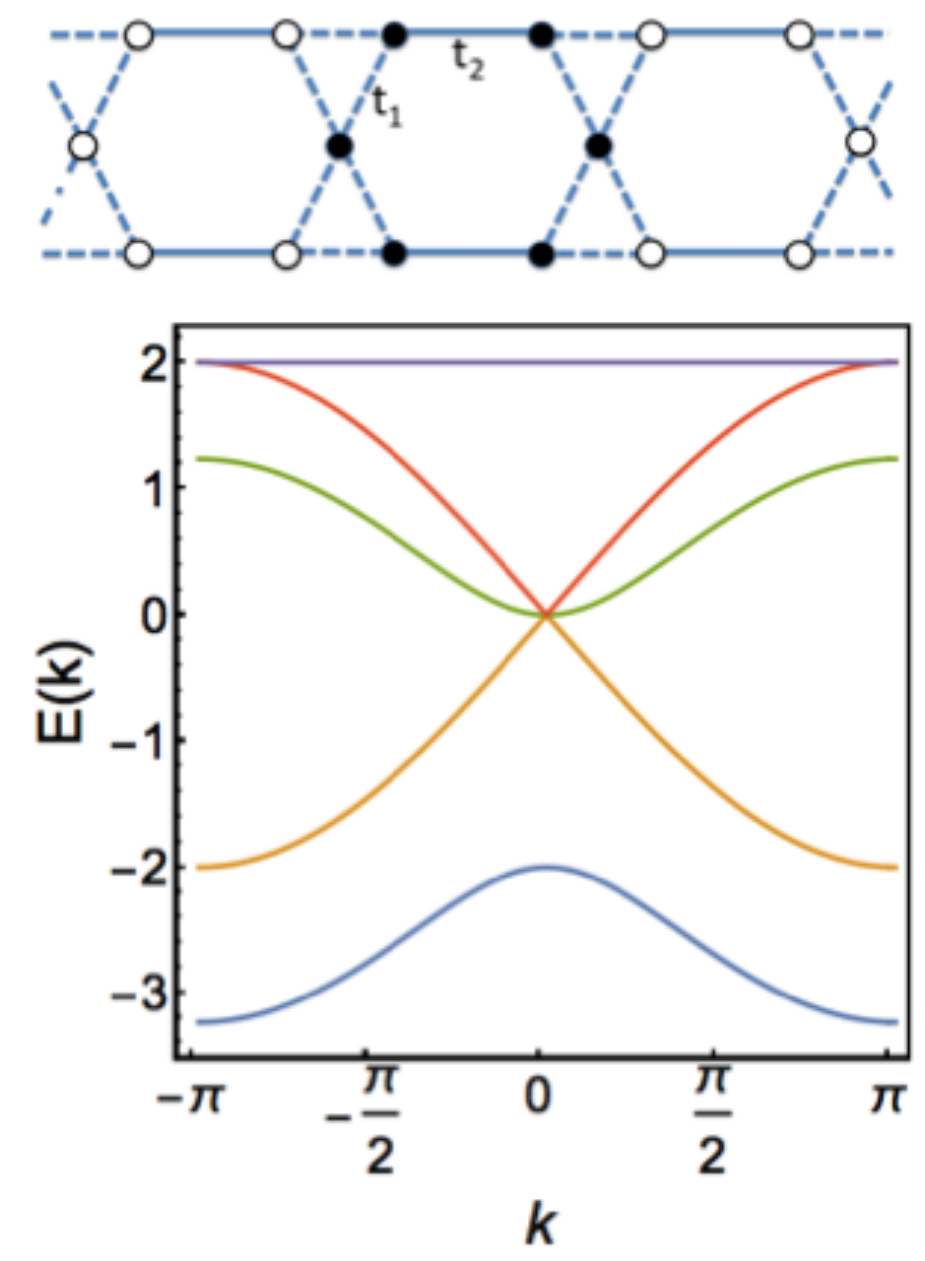}
        \label{fig:1d_kagome}}
    \subfloat[]{
        \includegraphics[scale=0.42]{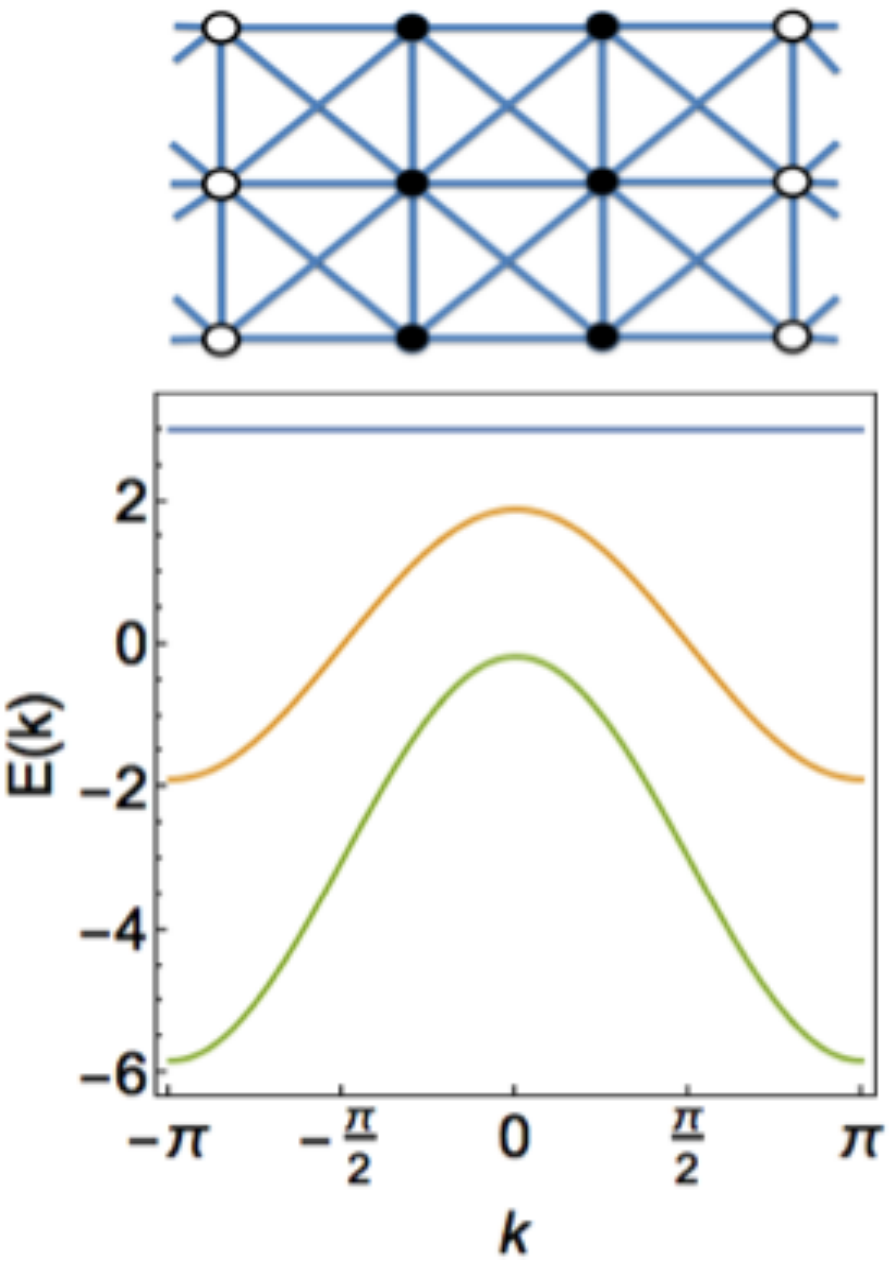}
        \label{fig:u2_nu3_latt_str_example}}
    \subfloat[]{
        \includegraphics[scale=0.42]{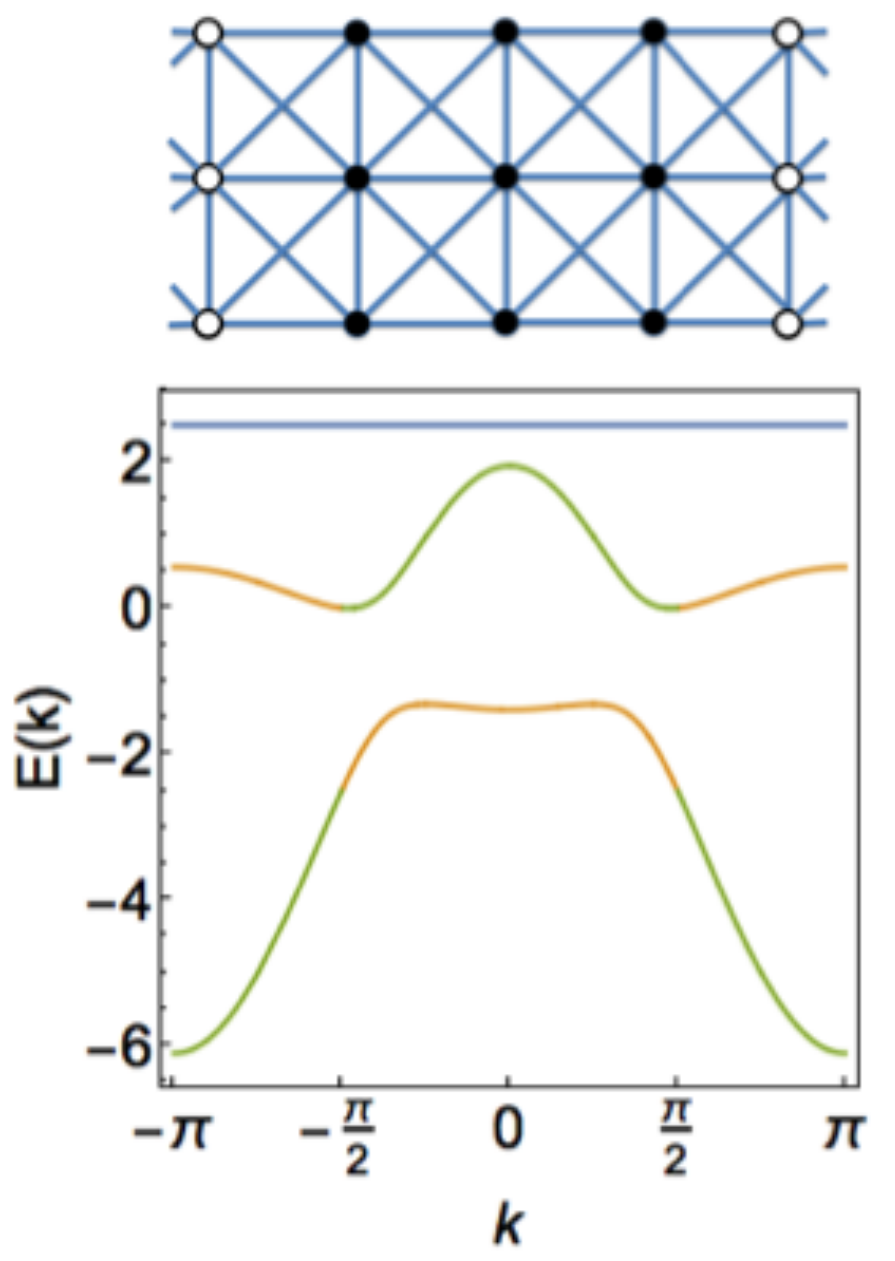}
        \label{fig:u3_nu3_latt_str_example}}
    \caption[Examples of flatband construction under network constraints]{ Examples of FB Hamiltonians constructed in specific networks. The sites occupied by a CLS are marked by black filled circles. (a) 1D kagome with $\nu=5$ and $U=2$ CLSs. The crossing of the three bands indicates that the Hamiltonian can be detangled into two independent sub-Hamiltonians. (b) and (c) Examples of $\nu=3$ Hamiltonians with $U=2$ and $U=3$ CLSs, respectively. The details of all these Hamiltonians are provided in Appendix~\ref{app:imposing_lat_str}.}
    \label{fig:imposing_lat_str_examples}
\end{figure}

\section{Summary}
\label{section4.5}

In this chapter, we introduced a novel flatband generator for 1D translational invariant tight-binding networks. First, we solved a two-band problem to obtain a family of FB Hamiltonians parameterized by a two-parameter family, where we found that the largest possible size of the irreducible CLS for this case is $U=2$. Particularly, we showed that by using unitary transformations, all FB networks with two bands can be mapped into generalized sawtooth chains, and we found a highly symmetric ST2 chain. 

Extending the above methods to an arbitrary number of bands, we presented a systematic construction of 1D Hamiltonians with $\nu>2$ bands including one FB for an arbitrary size CLS $U \leq \nu$, and illustrated the method with several examples. The task of finding FB Hamiltonians is reduced to solving specific inverse eigenvalue problems subject to certain non-linear constraints. Flatband energy enters as a parameter and can be tuned. For the $U=2$ case, we derived analytical solutions to the inverse eigenvalue problem supplemented with a numerical algorithm to resolve the constraints. For $U \geq 3$, analytical solutions are not accessible, yet numerical algorithms can be applied to generate FB Hamiltonians. We illustrated the method by generating several $U=3$ FB Hamiltonians. The same construction allows us to incorporate various network geometry constraints into the search algorithm. Our results show that FB Hamiltonians, while being the result of fine-tuning in the space of all tight-binding Hamiltonian networks, allow for a surprisingly large number of free parameters, that change the network but leave the flatness of the FB untouched. In the next chapter, we extend our 1D FB generator to two dimensions.
\chapter{Flatband generator in two dimensions}
\label{chapter5}

In the previous chapter, we introduced a flatband (FB) generator in 1D. There are more interesting phenomena though in 2D FB lattices that have been reported theoretically and observed experimentally. Therefore, it is highly desirable to extend the 1D FB generator to 2D. The methods for 1D can be implemented in 2D, but with some added complexity. First, identifying the CLS shape is more difficult, and second, depending on CLS size and shape, the destructive interference conditions are more complicated. Last but not least, the expanded hopping range also adds more complexity to the eigenvalue problem as well as the destructive interference. Regarding these complications, we mostly restrict our attention to identifying all possible FB Hamiltonians with nearest and next nearest neighbor unit cell hoppings, with CLSs occupying a maximum of four unit cells in a $2\times2$ plaquette. 

The outline of this chapter is as follows. We start in Section \ref{section5.1} by introducing the eigenvalue problem before discussing the classification of CLSs in 2D and destructive interference conditions in Section \ref{sec:2d-cls-class}. We present the FB generator scheme for this case in Section \ref{sec:2d-fb-gen}. Results for nearest neighbor hoppings and next nearest neighbor hoppings are given in Sections \ref{section5.3} and \ref{section5.4}, respectively.

\section{The eigenvalue problem} 
\label{section5.1} 

We consider a 2D translational invariant tight-binding network with nearest and next nearest neighbor hoppings and $\nu$ sites per unit cell. We use block matrix representation with the same notations and conventions from Section \ref{section3.3}. The eigenvalue problem here reads
\begin{equation}
     H_0 \vec{\psi}_{n} + \sum_{\chi} H_{\chi}^\dagger \vec{\psi}_{l_{\chi}^\prime} + \sum_{\chi} H_{\chi} \vec{\psi}_{l_{\chi}}=E\vec{\psi}_n, \quad n\in \mathcal{Z} \;,
     \label{eq:2d-nn-mat-rep}
\end{equation} 
where $\nu$-component vector $\vec{\psi}_n$ is the wave function of the $n$th unit cell, $H_{\chi}$ is the nearest neighbor hopping matrix for the $\chi$th direction, and $l_{\chi}$ and $l_{\chi}^\prime$ are the indices of the nearest neighboring unit cells along the $\chi$th direction. 
Note that we refer to neighboring unit cells as neighbors, which is different from conventional notation. In this notation, the nearest neighbors are the nearest neighboring unit cells along primitive lattice translation vectors, and the next nearest neighbors are the nearest neighbors along the diagonal (non-primitive) directions. More precisely, when we consider only nearest neighbors, $\chi =1,2$ in Eq. \eqref{eq:2d-nn-mat-rep} to represent two directions along the two primitive lattice translation vectors. In the case of next nearest neighbor hoppings, $\chi=1,2,3$ in Eq. \eqref{eq:2d-nn-mat-rep}, where $\chi=3$ is the third hopping direction, and the neighbors along this direction become the next nearest neighbors (see the example in Fig. \ref{fig:sqare-vs-triang}). 
\begin{figure}[htb!]
    \centering
    \includegraphics[width=0.5\linewidth]{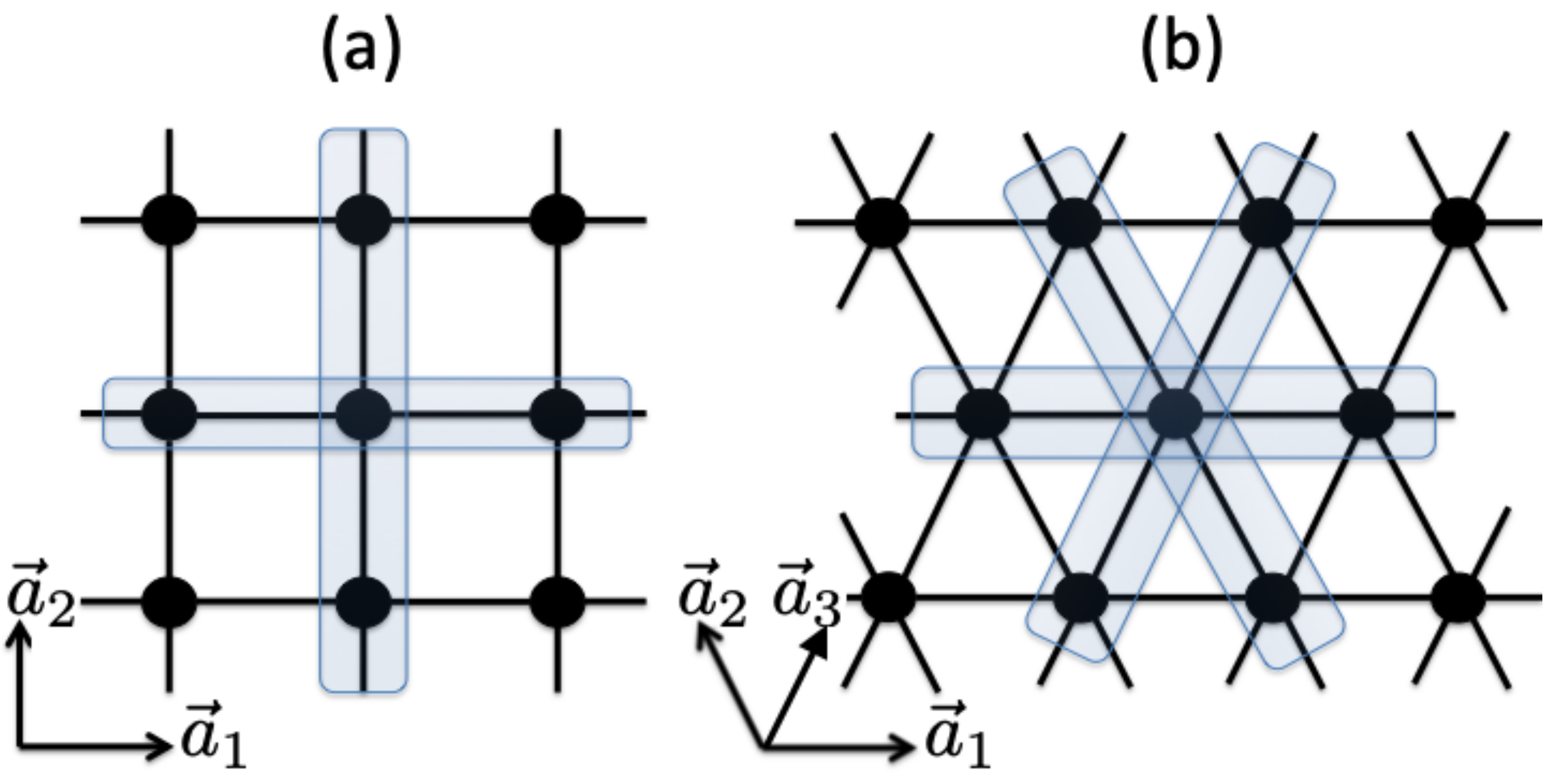}
    \caption[Simple example of nearest neighbors and next nearest neighbors in 2D]{Example of nearest neighbors and next nearest neighbors in simple 2D lattices with one site per unit cell, where $\vec{a}_1,\ \vec{a}_2$ are primitive translation vectors and $\vec{a}_3 = \vec{a}_1 + \vec{a}_2$. (a) Square lattice highlighting only nearest neighbors along $\vec{a}_1,\ \vec{a}_2$. (b) Hexagonal lattice highlighting two next nearest neighbors along $\vec{a}_3$.}
    \label{fig:sqare-vs-triang}
\end{figure}

\section{Classification of CLSs in 2D}
\label{sec:2d-cls-class}

As discussed in Section \ref{section3.1}, we classify a 2D CLS by shape vector $\mathbf{U}=(U_1, U_2)$, where $U_1,\ U_2$ are the range of a CLS along two primitive translation vectors $\vec{a}_1,\  \vec{a}_2$. However, as shown in Fig. \ref{fig:2d-known-eg-u-class}, not all of the $U_1  U_2$ unit cells in the $U_1 \times U_2$ plaquette are occupied. Especially, when $U_1,\ U_2$ are large, the CLS can take complicated shapes. In such cases, we need more parameters to specify which sites are occupied in the $U_1 \times U_2$ plaquette. Then, $\mathbf{U}$ could be a $U_1 \times U_2$ matrix composed of $1$ or $0$, representing occupied or empty unit cells in such a plaquette. However, when $U_1,\ U_2$ are small, all possible shapes can be identified by just  introducing extra parameter $s$ to count the number of unoccupied unit cells. In this case, the shape vector reads $\mathbf{U}=(U_1, U_2, s)$. Then the CLS size $U$, i.e. the number of unit cells occupied by the irreducible CLS, is given by $U=U_1 U_2 -s$. To the best of our knowledge, most known examples fall into the $\mathbf{U}=(U_1=2,1\le U_2 \le 2,1\le s \le 2)$ classes (see Fig. \ref{fig:2d-known-eg-u-class}), and thus we limit our focus to these classes in this thesis.

\begin{figure}[htb!]
    \centering
    \includegraphics[width=0.6\linewidth]{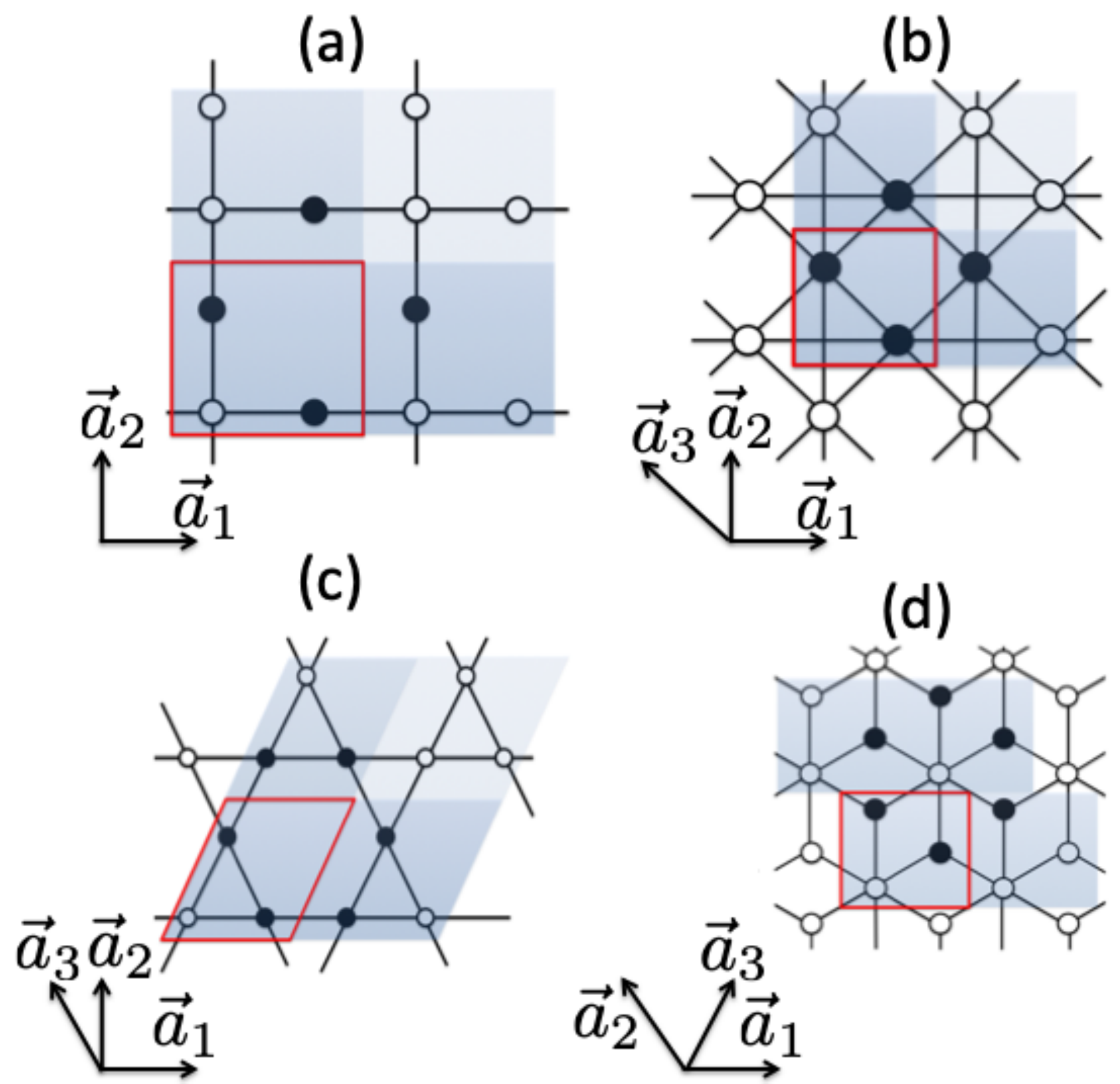}
    \caption[The $\mathbf{U}$-classification of various 2D FB lattices]{The $\mathbf{U}$-classification of four 2D FB lattices. Shaded areas give the range of the CLS (i.e. the plaquette), and the darker shaded regions show the occupied unit cells (black circles). Red boxes denote the unit cell, and $\vec{a}_1,\ \vec{a}_2$ are primitive translation vectors, and $\vec{a}_{3}=\vec{a}_1 \pm \vec{a}_2$.  (a) Lieb: $\mathbf{U}=(2,2,1)$, (b) checkerboard: $\mathbf{U}=(2,2,1)$, (c) kagome: $\mathbf{U}=(2,2,1)$, and (d) dice: $\mathbf{U}=(2,2,0)$. }
    \label{fig:2d-known-eg-u-class}
\end{figure}

Under this limitation, the wave functions of a maximum four unit cells are relevant in Eq. \eqref{eq:2d-nn-mat-rep} (i.e. all other wave functions are zero), and we label the CLS components as $\vec{\psi}_{i=1,\dots,4}$.  We also use bra-ket notations \emph{$\vert \psi_i \rangle$ and $\vec{\psi}_i$} interchangeably throughout this chapter. Then, with these notations, we can write the eigenvalue problem and destructive interference conditions for this limited case of $\mathbf{U}=(U_1=2,1\le U_2 \le 2,1\le s \le 2)$. 

In the case of nearest neighbor hoppings, hopping matrices $H_1,\ H_2$ describe the hoppings along primitive lattice translation vectors $\vec{a}_1,\ \vec{a}_2$, respectively. There are several possible shapes along with hoppings, as illustrated in Fig. \ref{fig:u22-cls-configs}. We can write the eigenvalue problem and destructive interference conditions as follows:

\begin{equation}
    \begin{aligned}
        H_1 \vec{\psi}_2 + H_2 \vec{\psi_3} \delta_{U_2,2} &= ( E_{FB} - H_0 ) \vec{\psi}_1 , \\
        H_1^\dagger \vec{\psi}_1 + H_2 \vec{\psi_4} \delta_{U_2,2} \delta_{s,0} &= ( E_{FB} - H_0 ) \vec{\psi}_2 , \\ 
        \left( H_1 \vec{\psi}_4 \delta_{s,0} + H_2^\dagger \vec{\psi}_1 \right) \delta_{U_2,2}  &= ( E_{FB} - H_0 ) \vec{\psi}_3 \delta_{U_2,2} , \\ 
        \left( H_1^\dagger \vec{\psi}_3 \delta_{s,0} + H_2^\dagger \delta_{s,0} \right) \delta_{U_2,2}  &= ( E_{FB} - H_0 ) \vec{\psi}_4 \delta_{U_2,2} \delta_{s,0}, \\
        H_1 \vec{\psi}_1 = H_1^\dagger \vec{\psi}_2 &= 0, \\
        H_2 \vec{\psi}_1 = H_2 \vec{\psi}_2 &= 0, \\
        H_1 \vec{\psi}_3 \delta_{U_2,2} = H_2^\dagger \vec{\psi}_3 \delta_{U_2,2}  &= 0 , \\
        H_1 \vec{\psi}_3 \delta_{U_2,2} \delta_{s,1} + H_2^\dagger \vec{\psi}_2 &= 0 , \\
        H_1^\dagger \vec{\psi}_4 \delta_{U_2,2} \delta_{s,0} = H_2^\dagger \vec{\psi}_4 \delta_{U_2,2} \delta_{s,0} &= 0 \;.
    \end{aligned}
    \label{eq:u22-gen-eig-prob}
\end{equation} 
 Putting the corresponding values of $U_1,\ U_2,\ s$, we get the eigenvalue problem and destructive interference conditions corresponding to CLS class $\mathbf{U}=(U_1,U_2,s)$. 

\begin{figure}[htb!]
    \centering
    \includegraphics[width=0.6\linewidth]{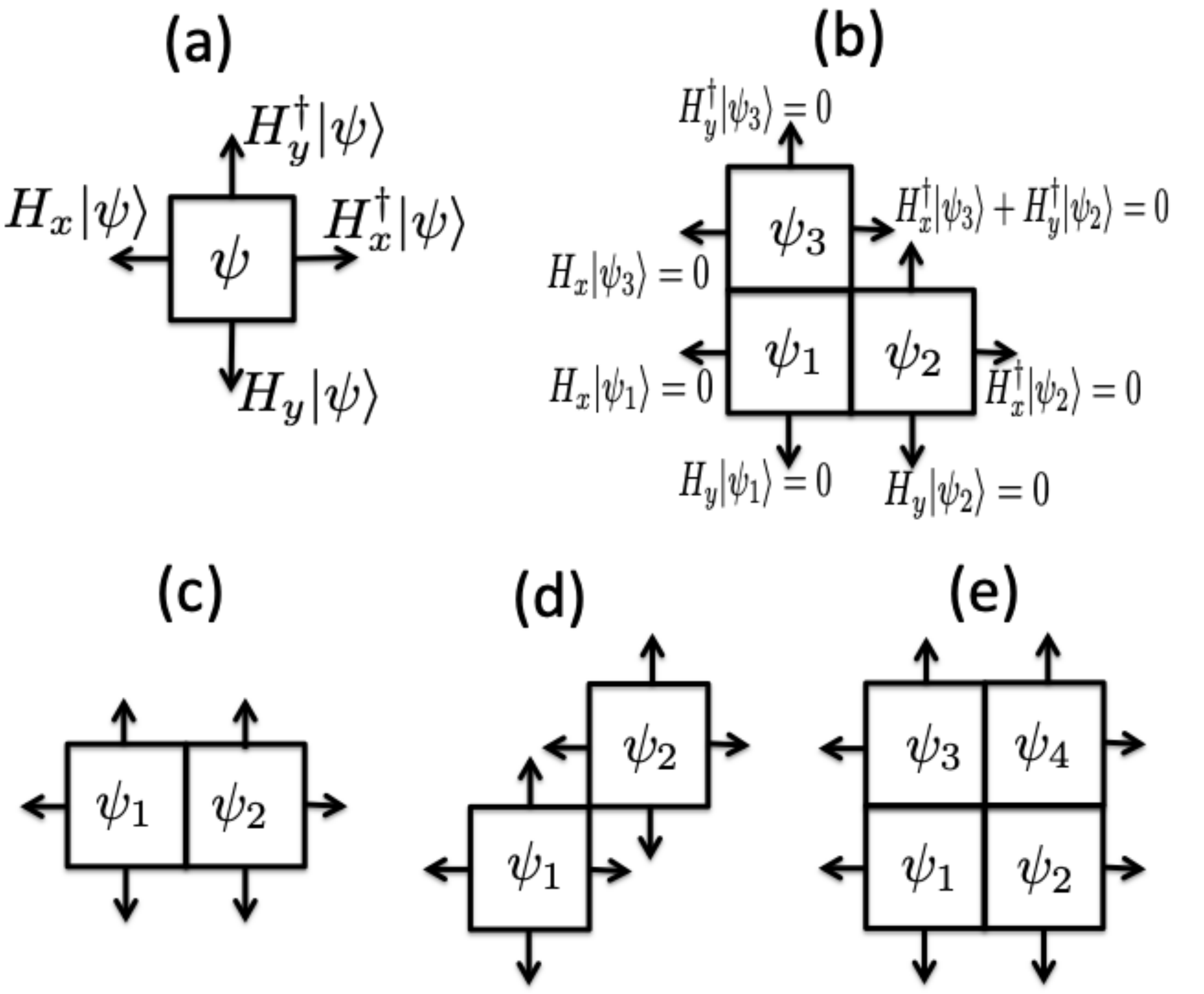}
    \caption[Classification of different CLSs and destructive interference in 2D]{Classification of $U=(2,U_2,s)$ CLSs and destructive interference. Each square represents a unit cell. (a) Single unit cell and hopping directions represented by arrows. (b) U=(2,2,1) case. (c)  U=(2,1,0) case. (d) U=(2,2,2) case. (e) U=(2,2,0) case. In (b)--(e), the arrows represent destructive interference. When two arrows cross, waves from two different unit cells interfere destructively. }
    \label{fig:u22-cls-configs} 
\end{figure}  

Many known FB lattices, such as checkerboard, kagome, and dice lattices, involve diagonal hoppings, i.e. next nearest neighbor hoppings, which are important to consider. In this case, there are two different situations: one when hoppings along both diagonal directions are non-zero, as shown in Fig. \ref{fig:u22-nnn-config} (a), and one when hoppings along only one diagonal direction are non-zero, as shown in Fig. \ref{fig:u22-nnn-config} (b). For simplicity, we only consider the latter case, as the most of the known examples fall into this category.

\begin{figure}[htb!]
    \centering
    \includegraphics[width=0.6\columnwidth]{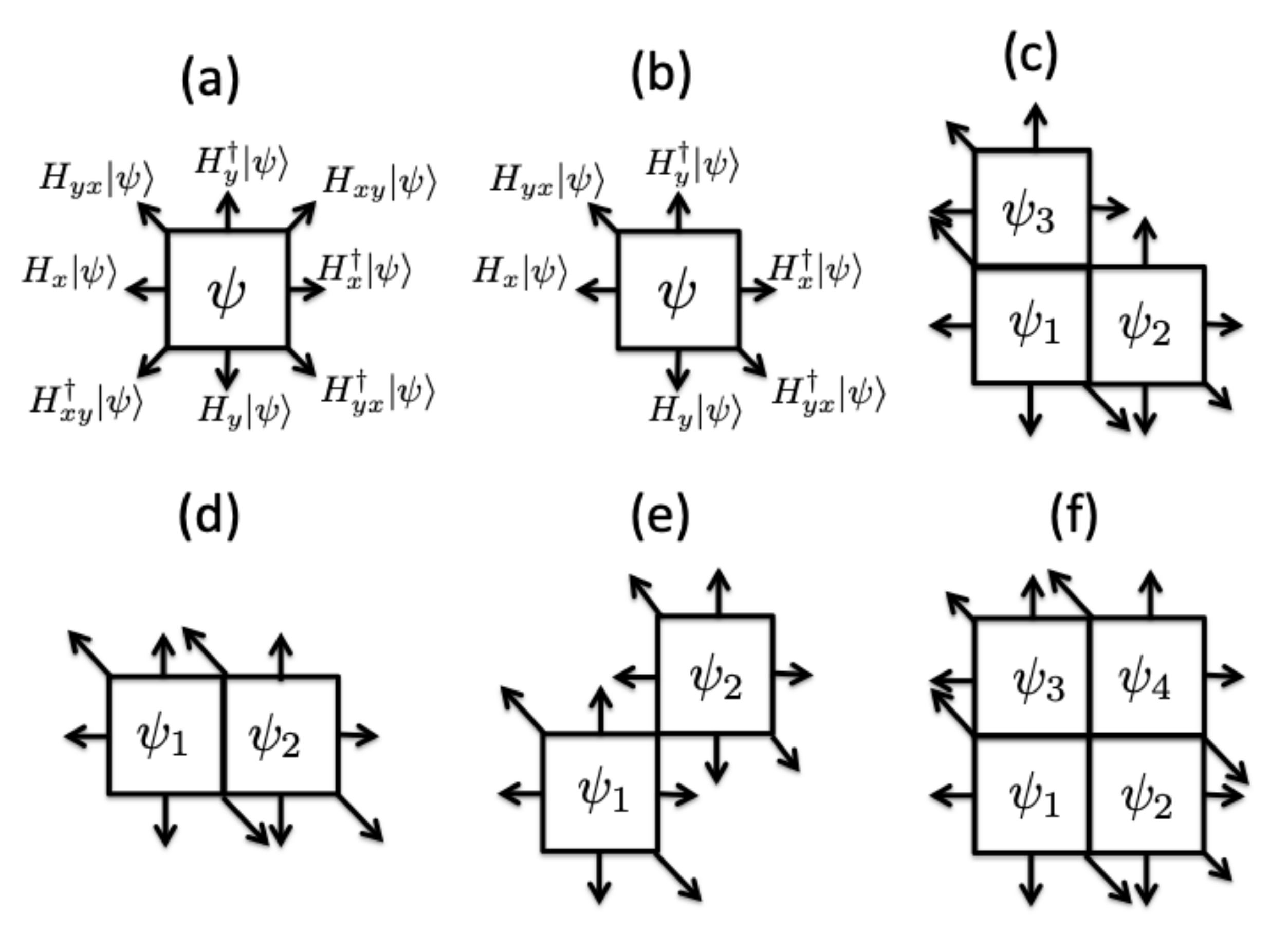}
    \caption[Classification of CLSs in 2D lattices with next nearest neighbor hoppings]{Different configurations of $U=(2,U_2,s)$ CLS classes for different $s$ in a 2D lattice with next nearest neighbor hoppings. (a) Single unit cell with hoppings along both diagonal directions. (b) Single unit cell with hoppings along one diagonal direction. In (a) and (b), arrows indicate the the hopping directions. (c)--(f) Different cases with hoppings along one diagonal direction, where the arrows represent destructive interference, i.e. the waves hopping along the arrow direction become zero. When two arrows cross, waves from two different unit cells interfere destructively. (c) U=(2,2,1),  (d) U=(2,1,0), (e) U=(2,2,2), and (f) U=(2,2,0).}
    \label{fig:u22-nnn-config}
\end{figure} 

We still consider $\mathbf{U}=(U_1=2,1 \le U_2 \le 2, 0 \le s \le 1)$, but now introduce a new hopping matrix $H_3$ to describe next nearest hoppings, i.e. the hoppings between nearest neighbors along the diagonal direction, see Fig. \ref{fig:u22-nnn-config} (b). Then the eigenvalue problem and destruction interference conditions become: 
\begin{equation}
    \begin{aligned}
        H_1 \vec{\psi}_2 + H_2 \vec{\psi}_3 \delta_{U_2,2} &= (E_{FB} - H_0) \vec{\psi}_1 , \\
        H_1^\dagger \vec{\psi}_1 + H_2 \vec{\psi}_4 \delta_{U_2,2} \delta_{s,0} + H_3^\dagger \vec{\psi}_3 \delta_{U_2,2} &= (E_{FB} - H_0) \vec{\psi}_2, \\
        \left( H_1 \vec{\psi}_4 \delta_{s,0} + H_2^\dagger \vec{\psi}_1 + H_3 \vec{\psi}_2 \right) \delta_{U_2,2} &= (E_{FB} - H_0) \vec{\psi}_3 \delta_{U_2,2}, \\
        \left( H_1^\dagger \vec{\psi}_3 + H_2^\dagger \vec{\psi}_2 \right) \delta_{U_2,2} \delta_{s,0} &=(E_{FB} - H_0) \vec{\psi}_4 \delta_{U_2,2} \delta_{s,0}, \\
        H_1 \vec{\psi}_1 = H_2 \vec{\psi}_1 &= 0, \\
        H_1^\dagger \vec{\psi}_4 \delta_{U_2,2} \delta_{s,0} = H_2 \vec{\psi}_4  \delta_{U_2,2} \delta_{s,0} &= 0,\\
        H_3^\dagger \vec{\psi}_2 = H_3 \vec{\psi}_3 \delta_{U_2,2} &= 0, \\
        H_2 \vec{\psi}_2 + H_3^\dagger \vec{\psi}_1 &= 0, \\
        H_1 \vec{\psi}_3 \delta_{U_2,2} + H_3 \vec{\psi}_1 \delta_{U_2,2} &= 0, \\
        H_2^\dagger \vec{\psi}_3 \delta_{U_2,2} + H_3 \vec{\psi}_4 \delta_{U_2,2} \delta_{s,0} &= 0, \\
        H_1^\dagger \vec{\psi}_2 + H_3 \vec{\psi}_4 \delta_{U_2,2} \delta_{s,0} &= 0 \; .
    \end{aligned}
    \label{eq:u22-nnn-eig-prob}
\end{equation} 
Putting corresponding values of $U_1,\ U_2,\ s$ into the above equation, we get the eigenvalue problem and destructive interference conditions corresponding to CLS class $\mathbf{U}=(U_1,U_2,s)$ with next nearest neighbor hoppings.

\section{The flatband generator} 
\label{sec:2d-fb-gen}

According to Definition \ref{def:FB-gen} in Section \ref{section3.4}, a 2D FB generator for a class $\mathbf{U}$ FB is a scheme to generate the set of all possible $\vec{\psi}_{i=1,\dots,U}$ and matrices $H_0,\ H_1,\ H_2$ that satisfy the eigenvalue problem and destructive interference conditions as from Eq. \eqref{eq:u22-gen-eig-prob} or Eq. \eqref{eq:u22-nnn-eig-prob}. 

As we did in 1D, here we want to generate all possible $H_1,\ H_2$ (and $H_3$ in case of next nearest neighbor hoppings) that satisfy Eq. \eqref{eq:u22-gen-eig-prob} (Eq. \eqref{eq:u22-nnn-eig-prob}) for a given $H_0$ (canonical form or any Hermitian matrix) and particular choice of CLS class $\mathbf{U}$.
For a given $H_0$ and CLS, these equations become an inverse eigenvalue problem of finding $H_1,\ H_2$ (and $H_3$ for next nearest neighbor hoppings). Since these equations depend on the shape of the CLS, i.e. $U_2,\ s$, we solve them case by case. In general, our generator works as follows.
\begin{enumerate}
    \item Choose a shape vector $\mathbf{U}=(2,1 \le U_2\le 2,0\le s \le1)$.
    \item Choose a hopping range, either nearest neighbor or next nearest neighbor.
    \item Write down corresponding form of Eq. \eqref{eq:u22-gen-eig-prob} or Eq. \eqref{eq:u22-nnn-eig-prob} for the chosen $\mathbf{U}$. 
    \item Choose an arbitrary (or a particular) $H_0$ that is either canonical or any Hermitian matrix.
    \item Choose a real $E_{FB}$. 
    \item Choose an arbitrary $\vec{\psi}_1$ or $\vec{\psi}_U$.
    \item Exclude $H_1,\ H_2$ from the equations obtained in step 3 to get non-linear constraints on remaining CLS components $\vec{\psi}_i$, and solve these constraints to find the remaining CLS components.
    \item With the chosen $H_0$ and CLS obtained from the previous step, solve the equations obtained in step 3. 
\end{enumerate} 
Below we present our results case by case. First, we note the $\mathbf{U}=(2,2,2)$ (see Fig. \ref{fig:u22-cls-configs} (d)). The eigenvalue problem for this case reduces to $H_0 \vec{\psi}_{i=1,2}=E_{FB} \vec{\psi}_{i=1,2}$, because the nearest neighbors of $\vec{\psi}_{i=1,2}$ are all zero. This leads to one of two possibilities: either a $U=1$ case or two degenerate FB with the same $E_{FB}$. A generic way to construct $U=1$ FBs is presented in Section \ref{section2.4.2}, and the case of a degenerate FB is not our focus in this thesis. Therefore, we do not tackle this $\mathbf{U}=(2,2,2)$ case further in this chapter.

\section{Nearest neighbor hoppings}
\label{section5.3}

Consider the two-band case, where $H_0,\ H_1,\ H_2$ are $2\times2$ matrices, and at least one of them has to have two zero modes. This results in one of three possibilities: a $U=1$ solution, isolated 1D chains, or isolated unit cells (see Appendix \ref{app:two-band-prob-2d}). This leads to the following theorem. 

\begin{theorem}
    In a 2D two-band translational invariant tight-binding network with nearest neighbor hoppings, the only possible CLS class is $U=1$.
\end{theorem}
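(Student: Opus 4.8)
The plan is to exploit the fact that for two bands every hopping matrix is a $2\times2$ matrix, so the only way such a matrix can annihilate two linearly independent vectors is by being identically zero. I would first invoke the CLS existence conditions (Theorem~\ref{theo:sufficient-cond-cls}): a class $\mathbf{U}$ CLS must solve the eigenvalue problem together with the destructive interference conditions collected in Eq.~\eqref{eq:u22-gen-eig-prob}. Writing these out for each admissible shape $\mathbf{U}=(2,U_2,s)$ with $U>1$, I would extract the boundary (destructive-interference) equations and read off which CLS components are forced into $\ker H_1$, $\ker H_1^\dagger$, $\ker H_2$, and $\ker H_2^\dagger$.

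Next, I would run the following dichotomy. Suppose the network is genuinely two-dimensional, i.e.\ $H_1\neq0$ and $H_2\neq0$. For the $U_2=1$ shape $(2,1,0)$ the interference conditions give $H_2\vec{\psi}_1=H_2\vec{\psi}_2=0$ (and likewise for $H_2^\dagger$); since a nonzero $2\times2$ matrix has a one-dimensional kernel, $\vec{\psi}_1$ and $\vec{\psi}_2$ must be proportional. For the $U_2=2$ shapes $(2,2,1)$ and $(2,2,0)$ the same counting applied to the pairs $\{\vec{\psi}_1,\vec{\psi}_2\}$ (through $H_2$) and $\{\vec{\psi}_1,\vec{\psi}_3\}$ (through $H_1$) forces every occupied component to be a scalar multiple of one fixed vector $\vec{v}$. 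I would then show that proportionality collapses the CLS: feeding $\vec{\psi}_i=c_i\vec{v}$ back into the interference conditions yields $H_1\vec{v}=H_1^\dagger\vec{v}=H_2\vec{v}=H_2^\dagger\vec{v}=0$, and substituting into the eigenvalue equations leaves $(\EFB-H_0)\vec{\psi}_i=0$ for each occupied cell. Hence $\vec{v}$ is an eigenvector of $H_0$ annihilated by all intercell hoppings, so each occupied cell independently carries a $U=1$ eigenstate and the assumed irreducible $U>1$ CLS is in fact reducible, a contradiction.

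The complementary branch is $H_1=0$ or $H_2=0$. If exactly one of them vanishes the lattice splits into decoupled one-dimensional chains along the surviving primitive direction, and if both vanish it degenerates into isolated unit cells; in neither situation is the network a bona fide two-dimensional flatband lattice. Combining the two branches shows that the only class supporting a CLS on a genuine 2D two-band nearest-neighbour network is $U=1$, which is the claim. The detailed, shape-by-shape verification can be relegated to Appendix~\ref{app:two-band-prob-2d}.

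I expect the main obstacle to be the bookkeeping of the destructive-interference equations for the $U_2=2$ shapes, where empty corner cells produce mixed ``crossing'' conditions such as $H_1^\dagger\vec{\psi}_3+H_2^\dagger\vec{\psi}_2=0$ rather than the separated kernel conditions. I must make sure that the separated conditions $H_2\vec{\psi}_1=H_2\vec{\psi}_2=0$ and $H_1\vec{\psi}_1=H_1\vec{\psi}_3=0$ are genuinely present, so that the proportionality argument applies, and that no occupied component is secretly allowed to vanish, which would alter the shape and demand a separate, smaller-plaquette analysis.
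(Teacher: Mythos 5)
Your proposal is correct and follows essentially the same route as the paper's own argument in Appendix~\ref{app:two-band-prob-2d}: the destructive interference conditions force each $2\times 2$ hopping matrix to annihilate two CLS components, so either the components are proportional (collapsing the CLS to $U=1$) or a hopping matrix vanishes (yielding decoupled 1D chains or isolated cells). The separated kernel conditions you flag as a potential worry for the $(2,2,1)$ shape are indeed present in Eq.~\eqref{eq:u22-gen-eig-prob}, so the proportionality argument goes through exactly as the paper intends.
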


Therefore, from now on we consider more than two bands, i.e. $\nu \ge 3$ cases.

\subsubsection{U=(2,1,0) case} 

A schematic of the CLS and destructive interference conditions for this case is shown in Fig. \ref{fig:u22-cls-configs} (c). Here, $U_2=1,\ s=0$, and Eq. \eqref{eq:u22-gen-eig-prob} reduces to a 1D form that yields the following solution (see Appendix \ref{app:nn-u21}):
\begin{equation}
    \begin{aligned}
       H_1 &= \frac{(E_{FB} - H_0)\vert \psi_1 \rangle \langle \psi_2 \vert (E_{FB} - H_0)}{\langle \psi_1 \vert (E_{FB} - H_0)\vert \psi_1 \rangle}, \\
       H_2 &= Q_{12} M Q_{12},
    \end{aligned}
    \label{eq:u210-sol}
\end{equation}
where $Q_{1}$ and $Q_2$ are transverse projectors of $\vert \psi_1 \rangle , \vert \psi_2 \rangle$, i.e. $Q_{1,2} \vert \psi_{i=1,2} \rangle =0$ (see Section \ref{section4.4}, and Appendix \ref{app:inv-egv-u2}).  
For given $H_0,\ E_{FB}$, the CLS components $\vec{\psi}_1,\vec{\psi}_2$ have to be chosen respecting the following constraints,
\begin{equation}
    \begin{aligned}
        \langle \psi_2 \vert (E_{FB} - H_0) \vert \psi_1 \rangle &= 0, \\
        \langle \psi_1 \vert (E_{FB} - H_0) \vert \psi_1 \rangle &= \langle \psi_2 \vert (E_{FB} - H_0) \vert \psi_2 \rangle \;.
    \end{aligned}
    \label{eq:u21-cls-cons}
\end{equation} 

For given $H_0,\ E_{FB}, \vert \psi_1 \rangle$ and $\nu \ge 3$ there are $\nu-3$ free variables in $H_1$ (see Section \ref{section4.4}), and $(\nu-2)^2$ free variables in $H_2$. Therefore, there are in total $(\nu-2)^2 + \nu -3=\nu^2 -3\nu + 1$ variables in the solution to Eq. \eqref{eq:u210-sol}. If we only fix $H_0$, then $E_{FB}, \vert \psi_1 \rangle$ contribute other $\nu + 1$ free parameters, which makes the total number of free parameters $\nu^2 -3\nu + 1 + \nu +1 = \nu^2 - 2(\nu-1)$. 

\subsubsection{U=(2,2,1) case} 

A schematic of the CLS and destructive interference conditions for this case is shown in Fig. \ref{fig:u22-cls-configs} (b). In this case $U_2=2,\ s=1$, and our generator gives the following solution for the $\nu>3$ case (see details in Appendix \ref{app:nn-u221}):

\begin{equation}
     \begin{aligned}
         H_{1}&=\frac{1}{\langle\psi_{3}\vert Q_{1,2}\vert a\rangle}Q_{1,2}\vert a\rangle\langle z\vert Q_{1,2,3} +\frac{\left(E_{FB}-H_{0}\right)\vert\psi_{1}\rangle \langle\psi_{2}\vert\left(E_{FB}-H_{0}\right)}{\langle\psi_{2}\vert\left(E_{FB}-H_{0}\right)\vert\psi_{2}\rangle}\\ 
         & -\frac{\langle\psi_{3}\vert\left(E_{FB}-H_{0}\right)\vert\psi_{3}\rangle}{\langle\psi_{2}\vert\left(E_{FB}-H_{0}\right)\vert\psi_{2}\rangle \langle\psi_{1}\vert Q_{2,3}\vert y\rangle}Q_{2,3}\vert y\rangle \langle\psi_{2}\vert\left(E_{FB}-H_{0}\right) \\
         H_{2}&=\frac{1}{\langle\psi_{1}\vert Q_{2,3}\vert y\rangle}Q_{2,3}\vert y\rangle\langle\psi_{3}\vert(E_{FB}-H_{0}) \\ & +\frac{1}{\langle\psi_{2}\vert Q_{3}\vert u\rangle}\left(\frac{\langle\psi_{1}\vert Q_{3}\vert u\rangle}{\langle\psi_{1}\vert Q_{2,3}\vert y\rangle}Q_{2,3}\vert y\rangle-Q_{3}\vert u\rangle\right)\langle z\vert Q_{1,2,3},
     \end{aligned}
     \label{eq:sol-u221-hxy-1}
 \end{equation} 
 where $Q_i$ are transverse projectors on $\vert \psi_i \rangle$, $Q_{2,3}$ is a transverse projector on $\vert \psi_{i=2,3} \rangle$, and $Q_{1,2,3}$ is a transverse projector on $\vert \psi_{i=1,2,3} \rangle$ (see Section \ref{section4.4} and Appendix \ref{app:inv-egv-u2}). The $\nu$ component vectors $\vert y \rangle, \vert z \rangle, \vert a \rangle, \vert u \rangle$ are free parameters, which are defined as 
\begin{equation}
     Q_{2,3} \vert y \rangle = H_2 \vert \psi_3 \rangle = (E_{FB} -H_0) \vert \psi_1 \rangle - H_1 \vert \psi_2 \rangle, \quad \langle z \vert Q_{1,2,3} = \langle \psi_3 \vert H_1 = - \langle \psi_2 \vert H_2.  
    \label{eq:y-z-def}
\end{equation} 
The CLS components $\vert \psi_{i=1,2,3} \rangle$ must satisfy the following constraints:
\begin{equation}
  \begin{aligned}
    & \langle \vec{\psi}_2 \vert H_0 \vert \vec{\psi}_1 \rangle = E_{FB} \langle \vec{\psi}_2 \vert \vec{\psi}_1 \rangle \\
    & \langle \vec{\psi}_3 \vert H_0 \vert \vec{\psi}_1 \rangle = E_{FB} \langle \vec{\psi}_3 \vert \vec{\psi}_1 \rangle \\
    &  \langle \vec{\psi}_3 \vert H_0 \vert \vec{\psi}_2 \rangle = E_{FB} \langle \vec{\psi}_3 \vert \vec{\psi}_2 \rangle \\
    & \langle \vec{\psi}_2 \vert \left( E_{FB} - H_0 \right) \vert \vec{\psi}_2 \rangle + \langle \vec{\psi}_3 \vert \left( E_{FB} - H_0 \right) \vert \vec{\psi}_3 \rangle = \langle \vec{\psi}_1 \vert \left( E_{FB} - H_0 \right) \vert \vec{\psi}_1 \rangle.
  \end{aligned}
  \label{eq:non-lin-const-L-shape}
\end{equation} 


 When $Q_{1,2,3}\vert z \rangle = 0$, it corresponds to the case that the hoppings from $\vec{\psi}_2$ and $\vec{\psi}_3$ are zero individually, instead of the sum of the hoppings from $\vec{\psi}_2$ and $\vec{\psi}_3$ being zero. The solution of this case is given in Appendix \ref{app:u221-special}. Note that, when $Q_{1,2,3}\vert z \rangle = 0$ and $Q_{2,3} \vert y \rangle = (E_{FB} - H_0 ) \vert \psi_1 \rangle$, the solution in Eq. \eqref{eq:sol-u221-hxy-1} also reduces to the solution in Appendix \ref{app:u221-special}. 
 
 \paragraph{$\nu=3$ case:} In this case, transverse projector $Q_{1,2,3}$ on $\vert \psi_{i=1,2,3} \rangle$ is a $3 \times 3$ matrix. If $\vert \psi_{i=1,2,3} \rangle$ are linearly independent, then $Q_{1,2,3}$ becomes zero, and so $Q_{1,2,3}\vert z \rangle = 0$. The solution for this case is given in Appendix \ref{app:u221-special}, as discussed above. If we require $Q_{1,2,3}\ne 0$, and so $Q_{1,2,3} \vert z \rangle \ne 0$, then $\vert \psi_{i=1,2,3} \rangle$ must be linearly dependent as
\begin{equation}
c_{1}\vert\psi_{1}\rangle+c_{2}\vert\psi_{2}\rangle+c_{3}\vert\psi_{3}\rangle=0,
\end{equation}
which gives the following solution (see details in Appendix \ref{app:nn-u221}): 
\begin{equation}
    \begin{aligned}
        H_{1} &=\frac{Q_{2}\vert a\rangle\langle\psi_{2}\vert\left(E_{FB}-H_{0}\right)}{\langle\psi_{3}\vert Q_{2}\vert a\rangle} \\
        H_{2} &= \frac{Q_{3}\vert c\rangle\langle\psi_{3}\vert\left(E_{FB}-H_{0}\right)}{\langle\psi_{2}\vert Q_{3}\vert c\rangle},
    \end{aligned}
    \label{eq:u221-hxy-sol-3band}
\end{equation} 
where $\vert a \rangle,  \vert c \rangle$ are free parameters, and $E_{FB}, H_0, \vert \psi_2 \rangle, \vert \psi_3 \rangle $ are chosen respecting the constraints \begin{equation}
    \begin{aligned}
        \langle\psi_{2}\vert\left(E_{FB}-H_{0}\right)\vert\psi_{2}\rangle&=0\\\langle\psi_{3}\vert\left(E_{FB}-H_{0}\right)\vert\psi_{3}\rangle&=0\\\langle\psi_{3}\vert\left(E_{FB}-H_{0}\right)\vert\psi_{2}\rangle&=0\\\left(E_{FB}-H_{0}\right)\left(\alpha\vert\psi_{2}\rangle+\beta\vert\psi_{3}\rangle\right)&=0, 
    \end{aligned}
    \label{eq:u221-cls-cons-3Band}
\end{equation} 
where $\alpha, \beta$ are proportionality factors such that 
\begin{equation}
    \vert \psi_1 \rangle = \alpha \vert \psi_2 \rangle + \beta \vert \psi_3 \rangle. 
    \label{eq:nu3-lin-dep-cls-comp}
\end{equation} 
According to Conjecture \ref{conj:lin-dep-band-touch}, the CLS given by Eq. \eqref{eq:nu3-lin-dep-cls-comp} always yields band touching.


\subsubsection{U=(2,2,0) case} 

In this case, we only solve the three-band problem using a different method than above. The CLS and destructive interference conditions are illustrated in Fig. \ref{fig:u22-cls-configs} (e). Here, we use a direct parameterization to solve Eq. \eqref{eq:u22-gen-eig-prob} and obtain a lengthy analytic solution, which is given in Appendix \ref{app:nn-u220-nu3}. There are three free parameters in this solution.

\section{Next nearest neighbor hoppings} 
\label{section5.4}

Following the same procedure as for nearest neighbor hoppings except for steps 3 and 7, here we put corresponding values of $U_2,s$ into Eq. \eqref{eq:u22-nnn-eig-prob} to obtain solutions for $\mathbf{U}=(2,U_2,s)$ cases.  

\subsubsection{U=(2,1,0) case} 

The CLS and destructive interference for this case are illustrated in Fig. \ref{fig:u22-nnn-config} (d). Here, $U_2=1,s=0$, and Eq. \eqref{eq:u22-nnn-eig-prob} gives the following solution (see Appendix \ref{app:nnn-u21}):
\begin{equation}
    \begin{aligned}
        H_1 &= \frac{(E_{FB} - H_0) \vert \psi_1 \rangle \langle \psi_2 \vert (E_{FB} - H_0) }{\langle \psi_1 \vert (E_{FB} - H_0) \vert \psi_1 \rangle}, \\
       H_2 &= Q_{12} \vert c \rangle \langle d \vert Q_{12} , \\
       H_3 &= Q_{12} \vert e \rangle \langle f \vert Q_{12} \; ,
    \end{aligned}
    \label{eq:u21-diag-sol}
\end{equation} 
where $Q_{12}$ is a transverse projector of $\vec{\psi}_1, \vec{\psi}_2$, and $H_0,\ E_{FB},\ \vec{\psi}_1,\vec{\psi}_2$ are chosen respecting the following constraints 
\begin{equation}
    \begin{aligned}
       \langle \psi_2 \vert (E_{FB} - H_0) \vert \psi_1 \rangle &= 0 , \\
       \langle \psi_1 \vert (E_{FB} - H_0) \vert \psi_1 \rangle &= \langle \psi_2 \vert (E_{FB} - H_0) \vert \psi_2 \rangle \; .
    \end{aligned}
    \label{eq:u210-nnn-cls-cons}
\end{equation} 

From Eq. \eqref{eq:u21-diag-sol}, we can see that, for the two-band case, the solution gives decoupled 1D chains. Because, for the two-band case, the transverse projector becomes zero, which makes $H_2=H_3=0$. Therefore, \emph{no two-band solution exists for the $U=(2,1,0)$ case}. 

\subsubsection{U=(2,2,1) case} 

 The CLS and hoppings for this case are illustrated in Fig. \ref{fig:u22-nnn-config} (c). Here, $U_2=2,s=1$, and Eq. \eqref{eq:u22-nnn-eig-prob} yields the following solution (see details in Appendix \ref{app:nnn-u221}):
{\footnotesize
\begin{equation}
    \begin{aligned}
        H_{1}&=\frac{Q_{2}\vert z\rangle\langle x\vert Q_{1}}{\langle\psi_{3}\vert Q_{2}\vert z\rangle}+\frac{\left(\langle\psi_{3}\vert Q_{2}\vert z\rangle(E_{FB}-H_{0})\vert\psi_{1}\rangle-\langle\psi_{3}\vert Q_{2}\vert z\rangle Q_{3}\vert u\rangle-\langle x\vert Q_{1}\vert\psi_{2}\rangle Q_{2}\vert z\rangle\right)}{\langle\psi_{3}\vert Q_{2}\vert z\rangle\left( \left(\langle\psi_{1}\vert(E_{FB}-H_{0})\vert\psi_{1}\rangle-\langle\psi_{1}\vert Q_{3}\vert u\rangle\right)\langle\psi_{3}\vert Q_{2}\vert z\rangle-\langle\psi_{1}\vert Q_{2}\vert z\rangle\langle x\vert Q_{1}\vert\psi_{2}\rangle\right)} \\
        & \times \left(\langle\psi_{3}\vert Q_{2}\vert z\rangle\langle\psi_{2}\vert(E_{FB}-H_{0})-\langle\psi_{3}\vert Q_{2}\vert z\rangle\langle w\vert Q_{3}-\langle\psi_{1}\vert Q_{2}\vert z\rangle\langle x\vert Q_{1}\right), \\
        H_{2}&=-\frac{Q_{3}\vert y\rangle\langle x\vert Q_{1}}{\langle\psi_{2}\vert Q_{3}\vert y\rangle}+\frac{\left(\langle\psi_{2}\vert Q_{3}\vert y\rangle Q_{3}\vert u\rangle+\langle x\vert Q_{1}\vert\psi_{3}\rangle Q_{3}\vert y\rangle\right)}{\langle\psi_{2}\vert Q_{3}\vert y\rangle\left(\langle\psi_{2}\vert Q_{3}\vert y\rangle\langle\psi_{1}\vert Q_{3}\vert u\rangle+\langle x\vert Q_{1}\vert\psi_{3}\rangle\langle\psi_{1}\vert Q_{3}\vert y\rangle\right)} \\
        & \times \left(\langle\psi_{2}\vert Q_{3}\vert y\rangle\langle\psi_{3}\vert(E_{FB}-H_{0})-\langle\psi_{2}\vert Q_{3}\vert y\rangle\langle v\vert Q_{2}+\langle\psi_{1}\vert Q_{3}\vert y\rangle\langle x\vert Q_{1}\right), \\
        H_3&=-\frac{Q_{2}\vert z\rangle\langle y\vert Q_{3}}{\langle\psi_{1}\vert Q_{2}\vert z\rangle}+\frac{\left(\langle\psi_{1}\vert Q_{2}\vert z\rangle Q_{2}\vert v\rangle+\langle y\vert Q_{3}\vert\psi_{2}\rangle Q_{2}\vert z\rangle\right)\left(\langle\psi_{1}\vert Q_{2}\vert z\rangle\langle w\vert Q_{3}+\langle\psi_{3}\vert Q_{2}\vert z\rangle\langle y\vert Q_{3}\right)}{\langle\psi_{1}\vert Q_{2}\vert z\rangle\left(\langle\psi_{1}\vert Q_{2}\vert z\rangle\langle\psi_{3}\vert Q_{2}\vert v\rangle+\langle y\vert Q_{3}\vert\psi_{2}\rangle\langle\psi_{3}\vert Q_{2}\vert z\rangle\right)} \; .
    \end{aligned}
\end{equation}
} 
In this solution, $\vert x \rangle, \vert y \rangle, \vert z \rangle, \vert u \rangle, \vert v \rangle, \vert w \rangle$ are introduced to decouple the equations for $H_1,H_2,H_3$ (see Appendix \ref{app:nnn-u221}), and they are defined as
\begin{equation}
    \begin{aligned}
       H_1^\dagger \vert \psi_3 \rangle &= -H_2^\dagger \vert \psi_2 \rangle = Q_1 \vert x \rangle, \\
       H_2 \vert \psi_2 \rangle &= - H_3^\dagger \vert \psi_1 \rangle = Q_3 \vert y \rangle , \\
       H_1 \vert \psi_3 \rangle &= - H_3 \vert \psi_1 \rangle = Q_2 \vert z \rangle , \\
       H_2 \vert \psi_3 \rangle &= Q_3 \vert u \rangle , \\
       H_3 \vert \psi_2 \rangle &= Q_2 \vert v \rangle , \\
       H_3^\dagger \vert \psi_3 \rangle &= Q_3 \vert w \rangle \; .
    \end{aligned}
    \label{eq:u221-nnn-xyz-vectors}
\end{equation} 
These vectors ($\vert x \rangle, \vert y \rangle, \vert z \rangle, \vert u \rangle, \vert v \rangle, \vert w \rangle$) and $\vert \psi_{i=1,2,3} \rangle$ must satisfy the following constraints (see details in Appendix \ref{app:nnn-u221}):
\begin{equation}
    \begin{aligned}
        \langle\psi_{1}\vert(E_{FB}-H_{0})\vert\psi_{1}\rangle-\langle\psi_{1}\vert Q_{3}\vert u\rangle&=\langle\psi_{2}\vert(E_{FB}-H_{0})\vert\psi_{2}\rangle-\langle w\vert Q_{3}\vert\psi_{2}\rangle=\langle\psi_{1}\vert H_{x}\vert\psi_{2}\rangle,\\\langle\psi_{2}\vert(E_{FB}-H_{0})\vert\psi_{3}\rangle&=\langle\psi_{1}\vert Q_{2}\vert z\rangle=\langle\psi_{1}\vert H_{x}\vert\psi_{3}\rangle,\\\langle\psi_{2}\vert(E_{FB}-H_{0})\vert\psi_{1}\rangle&=\langle w\vert Q_{3}\vert\psi_{1}\rangle,\\\langle\psi_{2}\vert(E_{FB}-H_{0})\vert\psi_{1}\rangle&=\langle\psi_{2}\vert Q_{3}\vert u\rangle,\\\langle\psi_{3}\vert(E_{FB}-H_{0})\vert\psi_{1}\rangle&=\langle x\vert Q_{1}\vert\psi_{2}\rangle=\langle\psi_{3}\vert H_{x}\vert\psi_{2}\rangle,\\\langle x\vert Q_{1}\vert\psi_{3}\rangle&=\langle\psi_{3}\vert Q_{2}\vert z\rangle=\langle\psi_{3}\vert H_{x}\vert\psi_{3}\rangle,\\\langle\psi_{3}\vert(E_{FB}-H_{0})\vert\psi_{2}\rangle&=\langle\psi_{1}\vert Q_{3}\vert y\rangle=\langle\psi_{1}\vert H_{y}\vert\psi_{2}\rangle,\\\langle\psi_{3}\vert(E_{FB}-H_{0})\vert\psi_{3}\rangle-\langle v\vert Q_{2}\vert\psi_{3}\rangle&=\langle\psi_{1}\vert Q_{3}\vert u\rangle=\langle\psi_{1}\vert H_{y}\vert\psi_{3}\rangle,\\\langle\psi_{3}\vert(E_{FB}-H_{0})\vert\psi_{1}\rangle&=\langle v\vert Q_{2}\vert\psi_{1}\rangle,\\\langle\psi_{2}\vert Q_{3}\vert y\rangle&=-\langle x\vert Q_{1}\vert\psi_{2}\rangle=\langle\psi_{2}\vert H_{y}\vert\psi_{2}\rangle,\\\langle\psi_{2}\vert Q_{3}\vert u\rangle&=-\langle x\vert Q_{1}\vert\psi_{3}\rangle=\langle\psi_{2}\vert H_{y}\vert\psi_{3}\rangle,\\\langle y\vert Q_{3}\vert\psi_{1}\rangle&=\langle\psi_{1}\vert Q_{2}\vert z\rangle=-\langle\psi_{1}\vert H_{yx}\vert\psi_{1}\rangle,\\\langle\psi_{1}\vert Q_{2}\vert v\rangle&=-\langle y\vert Q_{3}\vert\psi_{2}\rangle=\langle\psi_{1}\vert H_{yx}\vert\psi_{2}\rangle,\\\langle w\vert Q_{3}\vert\psi_{1}\rangle&=-\langle\psi_{3}\vert Q_{2}\vert z\rangle=\langle\psi_{3}\vert H_{yx}\vert\psi_{1}\rangle,\\\langle\psi_{3}\vert Q_{2}\vert v\rangle&=\langle w\vert Q_{3}\vert\psi_{2}\rangle=\langle\psi_{3}\vert H_{yx}\vert\psi_{2}\rangle \; .
    \end{aligned}
\end{equation}

\section{Summary} 
\label{section5.5}

In this chapter, we extended the flatband generator in one dimension to two dimensions. We classified all CLSs that occupy a maximum of 4 unit cells in a $2\times2$ plaquette. Then we setup eigenvalue problems, and identified destructive interference conditions for different CLS classes. By solving these sets of equations, we obtained analytic solutions for the Hamiltonians of given flatband classes. 

Our method not only covers all known examples so far, it also gives a large number of extra free parameters. These results can be extended to larger CLSs in 2D, and higher lattice dimensions as well. In the next chapter, we extend our flatband generator to the non-Hermitian regime.

\chapter{Non-Hermitian flatband generator}
\label{chapter6}

\ifpdf
    \graphicspath{{Chapter6/Figs/Raster/}{Chapter6/Figs/PDF/}{Chapter6/Figs/}}
\else
    \graphicspath{{Chapter6/Figs/Vector/}{Chapter6/Figs/}}
\fi

Recently, systems described by non-Hermitian Hamiltonians have been drawing more attention, with an increasing number of studies being carried out to understand the fate of flatbands (FBs) under non-Hermitian perturbations or non-Hermitian settings. Various schemes have been proposed to construct FB lattices in the non-Hermitian regime. In this chapter, we introduce a generator---a systematic classification and construction scheme for non-Hermitian FB Hamiltonians---for 1D a non-Hermitian tight-binding network with two bands. 

We start this chapter by introducing background and our motivation for the non-Hermitian FB generator in Section \ref{section6.1}, before moving on to main definitions in Section \ref{section6.2}. Then we introduce our FB generator scheme in Section \ref{section6.3}, present our results in Section \ref{section6.4}, and conclude in Section \ref{section6.5}.

\section{\label{sec:introduction} Overview of non-Hermitian physics and flatbands in non-Hermitian systems}
\label{section6.1}

Non-Hermitian systems~\cite{moiseyev2011nonhermitian,Bagarello2015NonSelfadjoint} exhibit extraordinary properties such as complex spectra, non-orthogonal eigenstates~\cite{curtright2007biorghogonal,Brody2013biorthogonal}, exceptional points~\cite{Heiss2004exceptionalpoint,Heiss2001chirality,Berry2004nonhermitian,Hern2006nonhermitian,gao2015observation,jinhui2014nhdegeneracy}, etc. Moreover, in open quantum systems, non-Hermitian Hamiltonians can describe coupling with the environment, which simplifies analysis and reduces the large number of degrees of freedom~\cite{eleuch2018lossgain,rotter2009nonhermitian,rotter2018equilibrium}. After the discovery that parity-time ($\mathcal{PT}$)-symmetric non-Hermitian Hamiltonians can have real eigenvalues~\cite{bender1998realspectra,Bender2007makingsense,znojil1999nhharmonic}, $\mathcal{PT}$ symmetry was demonstrated in optical systems with gain and loss~\cite{makris2008beamdynamics}, which in turn led to an explosion of research outcomes in $\mathcal{PT}$-symmetric non-Hermitian photonics~\cite{Ganainy2007theory,musslimani2008optical,klaiman2008visualization,guo2009observation,ruter2010observation,kottos2010broken,wimmer2015opticalsoliton,feng2017nhphotonics,teimourpour2017rubustness,Zhang2018nhoptics,ramy2019dawn}. Non-Hermitian topology has also been drawing more interest~\cite{daniel2017edgemodes,gong2018topological,yuce2015topological,zeuner2015topological}. 

In terms of macroscopically degenerate FBs, i.e. dispersionless energy bands of translational invariant tight-binding networks, most have have been studied in Hermitian systems~\cite{mielke1991ferromagnetism,mielke1992exact,mielke1993ferromagnetism,tasaki1992ferromagnetism,tasaki2008hubbard,maimaiti2017compact,maimaiti2019universal}, and experimentally realized in photonic systems\cite{guzman2014experimental,vicencio2015observation,mukherjee2015observation}, ultra-cold atoms~\cite{jo2012ultracold}, and exciton-polariton condensates~\cite{masumoto2012exciton}. Such achievements though cannot be applied to the above-mentioned non-Hermitian systems, such as coupled laser cavities, where gain and loss is unavoidable. Therefore, designing models that support FBs in the presence of non-Hermiticity, or in other words gain and loss, is highly desirable.

Accordingly, research concerning FBs in non-Hermitian systems has been initiated~\cite{chern2015pt,Ramezani2017nhinduced,ge2015parity}. Indeed, FBs have been found in various non-Hermitian systems~\cite{leykam2017flat,qi2018defect,zyuzin2018flat,ge2018non}. Most of these studies have been carried out in the context of symmetry, such as $\mathcal{PT}$ symmetry, chiral symmetry, and non-Hermitian particle-hole symmetry~\cite{ge2018non,qi2018defect}. While FBs can exist in non-Hermitian systems without any symmetry, little is known about FBs in the non-symmetric regime, and further the eigenstates of non-Hermitian FBs have been largely untouched. It has become important then to propose a systematic classification and construction scheme for non-Hermitian FB Hamiltonians, as has been done for 1D Hermitian systems~\cite{maimaiti2017compact,maimaiti2019universal}. 

We provide here such a scheme, which we term a generator, for non-Hermitian FB Hamiltonians in 1D two-band networks. Non-Hermiticity in our model comes from the asymmetry in the hoppings to the right and left, as well as onsite gain and loss. We obtain FB Hamiltonians simply by requiring eigenvalues of the non-Hermitian $k$-space Hamiltonian to be $k$-independent (i.e. flat). Therefore, symmetry is not required in our model. Since non-Hermitian Hamiltonians give complex band structures, non-Hermitian FBs naturally fall into three different categories: one in which both real and imaginary parts are flat (the completely flat case), one in which either real or imaginary parts are flat (the partially flat case), and one in which the modulus is flat (the flat modulus case). Using a simple band calculation approach along with some approaches from Refs.~\cite{maimaiti2017compact,maimaiti2019universal}, we identify all possible non-Hermitian FB Hamiltonians in the above-mentioned three categories.

\section{\label{section6.2} Non-Hermitian Hamiltonian}

We consider a one-dimensional ($d=1$) translational invariant non-Hermitian lattice with $\nu>1$ sites per unit cell. As we are extending the work in Chapter \ref{chapter4} (\cite{maimaiti2017compact}) to the non-Hermitian regime, we adopt the same definitions and conventions. However, in this case $H_0$ is no longer Hermitian and hoppings to the right and left directions might be different. Therefore, we define right and left hopping matrices $H_r,\ H_l$, and $H_r\ne H_l^{\dagger}$. Then the eigenvalue problem reads
\begin{equation}
   H_0 \vec{\psi}_n + H_r \vec{\psi}_{n-1} + H_l \vec{\psi}_{n+1}  = E \vec{\psi}_n,\quad n\in \mathcal{Z}.
   \label{sch-eq-nn}
\end{equation}
Discrete translational invariance allows us to use the Floquet--Bloch theorem to write the eigenvector of this equation as $\vec{\psi}_n = \sum_k e^{-ikn} u_k $, where the lattice constant is taken to be 1. Using the Bloch eigenfunction we can then write Eq. \eqref{sch-eq-nn} as
\begin{equation}
    \mathcal{H}(k) u_k = E_k u_k,
\end{equation} 
where the $k$-space Hamiltonian is
\begin{equation}
    \mathcal{H}(k) = H_0 + H_l e^{-ik} + H_r e^{ik}.
    \label{eq:H_k}
\end{equation} 

The FB generator for the Hermitian version of the Hamiltonian in Eq. \eqref{eq:H_k}, where $H_l=H_r^{\dagger}$, was introduced in Chapter \ref{chapter4} (\cite{maimaiti2017compact}). In this work, we  identify the $\mathcal{H}(k)$ that gives a FB.
 
\section{\label{section6.3} The generator}

There are two possible ways to construct non-Hermitian FB Hamiltonians. One way is the direct extension of the FB generator in Chapter \ref{chapter4} (\cite{maimaiti2017compact}), which is based on compact localized states (CLSs), to the non-Hermitian regime (see Appendix \ref{app:cls-gen}). However the CLS based generator only gives completely flat FBs. Here though we are mainly concerned with the second approach: band calculation, which can generate non-Hermitian FBs of all three above mentioned categories.  

In this method, we require some eigenvalues of $\mathcal{H}(k)$ in Eq. \eqref{eq:H_k} to be completely or partially flat, and identify the Hamiltonians, i.e. the matrices $H_0,\ H_r,\ H_l$, that satisfy such requirement. 

We take the simplest case with $\nu=2$ sites per unit cell. In this case, hoppings within the unit cell are described by intracell hopping matrix $H_{0}$ (which can be transformed to more generic form by unitary transformation similar to the one described in Section \ref{section4.3}):
\begin{equation}
H_{0}(\nu,\mu)=\begin{pmatrix}0 & \nu \\
                        0 & \mu
          \end{pmatrix}\;, 
\label{eq:def-H0}
\end{equation} 
where $\mu=0,1$, $\nu=0,1$, and they cannot be 1 at the same time. Therefore, $H_0$ has three different forms: 
\begin{equation}
 \begin{aligned}
    H_{0} &=\begin{pmatrix}0 & 0\\ 0 & 0 \end{pmatrix}, \quad degenerate ,\\ 
    H_0 &= \begin{pmatrix}0 & 0\\ 0 & 1 \end{pmatrix} \quad non-degenerate,\\ 
    H_0 &= \begin{pmatrix}0 & 1\\ 0 & 0 \end{pmatrix}, \quad abnormal.
\end{aligned}  
\label{eq:def-H0-froms}
\end{equation}
The intercell hoppings are described by left and right hopping matrices
\begin{equation}
H_{l}=\begin{pmatrix}f & g\\
h & l
\end{pmatrix},\quad 
H_{r}=\begin{pmatrix}a & b\\
c & d
\end{pmatrix}.
\label{eq:def-Hlr}
\end{equation}
Notice that the form of $H_{0}$ is completely fixed by the symmetries in this case. Then, the eigenvalues $x_{k},\ y_{k}$ of the $k$-space Hamiltonian in Eq. \eqref{eq:H_k} satisfy
\begin{equation}
\begin{aligned}
      x_k + y_k & =\mu+e^{ik}(a+d)+e^{-ik}(f+l)\\
      x_k y_k & =e^{2ik}\det H_{r}+e^{-2ik}\det H_{l} \\ & + (\nu f-            
      \mu h)e^{ik} + (\nu a-\mu c)e^{-ik} \\ & +df-cg-bh+al.
\end{aligned}
\label{eq:band-eqn} 
\end{equation} 
By solving Eq. \eqref{eq:band-eqn} under the condition that at least one of $x_k,\ y_k$ has to be fully or partially (real, imaginary parts, or modulus) $k$ independent, we can get the hopping matrices $H_r,\ H_l$ having FBs. 

\section{\label{section6.4} Results} 

We present the results for each category of non-Hermitian FB: completely flat, partially flat, and modulus flat.

\subsection{\label{sec:completely FB} Completely flat case}

Flatbands that are completely flat have $k$ independent real and imaginary parts. This leads to two different cases: one in which both bands are completely flat, and one in which only one band is completely flat. 

\subsubsection{Both bands are completely flat}

In this case, assuming both $x_k=x,\ y_k=y$ are $k$ independent in Eq. \eqref{eq:band-eqn} and requiring all $k$-dependent terms to vanish, we get 
\begin{equation}
\begin{aligned}
&a+d=0,\\
&f+l=0,\\
&\det\,H_{r}=ad-bc=0,\\
&\det\,H_{l}=fl-hg=0,\\
&\nu f-\mu h=0,\\
&\nu a-\mu c=0,\\
&xy=df-cg-bh+al,\\
&x+y= \mu .
\end{aligned}
\label{eq:bands-eqn-allf-xy-eqs}
\end{equation}
From these it follows $d=-a$, $l=-f$, and either $f=a=0$, or $h=c=0$,
or none (for details see Appendix \ref{app:both-fb}).

Solving Eq. \eqref{eq:bands-eqn-allf-xy-eqs} for different forms of $H_0$ in Eq. \eqref{eq:def-H0-froms}, we can get $H_l,\ H_r$ that gives two FBs at FB energies $E_{FB}=\pm x$. Here, we only show the solution for the degenerate $\mu=\nu=0$ case (see Appendix \ref{app:both-fb} for other cases):
\begin{equation}
    \begin{aligned}
        H_{r}&=\begin{pmatrix}a & b\\ -\frac{a^{2}}{b} & -a \end{pmatrix}, \\ 
        H_{l}&=\begin{pmatrix}f & g\\-\frac{f^{2}}{g} & -f \end{pmatrix}, \\
        x^{2}&=2af-\frac{a^{2}g}{b}-\frac{bf^{2}}{g} \; ,
    \end{aligned}
    \label{eq:nh-all-fb-sol}
\end{equation} 
which has two FBs at $E_{FB}=\pm x$. Note that the parameters in Eq. \eqref{eq:nh-all-fb-sol} can be complex.

Interestingly, the solution to Eq. \eqref{eq:nh-all-fb-sol} can give real FBs in the absence of Hermiticity and $\mathcal{PT}$ symmetry. For example, suppose $a,b,f\in\mathcal{R},g=-b$ in Eq. \eqref{eq:nh-all-fb-sol}, then the Hamiltonian in Eq. \eqref{eq:H_k} becomes 
\begin{equation}
    H(k) = \left(
\begin{array}{cc}
 e^{-i k} \left(e^{2 i k} a+f\right) & b e^{-i k} \left(-1+e^{2 i k}\right) \\
 \frac{e^{-i k} \left(f^2-a^2 e^{2 i k}\right)}{b} & -e^{-i k} \left(e^{2 i k} a+f\right), \\
\end{array}
\right)
\label{eq:nh-all-fb-ham}
\end{equation} 
which gives two real FBs at $E_{FB} = \pm (a+f)$; however, the Hamiltonian in this equation is neither Hermitian nor $\mathcal{PT}$ symmetric. 


\subsubsection{One band is completely flat} 

Requiring either $x$ or $y$ to be $k$ independent in Eq. \eqref{eq:band-eqn} yields $\det H_r =0, \det H_l = 0$. Therefore, with one band that is completely flat, we can parameterize $H_r,\ H_l$ as
\begin{equation}
    H_r = \begin{pmatrix} a & b\\
                          c & \frac{b c}{a} \end{pmatrix}, 
    H_r = \begin{pmatrix} f & g\\
                          h & \frac{g h}{f} \end{pmatrix}.
   \label{eq:Hr-Hl-diff-para}
\end{equation} 
Putting $d=\frac{b c}{a}$ into Eq. \eqref{eq:band-eqn} and requiring $x_k=x$ to be $k$ independent gives the following solution (see Appendix \ref{app:one-band-fb}):
\begin{equation}
 \begin{aligned}
 b & =\frac{\left(-x\pm \sqrt{x^2-4 a f}\right) (f \nu +g x)+2 a f g}{2 f^2},\\
 c &=\frac{(x-\mu ) \left(\left(x^2\pm x \sqrt{x^2-4 a f}\right)-2 a f\right)}{2 (f \nu +g x)},\\
  h & =\frac{f^2 (\mu -x)}{f \nu +g x}.
    \end{aligned} 
    \label{eq:single-flat-gen-sol}
\end{equation}  
This solution gives the following band structure: 
\begin{equation}
\begin{aligned}
    E_{FB}&= x,\\ 
    E_k &= \frac{2 (f \nu +g \mu ) \left(-(x-\mu )+\left(a e^{i k}+f e^{-i k}\right)\right)}{2 (f \nu +g x)},\\ 
    &\mp \frac{e^{i k} \nu  (x-\mu ) \left(\sqrt{x^2-4 a f}-x\right)}{2 (f \nu +g x)}.
\end{aligned}
\label{eq:single-flat-gen-band-1}
\end{equation}

Putting corresponding values of $\mu,\ \nu$ into Eqs. \eqref{eq:single-flat-gen-sol} and \eqref{eq:single-flat-gen-band-1}, we can get solutions for the degenerate, non-degenerate, and abnormal cases with corresponding band structures. From Eq. \eqref{eq:single-flat-gen-sol}, we can see that the parameters $a,f,g$ can be complex.

Using CLS method we can get the same solution as Eq. \eqref{eq:single-flat-gen-sol}, which gives a $U=3$ CLS. This is interesting compared to the Hermitian case, in which the maximum CLS size is $U=2$ (see Chapter \ref{chapter4} or Ref. \cite{maimaiti2017compact}). It can be shown that, when $f=-a-x$, the solution in Eq. \eqref{eq:single-flat-gen-band-1} and the CLS reduce to $U=2$ (see Appendices \ref{app:cls-gen-u2} and \ref{app:cls-gen-u3} for details). It can also be shown that the CLS size of $U=3$ is the maximum attainable. 

We can alternatively use the inverse eigenvalue method; in Appendix \ref{app:nh-inv-eig-method} we demonstrate this for the $U=2$ case.

\subsection{\label{sec:partially FB} Partially flat case}

The case of FBs that are partially flat is a special case that only available in non-Hermitian lattices. As previously mentioned, partially flat means that either the real or the imaginary part of a band is $k$ independent. The eigenstates in this case are not compactly localized, and therefore the CLS approach does not apply. We stick to the band calculation approach. 

We assume that $x_k=x_1 + i x_2, y=y_1 + i y_2$, where $x_i,y_i\in\mathcal{R},i=1,2$. By separating the real and imaginary parts of Eq. \eqref{eq:band-eqn}, we have
\begin{equation}
    \begin{aligned}
        x_{1}+y_{1}&=\mu+(a+d+f+l)\cos(k)\\
        x_{2}+y_{2}&=-(a+d-f-l)\sin(k)\\ x_{1}y_{1}-x_{2}y_{2} =&al-bh-cg+df \\ &+(a\mu-c\nu+f\mu- h\nu)\cos(k)\\ 
        &+(\det H_{l}+\det H_{r})\cos(2k)\\
        x_{2}y_{1}+x_{1}y_{2}&=\left(-a\mu+c\nu+f\mu-
        h\nu\right)\sin(k) \\ 
        & +(\det H_{l}-\det H_{r})\sin(2k).\;
    \end{aligned}
    \label{eq:partial-flat-gen}
\end{equation}
Requiring some of $x_1,\ x_2,\ y_1,\ y_2$ to be $k$ independent and solving Eq. \eqref{eq:partial-flat-gen}, we can get $H_r,\ H_r$ that gives partially flat FBs. There are many possible cases for such FBs, namely where the real (imaginary) parts of both bands are flat, the real (imaginary) part of one band is flat, and the real (imaginary) part of one band and the imaginary (real) part of the other band is flat.
Here we show the results for the case in which the real parts of both bands are flat; details of this case and other cases are given in Appendix \ref{app:partial-fb}.

In this example, $x_1,y_1$ in Eq. \eqref{eq:partial-flat-gen} are $k$ independent, and we assume an abnormal $H_0$, i.e. $\nu=1,\mu=0$. Then Eq. \eqref{eq:partial-flat-gen} gives
\begin{equation}
  \begin{aligned}
     H_{0}&=\begin{pmatrix}0 & 1\\
                        0 & 0
                \end{pmatrix},\\
     H_{l}&=\begin{pmatrix} a & \frac{(a-d)^2}{4 h} 
             -\frac{h}{4 x_1^2} \\ 
             -h & d \end{pmatrix},\\
     H_{r}&=\begin{pmatrix} -a-x_1 & \frac{h^2-x_1^2   
            \left(a-d+2 x_1\right){}^2}{4 h x_1^2} \\
            h & x_1-d \end{pmatrix}
  \end{aligned}
  \label{eq:both-real-flat-abnormal}
\end{equation}
which yields the following band structure
\begin{equation}
    \begin{aligned}
        x_k & = i (a+d) \sin (k)-\frac{i h \sin (k)}{x_1}+x_1,\\
        y_k & = i (a+d) \sin (k)+\frac{i h \sin (k)}{x_1}-x_1 \; .
    \end{aligned}
    \label{eq:both-rl-fb-abn-band}
\end{equation}
 Using the same method, we can solve the case in which the imaginary parts of both bands are flat.

\subsection{Modulus-flat case}  

In addition to the two previous cases, the third category of non-Hermitian FBs is the special flat-modulus case, where the band has a $k$-dependent phase with a non-trivial winding number. In this case, we assume that $x=r e^{i\theta_{k}}, r\in \mathcal{R}$ in Eq. \eqref{eq:band-eqn}. If we assume the phase is an integer multiple of $k$, i.e. $\theta_k=m k$, then $m$ can take two possible values $m=\pm 1$. For each value of $m$, we solve the Eq. \eqref{eq:band-eqn} to get the hopping matrices $H_r,\ H_l$ that gives a FB at $E_{FB}=r e^{i\theta_{k}}, r\in \mathcal{R}$. As an example, we present the results for $m=1$ and abnormal $H_0$, i.e. $\mu=0,\nu=1$ (see details in Appendix \ref{app:mod-fb}).

For $m=1$ and abnormal $H_0$ we have 
\begin{equation}
    \begin{aligned}H_{r} & =\left(\begin{array}{cc}
              r & b\\
             0 & d
            \end{array}\right),\\
         H_{l} & =\left(\begin{array}{cc}
             0 & g\\
            0 & l
            \end{array}\right),
     \end{aligned}
     \label{eq:mod-fl-result}
\end{equation} 
 which gives the following band structure
\begin{equation}
  \begin{aligned}
    E_{1} & =e^{ik}r,\\
    E_{2} & =de^{ik}+e^{-ik}l+1.
  \end{aligned}  
  \label{eq:mod-fl-band}
\end{equation} 
Using a similar method, other cases of $H_0$ and the $m=-1$ case can be solved. Notably, the FB at $E_1$ in Eq. \eqref{eq:mod-fl-band} has a non-trivial winding number.

\section{\label{section6.5} Summary}

We considered a 1D translational invariant two-band network with non-Hermitian hoppings, and introduced a systematic classification of non-Hermitian FB Hamiltonians. This classification provides a systematic way to construct non-Hermitian FB lattices with two bands. The non-Hermitian FB in this case were constructed by fine-tuning the hoppings, and the construction does not require $\mathcal{PT}$ or chiral symmetry. Completely flat FBs host CLSs, on which the classification of the Hamiltonian is based. We found that the maximum class of CLSs for the non-Hermitian two-band network is $U=3$, which differs from the Hermitian case where the maximum is $U=2$. For partially flat FBs, the CLSs are not necessary, so we used band calculation methods to identify the partially flat FB Hamiltonians. Interestingly, for the flat-modulus case, we found that the band has a non-trivial winding number.

\chapter{Flatband in a microwave photonic crystal}
\label{chapter7}

\ifpdf
    \graphicspath{{Chapter7/Figs/Raster/}{Chapter7/Figs/PDF/}{Chapter7/Figs/}}
\else
    \graphicspath{{Chapter7/Figs/Vector/}{Chapter7/Figs/}}
\fi  

Previous chapters introduced our flatband (FB) generators in 1D, 2D, and non-Hermitian systems. As an application, in this chapter we present a tight-binding model for a microwave photonic crystal. The spectrum of the photonic crystal consists of Dirac points at two different energies; while conventional tight-binding models can explain the spectrum around the lower Dirac points, they cannot explain the spectrum around the upper ones. Our tight-binding model, comprising honeycomb and kagome sublattices, fits the density of states (DOS) of the microwave photonic crystal nicely, including the spectrum around the upper Dirac points. The methods and ideas behind our FB generator, such as matrix representation, play an important role in setting up this model and simplifying the calculations.

\nomenclature[DOS]{DOS}{Density of states}

\section{\label{intr}Photonic crystals and Dirac billiards}

The peculiar band structure of graphene and the linear dispersion relation around the touch points of its conduction and valence bands originate from the symmetry properties of its honeycomb structure~\cite{Slonczewski1958}, which is formed by two interpenetrating triangular lattices with threefold symmetry. Consequently, photons (bosons) or waves propagating in a spatially periodic potential with a honeycomb structure may comprise energy spectra regions where they are effectively described by the Dirac equation for spin-1/2 fermions. Actually, there exist numerous realizations~\cite{Polini2013} of artificial graphene using 2D electron gases exposed to a honeycomb potential lattice~\cite{Singha2011,Nadvornik2012}, molecular assemblies arranged on a copper surface~\cite{Gomes2012}, ultracold atoms in optical lattices~\cite{Tarruell2012,Uehlinger2013}, and photonic crystals~\cite{Parimi2004,Joannopoulos2008,Bittner2010,Kuhl2010,Sadurni2010,Bittner2012,Bellec2013,Bellec2013a,Rechtsman2013,Rechtsman2013a,Khanikaev2013}.

It was shown in~\cite{Raghu2008} that under certain conditions, photonic crystals with a triangular lattice geometry exhibit a linear Dirac dispersion relation. In Ref.~\cite{Dietz2015} a microwave Dirac billiard was constructed from a photonic crystal, consisting of a rectangular basin containing about 900 metallic cylinders (as illustrated in~\reffig{figbilliard}) arranged into a triangular lattice with a top plate in order to obtain a microwave resonator~\cite{Dietz2013,Dietz2015,Dietz2019}. 
Below a certain frequency of microwaves sent into the resonator, which is inversely proportional to its height, the electric field is perpendicular to the top and bottom plates, that is, the associated Helmholtz equation is two-dimensional and mathematically equivalent to the Schr\"odinger equation of the corresponding quantum billiard~\cite{Stoeckmann1990,Sridhar1991,Graf1992}. 
\begin{figure}[h]
\centering
\includegraphics[width=0.6\linewidth]{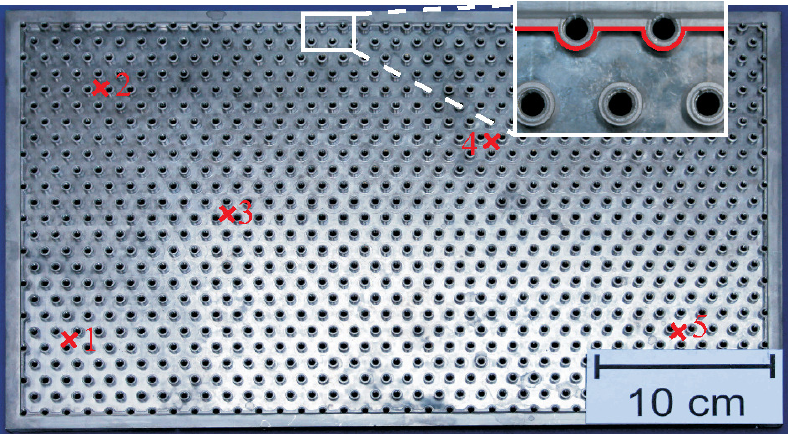}
\caption[Photograph of the basin plate of a microwave Dirac billiard]{Photograph of the basin plate of a microwave Dirac billiard containing 888 metal cylinders. It was constructed from brass and coated with lead to achieve superconductivity at liquid helium temperature. The red crosses mark the positions of the antennas. The inset shows a magnification of the lattice structure from a long side. Taken from~\cite{Dietz2015}.}
\label{figbilliard}
\end{figure}
Due to the periodic lattice structure formed by the cylinders, the frequencies of wave propagation as a function of the two quasimomentum components exhibit a band structure similar to that of graphene. Indeed, the resonance spectra of microwave Dirac billiards exhibit Dirac points where they are governed by the relativistic Dirac equation~\cite{Bittner2010}. The honeycomb structure arises from the voids formed by three neighboring cylinders yielding a triangular cell~\cite{Gaspard1989}, which can be considered as an open resonator. There, the electric field intensity of the propagating modes is localized, as marked by red and blue circles in~\reffig{schemebilliard}. 
\begin{figure}[h]
\centering
\includegraphics[width=0.6\linewidth]{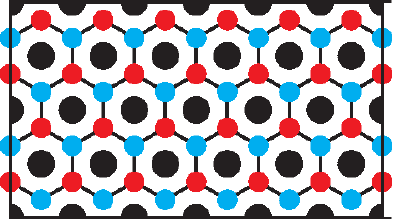}
\caption[Schematic view of the triangular lattice structure of the billiard]{Schematic view of the triangular lattice structure of the billiard. The red and blue circles mark the voids between the cylinders, corresponding to the atoms in graphene arranged on the two independent triangular lattices that form the hexagonal lattice. It is terminated by armchair and zigzag edges along the short and long sides, respectively, and translationally invariant with respect to all sides. Taken from Ref. ~\cite{Dietz2015}.}
\label{schemebilliard}
\end{figure}
Thus, the cells host quasibound states which, depending on the eigenfrequency, might partially overlap with neighboring ones. The size of the electric field intensity at the voids provides the on-site excitations. The void structure terminates with a zigzag edge along the longer edges of the rectangle and with an armchair edge along the shorter ones.


Figure~\ref{figdos} shows a comparison of the DOS, $\rho (f)=\frac{\pi^2}{N}\sum_n\delta(f-f_n)$ with $N$ denoting the number of sites of the honeycomb lattice, i.e., voids formed by the metal cylinders and $f_n$ the eigenfrequencies (left panel), with the calculated band-structure function $f(\vec q)$ (right panel) along the path $\Gamma{\rm M}{\rm K}\Gamma$ inside the first Brillouin zone (inset). Here, K denotes the positions of the Dirac points located at the corners of the Brillouin zone, $\Gamma$ which is the center of the BZ where the band terminates, and M the saddle points~\cite{Castro2009}. The positions of the experimental band gaps and Dirac points agree well with those in the calculated band structure. The latter are bordered by sharp peaks, the van Hove singularities~\cite{VanHove1953}, which correspond to the saddle points in the band structure. Generally, $\rho(f)$ exhibits maxima at frequencies where the band barely changes with the quasimomentum vector $\vec q$, i.e., regions of low group velocity, $\vert\vec\nabla f(\vec q)\vert\simeq 0$.
\begin{figure}[htb!]
\centering
\includegraphics[width=0.6\linewidth]{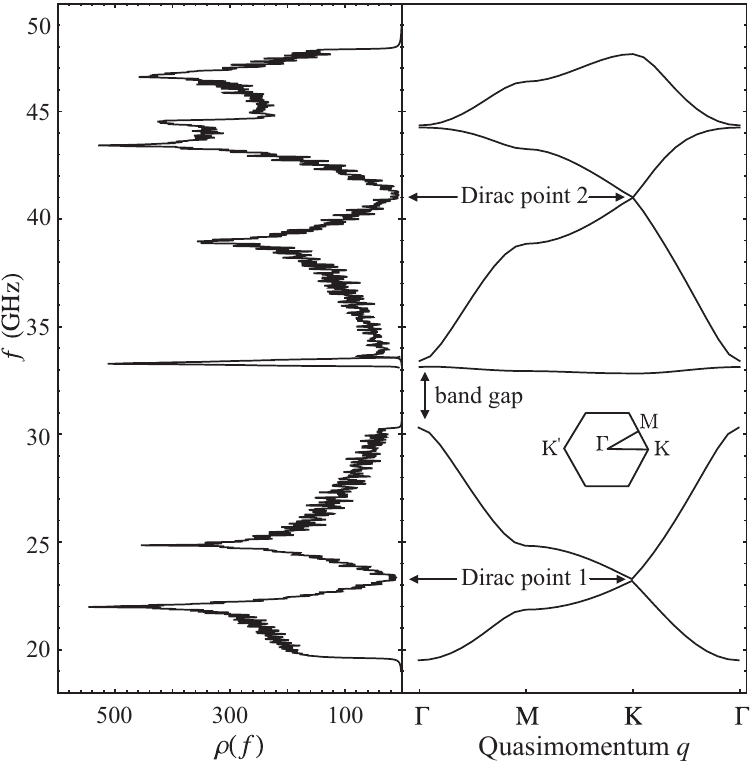}
\caption[Comparison of the experimental DOS with the computed band structure]{Comparison of the experimental DOS (left panel) with the computed band structure of an infinitely extended photonic crystal with the same lattice constant (right panel). The frequencies of the Dirac points, the band gaps, the van Hove singularities, and the other peaks of $\rho(f)$ agree well with the computed ones as well as the locations of the saddle points and the FB. Taken from~\cite{Dietz2015}.}
\label{figdos}
\end{figure}
The narrow region of exceptionally high resonance density at the upper edge of the first band gap is associated with a FB separating the two Dirac points, with the bands framing them. 

\section{\label{TBM} Tight-binding model for a microwave photonic crystal}

In Ref.~\cite{Dietz2015}, a tight-binding model (TBM) was used to describe the experimental DOS. In distinction to the first TBM description of graphene~\cite{Wallace1947}, where only nearest- and second-nearest neighbor interactions of the $p_z$ orbitals were considered, also third-nearest-neighbour couplings and overlaps between the wavefunctions centered at the associated atoms had to be included in order to attain good agreement between the experimental and computed DOS. The latter takes into account the partial overlap of the quasibound states, i.e., of the electric field mode components localized at the voids formed by the metal cylinders with neighboring ones~\cite{Reich2002}. This indicates that the quasibound states localized at the voids extend into the regions of neighboring ones. Accordingly, the band-structure function $f(\vec q)$ was obtained by solving the following generalized eigenvalue problem
\begin{equation}
\mathcal{H}_{TB}\vert\Psi_{\vec q}(\vec r)\rangle=f(\vec q)\mathcal{S}_{WO}\vert\Psi_{\vec q}(\vec r),\rangle\label{eigenvalueeq}
\end{equation}
with the TBM Hamiltonian
\begin{equation}
\mathcal{H}_{TB}=\left({\begin{array}{cc}
                \gamma_0+\gamma_2h_2(\vec q)\, &\gamma_1h_1(\vec q)+\gamma_3h_3(\vec q)\\
                \gamma_1h_1(\vec q)+\gamma_3h_3(\vec q)\, &h_0+\gamma_2h_2(\vec q)
                \end{array}}\right)\label{htb}
\end{equation}
and the wavefunction overlap matrix
\begin{equation}
\mathcal{S}_{WO}=\left({\begin{array}{cc}
                1+s_2h_2(\vec q)\, &s_1h_1(\vec q)+s_3h_3(\vec q)\\
                s_1h_1(\vec q)+s_3h_3(\vec q)\, &1+s_2h_2(\vec q)
                \end{array}}\right).\, ,\label{stb}
\end{equation}
These incorporate the nearest-neighbor coupling $\gamma_1$ with the second- and third-nearest neighbor couplings $\gamma_2$ and $\gamma_3$, and the corresponding overlap parameters $s_1$, $s_2$, and $s_3$. The functions $h_n(\vec q),\, n=1,2,3$ associated with the different couplings were obtained from Ref.~\cite{Reich2002}. 

The parameters were determined by fitting the DOS deduced from the band-structure function $f(\vec q)$ to the experimental one. In order to achieve a good agreement of the DOSs deduced from the TBM and the experimental eigenfrequencies, respectively, the nearest neighbor, second and third nearest neighbor couplings between the quasibound states were taken into account, i.e., the electric field mode components localized at the voids between the metallic cylinders and also the corresponding overlaps.
The spectrum around the lower Dirac point was well described by the TBM. However, for the upper Dirac point, the TBM reproduced only the positions of the peaks exhibited by the DOS, the band edges, and the Dirac point, whereas the overall shape exhibited clear deviations. That is, the TBM does not seem to be suitable to describe the band structure including both Dirac points.

The fact that the TBM yielded a good description only after including wavefunction overlaps, along with the occurrence of a FB together with two Dirac points and the bands framing them, led us to the idea that the microwave Dirac billiard may actually provide for the experimental realization of a honeycomb-kagome lattice. In this picture, the non-vanishing electric field intensity between two neighboring metal cylinders corresponds to atoms of the kagome lattice. The fact that the lower Dirac point and the bands framing it are well described by the tight-binding model for a finite-sized honeycomb lattice implies that the coupling between the honeycomb and kagome lattice is weak. 



\section{The honome lattice}
\label{sec:model}

The DOS of the tight-binding model in the previous section fits well with the experimental spectrum around the lower Dirac point. However, the spectrum around the upper Dirac point, including the FB, was not explained theoretically. We want to propose a tight-binding model that fits the experimental spectrum around both Dirac points and the FB as well. 

 By observing the nodes of the wave functions obtained from the Helmholtz equation, we came up with a tight-binding model with two sublattices, honeycomb and kagome. We dubbed this new lattice as honome lattice, see Fig. \ref{fig:honome}. 

\begin{figure}[hbt!]
    \centering
    \includegraphics[width=0.8\columnwidth]{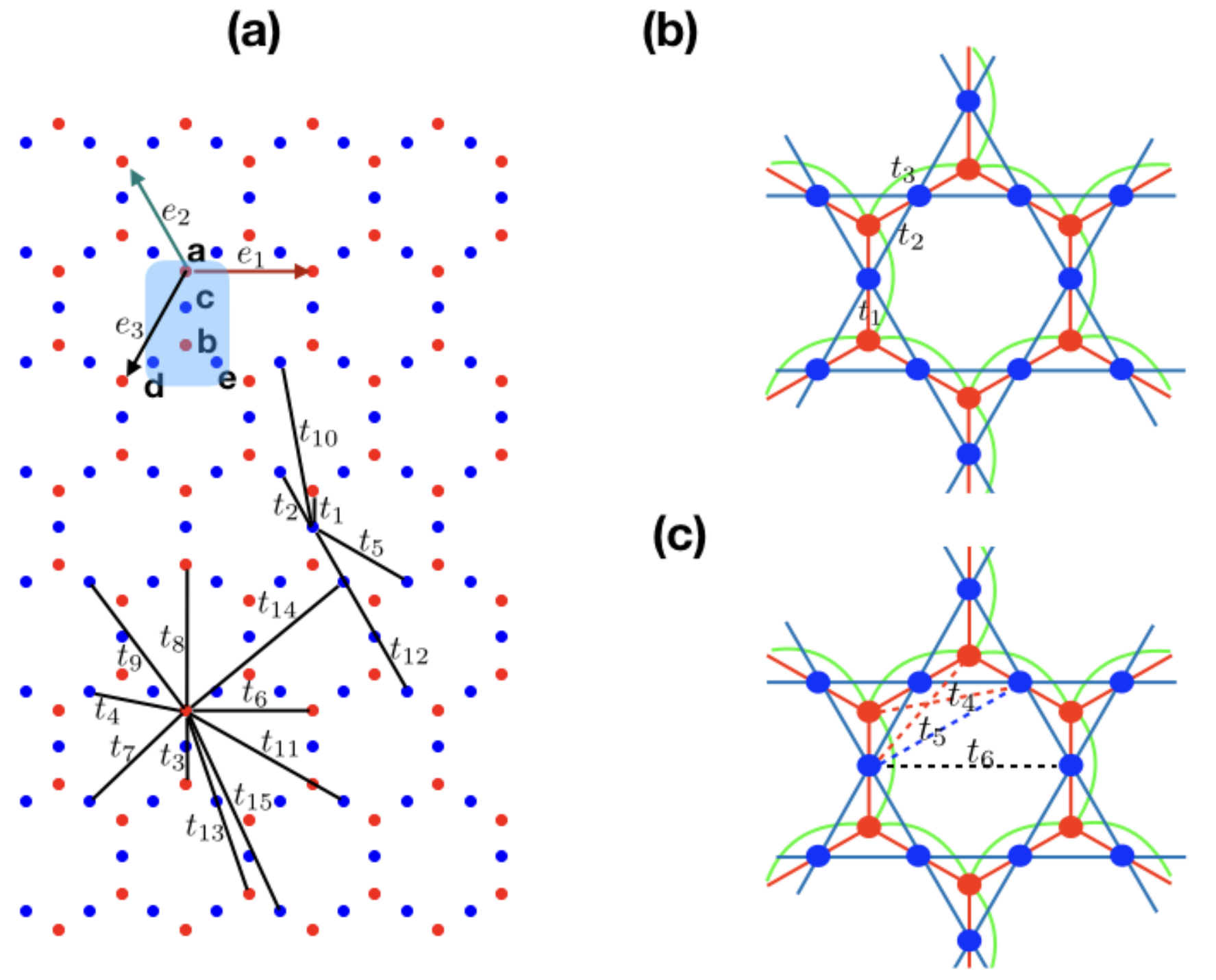}
    \caption[The honome lattice]{The honome lattice consists of two sublattices: honeycomb (red) and kagome (blue). (a) Structure of the honome lattice. The shaded area shows the unit cell. All hoppings within nearest neighboring unit cells are shown. The hopping indices are ordered according to increasing distance, with each representing all hoppings of the same distance. (b) Honome lattice up to third nearest neighbor hoppings. (c) Honome lattice showing 4th to 6th nearest neighbor hoppings.}
    \label{fig:honome}
\end{figure}
 
We write the wave function of a unit cell as 
\begin{equation}
\vec{\psi}_{m,n} = \left( a_{m,n}, b_{m,n}, c_{m,n}, d_{m,n}, e_{m,n}  \right)^T,
\end{equation} 
where $a_{m,n},\ b_{m,n}$ are wave functions of honeycomb sublattice sites, $c_{m,n},\ d_{m,n},\ e_{m,n}$ are wave functions of kagome sublattice sites, and $m,n$ are unit cell indices giving the position of the unit cell with respect to the origin, i.e. the lattice vector of the ($m,n$) unit cell is $\vec{R}_{m,n}= m \vec{e}_1 + n \vec{e}_2$. 

 Within nearest neighboring unit cells, the maximum hopping range is 15th neighboring hopping in conventional terms. Then we can express the hoppings in matrix form:
{\footnotesize 
\begin{equation}
\begin{aligned}H_{0}=\left(\begin{array}{ccccc}
\epsilon & t_{3} & t_{1} & t_{4} & t_{4}\\
t_{3} & \epsilon & t_{1} & t_{1} & t_{1}\\
t_{1} & t_{1} & 0 & t_{2} & t_{2}\\
t_{4} & t_{1} & t_{2} & 0 & t_{2}\\
t_{4} & t_{1} & t_{2} & t_{2} & 0
\end{array}\right),\ \  & H_{1}=\left(\begin{array}{ccccc}
t_{6} & t_{8} & t_{7} & t_{7} & t_{11}\\
t_{8} & t_{6} & t_{4} & t_{4} & t_{9}\\
t_{7} & t_{7} & t_{6} & t_{5} & t_{10}\\
t_{11} & t_{9} & t_{10} & t_{6} & t_{12}\\
t_{7} & t_{4} & t_{5} & t_{2} & t_{6}
\end{array}\right)\\
H_{2}=\left(\begin{array}{ccccc}
t_{6} & t_{3} & t_{4} & t_{4} & t_{1}\\
t_{13} & t_{6} & t_{9} & t_{7} & t_{4}\\
t_{9} & t_{4} & t_{6} & t_{5} & t_{2}\\
t_{14} & t_{7} & t_{10} & t_{6} & t_{5}\\
t_{15} & t_{9} & t_{12} & t_{10} & t_{6}
\end{array}\right),\ \  & H_{3}=\left(\begin{array}{ccccc}
t_{6} & t_{13} & t_{9} & t_{15} & t_{14}\\
t_{3} & t_{6} & t_{4} & t_{9} & t_{7}\\
t_{4} & t_{9} & t_{6} & t_{12} & t_{10}\\
t_{1} & t_{4} & t_{2} & t_{6} & t_{5}\\
t_{4} & t_{7} & t_{5} & t_{10} & t_{6}
\end{array}\right),
\end{aligned}
\label{eq:hop-mats}
\end{equation} 
}
where $t_j$ is the $j$th neighbouring hopping. $H_0$ is the intra unit cell hopping matrix, $H_i$ describes hoppings between nearest neighboring unit cells along $\vec{e}_i$, where $\vec{e}_1,\vec{e}_2$ are primitive lattice translation vectors, and $\vec{e}_3=-\vec{e}_1 - \vec{e}_2$. Then the eigenvalue problem for this lattice is 
\begin{equation}
\begin{aligned}
    & H_0 \vec{\psi}_{m,n} + H_1 \vec{\psi}_{m+1,n} + H_1^\dagger \vec{\psi}_{m-1,n} + H_2 \vec{\psi}_{m,n+1} \\ &+ H_2^\dagger \vec{\psi}_{m,n-1} + H_3 \vec{\psi}_{m+1,n+1} + H_3^\dagger \vec{\psi}_{m-1,n-1} \\
    & = E \vec{\psi}_{m,n}.
\end{aligned}
\label{eq:eig-prob}
\end{equation} 

According to Bloch theorem, one can write $\vec{\psi}_{m,n} = \vec{\phi}_{\vec{k}} e^{- i \vec{k} \cdot \vec{R}_{m,n}}$, and by putting this into Eq. \eqref{eq:eig-prob} we get the $k$-space Hamiltonian, 
\begin{equation}
H(k)=H_{0}+\sum_{j=1}^{3}\left(H_{j}e^{-ik_{j}}+H_{j}^{\dagger}e^{ik_{j}}\right),
\label{eq:hk}
\end{equation}
 where $k_{i}=\vec{k}\cdot\vec{e}_{i},\  \vec{k}=\left(k_{x},k_{y}\right),\  \vec{e}_{1}=\left(1,0\right),\ \vec{e}_{2}=\left(-\frac{1}{2},\frac{\sqrt{3}}{2}\right)$, and $\vec{e}_{3}=\left(-\frac{1}{2},-\frac{\sqrt{3}}{2}\right)$. Eigenvalues of $H(k)$ give the band structure. 
 
For given hoppings, here we consider a maximum 5th neighboring hoppings in conventional terms. The DOS can be computed in two different ways. One is to compute the eigenvalues of the direct (real) space Hamiltonian, and the other one is to take reciprocal lattice sites corresponding to the direct lattice sites and, using band structure, calculate the energy corresponding to each reciprocal lattice site. 
 
 In order to achieve the best fit with experimental results, we have to find the best parameters, i.e. hopping values and onsite energies. In the next section we introduce an algorithm to find these optimal parameters. 
 

\section{Methods}

The experimental data contains 4000 resonant frequencies, and fitting the DOS of the experiment and the honome for all these 4000 points would be very resource consuming and inefficient. Therefore, we only fit some special points; one option is to fit the relative positions of the van Hove singularities (peaks and Dirac points of the DOS) only.

\subsection{Positioning van Hove singularities}

The position of a van Hove singularity is its horizontal coordinate, which is resonant frequency (energy), see Fig. \ref{fig:exp_dos_bandstr}. Then the relative position of van Hove singularity $p$ is defined as
\begin{equation}
   p = \frac{\omega_{vh} - \omega_0}{w}, 
   \label{eq:rel-pos}
\end{equation} 
where $\omega_{vh}$ is the resonant frequency (energy) corresponding to this van Hove singularity, $\omega_0$ is the lower band edge of the lowest band, and $w$ is the width of the spectrum. Figure \ref{fig:exp_dos_bandstr} shows the relative positions of all van Hove singularities.

We extract from the experimental data the relative positions of $i$th peaks $p_{i}^{e}$ and relative positions of $i$th Dirac point $d_{i}^{e}$ of the experiment. For example, the band edge of the experiment is given in Ref. \cite{Dietz2015}, which is 19.64. We can extract the position of the 5th peak, which is $p_5^e = 43.421$. Then the width of the experimental spectrum is 
\begin{equation}
    w_{exp} = 43.421 - 19.64 = 23.781.
\end{equation} 
Similarly, we extract the positions (resonant frequencies) for all van Hove singularities and calculate the relative positions using Eq. \eqref{eq:rel-pos}; see Table \ref{tab:exp-rel-pos}. 
\begin{table}[htb!]
    \centering
    \begin{tabular}{|c|c| c| c| c| c| c|}
    \hline
       VHS  & $p_1^e$ & $p_2^e$ & $p_3^e$ & $FB_e$ & $d_1^e$ & $d_2^e$ \\ 
       \hline
        RP & 0.098524  & 0.218872 & 0.809259 & 0.573672 & 0.15586 & 0.902927 \\
        \hline
    \end{tabular}
    \caption{Relative positions (RP) of van Hove singularities (VHS) extracted from experimental data.}
    \label{tab:exp-rel-pos}
\end{table} 

When we calculate the relative positions of the van Hove singularities of the honome lattice, we assume that the peaks of the energy (frequency) spectrum are located at the $M$ point, the Dirac points are located at the $K$ point, and the band edges are located at the $\Gamma$ point.
This allows us to obtain the positions of the van Hove singularities by calculating the eigenvalues of $H(k)$ at these $M,\ K,\ \Gamma$ points. Their positions in first Brillouin zone are shown in Fig. \ref{fig:exp_dos_bandstr}. In some cases, we have analytical solutions, e.g. in the case of 3rd neighboring hopping range, while in other cases we have no analytic solution and thus calculate the van Hove singularities numerically. The relative positions are then calculated according to Eq. \eqref{eq:rel-pos}, where we take the difference between the 5th peak and the lowest band edge as the width of the spectrum $w$ 
\begin{equation}
w=Max\left(m_{1},m_{2},m_{3},m_{4},m_{5}\right)-\gamma,\label{eq:span_of_spec}
\end{equation}
where $\gamma$ is the lowest band edge, which is the lowest eigenvalue of $H(k)$ at the $\Gamma$ point, i.e. at $k_x=k_y=0$. The reason for choosing the spectral width in this way is that the highest band, which is beyond the Helmholtz regime, in experiment cannot be described by the honome model.

\subsection{Deviations from experiment}

For a given set of parameters (hopping values and onsite energy $\epsilon$), we can calculate the position of the van Hove singularities.

With these positions, we can find the difference between the calculated values and the experimental results by
\begin{equation}
\begin{aligned}
\Delta d_{i} &= d_{i}-d_{i}^{e},\\ 
\Delta p_{i} &= p_{i}-p_{i}^{e}, \\ 
\Delta_{fb} &= E_{fb} - E_{fb}^e \; ,
\end{aligned}
\label{eq:diff_with_exp}
\end{equation}
where $d_i, d_i^e$ are the $i$th Dirac point from theory and experiment, respectively, $p_i, p_i^e$ are the $i$th peak from theory and experiment, respectively, and $E_{fb}, E_{fb}^e$ are the FB from theory and experiment, respectively.

Afterwards, we look for the parameters (hoppings, onsite energies) which minimize $\Delta d_{i}$, $\Delta p_{i}$, and $\delta_{fb}$. The next section details this process.


\begin{figure}[htb!]
\centering{}\includegraphics[width=0.6\columnwidth]{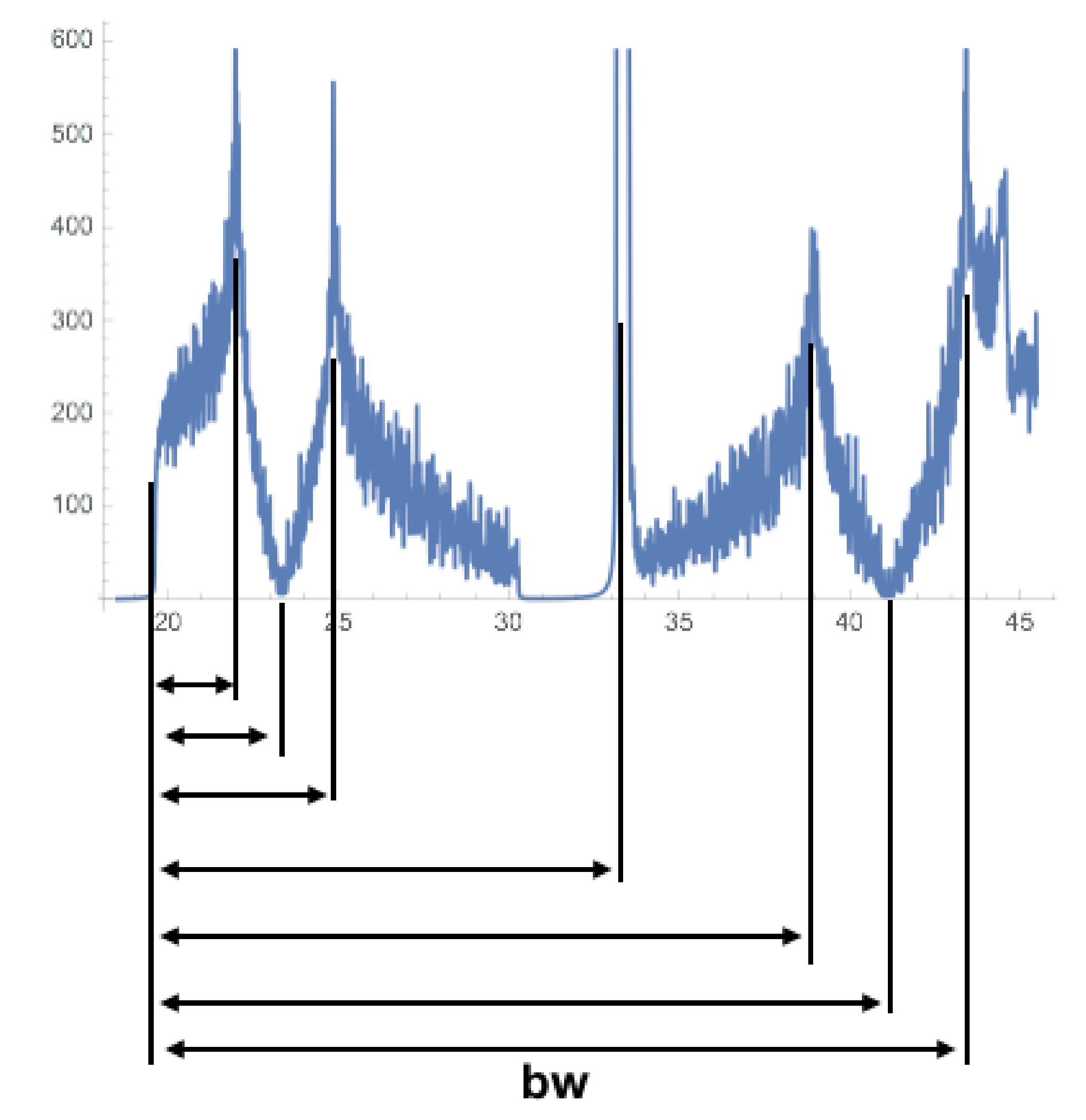}
\caption[Relative positions of peaks and Dirac points]{Relative positions of peaks and Dirac points.}
 \label{fig:exp_dos_bandstr}
\end{figure} 

Alternatively, we can also calculate the standard deviations 
\begin{equation}
\Delta^{2}=\sum_{i}(\Delta d_{i})^{2}+\sum_{j}(\Delta p_{j})^{2} \; ,
\label{eq:standard_diviation}
\end{equation} 
and try to minimize them. 

\subsection{The algorithm}
\label{sec:algorithm}

After computing the relative positions of all van Hove singularities, we look for the parameters which minimize the differences (and standard deviations) between the relative positions of calculated and experimental van Hove singularities. To do so, we use two different algorithms: plain Monte Carlo and importance sampling. Note that both are Monte Carlo methods.

\paragraph{Plain Monte Carlo algorithm:} The plain Monte Carlo algorithm operates as follows: 
\begin{enumerate}
    \item Choose a hopping range, i.e. the number of hoppings. 
    \item Start from an initial hopping value, and calculate the eigenvalues of $H(k)$ at $M,\ K,\ \Gamma$ points. From these values, calculate the relative positions of the peaks and Dirac points, and their deviations from experimental results.
    \item Randomly generate hopping values. 
    \item Repeat the calculations in step 2 for the randomly generated hoppings.
    \item Compare the calculation results from steps 2 and 4.
    \item If the new set of hoppings decreases the deviations, accept the new set. If the new set of hoppings increases or does not change the deviations, reject the new set (i.e. keep the previous set of hoppings).
    \item Repeat steps 3 to 6 until the pre-determined number of cycles is complete.
    \end{enumerate}

\paragraph{Importance sampling:} In this method, instead of randomly generating all hoppings in each cycle, we modify one of the hoppings by a small random step. The algorithm operates as follows:
\begin{enumerate}
    \item Choose a hopping range, i.e. the number of hoppings.
    \item Start from an initial hopping value; one option is to initialize the hoppings with the values obtained from the plain Monte Carlo method. Then calculate the eigenvalues of $H(k)$ at $M,\ K,\ \Gamma$ points. From these values, calculate the relative positions of the peaks and Dirac points, and their deviations from experimental results.
    \item Randomly choose a hopping and add a small random number (random shift) to it.
    \item Repeat the calculations in step 2 for the modified hoppings.
   \item Compare the calculation results from steps 2 and 5.
    \item If the random step decreases the deviations, accept the change. If the random step increases or does not change the deviations, reject the change (i.e. keep the previous set of hoppings).
 \item Repeat steps 3 to 5 until the pre-determined number of cycles is complete.
\end{enumerate} 

We start by considering the hoppings $t_{1},t_{2},t_{3}$ and onsite energy $\epsilon$, and then add more hoppings to improve the fitting. Note that we can always perform a rescaling such that one of the hoppings is 1; therefore, in our later calculations, we set $t_{1}=1$. 

\section{Results}

We consider two different hoppings ranges: up to 3rd and up to 5th nearest neighbor hopping. The results show good agreement with the experiment. The parameters giving the best fitting results are not unique, which reflects the fine-tuned nature of the system. Note that as the parameters we obtained gives a spectrum that is rescaled and shifted from the experimental spectrum, we rescaled and shifted this spectrum to fit the experimental spectrum. 

\subsection{Fitting with three hoppings}

In the case of up to 3rd nearest neighbor hoppings, only $t_1, t_2, t_3$ are non-zero in Eqs. \eqref{eq:hop-mats}, and \eqref{eq:hk}, which gives an analytic solution for the eigenvalues of $H(k)$, at $M,\ K,\ \Gamma$ points. 

Then using Eq. \eqref{eq:rel-pos}, we can find the relative positions of the peaks and Dirac points. We minimize the deviations in Eq. \eqref{eq:diff_with_exp} using the plain Monte Carlo algorithm, as introduced in Section \ref{sec:algorithm}. We obtained a perfect FB in this case and good fitting with the experiment, as shown in Fig. \ref{fitting_3hoppings_1}.

\begin{figure}[htb!]
\centering
\includegraphics[width=0.55\columnwidth]{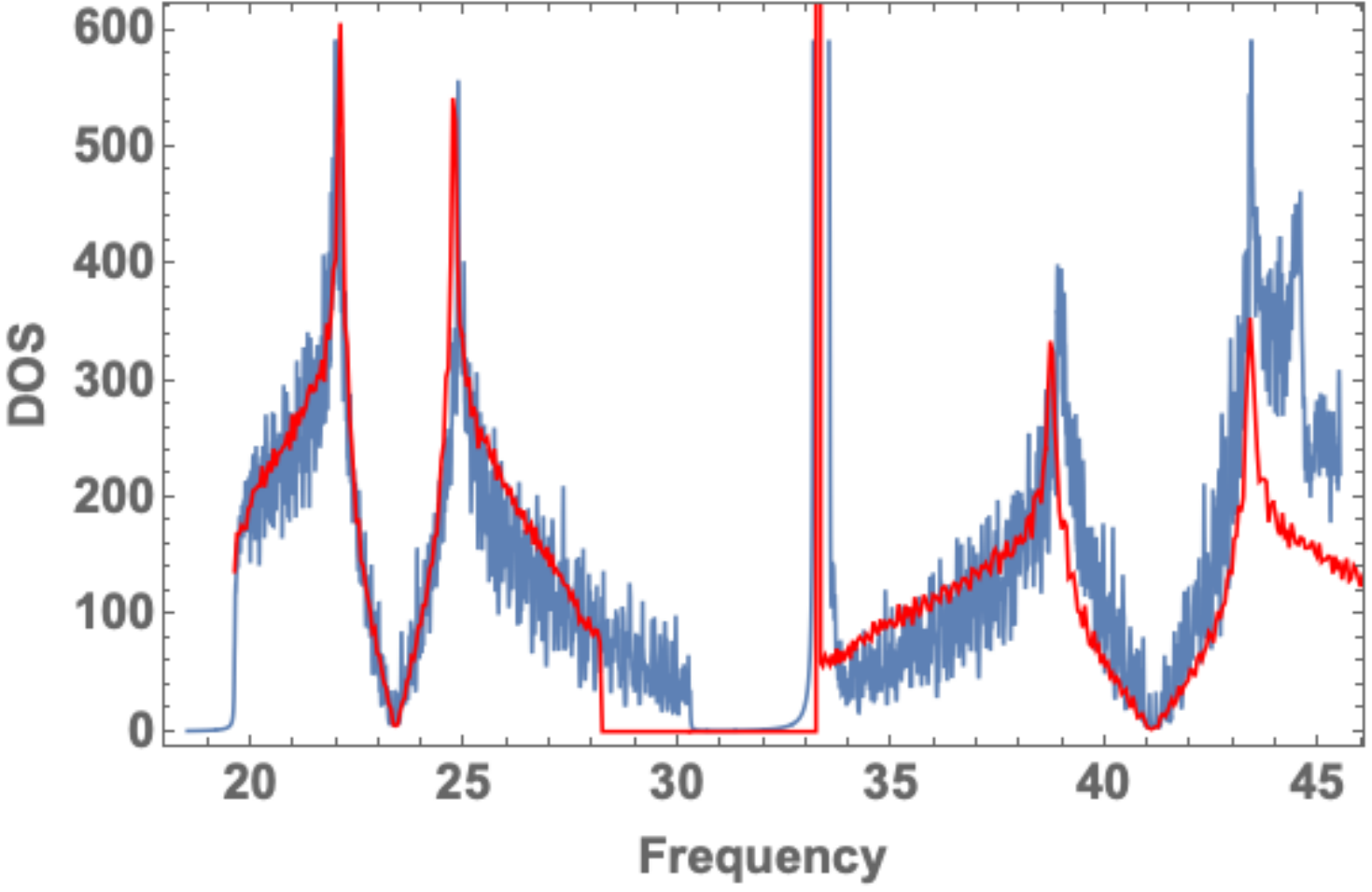}\hspace{4mm}\includegraphics[width=0.36\columnwidth]{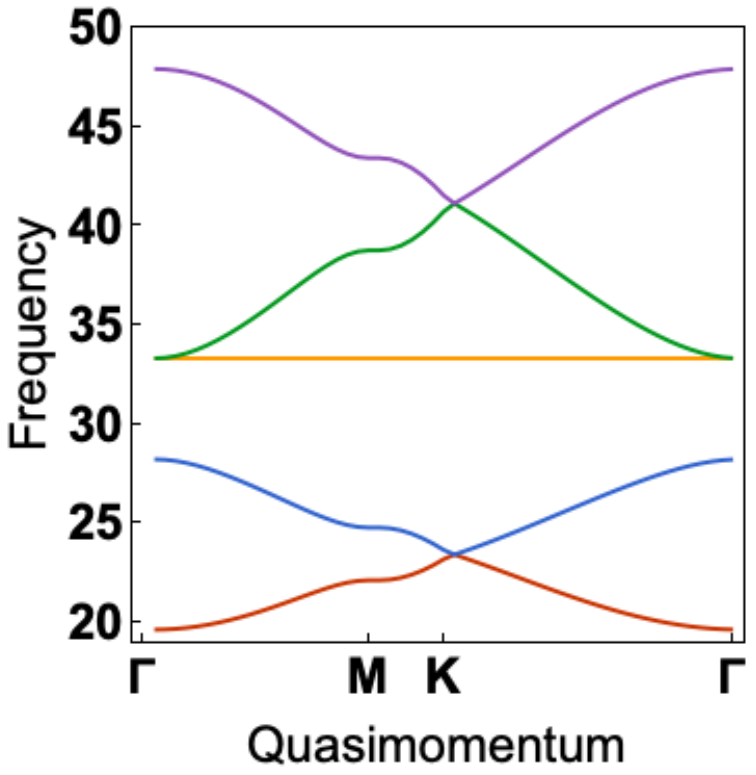}
\caption[Plain Monte Carlo fitting with three hoppings]{Fitting with three hoppings and minimizing the deviations by plain the Monte Carlo method. (Left) Density of states, where the red line is the spectrum of the honome lattice and the blue line is the experimental spectrum. (Right) Band structure of the honome lattice along the path $\Gamma,M,K,\Gamma$ inside the first Brillouin zone (see Fig. \ref{figdos}). The values of the parameters are $t_{1}=1$ and $(\epsilon,t_{2},t_{3})=(-4.79586, 0.803977, -0.431564)$.}
 \label{fitting_3hoppings_1}
\end{figure}

Using the parameters obtained from the plain Monte Carlo method, we then tried the importance sampling method in order to further minimize the deviations in \eqref{eq:diff_with_exp}. Results are shown in Fig. \ref{imp_samp_fitting_3hoppings-1}.
\begin{figure}[htb!]
\centering
\includegraphics[width=0.55\columnwidth]{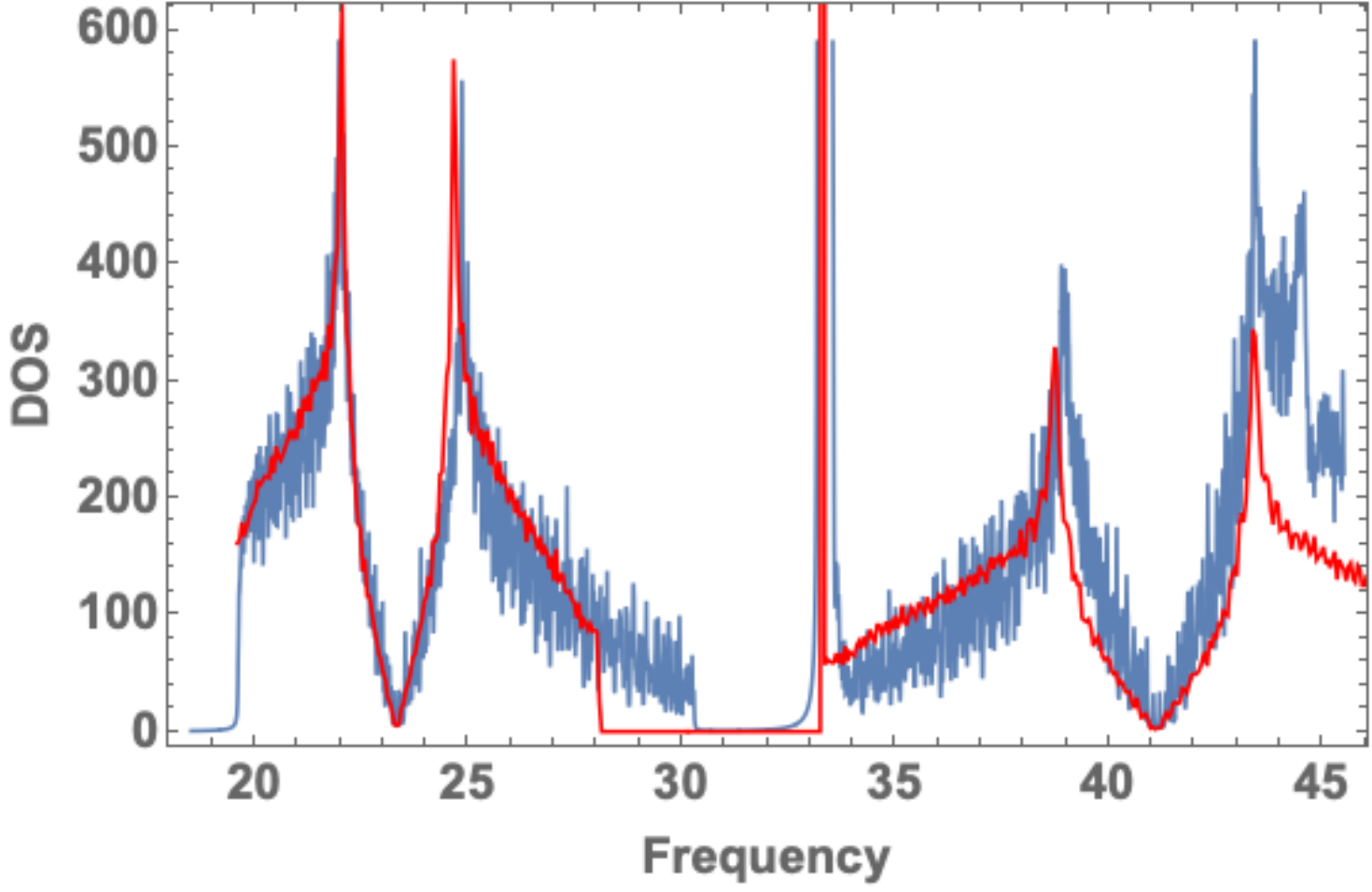}\hspace{4mm}\includegraphics[width=0.36\columnwidth]{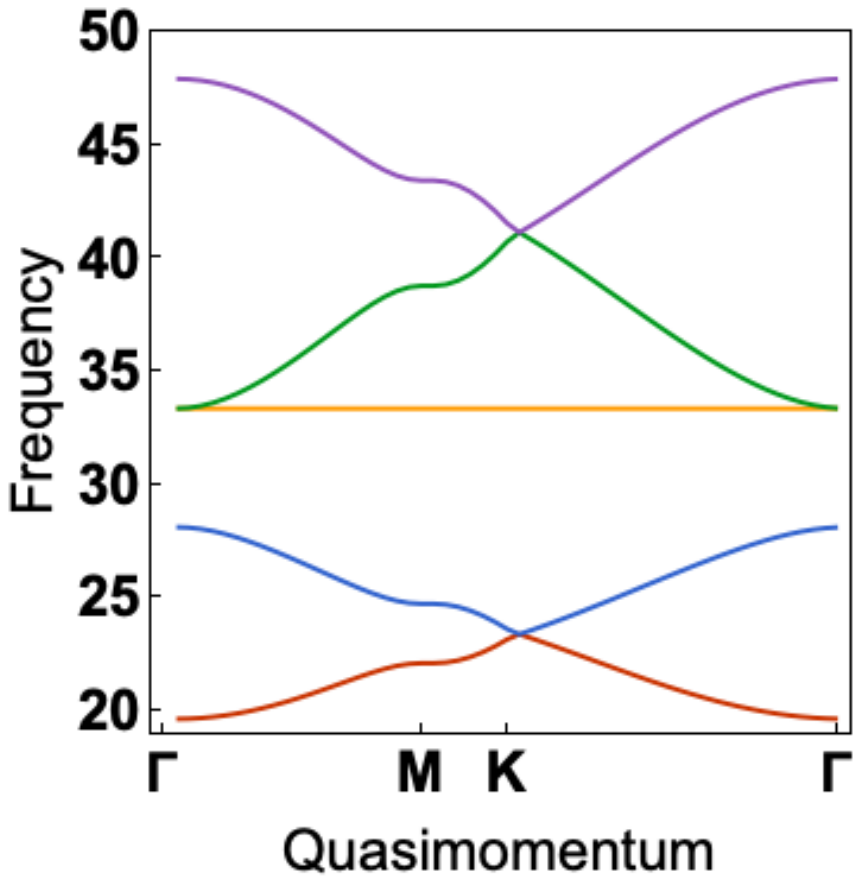}
\caption[Importance sampling fitting with three hoppings]{Fitting with three hoppings and minimizing the deviations by the importance sampling method. (Left) Density of states, where the red line is the spectrum of the honome lattice and the blue line is the experimental spectrum. (Right) Band structure of the honome lattice along the path $\Gamma,M,K,\Gamma$ inside the first Brillouin zone (see Fig. \ref{figdos}). The values of the parameters are $t_{1}=1$ and $(\epsilon,t_{2},t_{3})=(-4.88694, 0.812533, -0.43105)$.}
 \label{imp_samp_fitting_3hoppings-1}
\end{figure} 
We also tried to minimize the standard deviation in \eqref{eq:standard_diviation}, which gives similar results. 

\subsection{Fitting with five hoppings}

In the case of up to 5th nearest neighbor hoppings, only $t_i, i=1,\dots,5$ are non-zero in Eqs. \eqref{eq:hop-mats} and \eqref{eq:hk}, and one can compute analytically the eigenvalues of $H(k)$ at $\Gamma,\ M$ points, but not at the $K$ point, where the eigenvalues are computed numerically. 


\begin{figure}[htb!]
\centering 
\includegraphics[width=0.55\columnwidth]{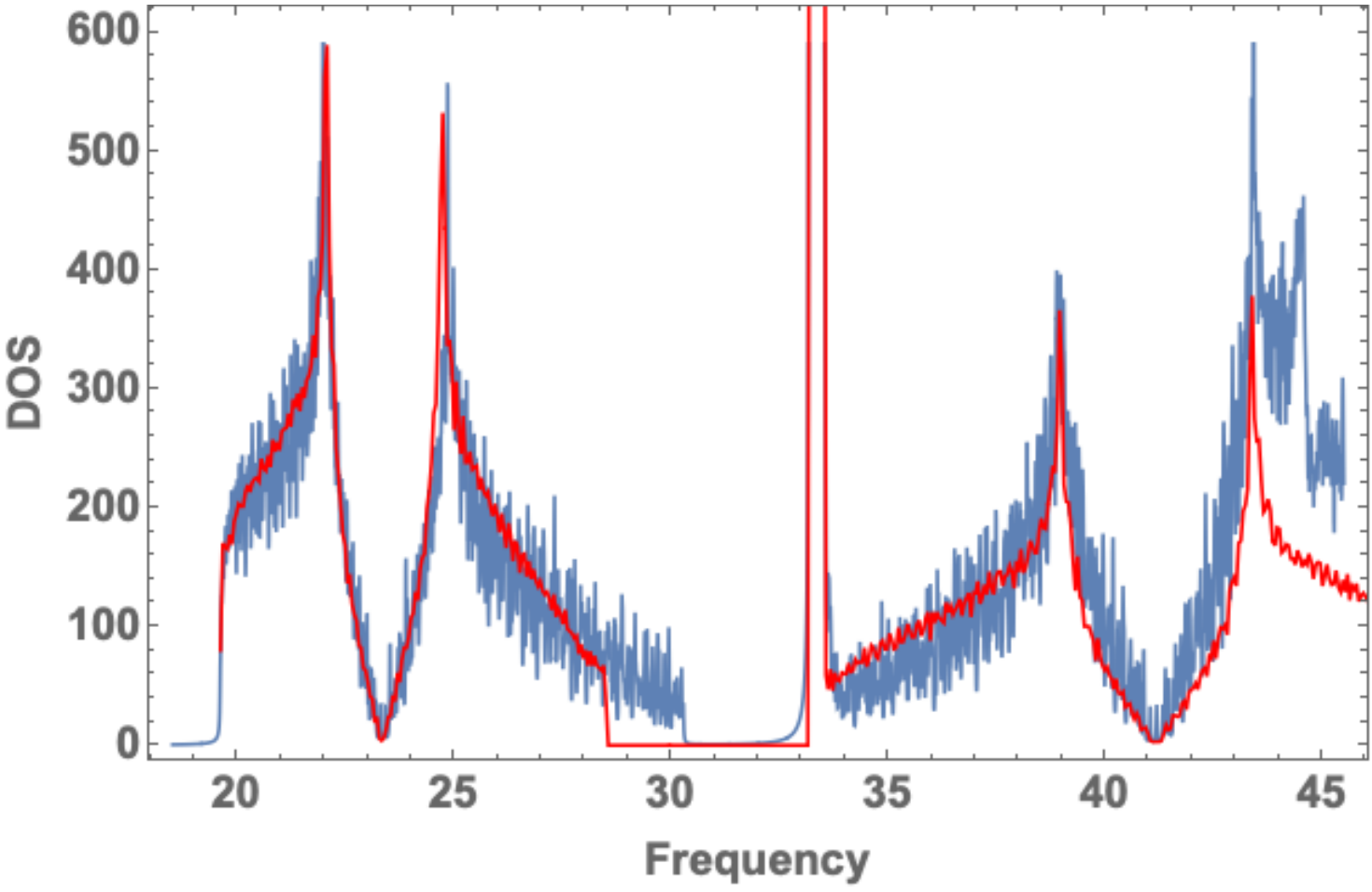}\hspace{4mm}\includegraphics[width=0.36\columnwidth]{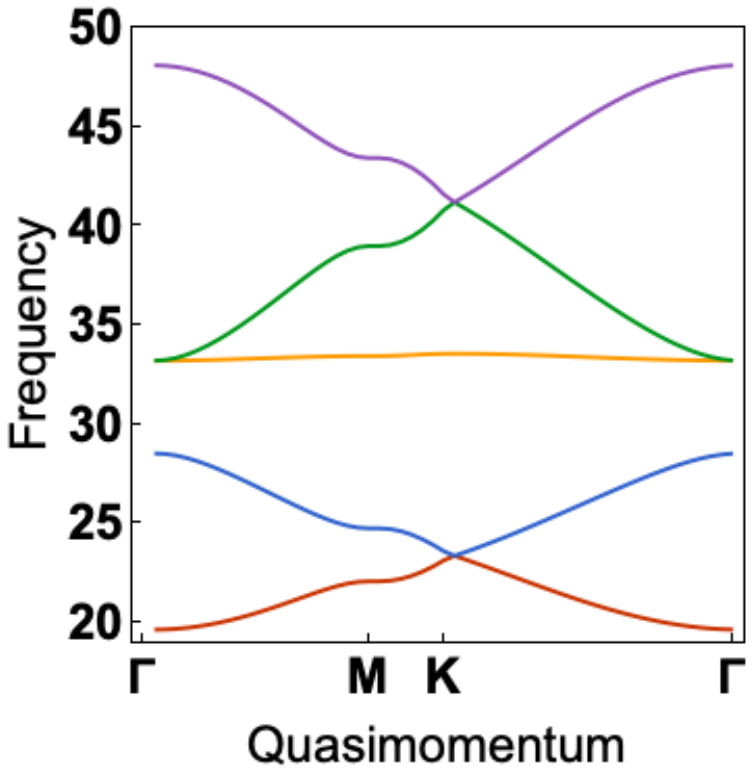} 
\caption[Plain Monte Carlo fitting with five hoppings]{Fitting with five hoppings and minimizing the deviations by the plain Monte Carlo method. (Left) Density of states, where the red line is the spectrum of the honome lattice and the blue line is the experimental spectrum. (Right) Band structure of the honome lattice along the path $\Gamma,M,K,\Gamma$ inside the first Brillouin zone (see Fig. \ref{figdos}). The values of the parameters are $t_{1}=1$ and $(\epsilon, t_{2},t_{3}, t_{4}, t_{5})=(-4.98748, 0.847096, -0.486464, -0.0425766, 0.0225378)$.}
 \label{fitting_5hoppings_1}
\end{figure} 

 Then using Eqs. \eqref{eq:rel-pos} and \eqref{eq:span_of_spec}, we minimize the deviations in Eq. \eqref{eq:diff_with_exp} via the plain Monte Carlo method to get the hopping parameters that best fit the experimental data. The result shows a little broadening of the FB, which is closer to the experimental result, as shown in Fig. \ref{fitting_5hoppings_1}.  Minimizing the standard deviation in Eq. \eqref{eq:standard_diviation} produces a similar result. 

\begin{figure}[htb!]
\centering
\includegraphics[width=0.55\columnwidth]{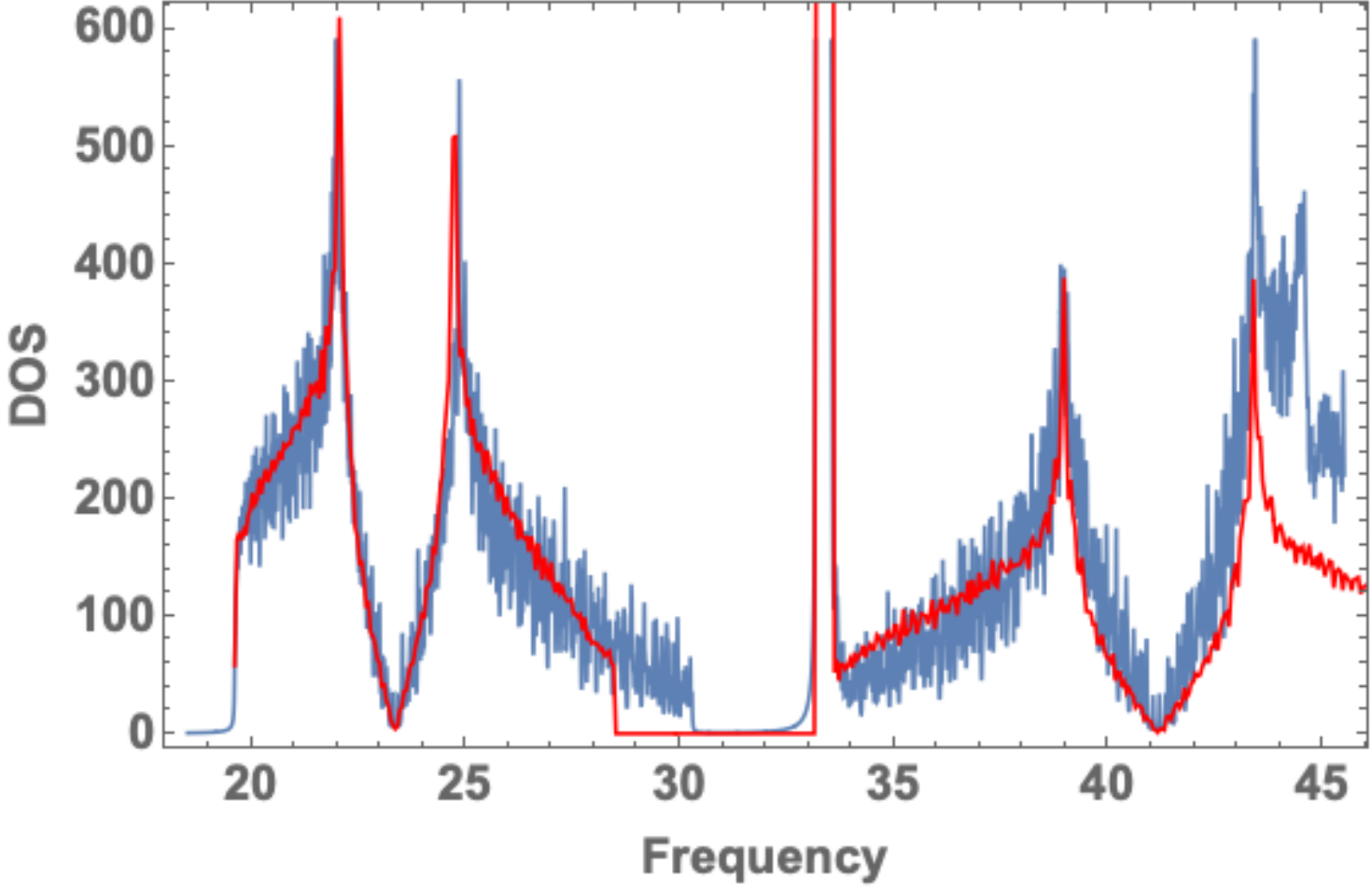}\hspace{4mm}\includegraphics[width=0.36\columnwidth]{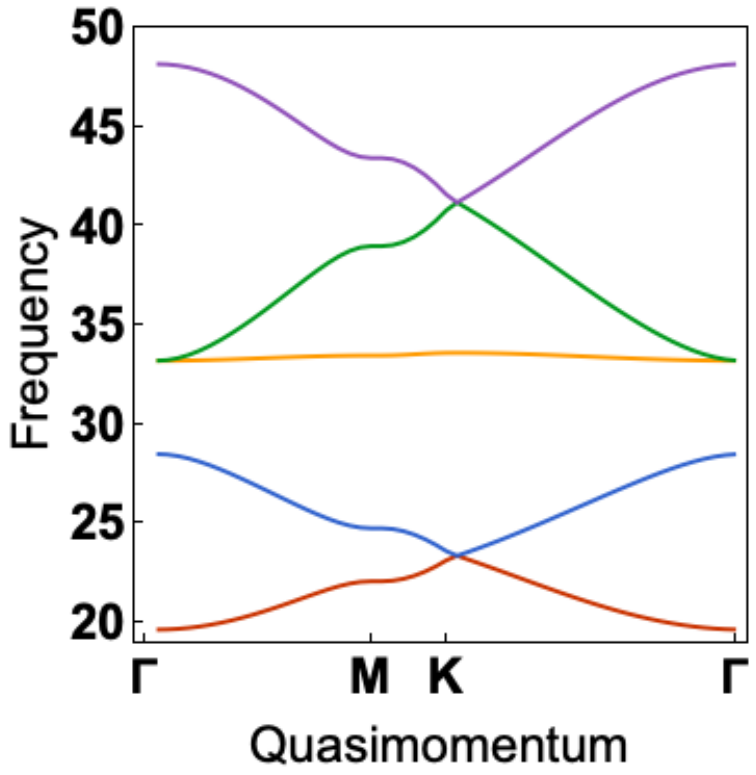}
\caption[Importance sampling fitting with five hoppings]{Fitting with five hoppings and minimizing the deviations by the importance sampling method. (Left) Density of states, where the red line is the spectrum of the honome lattice and the blue line is the experimental spectrum. (Right) Band structure of the honome lattice along the path $\Gamma,M,K,\Gamma$ inside the first Brillouin zone (see Fig. \ref{figdos}). The values of the parameters are $t_{1}=1$ and $(\epsilon,t_{2},t_{3},t_{4},t_{5})=(-5.02261, 0.852182, -0.486464, -0.04, 0.0256164)$.}
 \label{imp_samp_fitting_5hoppings-1}
\end{figure} 

Using the parameters obtained from the plain Monte Carlo method, we tried the importance sampling method in order to further minimize the deviations in Eq. \eqref{eq:diff_with_exp}. Results are shown in Fig, \ref{imp_samp_fitting_5hoppings-1}. 
We again tried to minimize the standard deviation in Eq. \eqref{eq:standard_diviation}, and we got a result similar to the one achieved from minimizing the deviations.

Both methods produce the fits that show good agreement with the experiment, as can be seen in Figs. \ref{fitting_3hoppings_1}--\ref{imp_samp_fitting_5hoppings-1}. In particular, the spectrum of our model, the honome lattice, also displays the Dirac points at two different energies, in agreement with the experiment. The lower Dirac point arises from the honeycomb sublattice, and the upper Dirac point arises from the kagome sublattice. The singularity in the experimental spectrum comes from the FB of the honome lattice. This singularity (i.e. the FB) has a little broadening, as we show in the case of five hoppings, which is resulted from the slight distortion of the FB by long-range hoppings, i.e. $t_4,\ t_5$ in our examples (see Figs. \ref{fitting_5hoppings_1}, \ref{imp_samp_fitting_5hoppings-1}).



\section{Summary}

In this chapter, we introduced a tight-binding model for a 2D microwave photonic crystal. Patterns of eigenfuctions obtained from numerical diagonalization indicates bipartite structure of the nodes of the eigenfunctions. This suggested to search for effective tight-binding description based on chiral lattices, that were discussed before, and lead us to study the tight-binding model on the combination of the honeycomb and kagome lattices, that was dubbed honome lattice. Then the honome lattice was used to describe the spectral properties of the microwave photonic crystal. We developed a numerical algorithm to fit an experimental spectrum with the density of states of the honome lattice, and our fitting result shows good agreement with the experimental observation. Particularly, the singularity in the experimental spectrum is well explained by the singularity of the density of states of the honome lattice around the FB energy. In this work, we used block matrix formalism to help simplify the fitting algorithm. Moreover, we implemented the fine-tuning property of FB Hamiltonians to achieve the best fit. This work proved that the concepts and methods used in our FB generator are useful in practical applications.

\chapter{Conclusions and outlook}

\ifpdf
    \graphicspath{{Chapter8/Figs/Raster/}{Chapter8/Figs/PDF/}{Chapter8/Figs/}}
\else
    \graphicspath{{Chapter8/Figs/Vector/}{Chapter8/Figs/}}
\fi 

Tight-binding models with dispersionless energy bands in their single particle spectrum have proven their importance in various applications and diverse settings ranging from the Hubbard model to Bose--Einstein condensates and photonics. Macroscopic degeneracy of such systems allows physicists to tune flatbands (FBs) to various exotic phases of matter, with the help of different perturbations. The FB models studied thus far are limited to some specific classes that have special lattice geometry and symmetry. However, there are a vast amount of FB Hamiltonians that have not been uncovered so far. The absence of systematic classification and construction methods for FB Hamiltonians led us to work on this thesis. 

We introduced in this thesis a systematic classification of FB lattices and developed novel methods to generate FB Hamiltonians with specific compact localized state (CLS) properties. Our ultimate aim was to identify all possible FB Hamiltonians of different classes in 1D and 2D systems.  

After reviewing the basic concepts of the tight-binding model, FBs, and macroscopic degeneracy, we reviewed existing FB construction methods, and applications and realizations of FBs. Next we proposed a classification approach for FBs according to the size and shape of their irreducible CLSs. Furthermore, we studied the properties of CLSs, and with the help of block matrix representation, we formalized destructive interference conditions and CLS existence conditions in a mathematically rigorous way. Based on these conditions, we developed a test procedure to identify the CLS class of a given FB Hamiltonian, and the inversion of this test procedure led us to the core idea of a FB generator. 

Having the essential tools and the idea of FB generation, we realized the complete classification and systematic generation of FB Hamiltonians in 1D that cover all possible Hamiltonians possessing at least one FB. Particularly, we have achieved the complete parameterization of all FB networks in the two-band case. Remarkably, we found that all two-band FB networks can be mapped into generalized sawtooth lattices using local unitary transformations. Extension of the two-band case to higher numbers of bands in 1D was more challenging, for which we developed a new method that we called the inverse eigenvalue method. This method yielded analytic solutions for $U=2$ FBs and numerical solutions for $U>2$ cases. With this method, in principle, all possible FBs in 1D can be generated. 

Moving on to higher dimensions, we extended our approach to 2D. We started by introducing a classification of 2D FB lattices whose CLSs occupy a maximum of four unit cells in a $2\times2$ plaquette. Implementing the methods used for 1D, we obtained analytic solutions for various CLS classes in the plaquette. We were able to cover most known examples with our analytic solutions. 

Extending the idea of our FB generator to the non-Hermitian regime, we considered a 1D two-band network with non-Hermitian couplings. Using $k$-space representation, we identified the conditions required to have a FB. Solving these conditions, we acquired analytic solutions for non-Hermitian FB Hamiltonians. We were able to generate completely flat FBs, partially flat FBs, and the special case of modulus-flat FBs. Unlike conventional methods, our generator does not rely on a particular symmetry, such as $\mathcal{PT}$ symmetry. 

Finally, turning to applications, we applied our methodologies to explain the spectral properties of a microwave photonic crystal. We designed a novel tight-binding model, which we term a honome lattice, to fit the experimental data obtained from spectral measurements. In particular, we were able to explain the singularity in the experimental spectrum in terms of the singularity in the density of states of the honome coming from the FB. We developed two algorithms that both fit the experimental data with nice accuracy. 

The topics explored in this thesis will impact FB-related research in a number of aspects. First of all, our FB generator provides more flexibility in the experimental design of FB lattices, because this method does not require special lattice geometry or symmetry, and allows for the fine-tuning of parameters. By tuning the control parameters in our generator, it is possible to design FB modes that suit the experimental conditions.     

Moreover, our approaches are based on real-space analysis. The analytic solutions provided here for FB Hamiltonians in 1D and 2D lattices will enable researchers to further understand the properties of FB lattices. In particular, from our 2D analytic solution, band touching properties can be studied more systematically using real-space analysis instead of $k$-space analysis~\cite{rhim2018classification}. This will give more insight into the band topology of 2D FB lattices from the real-space perspective.  

In addition to these effects on ongoing research, our work will open new topics for future studies. First, some of the conjectures introduced here require strict proof. Should these conjectures be proved wrong, finding counter-examples will be interesting and will lead to new questions. Along these lines, following from the theorems proven here for the 1D case, rigorous proof for 2D and higher dimensions is desirable; to do so, extending the FB generator to more generic cases, like arbitrary CLS sizes, can help to carry over these theorems to even higher dimensions. This is also appealing given that in higher dimensions, more interesting phenomena are expected. Further, our approach is based on real-space analysis, and extending the current approach to 3D will provide a theoretical platform for designing real materials with FBs as well as searching for natural compounds having FBs. 

Our FB generator in non-Hermitian systems can be extended to higher numbers of bands in 1D and higher lattice dimensions. Interestingly, we have found a non-trivial winding number of a modulus-flat FB. The consequence of such a non-trivial winding number is an open question that demands investigation. Moreover, extending the current approach is encouraging as there is plenty of room to connect our results in non-Hermitian systems to more generic cases. In this way, there are myriad things to be investigated such as system dynamics, topology, perturbation effects, and so on.

Recently, nearly FBs found in bilayer graphene systems have been drawing attention~\cite{marchenkoeaau2018extremely,chebrolu2019bilayer,wolf2019electrically}. Such systems host a FB when two layers are tilted with respect to each other and a particular tilting angle is reached. The origin of such a FB is not clear and leads to a number of questions: What is the fate of CLSs? If CLSs exist in such systems, what is the CLS class? What are the CLS properties? All of these questions are open for further studies, and our methods in this work may provide a tool to answer them.  

This thesis focused on the single particle (non-interacting) systems. Exploring the effects interactions is an interesting and promising topic. One example is superconductivity in FB, that has been drawing attention recently~\cite{volovik2018graphite,kauppila2016flatband,xu2018topological}. Our results rise a number of interesting questions: How does the Bardeen-Cooper-Schrieffer (BCS) interaction interplays with flatbands in general? Does it always lead to superconductivity? Is the superconductivity always enhanced? How does the enhancement depend on the type of the flatband? 

To sum up, our methods enhance our understanding of the properties of FB systems, provide a flexible tool to design FB lattices suiting experimental conditions, and open new interesting topics for future research.

\bibliographystyle{plain}
\addcontentsline{toc}{chapter}{Bibliography}
\bibliography{references}

\appendix
\chapter{Supplementary materials for CLS properties} 

\section{Linear dependence and reducibility of CLSs in 1D (U=3 case)}
\label{app:reducibility-1d}

Consider the $m_{c}=1,\ U=3$ case. Suppose $\vec{\psi}=\left(\vec{\psi_{1}},\vec{\psi_{2}},\vec{\psi_{3}}\right)$
is the compact localized eigenstate, where $\vec{\psi_{i}}$ is a
$\nu$-component vector for the $i$th unit cell with $\nu$ sites. Now
we have the following eigenvalue equations with CLS conditions: 
\begin{equation}
\begin{split}H_{0}\vec{\psi}_{1}+H_{1}\vec{\psi}_{2} & =\efb\vec{\psi}_{1},\\
H_{1}^{\dagger}\vec{\psi}_{1}+H_{0}\vec{\psi}_{2}+H_{1}\vec{\psi}_{3} & =\efb\vec{\psi}_{2},\\
H_{1}^{\dagger}\vec{\psi}_{2}+H_{0}\vec{\psi}_{3} & =\efb\vec{\psi}_{3},\\
H_{1}\vec{\psi}_{1} & =0,\\
H_{1}^{\dagger}\vec{\psi}_{3} & =0.
\end{split}
\label{eq:ev-eq-u3}
\end{equation}
Suppose the components of the CLS $\vec{\psi}$ are linearly dependent as
\begin{equation}
\vec{\psi}_{2}=\alpha\vec{\psi}_{1}+\beta\vec{\psi}_{3},\label{eq:ld-u3}
\end{equation}
and then Eq. (\ref{eq:ev-eq-u3}) becomes 
\begin{equation}
\begin{split}H_{0}\vec{\psi}_{1}+\beta H_{1}\vec{\psi}_{3} & =\efb\vec{\psi}_{1},\\
H_{1}^{\dagger}\vec{\psi}_{1}+H_{0}\left(\alpha\vec{\psi}_{1}+\beta\vec{\psi}_{3}\right)+H_{1}\vec{\psi}_{3} & =\efb\left(\alpha\vec{\psi}_{1}+\beta\vec{\psi}_{3}\right),\\
\alpha H_{1}^{\dagger}\vec{\psi}_{1}+H_{0}\vec{\psi}_{3} & =\efb\vec{\psi}_{3},\\
H_{1}\vec{\psi}_{1} & =0,\\
H_{1}^{\dagger}\vec{\psi}_{3} & =0.
\end{split}
\label{eq:ev-eq-u3-ld}
\end{equation}
 Plugging Eq. (\ref{eq:ld-u3}) into the second equation in
(\ref{eq:ev-eq-u3-ld}) above and using the first and third equations gives
\begin{equation}
\begin{aligned}H_{1}^{\dagger}\vec{\psi}_{1}+H_{0}\left(\alpha\vec{\psi}_{1}+\beta\vec{\psi}_{3}\right)+H_{1}\vec{\psi}_{3} & =\efb\left(\alpha\vec{\psi}_{1}+\beta\vec{\psi}_{3}\right),\\
H_{1}^{\dagger}\vec{\psi}_{1}+\alpha H_{0}\vec{\psi}_{1}+\beta H_{0}\vec{\psi}_{3}+H_{1}\vec{\psi}_{3} & =\alpha\efb\vec{\psi}_{1}+\beta\efb\vec{\psi}_{3},\\
H_{1}^{\dagger}\vec{\psi}_{1}+\alpha H_{0}\vec{\psi}_{1}+\beta H_{0}\vec{\psi}_{3}+H_{1}\vec{\psi}_{3} & =\alpha H_{0}\vec{\psi}_{1}+\alpha\beta H_{1}\vec{\psi}_{3}+\beta\alpha H_{1}^{\dagger}\vec{\psi}_{1}+\beta H_{0}\vec{\psi}_{3}.
\end{aligned}
\end{equation}
We can see that when $\alpha\beta=1$, the above equation is satisfied.
Therefore, any linear dependence of the CLS components cannot lead
to a CLS of U=3, and we should have an extra condition $\alpha\beta=1$
or $\alpha=\nicefrac{1}{\beta}$. Now we show that under this constraint
($\alpha\beta=1$ or $\alpha=\nicefrac{1}{\beta}$), the state $\vec{\psi}=\left(\vec{\psi_{1}},\vec{\psi_{2}},\vec{\psi_{3}}\right)$,
with linear dependence as in (\ref{eq:ld-u3}), is actually a $U=2$
class. 

Suppose we have a CLS $\vec{\psi}=\left(\vec{\psi_{1}},\alpha\vec{\psi}_{1}+\beta\vec{\psi}_{3},\vec{\psi_{3}}\right)$,
where $\alpha=\nicefrac{1}{\beta}$, and then the second equation in (\ref{eq:ev-eq-u3-ld})
is always satisfied and equation (\ref{eq:ev-eq-u3-ld}) can be rewritten
as 
\begin{equation}
\begin{split}\alpha H_{0}\vec{\psi}_{1}+H_{1}\vec{\psi}_{3} & =\alpha\efb\vec{\psi}_{1}\\
\alpha H_{1}^{\dagger}\vec{\psi}_{1}+H_{0}\vec{\psi}_{3} & =\efb\vec{\psi}_{3}\\
H_{1}\vec{\psi}_{1} & =0\\
H_{1}^{\dagger}\vec{\psi}_{3} & =0,
\end{split}
\ \ or\ \ \begin{split}H_{0}\vec{\psi}_{1}+\beta H_{1}\vec{\psi}_{3} & =\efb\vec{\psi}_{1}\\
H_{1}^{\dagger}\vec{\psi}_{1}+\beta H_{0}\vec{\psi}_{3} & =\beta\efb\vec{\psi}_{3}\\
H_{1}\vec{\psi}_{1} & =0\\
H_{1}^{\dagger}\vec{\psi}_{3} & =0.
\end{split}
\label{eq:u3-to-u2}
\end{equation}

From (\ref{eq:u3-to-u2}) we can see that $\left(\vec{\psi_{1}},\beta\vec{\psi}_{3}\right)$
and $\left(\alpha\vec{\psi}_{1},\vec{\psi_{3}}\right)$ can be a CLS
of class $U=2$. Now if we write
\[
\vec{\psi}=\left(\vec{\psi_{1}},\alpha\vec{\psi}_{1}+\beta\vec{\psi}_{3},\vec{\psi_{3}}\right)=\left(\vec{\psi_{1}},\beta\vec{\psi}_{3},0\right)+\left(0,\alpha\vec{\psi}_{1},\vec{\psi_{3}}\right),
\]
 then we can see that $\vec{\psi}$ is actually a linear combination
of two $U=2$ CLSs. 

This indicates that there is no $U=2$ class for the $\nu=2$ case. Suppose
we have a $U=3$ CLS in a $\nu=2$ lattice, giving $H_{0},\ H_{1}$ as
$2\times2$ matrices; therefore, being the eigenvectors of the $2\times2$
matrix, the three components of the CLS should be linearly dependent.
According to the analysis above, it reduces to the $U=2$ class.


\chapter{Supplementary materials for the flatband generator in 1D} 

\section{On the linear independence of CLSs }
\label{app:appendix-A1}





\subsection*{Linear dependence for $m_c=1$ and $U=2$}

Consider a CLS of $U=2$ class $\vec{\psi}=\left(\vec{\psi}_1,\vec{\psi}_2\right)$
with $m_c=1$. Assume that the two components $\vec{\psi}_1,\vec{\psi}_2$
are linearly dependent such that $\vec{\psi}_1=a\vec{\psi}_2$.
Since $\left(\vec{\psi}_1,\vec{\psi}_2\right)$ is a CLS, it follows
\begin{equation}
    \begin{split}
        H_1\vec{\psi}_1 & =0\;,\\
        H_1^{\dagger}\vec{\psi}_2 & =0\;.
    \end{split}
    \label{eq:u2-cls-con}
\end{equation}
This yields 
\begin{equation}
aH_1\vec{\psi}_2 = 0\;,\;H_1^\dagger\vec{\psi}_2 = 0\;\;\textrm{or}\;\;H_1\vec{\psi}_1 = 0\;,\;\frac{1}{a}H_1^\dagger\vec{\psi}_1 = 0\;.
\end{equation}
Thus, $\vec{\psi}_1,\vec{\psi}_2$ are left and right eigenvectors of $H_1$ at the same time, and therefore either $\vec{\psi}_1$ or $\vec{\psi}_2$ serves as the only component of a CLS of class $U=1$.

Interestingly, a similar (but more lengthy) proof can be obtained for the $\nu=2,\ m_c=1,\ U=3$ case. Given a CLS $\vec{\psi}=\left(\vec{\psi}_1,\vec{\psi}_2,\vec{\psi}_3\right)$, it can be shown that the linear dependence of $\{\vec{\psi}_1,\vec{\psi}_2,\vec{\psi}_3\}$ implies that the FB is of class $U\leq2$. Then $\nu>2$ linear dependence is only a necessary but not a sufficient condition.

\subsection*{Orthogonality for $\nu=2$ and $U=2$ }

Consider $U=2$ with $\vec{\psi}_1\perp\vec{\psi}_2$. Then a suitable rotation of the basis in each unit cell will result in $\vec{\psi}_1=(1,0)$ and $\vec{\psi}_2=(0,1)$. This seeming $U=2$ case can be reduced to $U=1$ by redefining the unit cell. Indeed, after the above rotation we may denote each site in the unit cell by $a_l$ and $b_l$. Then a CLS is given by $a_l=\delta_{l,l_{0}}$ and $b_l=\delta_{l,l_{0}+1}$ (up to prefactors and renormalization factors). Redefining the unit cell using $\tilde{a}_l=a_l$ and $\tilde{b}_l=b_{l+1}$ turns the above CLS into class $U=1$. We can conclude that for $m_c=1$, $\nu=2$, and $U=2$, $\vec{\psi}_1$ and $\vec{\psi}_2$ must be neither parallel (linearly dependent) nor orthogonal in order for the FB to not be reducible to $U=1$.



\section{Generator and band structure for two-band $U=2$ FB networks}
\label{as:generator}

For the $\nu=2,\ m_c=1,\ U=2$ case we solve the following equations:
\begin{align}
    \label{eq:eigenvalue_eq1}
    H_0\vec{\psi}_1 + H_1\vec{\psi}_2 & =  \efb\vec{\psi}_1\;,\\
    \label{eq:eigenvalue_eq2-1}
    H_1^\dagger\vec{\psi}_1 + H_0\vec{\psi}_2 & =  \efb\vec{\psi}_2\;,\\
    \label{eq:u2-cls-con1}
    H_1\vec{\psi}_1 & =  0\;, \\
    \label{eq:u2-cls-con2}
    H_1^\dagger\vec{\psi}_2 & =  0 \;.
\end{align}
We can always diagonalize $H_0$ and gauge and rescale the full Hamiltonian to obtain
\begin{equation}
    \label{eq:canonical-H0}
    H_0 = \left(\begin{array}{cc}
        0 & 0\\
        0 & 1
    \end{array}\right) \;.
\end{equation}
For non-singular $\Lambda=\efb-H_0$, we find $\vec{\psi}_2$ from~\eqref{eq:eigenvalue_eq2-1} and insert it into~\eqref{eq:eigenvalue_eq1} to get 
\begin{eqnarray}
    \label{eq:new_eigenvalue_eq1}
    \Lambda^{-1}H_1\Lambda^{-1}H_1^\dagger\vec{\psi}_1 & = & \vec{\psi}_1\;, \\
    \Lambda^{-1}H_1^{\dagger}\vec{\psi}_1 & = & \vec{\psi}_2 \;, 
\end{eqnarray}
where $\Lambda^{-1}=\frac{1}{\efb-H_0}$. Similarly, we have 
\begin{eqnarray}
    \label{eq:new_eigenvalue_eq2}
    \Lambda^{-1}H_1^\dagger\Lambda^{-1}H_1\vec{\psi}_2 & = & \vec{\psi}_2\;, \\
    \Lambda^{-1}H_1\vec{\psi}_2 & = & \vec{\psi}_1 \;.
\end{eqnarray}

\subsection{Real $H_1$}
\label{as:generator:H1-real}

Consider all elements of $H_1$ to be real values. Equations~\eqref{eq:u2-cls-con1} and~\eqref{eq:u2-cls-con2} allow us to redefine $H_1$ in terms of unit vectors $\vert\varphi\rangle$ and $\vert\theta\rangle$, which are the left and right eigenvectors of the non-zero eigenvalue of $H_1$, and which are orthogonal to $\vec{\psi}_1$ and $\vec{\psi}_2$:
\begin{equation}
    \begin{aligned}H_1 & =\alpha\vert\theta\rangle\langle\varphi\vert \;, \\
        \vert\varphi\rangle & =\left(\begin{array}{c}
        \cos\varphi\\
        \sin\varphi
    \end{array}\right) \;, \\
    \vert\theta\rangle & =\left(\begin{array}{c}
        \cos\theta\\
        \sin\theta
    \end{array}\right) \;, 
    \end{aligned}
    \label{eq:redef_H1}
\end{equation}
where the scalar products are
\begin{eqnarray}
    \label{eq:redef_cls_con1}
    \langle\vec{\psi}_1\vert\varphi\rangle & = & 0\;, \\
    \label{eq:redef_cls_con2}
    \langle\vec{\psi}_2\vert\theta\rangle & = & 0 \;.
\end{eqnarray}
Using these definitions and solving Eqs.~\eqref{eq:new_eigenvalue_eq1} and~\eqref{eq:u2-cls-con1}, we obtain
\begin{eqnarray}
    \label{eq:eigen_value}
    \efb & = & \frac{\cos(\theta)\cos(\varphi)}{\cos(\theta-\varphi)}\;, \\
    \label{eq:coefficien_of_H1}
    |\alpha| & = & \sqrt{-\frac{\tan(\theta)\tan(\varphi)\csc^{2}(\theta-\varphi)}{(\tan(\theta)\tan(\varphi)+1)^{2}}} = \frac{\sqrt{-\sin(2\theta)\sin(2\varphi)}}{|\sin(2(\theta-\varphi))|} \;.
\end{eqnarray}

\subsection{Complex $H_1$}
\label{as:generator:H1-complex}

A complex $H_1$ can be parameterized as
\begin{equation}
    \begin{aligned}
        H_1 & =\alpha\vert\theta,\delta\rangle\langle\varphi,\gamma| = \alpha\left(\begin{array}{cc}
            \cos\theta\cos\varphi & e^{i\gamma}\cos\theta\sin\varphi\\
            e^{-i\delta}\sin\theta\cos\varphi & e^{-i\left(\delta-\gamma\right)}\sin\theta\sin\varphi
        \end{array}\right) \;, \\
        \vert\varphi,\gamma\rangle & =\left(\begin{array}{c}
            \cos\varphi\\
            e^{i\gamma}\sin\varphi
        \end{array}\right) \;, \\
        \vert\theta,\delta\rangle & =\left(\begin{array}{c}
            \cos\theta\\
            e^{i\delta}\sin\theta
        \end{array}\right) \;,
    \end{aligned}
	\label{eq:complex H1}
\end{equation}
where $\alpha$ is a complex number.

Following the same procedure as for real $H_1$ we obtain
\begin{align}
    \begin{aligned}
        \efb & =\frac{e^{i\delta}\cot(\theta)\cos(\varphi)}{e^{i\gamma}\sin(\varphi)+e^{i\delta}\cot(\theta)\cos(\varphi)} \;, \\
        |\alpha| & =\frac{2e^{2i(\gamma+\delta)}\sin(2\theta)\sin(2\varphi)}{\left(\left(e^{i\gamma}-e^{i\delta}\right)^{2} (-\cos(2(\theta+\varphi)))+\left(e^{i\gamma}+e^{i\delta}\right)^{2}\cos(2(\theta-\varphi))-4e^{i(\gamma+\delta)}\right)} \\ 
        & \times \frac{1}{\left(e^{i\gamma}\cos(\theta)\cos(\varphi)+e^{i\delta}\sin(\theta)\sin(\varphi)\right)^{2}}\;.
    \end{aligned}
\label{sol_for_comp}
\end{align}
Because $\vert \alpha \vert$ is real, which imposes $\delta=\gamma$,  consequently
\begin{eqnarray}
    \label{eq:eigenvalue_complex}
    \efb & = & \frac{\cos(\theta)\cos(\varphi)}{\cos(\theta-\varphi)}\;, \\
    \label{eq:coefficient_complex}
    |\alpha| & = & \frac{\sqrt{- \sin(2\theta)\sin(2\varphi)}}{|\sin(2(\theta-\varphi))|} \;.
\end{eqnarray}

Solutions~\eqref{eq:eigen_value},~\eqref{eq:coefficien_of_H1},~\eqref{eq:eigenvalue_complex}, and~\eqref{eq:coefficient_complex} are identical. Since $|\alpha|$ is real, solutions only exist in the parameter regions $0\le\theta\le\frac{\pi}{2}\cap\frac{\pi}{2}\le\varphi\le\pi$ or $\frac{\pi}{2}\le\theta\le\pi\cap0\le\varphi\le\frac{\pi}{2}$.

In Bloch representation, the Hamiltonian reads
\begin{equation}
    H(k)=H_{1}^{\dagger}e^{ik}+H_{0}+H_{1}e^{-ik} \;.
    \label{eq:bloch eq}
\end{equation}
With the above parameterization (~\eqref{eq:complex H1} and~\eqref{eq:canonical-H0}), the band structure follows as
\begin{equation}
    \begin{aligned}
        E_{FB} & =\frac{\cos\theta\cos\varphi}{\cos(\theta-\varphi)} \;, \\
        E_k & =\frac{\cos\theta\cos\varphi}{\cos(\theta-\varphi)}+2|\alpha|\cos(\theta-\varphi)\cos(k+\phi_{\alpha})\;,
    \end{aligned}
    \label{eq:band-struct}
\end{equation}
where $\phi_{\alpha}$ is the phase of $\alpha=|\alpha|e^{i\phi_{\alpha}}$.

\subsection{Degenerate $H_0$}
\label{as:generator:H0-degenerate}

The solutions $\alpha$ and $\efb$ in~\eqref{eq:eigenvalue_complex} and~\eqref{eq:coefficient_complex} diverge for $\theta-\varphi=\pm\frac{\pi}{2}$ and $\theta=\phi$, and so does $H_1$. We renormalize the Hamiltonian by multiplying it with $\frac{1}{\alpha}$. Then $H_0$ vanishes, and $H_1$ turns finite. 

However, when $\theta-\varphi=\pm\frac{\pi}{2}$, the dispersion bandwidth in~\eqref{eq:band-struct} is finite, and after normalization the dispersive band becomes flat as well. Therefore, we have two coexisting FBs on the lines $\theta-\varphi=\pm\frac{\pi}{2}$. According to~\eqref{eq:redef_cls_con1} and~\eqref{eq:redef_cls_con2}, $\vec{\psi}_1\perp\vec{\psi}_2$. In such a case we can always perform a rotation in each unit cell such that $\psi_1=(1,0)$, $\psi_2=(0,1)$. A subsequent redefinition of the unit cell turns the CLS into class $U=1$ (see the above subsection
on orthogonality for $\nu=2$ and $U=2$).

When $\theta=\varphi$, according to~\eqref{eq:redef_cls_con1} and~\eqref{eq:redef_cls_con2} $\vec{\psi}_1\parallel\vec{\psi}_2$, and are therefore linearly dependent, turning the FB into the $U=1$ class. In this case 
\begin{equation}
    H_1=\vert\theta,\gamma\rangle\langle\theta,\gamma\vert=
    \left(\begin{array}{cc}
        2\cos^{2}\theta & \frac{1}{2}e^{i\gamma}\sin(2\theta)\\
        \frac{1}{2}e^{-i\gamma}\sin(2\theta) & 2\cos^{2}\theta
    \end{array}\right) \;.
\end{equation}
The corresponding Bloch Hamiltonian in momentum space reads
\begin{equation}
    H(k) = H_1^\dagger e^{ik} + H_1 e^{-ik}=
    \cos k\left(\begin{array}{cc}
        2\cos^{2}\theta & e^{i\gamma}\sin(2\theta)\\
        e^{-i\gamma}\sin(2\theta) & 2\cos^{2}\theta
    \end{array}\right) \;,
\end{equation}
which yields one flat and one dispersive band 
\begin{equation}
    E_{FB} = 0 \;,\;  E_k=2\cos k \;.
\end{equation}

\subsection{FB energy equals one of the eigenvalues of $H_0$: Reduction to $U=1$}

When FB energy $\efb$ equals one of the eigenvalues $0,\ 1$ of $H_0$, we have to solve the original equations~(\ref{eq:eigenvalue_eq1}--\ref{eq:u2-cls-con2}). A simple calculation shows that the only remaining FB solutions are again of class $U=1$.

\section{Generalized sawtooth chain}
\label{as:gsc}

In our FB generator, we have considered the canonical form of $H_0$, which was diagonal. We can perform unitary transformations (rotations) of the unit cell basis
that will modify $H_1$ and make $H_0$ non-diagonal, turning the whole model into a non-canonical one.
Using the rotation matrix
\begin{equation}
	R\left(\omega\right) = 
	\left(\begin{array}{cc}
		\cos\omega & -\sin\omega\\
		\sin\omega & \cos\omega
	\end{array}\right),
\end{equation}
we define $\tilde{\psi}_i=R(\omega) \vec{\psi}_i$, $\tilde{H}_m=R(\omega)H_m R^\dagger(\omega)$ where $m=-1,0,1$ and $H_{-m}=H_m^\dagger$, and
\begin{equation}
	H_0 = \left(\begin{array}{cc}
		0 & 0\\
		0 & 1
	\end{array}\right),\ \ 
	H_1 = \alpha\left(\begin{array}{cc}
		\cos\theta\cos\varphi & e^{i\gamma}\cos\theta\sin\varphi\\
		e^{-i\gamma}\sin\theta\cos\varphi & \sin\theta\sin\varphi
	\end{array}\right),
\end{equation}
with $|\alpha|=\frac{\sqrt{-\sin(2\theta)\sin(2\varphi)}}{|\sin(2(\theta-\varphi))|}$. We consider the case $\gamma=\delta=0$, and thus $H_0$ and $H_1$ read
\begin{equation}
	\begin{aligned}
		\tilde{H_0} & = R^\dagger(\omega) H_0 R(\omega) = \left(\begin{array}{cc}
			\sin^2\omega & \cos\omega\sin\omega\\
			\cos\omega\sin\omega & \cos^{2}\omega
		\end{array}\right)\;,\\
		\tilde{H_1} & = \alpha\left(\begin{array}{cc}
			\cos(\theta+\omega)\cos(\varphi+\omega) & \cos(\theta+\omega)\sin(\varphi+\omega)\\
			\cos(\varphi+\omega)\sin(\theta+\omega) & \sin(\theta+\omega)\sin(\varphi+\omega)
		\end{array}\right)\;.
	\end{aligned}
    \label{34}
\end{equation}
We can always find a value of $\omega$ that will zero one row or one column of $H_1$. This simplifies the non-canonical lattice into a {\sl generalized sawtooth} chain.
It has in general  three different hopping strengths and an onsite energy detuning between the two sites in a unit cell.
As an example, we consider the case when the first column of $H_1$ vanishes:
\begin{equation}
	\varphi + \omega = \pm\frac{\pi}{2}\;.
\end{equation}
It follows that
\begin{equation}
	\begin{aligned}
		\tilde{H}_0 &= \left(\begin{array}{cc}
			\cos ^2(\varphi ) & -\cos (\varphi ) \sin (\varphi ) \\
			-\cos (\varphi ) \sin (\varphi ) & \sin ^2(\varphi ) \\
		\end{array}\right) = \left(\begin{array}{cc}
			\epsilon_1 & t_1\\
			t_1 & \epsilon_2
		\end{array}\right)  \;, \\
		\tilde{H}_1 &= \alpha\left(\begin{array}{cc}
			0 & -\sin(\theta-\varphi)\\
			0 & \cos(\theta-\varphi)
		\end{array}\right) = \left(\begin{array}{cc}
			0 & t_2\\
			0 & t_3
		\end{array}\right) \;.
	\end{aligned}
\end{equation}
For the particular case of the ST1 chain (sawtooth chain with two equal hoppings and zero onsite energy detuning, see Section \ref{section4.3} in the main text),
where $\epsilon_1=\epsilon_2$ and $t_1=t_2$, we find
\begin{equation}
	\begin{aligned}
		\varphi &= \frac{3\pi}{4}, \ \ \theta = \arctan(3+2\sqrt{2}) \;,  \\
		\varphi &= \frac{\pi}{4}, \ \ \theta = \pi - \arctan(3-2\sqrt{2}).
	\end{aligned}
\end{equation}
This leads to the following tight-binding equations:
\begin{equation}
Ea_n = -\sqrt{2} b_n - \sqrt{2} b_{n+1} \;,\; E b_n =-\sqrt{2} a_n -\sqrt{2} a_{n-1} -b_{n+1} - b_{n-1} \;.
\end{equation}
The FB is located at $E_{FB}=2$ (Fig. 4.1 (b) in Section \ref{section4.3} in the main text). The CLS has the form $a_0=a_1=1$ and $b_1=-\sqrt{2}$ (up to a normalization factor) with all other amplitudes vanishing.

Detuning the angles $\theta,\varphi$ away from this point, we deform the ST1 model while maintaining one FB.

Let us require $t_1=t_2=t_3$. Then it follows
\begin{equation}
	\begin{aligned}
		\theta &= \frac{\pi }{2} - \frac{1}{2} \tan^{-1}\left(\frac{1}{2}\right), \ \ \varphi = \frac{3 \pi }{4} - \frac{1}{2} \tan^{-1}\left(\frac{1}{2}\right) \;, \\
		\theta &= \frac{\pi }{2} - \frac{1}{2} \tan^{-1}\left(\frac{1}{2}\right), \ \ \varphi = \frac{3 \pi }{4} - \frac{1}{2} \tan^{-1}\left(\frac{1}{2}\right) \;.
	\end{aligned}
\end{equation}
This is a novel, high-symmetry sawtooth chain (ST2 chain, see Section \ref{section4.3} in main text). It can be obtained for example with the following matrices 
\begin{equation}
	\begin{aligned}
		H_0 = - \left(\begin{array}{cc}
			0 & 1\\
			1 & 1
		\end{array}\right),\  &
		H_1 = -\left(\begin{array}{cc}
			0 & 1\\
			0 & 1
		\end{array}\right).
	\end{aligned}
\end{equation}
This leads to the following tight-binding equations:
\begin{equation}
Ea_n = -b_n - b_{n+1} \;,\; E b_n = -b_n -a_n -a_{n-1} -b_{n+1} - b_{n-1} \;.
\end{equation}
The FB is located at $E_{FB}=1$ (Fig. 4.1 (c) in Section \ref{section4.3} of the main text). The CLS has the simple form $a_0=a_1=-b_1=1$ (up to a normalization factor) with all other amplitudes vanishing.
Note that we can also set other columns and rows of $H_1$ in (\ref{34}) to zero and obtain all points in the $\theta, \varphi$ diagram corresponding to ST1 and ST2 chains. All these points are shown by filled squares and filled circles in Fig. 4.3 in Section \ref{section4.3} in the main text.

\section{Inverse eigenvalue problem: A toy example and the solution of the $U=2$ CLS}
\label{app:ieig-toy}

This appendix explains the solution of the inverse eigenvalue problems~\eqref{eq:cls-ieig-U2-H1}. As discussed in the main text, 1D flatband lattices with CLS class $U$ satisfy 
\begin{align}
    H_1\vpsi_2 & = \left(\EFB - H_0\right)\vpsi_1\label{eq:app-gen-eq-1}, \\
    H_1^\dagger\vpsi_{l-1} + H_1\vpsi_{l+1} & = \left(\EFB - H_0\right)\vpsi_l\quad l = 2,\dots,U-1\label{eq:app-gen-eq-2},\\
    H_1^\dagger\vpsi_{U-1} & = \left(\EFB - H_0\right)\vpsi_U \label{eq:app-gen-eq-3},\\
    H_1\vpsi_1 & = 0, \label{eq:app-gen-eq-4} \\
    H_1^\dagger\vpsi_U & = 0.
    \label{eq:app-gen-eq-5}
\end{align}
Assuming that $\EFB,\ H_0,\ \vpsi_{l=1,\dots,U}$ are given, Eqs.~\eqref{eq:app-gen-eq-1}--\eqref{eq:app-gen-eq-5} constitute an inverse eigenvalue problem for a block-tridiagonal matrix, where diagonal blocks are $H_0$ and off-diagonal ones are $H_1$. 

\subsection{Toy example}
\label{app:toy}

As a warmup, we solve a toy inverse eigenvalue problem: reconstruct $\nu\times\nu$ matrix $T$ given its action $\ket{y}$ on some vector $\ket{x}$:
\begin{gather}
    \label{eq:ieig-toy}
    T\,\ket{x} = \ket{y}.
\end{gather}
The solution is not unique, as a generic solution can be represented as $T=T_* + \delta T$, where $T_*$ is any particular solution of~\eqref{eq:ieig-toy} and $\delta T\ket{x}=0$. One possible solution is easily found to be
\begin{gather}
    T_* = \frac{\ket{y}\bra{x}}{\bra{x}\ket{x}},\quad\delta T = Q_x\, K,
\end{gather}
where $Q_x$ is a transverse projector on $x$. This construction is straightforward to generalize to the case of many vectors (we assume here implicitly that the equations are consistent):
\begin{gather}
    T\ket{x_k} = \ket{y_k},\quad k=1..m.
\end{gather}
The generic solution to this problem is given by
\begin{gather}
    T_* = \sum_{ij} A_{ij}\ket{y_i}\bra{x_j},\quad A_{ij}^{-1} = \bra{x_i}\ket{x_j},\\
    \delta T = Q\, K,
\end{gather}
where $Q$ is the orthogonal projector on the subspace spanned by $\{x_k\}$ and $K$ is an arbitrary $\nu\times\nu$ matrix. For later convenience we refer to $T_*$ as a \emph{particular solution} and $\delta T$ as a \emph{free part}.

\subsection{U=2 case}
\label{app:inv-egv-u2}

In this case,Eqs.~\eqref{eq:app-gen-eq-1}--\eqref{eq:app-gen-eq-5} read
\begin{align}
    H_1\kpsi{2} & = \left(\EFB - H_0\right)\kpsi{1},\notag\\
    H_1^\dagger\kpsi{1} & = \left(\EFB - H_0\right)\kpsi{2},\notag\\
    \label{eq:app-cls-ieig-U2-H1}
    H_1\kpsi{1} & = 0,\\
    H_1^{\dagger}\vert\psi_{2}\rangle & = 0.\notag
\end{align}
We know $H_0,\kpsi{1},\kpsi{2}$ and $\EFB=\evpsi{1}{H_0}{2}$, and we need to determine $H_1$. As discussed above for the toy case, a generic solution to this problem can be decomposed into a particular solution and a free part. The last two equations in the above set are satisfied by the following ansatz:
\begin{gather}
    \label{eq:app-cls-ieig-U2-anzats}
    H_1 = Q_2 M Q_1,\quad Q_i = \mI - \frac{\kpsi{i}\bpsi{i}}{\langle\psi_i\vert\psi_i\rangle}.
\end{gather}
Plugging this ansatz back into the system, we find
\begin{gather}
    \label{eq:app-U2-ieig-M}
    Q_2\, M\,Q_1\kpsi{2} = \left(\EFB - H_0\right)\kpsi{1},\\
    \bpsi{1} Q_2\, M\, Q_1 = \bpsi{2}\left(\EFB - H_0\right).\notag
\end{gather}
Note the identity 
\begin{gather}
    \evpsi{1}{H_1}{2} = \evpsi{1}{\EFB - H_0}{1} = \evpsi{2}{\EFB - H_0}{2},
    \label{eq:app-u2-nl-constraint}
\end{gather}
that follows straightforwardly from the first two equations of~\eqref{eq:app-cls-ieig-U2-H1}. Defining the projectors
\begin{gather}
    R_{12} = \mI - \frac{Q_1\kpsi{2}\bpsi{2} Q_1}{\expval{Q_1}{\psi_2}},\\
    R_{21} = \mI - \frac{Q_2\kpsi{1}\bpsi{1} Q_2}{\expval{Q_2}{\psi_1}},\notag
\end{gather}
we can write 
\begin{equation}
  M = T + R_{21}\, K\, R_{12},
\end{equation}
where $T$ is a particular solution of~\eqref{eq:app-U2-ieig-M}. The second term, where $K$ is an arbitrary $\nu\times\nu$ matrix, satisfies~\eqref{eq:app-U2-ieig-M} by construction and is the free part of the solution. Therefore, we only need to find a particular solution to the system to get the generic solution. This is achieved by the same ansatz $T=\ket{x}\bra{y}$ as in the toy case discussed above. The ansatz yields the following equations:
\begin{gather}
    Q_2\, T\, Q_1\kpsi{2} = \mel{y}{Q_1}{\psi_2} Q_2\ket{x} = (\EFB - H_0)\kpsi{1},\\
    \bpsi{1} Q_2\, T\, Q_1 = \mel{\psi_1}{Q_2}{x}\bra{y} Q_1 = \bpsi{2} (\EFB - H_0).
\end{gather}
From these, vectors $x$ and $y$ are fixed (up to unimportant normalization):
\begin{gather*}
    \bra{y} Q_1 = \frac{1}{\mel{\psi_1}{Q_2}{x}}\bpsi{2} (\EFB - H_0),\\
    Q_2\ket{x} = \frac{1}{\mel{y}{Q_1}{\psi_2}} (\EFB - H_0)\kpsi{1},\\
    = \frac{\mel{\psi_1}{Q_2}{x}}{\evpsi{2}{\EFB - H_0}{2}} (\EFB - H_0)\kpsi{1},\\
    = \frac{\mel{\psi_1}{Q_2}{x}}{\evpsi{1}{\EFB - H_0}{1}} (\EFB - H_0) \kpsi{1}.
\end{gather*}
We used the condition~\eqref{eq:app-u2-nl-constraint} to replace the denominator in the fourth line. Also note that the expression for $y$ from the first line was used to simplify the second line (eliminate $y$). The particular solution is then
\begin{gather}
    \label{eq:app-u2-anzats-sol}
    Q_2 T Q_1 = \frac{\left(\EFB - H_0\right)\kpsi{1}\bpsi{2}\left(\EFB - H_0\right)}{\evpsi{1}{\EFB - H_0}{1}}.
\end{gather}
Thanks to~\eqref{eq:app-u2-nl-constraint}, it is symmetric with respect to $\kpsi{1},\kpsi{2}$. This and the above mentioned free part $Q_{21}KQ_{12}$ give the full family of solutions~\eqref{eq:cls-ieig-U2-sol}:
\begin{gather*}
    H_1 = \frac{(\EFB - H_0)\kpsi{1}\bpsi{2}(\EFB - H_0)}{\evpsi{1}{\EFB - H_0}{1}} +  Q_2  R_{21} K R_{12} Q_1.
\end{gather*}
This expression is further simplified by noticing that $R_{12}Q_1$ and $ Q_2 R_{21}$ are the same projector on the subspace spanned by $\kpsi{1},\kpsi{2}$ that we denote $Q_{12}$: $(R_{12}Q_1)^2 = R_{12} Q_1$, idem for $Q_2 R_{21}$, and both vanish when acting on $\kpsi{1,2}$ as can be straightforwardly verified. We can therefore replace these combinations by $Q_{12}$:
\begin{gather}
    \label{eq:app-u2_solution}
    H_1 = \frac{(\EFB - H_0)\kpsi{1}\bpsi{2}(\EFB - H_0)}{\evpsi{1}{\EFB - H_0}{1}} + Q_{12} K Q_{12}.
\end{gather}
This solution is supplemented by the following non-linear constraints
\begin{align}
    \label{eq:app-constraints_u2}
    \langle\psi_2\vert\psi_1\rangle & = 1,\\
    \evpsi{2}{H_0}{1}\notag & = \EFB,\\
    \evpsi{1}{\EFB - H_0}{1} & = \evpsi{2}{\EFB - H_0}{2},\notag
\end{align}
that are obtained by eliminating $H_1$ from~\eqref{eq:app-cls-ieig-U2-H1} using destructive interference conditions, i.e. the last two equations in~\eqref{eq:app-cls-ieig-U2-H1}. 

In case the denominator in~\eqref{eq:app-u2_solution} is zero, the single-projector ansatz fails, and a two-projector ansatz has to be used:
\begin{equation}
\begin{aligned}
    \label{eq:app-u2_proj2}
    H_1 &= \frac{(\EFB - H_0)\kpsi{1}\bpsi{2}Q_1}{\evpsi{2}{Q_1}{2}} +\\
   & + \frac{Q_2\kpsi{1}\bpsi{2}(\EFB - H_0)}{\evpsi{1}{Q_2}{1}} + Q_{12}K Q_{12},
\end{aligned}
\end{equation}
as can be verified by a direct substitution. In this special solution, the denominators only vanish when $\Psi_1\propto\Psi_2$, i.e. in the $U=1$ case.

\subsection{Bipartite lattices and chiral symmetry}
\label{app:chiral-inv-eig-prob}

In this section, we solve the inverse eigenvalue problem for $U=2$ for the special case of bipartite lattices. We consider a bipartite lattice with $\nu$ sites per unit cell that split into majority and minority sublattices with $\mu$ and $\nu-\mu$ sites, respectively. Since the lattice is  bipartite, the sites on one sublattice only have neighbors belonging to the other sublattice. This enforces the following structure on the hopping matrices and the wave functions of the CLS (see Eq.~\eqref{eq:H01-psi-cs-def}):
\begin{align}
    & H_0 = \left(
    \begin{array}{cc}
        0 & A^\dagger\\
        A & B
    \end{array}\right),\quad
    H_1 = \left(
    \begin{array}{cc}
        0 & T^\dagger\\
        S & W
    \end{array}\right),\notag\\
    & \vpsi_1 = \left(
    \begin{array}{c}
        \varphi_1\\
        0
    \end{array}\right),\quad
    \vpsi_2 = \left(
    \begin{array}{c}
        \varphi_2\\
        0
    \end{array}\right).
    \label{eq:app-chiral-anzats}
\end{align}
Here, $\varphi_{1,2}$ are $\mu$-component vectors describing the wave amplitudes of the majority sublattice sites. $A,S,T$ are $(\nu-\mu)\times\mu$ matrices, while $B, W$ are $(\nu-\mu)\times(\nu-\mu)$ matrices. $B, W$ formally break the bipartiteness of the lattice, but do not affect the $\EFB=0$ FB(s). This special structure simplifies Eq.~\eqref{eq:app-cls-ieig-U2-H1}:
\begin{gather}
    S\kphi{2} = -A\kphi{1},\\
    T\kphi{1} = -A\kphi{2},\\
    S\kphi{1} = 0,\\
    T\kphi{2} = 0.
\end{gather}
These equations need to be resolved with respect to $S$ and $T$. The last two equations are satisfied by the anz\"atse $S=S^\prime Q_1$, $T=T^\prime Q_2$, where $Q_i$ is a transverse projector on $\varphi_i$. The remaining two equations are identical to the toy problem discussed above (Appendix~\ref{app:toy}), and their solution is precisely Eq.~\eqref{eq:cls-ieig-U2-cs-sol}:
\begin{gather}
    \label{eq:app-chiral-u2-sol}
    S = -\frac{A\kphi{1}\bphi{2}Q_1}{\evphi{2}{Q_1}{2}} + K_S Q_{12},\\
    T = -\frac{A\kphi{2}\bphi{1}Q_2}{\evphi{1}{Q_2}{1}} + K_T Q_{12},\notag\\
\end{gather}
where $Q_{12}$ is a joint transverse projector on $\kphi{1,2}$.

Now let's count the number of free parameters. $\vert \varphi_1 \rangle , \vert \varphi_2 \rangle$ all are free parameters with each containing $\mu$ free parameters. $A$ contains $(\nu-\mu)\mu$ free variables. $B, W$ each contains $(\nu-\mu)^2$ free parameters. $K_S Q_{12}$ and $Q_{21} K_T$ are $(\nu-\mu)\times\mu$ and $\mu\times(\nu-\mu)$ matrices, and, because of the transverse projectors, they contain $(\nu-\mu)(\mu-2)$ and $\mu(\nu-\mu-2)$ free parameters, respectively. 
Therefore, \eqref{eq:app-chiral-u2-sol} contains a total $2\mu - 1 + (\nu-\mu)\mu + (\nu-\mu)(\mu-2) + \mu(\nu-\mu-2) + 2(\nu - \mu)^2=(\nu-\mu)(2\nu+\mu-2) - 1$ free parameters.  The extra $-1$ corresponds to the overall normalization of the CLS that is not fixed. 

\section{Resolving the non-linear constraints}
\label{app:sol-nl-constraints}

Let us discuss how one can efficiently resolve the set of non-linear constraints that appear in the inverse eigenvalue problem, for example~\eqref{eq:app-constraints_u2}. Since these are a non-linear system of equations, one can always try a numerical solver. However, our experience was not particularly successful: the solver was not converging, and found no solution more often than not. Instead, it is possible to design a numerical algorithm that eliminates the constraints one by one to either find and enumerate all the solutions, or to prove that there are none.

\subsection{U=2 case}
\label{app:U2-constraints}

The non-linear equations that we need to solve are:
\begin{align}
    \label{eq:app-U2-constraint-l1}
    \langle \psi_1 \vert \psi_2 \rangle & = 1,\\
    \label{eq:app-U2-constraint-l2}
    \evpsi{1}{H_0}{2} & = \EFB,\\
    \label{eq:app-U2-constraint-l3}
    \expval{\EFB - H_0}{\psi_1} & = \expval{\EFB - H_0}{\psi_2}.
\end{align}
We assume that $\EFB$, $H_0$, and $\psi_1$ (or $\psi_2$) are given input parameters.

Then we need to solve the above equations for $\psi_2$. The first two equations~(\ref{eq:app-U2-constraint-l1}--\ref{eq:app-U2-constraint-l2}) are linear and are easily satisfied with the following expansion for $\psi_2$ by the choice of the basis vectors $e_1$ and $e_2$:
\begin{gather}
    \label{eq:app-U2-psi2-series}
    \kpsi{2} = \sum_{k=1}^\nu x_k \ket{e_k},\\
    \kev{1} = \frac{1}{\sqrt{\bra{\psi_1}\ket{\psi_1}}}\kpsi{1},\\
    \kev{2} = \frac{1}{\sqrt{\evpsi{1}{H_0 Q_1 H_0}{1}}} Q_1 H_0\kpsi{1},\\
    \bra{e_l}\ket{e_m} = \delta_{lm},\quad l,m=1,2,\dots\nu.
\end{gather}
Here, $Q_1$ is a transverse projector on $\kpsi{1}$. With this choice of basis vectors, the equations~(\ref{eq:app-U2-constraint-l1}--\ref{eq:app-U2-constraint-l2}) imply:
\begin{gather*}
    x_1 = \frac{1}{\sqrt{\bra{\psi_1}\ket{\psi_1}}},\\
    x_2 = \frac{1}{\sqrt{\evpsi{1}{H_0 Q_1 H_0}{1}}}\left[\EFB - \frac{\evpsi{1}{H_0}{1}}{\bra{\psi_1}\ket{\psi_1}}\right]. 
\end{gather*}

The remaining basis vectors are fixed by requiring their orthonormality, for example, by using Gram--Schmidt orthogonalization. 
Next we plug the expansion~\eqref{eq:app-U2-psi2-series} into~\eqref{eq:app-U2-constraint-l3} and separate out the terms with $e_1$, $e_2$:
\begin{gather*}
    \expval{\EFB - H_0}{\psi_1} = \sum_{ij=1}^\nu x_i^* x_j\mel{e_i}{\EFB - H_0}{e_j}\\
    = \sum_{ij=1}^2 x_i^* x_j\mel{e_i}{\EFB - H_0}{e_j}\\
    + \sum_{i=1}^2\sum_{j=3}^\nu\left[x_i^* x_j\mel{e_i}{\EFB - H_0}{e_j} + x_j^* x_i\mel{e_j}{\EFB - H_0}{e_i}\right]\\
    + \sum_{ij=3}^\nu x_i^* x_j\mel{e_i}{\EFB - H_0}{e_j}.
\end{gather*}
This expression can be rewritten as follows:
\begin{gather}
    \label{eq:app-U2-nl3-qf}
    \sum_{ij=1}^{\nu-2} y_i^* M_{ij} y_j + \sum_{i=1}^{\nu-2}\left[ u_i^* y_i + u_i y_i^*\right] = w,\\
    M_{ij} = \mel{e_{i+2}}{\EFB - H_0}{e_{j+2}},\\
    u_i = \sum_{j=1}^2 x_j \mel{e_{i+2}}{\EFB - H_0}{e_j},\\
    w = \sum_{ij=1}^2 x_i^* x_j\mel{e_i}{\EFB - H_0}{e_j} - \expval{\EFB - H_0}{\psi_1},
\end{gather}
where $y_i = x_{i+2}$. 
The equations with $y_i$ are further simplified by the shift $z_i=y_i + M_{ij}^{-1} u_j$ that eliminates the linear term. This gives the following equation in quadratic form
\begin{gather}
    \sum_{ij=1}^{\nu-2} z_i^* M_{ij} z_j = w + \sum_{ij=1}^{\nu-2} u_i^* M_{ij} u_j.
\end{gather}
Notice that the right hand side of the above equation is real. The matrix $M$ is Hermitian, and can be diagonalized: $M_{ij} = \sum_\alpha E_\alpha \ket{r_\alpha}\bra{r_\alpha}$. The above equation is solved with the help of this spectral decomposition:
\begin{gather}
    \sum_{\alpha=1}^{\nu-2} E_\alpha |t_\alpha|^2 = \tw,\\
    \tw = w + \sum_{\alpha=1}^{\nu-2} E_\alpha |s_\alpha|^2,\\
    t_\alpha = \bra{r_\alpha}\ket{z_\alpha}\qquad s_\alpha = \bra{r_\alpha}\ket{u}.
\end{gather}
The presence or absence of a solution is decided by the mutual signs of $\tw$ and $E_\alpha$: if $\tw>0$ and $E_\alpha<0$ $\forall\alpha$, then there is no solution. If one $E_\alpha > 0$, there is a single solution, and for two or more $E_\alpha>0$ there is a multiparametric family of solutions. Knowing $t_\alpha$, it is straightforward to reconstruct the original $\vpsi_2$.

Above, $M$ was assumed non-singular. If it is singular, then $M_{ij}^{-1}$ is the Moore--Penrose pseudoinverse~\cite{ben2003generalized}, and we have $y_i = z_i + g_i - M_{ij}^{-1} u_j$ where $g\in\ker{M}$. For $g_i$ the quadratic terms in~\eqref{eq:app-U2-nl3-qf} vanish (by definition of $g_i$) and only $g_i$  linearly enter the equation, while $z_i$ can be treated as in the non-singular case (for convenience we assume that the first $k$ eigenvalues of $M$ are zero):
\begin{gather}
    \sum_{\alpha=k+1}^{\nu-2} E_\alpha t_\alpha^2 = \tw - \sum_{\alpha=1}^{k} \left[\bra{u}\ket{r_\alpha} + \bra{r_\alpha}\ket{u}\right].
\end{gather}
The presence of zero modes renormalizes $\tw$.

The more refined version of counting relies on the above solution and the counting of $E_\alpha$ with the ``right" sign. This tells us that for $\nu=2, 3$, there is a single solution for fixed $\vpsi_1,\ \EFB,\ H_0$. For larger $\nu$, there could be a single solution or multiparametric families of solutions, from $0$ to $\nu-3$.

\subsection{ U=3 case}
\label{app:U3-constraints}

In this case, the nonlinear constraints \eqref{eq:cls-ieig-U3-constraints} read
\begin{align}
    & \langle\psi_1\vert\psi_3\rangle = 1,\notag\\
    & \evpsi{1}{H_0}{3} = \EFB,\notag\\
    & \evpsi{1}{\EFB - H_0}{2} = \evpsi{2}{\EFB - H_0}{3},
    \label{eq:app-u3-cls-constraint}\\
    & \evpsi{3}{\EFB - H_0}{3} = \evpsi{2}{\EFB - H_0}{2},\notag\\
    & - \evpsi{1}{\EFB - H_0}{1}.\notag
\end{align}
Resolution of this set of constraints is very similar to the $U=2$ case, so here we only outline the main steps. We search to resolve the above equations with respect to $\Psi_3$, taking $\Psi_1,\Psi_2$ as inputs. The first three lines are linear, and we solve them by expanding $\Psi_3$ over a suitable orthonormal basis:

\begin{align*}
    \kpsi{3} & = \sum_k x_k \kev{k},\\
    \kev{1} & = \frac{1}{\sqrt{\bra{\psi_1}\ket{\psi_1}}} \kpsi{1},\\
    \kev{2} & = \frac{1}{\sqrt{\evpsi{1}{H_0 Q_1 H_0}{1}}} Q_1 H_0\kpsi{1},\\
    \kev{3} & = \frac{Q_*(\EFB - H_0)\kpsi{2}}{\sqrt{\evpsi{2}{(\EFB - H_0) Q_* (\EFB - H_0)}{2}}},\\
    & \vdots\\
    \bra{e_l}\ket{e_m} & = \delta_{l,m},\ \ l,m=1,2,\dots,\nu\\
    Q_1 & = \mathbb{I} - \frac{\kpsi{1}\bpsi{1}}{\bpsi{1}\kpsi{1}},
\end{align*}
and $Q_*$ is a joint transverse projector on $\kpsi{1}$ and $Q_1 H_0\kpsi{1}$. Then $x_1$ and $x_2$ are the same as in the $U=2$ case, and $x_3$ is directly expressed through the third equation in Eqs.~\eqref{eq:app-u3-cls-constraint}. The last (fourth) equation in~\eqref{eq:app-u3-cls-constraint} is solved in the same way as that in the $U=2$ case: it is reduced to solving a quadratic form.

\section{Examples for FB generators}
\label{app:examples}

In this section, we present details of the example FB Hamiltonians generated using the algorithm discussed in the main text. In all of these examples, we pick some $H_0,\ \EFB$ and part of $\psi_l$ as inputs. Next, following the algorithm outlined in Appendix~\ref{app:sol-nl-constraints}, we construct a set of $\{\psi_l\}$ consistent with the CLS structure. Then we find the hopping matrix $H_1$ using the algorithm from Section~\ref{sec:fb-gen} in the main text (detailed in Appendix~\ref{app:ieig-toy}). For simplicity we drop the free part $K$ in all the examples below.

\subsection{$\nu=3,\ U=2$ case}
\label{app:u2-examples}

\emph{Example shown in Figure~\ref{fig:u2_nu3_can}}: We start with a three-band case $\nu=3$ and no additional constraints on the form of $H_1$. We assume a canonical (diagonal) form of
$H_0$ and choose $\vpsi_1$ as
\begin{gather}
    H_0 = \left[\begin{array}{ccc}
        0 & 0 & 0\\
        0 & 1 & 0\\
        0 & 0 & 2
    \end{array}\right],
    \ \ \vpsi_1 = \left(1, -1, 1\right).
\end{gather}

Using the FB algorithm, we find the particular solution:
\begin{align}
    & \EFB = 0.5,\quad\vpsi_2 = \left(1.5,\ 1.5,\ 1\right),\\
    & H_1 = \left[\begin{array}{ccc}
        -0.25 &  0.25 &  0.5 \\
        -0.25 &  0.25 & 0.5\\
        0.75 & -0.75 & -1.5
    \end{array}\right].
\end{align}

\emph{Example shown in Figure~\ref{fig:u2_nu3_noncan}}: Taking non-diagonal $H_0$ and $\vpsi_1$ as 
\begin{gather}
    H_0 = \left[\begin{array}{ccc}
        0 & 1 & 0\\
        1 & 0 & 1\\
        0 & 1 & 0
    \end{array}\right],
    \ \ \vpsi_1 = \left( 1, -1,  1\right),
\end{gather}

we construct the following FB Hamiltonian:
\begin{align}
    & H_1 = \left[\begin{array}{ccc}
        0.19926929 & -0.47727273 & -0.67654202\\
        -0.33211549 &  0.79545455 &  1.12757003\\
        0.19926929 & -0.47727273 & -0.67654202
    \end{array}\right],\\ 
    & \vpsi_2 = \left(4.46130814,\ 1.5,\  -1.96130814\right),\ \EFB = 0.5.
\end{align}

\emph{Example shown in Figure~\ref{fig:u2_nu3_gen}}: Taking all the sites in the unit cell connected to each other and the same $\vpsi_1$ and $\EFB$ as in the above example gives

\begin{gather}
    H_0 = \left[\begin{array}{ccc}
        0 & 1 & 1\\
        1 & 0 & 1\\
        1 & 1 & 0
    \end{array}\right],\
    \ \vpsi_1 = \left(1,-1,1\right),
\end{gather}
and we land at the following Hamiltonian:
\begin{align}
    & H_1 = \left[\begin{array}{ccc}
        0.18163216 & -0.16071429 & -0.34234644\\
        -0.90816078 &  0.80357143 &  1.71173221\\
        0.18163216 & -0.16071429 & -0.34234644
    \end{array}\right],\\
    & \vpsi_2 = \left(1.84761673,\ 0.25 ,\ -0.59761673\right),\ \EFB = 0.5.
\end{align}

\emph{Bipartite lattice $U=2$, Figure~\ref{fig:u2_bipartite}}: 
We consider the $\nu=4,\ \mu=2$ case and pick the following input variables:
\begin{align*}
    & A = \frac{1}{4} \left(\begin{array}{cc}
        \sqrt{3} & 1 \\
        3 & \sqrt{3} \\ \end{array} \right), \ \ 
    B=\left(\begin{array}{cc}
        1 & -2 \\
        -2 & 1 \\ \end{array} \right),\\ 
    & \vphi_1 = \frac{1}{\sqrt{2}} \left( 1, 1 \right),
    \ \ \vphi_2 = \frac{1}{2} \left( 1, \sqrt{3} \right), \\
    & W = \left(\begin{array}{cc}
        2 & -1 \\
        -1 & 2 \\
    \end{array}\right).
\end{align*}
Solving Eq. \eqref{eq:app-chiral-u2-sol}, i.e.~\eqref{eq:cls-ieig-U2-cs-sol}, yields the following solution: 
\begin{align*}
    & S = \frac{\left(\sqrt{3}+2\right)
    \left(\begin{array}{cc}
        -1 & 1 \\
        -\sqrt{3} & \sqrt{3} \\
    \end{array}\right)}{2 \sqrt{2}}, \\
    & T = \frac{\left(\sqrt{3}+1\right)
    \left(\begin{array}{cc}
        3 & 3 \sqrt{3} \\
        -\sqrt{3} & -3 \\
    \end{array}\right)}{4 \sqrt{2}}.
\end{align*}
The corresponding hopping matrices $H_0, H_1$ read
\begin{align*}
    H_0 & =  \left(\begin{array}{cccc}
        0 & 0 & \frac{\sqrt{3}}{4} & \frac{3}{4} \\
        0 & 0 & \frac{1}{4} & \frac{\sqrt{3}}{4} \\
        \frac{\sqrt{3}}{4} & \frac{1}{4} & 1 & -2 \\
        \frac{3}{4} & \frac{\sqrt{3}}{4} & -2 & 1 \\
    \end{array}\right),\\
    H_1 & = \left(\begin{array}{cccc}
        0 & T  \\
        S & W \\
    \end{array}\right),\\
    \vpsi_1 & = \left(\frac{1}{\sqrt{2}},\frac{1}{\sqrt{2}},0,0\right), \\
    \vpsi_2 & = \left(\frac{1}{2},\frac{\sqrt{3}}{2},0,0\right).
\end{align*}

\subsection{$\nu=3$, $U=3$ case}
\label{app_sec:u3_examples}

\emph{Example shown in Figure~\ref{fig:u3_nu3_can}}: We pick $H_0$ in canonical form and choose $\vpsi_1$ as follows 
\begin{gather*}
    H_0 = \left[\begin{array}{ccc}
        0 & 0 & 0\\
        0 & 1 & 0\\
        0 & 0 & 2
    \end{array}\right],
    \ \ \vpsi_1 = \left(1, -1, 1\right)
\end{gather*}
Solving the non-linear constraints \eqref{eq:app-u3-cls-constraint}, i.e., Eq. \eqref{eq:cls-ieig-U3-constraints}, we get $\vpsi_2, \vpsi_3$. Then solving the Eq.~\eqref{eq:cls-ieig-U3-H1-linear}, which is equivalent to Eqs. (\ref{eig-1}--\ref{eig-5}), we get
\begin{align*}
    H_1 & = \left[\begin{array}{ccc}
        -0.06548573 & -0.27210532 & -0.2066196 \\
        -0.15130619 & -0.28682832 & -0.13552213\\
        -0.14682469 &  0.75742396 &  0.90424865
    \end{array}\right],\\
    \vpsi_2 & = \left(-0.05144152, -1.53640189, -0.38025523\right),\\
    \vpsi_3 & = \left(0.58333333, -0.33333333,  0.08333333\right),\\
    \EFB & = 0.5.
\end{align*}

\emph{Example shown in Figure~\ref{fig:u3_nu3_noncan}}: We choose $H_0$ and $\vpsi_1$ as 
\begin{gather*}
    H_0 = \left[\begin{array}{ccc}
        0 & -1 & 0\\
        -1 & 0 & 1\\
        0 & 1 & 0
    \end{array}\right],
    \ \ \vpsi_1 = \left(1, -1, 1\right).
\end{gather*}
The corresponding FB $H_1$ is
\begin{align*}
    H_1 & = \left[\begin{array}{ccc}
        0.23624218 &  0.15535892 & -0.08088326\\
        -0.87350793 & -0.69073091 &  0.18277702\\
        1.31303601 &  0.95651792 & -0.35651809
    \end{array}\right],\\
    \vpsi_2 & = \left(3.14189192, -2.05220768, -0.94681365\right),\\
    \vpsi_3 & =\left(1.08333333, -0.33333333, -0.41666667\right),\\
    \EFB & = 0.5.
\end{align*}

\emph{Example shown in Figure~\ref{fig:u3_nu3_gen}}: For the following input
\begin{gather*}
    H_0 = \left[\begin{array}{ccc}
        0 & -1 & 2\\
        -1 & 0 & 1\\
        2 & 1 & 0
    \end{array}\right],
    \ \ \vpsi_1 = \left(1, -1, 1\right),
\end{gather*}
we find the following FB nearest neighbor hopping matrix $H_1$:
\begin{align*}
    H_1 & = \left[\begin{array}{ccc}
        0.06915801 & -0.66620419 & -0.7353622\\
        -0.31644957 & -0.3029663 &   0.01348327\\
        -0.46657738 & -0.38011423 &  0.08646314
    \end{array}\right],\\
    \vpsi_2 & =\left(0.77717503,  2.50899893,  1.05355773\right),\\
    \vpsi_3 & =\left(0.03571429, -0.57142857,  0.39285714\right),\\
    \EFB & =0.5.
\end{align*}

\emph{Example shown in Figure~\ref{fig:u3_nu3_ground}}: The following input data
\begin{gather*}
    H_0 = \left[\begin{array}{ccc}
        0 & 1 & 0\\
        1 & 0 & 1\\
        0 & 1 & 0
    \end{array}\right],
    \ \ \vpsi_1 = \left(1, -1, 1\right),
\end{gather*}
provides an example FB Hamiltonian in which the FB is the ground state:
\begin{align*}
    H_1 & = \left[\begin{array}{ccc}
        -0.52279625 &  0.17024672 &  0.69304298\\
        -0.62702148 & -0.11461122 &  0.51241027\\
        -0.73124671 & -0.39946915 &  0.33177756
    \end{array}\right],\\
    \vpsi_2 & = \left(0.25537008,  0.28652804, -0.59920373\right),\\
    \vpsi_3 & = \left(0.25, -0.5,   0.25\right),\\
    \EFB & =-1.5.
\end{align*}

\section{Network constraints}
\label{app:imposing_lat_str}

We present here details of the examples where network connectivity was provided as an input to the FB generator. In all cases, one can find particular solutions to the resulting non-linear system of equations.

Often network connectivity implies very sparse $H_0$ and $H_1$. Therefore, inserting sparse $H_0$ and $H_1$ into Eqs. (\ref{eig-1}--\ref{eig-5}) gives a set of equations that can be solved analytically. More precisely, as seen in the examples below, when $H_0$ and $H_1$ are so sparse that the number of unknowns (non-zero elements of $H_1, H_0$ and part of the CLS) is less than or equal to the number of equations, we can solve the Eqs. (\ref{eig-1}--\ref{eig-5}) analytically. Note that, instead of inserting $H_1$ and $H_0$ into Eqs. (\ref{eig-1}-\ref{eig-5}), we can get the same set of equations from~\eqref{eq:cls-ieig-U3-H1-linear} by zeroing the elements of $h_1$ corresponding to the zero elements of $H_1$.

\subsection{U=2 case}

\subsubsection{1D kagome}

We consider a $d=1$ version of the 2D kagome lattice. The nearest neighbor Hamiltonian is restricted by lattice connectivity to
\begin{gather*}
    H_0 = \left[\begin{array}{ccccc}
        0 & t_2 & 0 & 0 & 0\\
        t_2 & 0 & t_1 & 0 & 0\\
        0 & t_1 & 0 & t_1 & 0\\
        0 & 0 & t_1 & 0 & t_2\\
        0 & 0 & 0 & t_2 & 0
    \end{array}\right],
    \ \ H_1 = \left[\begin{array}{ccccc}
        0 & t_1 & t_1 & 0 & 0\\
        0 & 0 & 0 & 0 & 0\\
        0 & 0 & 0 & 0 & 0\\
        0 & 0 & 0 & 0 & 0\\
        0 & 0 & t_1 & t_1 & 0
    \end{array}\right].
\end{gather*}
Destructive interference, i.e. the last two equations in \eqref{eq:cls-ieig-U2-H1},
implies that 
\begin{gather*}
    \vpsi_1 = \left(x_1, -x_2, x_2, -x_2, x_3\right),
    \ \ \vpsi_2 = \left(0, a, b, c, 0\right).
\end{gather*}
 If we insert $\vpsi_1,\vpsi_2$ above into the equations in \eqref{eq:cls-ieig-U2-H1}, we find 
\begin{align*}
\left(\begin{array}{c}
-x_{2}t_{2}+\left(y_{2}+y_{3}\right)t_{1}\\
x_{2}t_{1}+x_{1}t_{2}\\
-2x_{2}t_{1}\\
x_{2}t_{1}+x_{3}t_{2}\\
-x_{2}t_{2}+\left(y_{3}+y_{4}\right)t_{1}
\end{array}\right) & =E_{FB}\left(\begin{array}{c}
x_{1}\\
-x_{2}\\
x_{2}\\
-x_{2}\\
x_{3}
\end{array}\right),\\
\left(\begin{array}{c}
at_{2}\\
\left(b+y_{1}\right)t_{1}\\
\left(a+c+y_{1}+y_{5}\right)\\
\left(b+y_{5}\right)t_{1}\\
ct_{2}
\end{array}\right) & =E_{FB}\left(\begin{array}{c}
0\\
a\\
b\\
c\\
0
\end{array}\right).
\end{align*}
One possible solution of this is 
\[
\begin{aligned}a=c & =0,\\
t_{1} & =-\frac{E_{FB}}{2},\\
t_{2} & =\frac{E_{FB}}{2},\\
x_{1} & =-x,\\
x_{2} & =x,\\
x_{3} & =-x,\\
a & =0,\\
b & =x,\\
c & =0.
\end{aligned}
\]
 This solution gives a FB with energy $E_{FB}$. Thus the final
solution is 
\[
\begin{aligned}\vec{\psi}_{1} & =\left(-x,-x,x,-x,-x\right),\\
\vec{\psi}_{2} & =\left(0,0,x,0,0\right),\\
H_{1} & =\left[\begin{array}{ccccc}
0 & -\frac{E_{FB}}{2} & -\frac{E_{FB}}{2} & 0 & 0\\
0 & 0 & 0 & 0 & 0\\
0 & 0 & 0 & 0 & 0\\
0 & 0 & 0 & 0 & 0\\
0 & 0 & -\frac{E_{FB}}{2} & -\frac{E_{FB}}{2} & 0
\end{array}\right],\\
H_{0} & =\left[\begin{array}{ccccc}
0 & \frac{E_{FB}}{2} & 0 & 0 & 0\\
\frac{E_{FB}}{2} & 0 & -\frac{E_{FB}}{2} & 0 & 0\\
0 & -\frac{E_{FB}}{2} & 0 & -\frac{E_{FB}}{2} & 0\\
0 & 0 & -\frac{E_{FB}}{2} & 0 & \frac{E_{FB}}{2}\\
0 & 0 & 0 & \frac{E_{FB}}{2} & 0
\end{array}\right].
\end{aligned}
\]
 This lattice has a FB with flatband energy $E_{FB}$.

\subsubsection{$U=2$, $\nu=3$ example}
\label{sec:app-imposing-lat-str-u2-nu3}

The connectivity of the network shown in Figure~\ref{fig:u2_nu3_latt_str_example} implies the following hopping matrices:
\begin{align*}
    & H_0 = \left(\begin{array}{ccc}
        0 & t_1 & 0 \\
        t_1 & 0 & t_2 \\
        0 & t_2 & 0 \\
    \end{array}\right), \ \
    H_1 = \left(\begin{array}{ccc}
        s_1 & s_2 & 0 \\
        s_4 & s_5 & s_6 \\
        0 & s_7 & s_8 \\
    \end{array}\right).
\end{align*}
We parameterize the CLS amplitudes as follows: $\vpsi_1 = (x,y,z),\ \vpsi_2 = (a,b,c)$. Then Eq. \eqref{eq:cls-ieig-U2-H1} gives: 
\begin{align*}
    & \left(\begin{array}{c}
        a s_1+b s_2 \\
        a s_4+b s_5+c s_6\\
        b s_7+c s_8
    \end{array}\right)=
    \left(\begin{array}{c}
        x \EFB  - t_1 y\\
        -t_1 x-t_2 z+ y \EFB\\
        z \EFB - t_2 y
    \end{array}\right),\\
    & \left(\begin{array}{c}
        s_1 x+s_4 y \\
        s_2 x+s_5 y+s_7 z \\
        s_6 y+s_8 z
    \end{array}\right) =
    \left(\begin{array}{c}
        a \EFB - b t_1 \\
        -a t_1 + b \EFB - c t_2 \\
        c \EFB - b t_2
    \end{array} \right), \\
    & \left(\begin{array}{c}
        s_1 x+s_2 y \\
        s_4 x+s_5 y+s_6 z \\
        s_7 y+s_8 z
    \end{array}\right) = 
    \left(\begin{array}{c}
        0 \\
        0 \\
        0
    \end{array} \right), \\
    & \left(\begin{array}{c}
        a s_1+b s_4 \\
        a s_2+b s_5+c s_7 \\
        b s_6+c s_8
    \end{array}\right)=
    \left(\begin{array}{c}
        0 \\
        0 \\
        0
    \end{array} \right).
\end{align*}
Here, $H_0$, $\EFB$, and $\vpsi_1$ are free parameters. If we fix $x=1,\ y= 2,\ z=1,\ t_1=1,\ t_2= 2,\ \EFB = 3$, then we find one particular solution of above the equations
\begin{align*}
    & s_1=\frac{2 \sqrt{2}}{3}, \ \ s_2 = -\frac{\sqrt{2}}{3}, \\
    & s_4 = \frac{2 \sqrt{2}}{3}, \ \  s_5 = \frac{\sqrt{2}}{3}, \\
    & s_6 = -\frac{1}{3} \left(4 \sqrt{2}\right), \ \  s_7 = -\frac{\sqrt{2}}{3}, \\
    & s_8 = \frac{2 \sqrt{2}}{3}, \ \ a = \frac{1}{\sqrt{2}}, \\
    & b = -\frac{1}{\sqrt{2}}, \ \ c = -\sqrt{2},
\end{align*}
from which follow the hopping matrices and CLS amplitudes
\begin{align*}
    & H_0 = \left(\begin{array}{ccc}
        0 & 1 & 0 \\
        1 & 0 & 2 \\
        0 & 2 & 0 \\
    \end{array}\right), \ \ 
    H_1 = \left(\begin{array}{ccc}
        \frac{2 \sqrt{2}}{3} & -\frac{\sqrt{2}}{3} & 0 \\
        \frac{2 \sqrt{2}}{3} & \frac{\sqrt{2}}{3} & -\frac{1}{3} \left(4 \sqrt{2}\right) \\
        0 & -\frac{\sqrt{2}}{3} & \frac{2 \sqrt{2}}{3} \\
    \end{array}\right), \\
    & \vpsi_1 = (1, 2, 1),\ \ \vpsi_2 = \left(\frac{1}{\sqrt{2}},-\frac{1}{\sqrt{2}},-\sqrt{2} \right).
\end{align*}

\subsection{ U=3 case}

\subsubsection{$U=3$, $\nu=3$ example}
\label{sec:app-imposing-lat-str-u3-nu3}

We consider the network shown in Fig.~\ref{fig:u3_nu3_latt_str_example}. Its connectivity requires the following hopping matrices:
\begin{gather*}
    H_0 = \left(\begin{array}{ccc}
        0 & t_1 & 0 \\
        t_1 & 0 & t_2 \\
        0 & t_2 & 0 \\
    \end{array}\right), \ \ 
    H_1 =\left(\begin{array}{ccc}
        s_1 & s_1 & 0 \\
        -\frac{s_1}{2} & -\frac{s_1}{2}-s_6 & s_6 \\
        0 & 2 s_6 & -2 s_6 \\
    \end{array}\right).
\end{gather*}
According to destructive interference conditions~\eqref{eq:app-gen-eq-4} and \eqref{eq:app-gen-eq-5}, we paramterize $\vpsi_1,\vpsi_2,\vpsi_3$ as 
\begin{gather*}
    \vpsi_1 = (-y, y, y), \ \ \vpsi_2 = (a,b,c), \ \ \vpsi_3 = (d, 2d, d).
\end{gather*}
Then the main equations~\eqref{eq:cls-ieig-U3-H1} become:
\begin{align*}
    & \left( \begin{array}{c}
        (a+b) s_1 \\
        (c-b) s_6-\frac{1}{2} (a+b) s_1 \\
        2 (b-c) s_6 \\
    \end{array} \right) =
    \left(\begin{array}{c}
        -y \left(\EFB + t_1\right) \\
        y \left(\EFB + t_1-t_2\right) \\
        y \left(\EFB - t_2\right), \\
    \end{array} \right) \\
    & \left(\begin{array}{c}
        \frac{3}{2} (2 d-y) s_1 \\
        (y-d) s_6-\frac{3}{2} (d+y) s_1 \\
        (2 d-y) s_6 \\
    \end{array} \right) =
    \left(\begin{array}{c}
        a \EFB - b t_1 \\
        b \EFB - a t_1-c t_2 \\
        c \EFB - b t_2 \\
    \end{array} \right), \\
    & \left(\begin{array}{c}
        \frac{1}{2} (2 a-b) s_1 \\
        \left(a-\frac{b}{2}\right) s_1-(b-2 c) s_6 \\
        (b-2 c) s_6 \\
    \end{array} \right) =
    \left( \begin{array}{c}
        d \left(\EFB - 2 t_1\right) \\
        d \left(2 \EFB - t_1 - t_2\right) \\
        d \left(\EFB - 2 t_2\right). \\
    \end{array} \right)
\end{align*}
Again, the above system admits many solutions. We pick one with $t_1 = 1,\ t_2 = 2,\ b=\frac{1}{2} $ and get
\begin{align*}
    & a= \frac{1}{80} \left(3 \sqrt{21}+23\right),
    c= \frac{1}{80} \left(\sqrt{21}+41\right), \\
    & d = \frac{1}{40} \left(-7 \sqrt{\frac{3}{2}}-\sqrt{\frac{7}{2}}\right),
    y = \frac{1}{40} \left(\sqrt{\frac{3}{2}}+3 \sqrt{\frac{7}{2}}\right), \\
    & \EFB = \frac{5}{2},
    s_1 =-\frac{\sqrt{\frac{7}{2}}}{3},
    s_6 = -\frac{\sqrt{\frac{3}{2}}}{2}.
\end{align*}

Therefore, the CLS amplitudes and hopping matrices are:  
\begin{align*}
    \vpsi_1 & = \left(\begin{array}{c}
        \frac{1}{40} \left(-\sqrt{\frac{3}{2}}-3 \sqrt{\frac{7}{2}}\right) \\
        \frac{1}{40} \left(\sqrt{\frac{3}{2}}+3 \sqrt{\frac{7}{2}}\right) \\
        \frac{1}{40} \left(\sqrt{\frac{3}{2}}+3 \sqrt{\frac{7}{2}}\right) \\
    \end{array}\right),\\
    \vpsi_2 & = \left(\begin{array}{c}
        \frac{1}{80} \left(3 \sqrt{21}+23\right) \\
        \frac{1}{2} \\
        \frac{1}{80} \left(\sqrt{21}+41\right) \\
    \end{array}\right),\\
    \vpsi_3 & = \left(\begin{array}{c}
        \frac{1}{40} \left(-7 \sqrt{\frac{3}{2}}-\sqrt{\frac{7}{2}}\right) \\
        \frac{1}{20} \left(-7 \sqrt{\frac{3}{2}}-\sqrt{\frac{7}{2}}\right) \\
        \frac{1}{40} \left(-7 \sqrt{\frac{3}{2}}-\sqrt{\frac{7}{2}}\right) \\
    \end{array}\right),\\
    H_1 & = \left(\begin{array}{ccc}
        -\frac{\sqrt{\frac{7}{2}}}{3} & -\frac{\sqrt{\frac{7}{2}}}{3} & 0 \\
        \frac{\sqrt{\frac{7}{2}}}{6} & \frac{\sqrt{\frac{3}{2}}}{2}+\frac{\sqrt{\frac{7}{2}}}{6} & -\frac{\sqrt{\frac{3}{2}}}{2}, \\
        0 & -\sqrt{\frac{3}{2}} & \sqrt{\frac{3}{2}} \\
    \end{array}\right)\\
    H_0 & = \left(\begin{array}{ccc}
        0 & 1 & 0 \\
        1 & 0 & 2 \\
        0 & 2 & 0 \\
    \end{array}\right),
\end{align*}
which give a FB with energy $\EFB=5/2$. Schematics and the band structure of this lattice are shown in Figure \ref{fig:u3_nu3_latt_str_example}.


\chapter{Supplementary materials for the flatband generator in 2D}

\section{Two-band problem}
\label{app:two-band-prob-2d} 

First we consider the $U=(2,1,0)$ case in Fig. \ref{fig:u22-cls-configs} (c) in the main text. In this case, the eigenvalue problem and destructive interference conditions in Eqs. \eqref{eq:u22-gen-eig-prob} read 
\begin{equation}
 \begin{aligned}
  & H_0 \vec{\psi}_1 + H_1 \vec{\psi}_2 = E_{FB} \vec{\psi}_1,\\
  & H_0 \vec{\psi}_2 + H_1^{\dagger} \vec{\psi}_1  = E_{FB} \vec{\psi}_3, \\
  & H_1 \vec{\psi}_1 = 0,\ H_1^{\dagger} \vec{\psi}_2 = 0,  \\
  & H_2 \psi_{s} = 0,\ H_2^{\dagger} \psi_{s} = 0, \ \  1 \le s \le 2.
 \end{aligned}
 \label{ev_prob_nu2_u21}
\end{equation}
We can see that in the $y$ direction $U=1$, therefore we have CLS conditions in the $y$ direction as 
\begin{equation}
 H_2 \vec{\psi}_1 = H_2^{\dagger} \vec{\psi}_1 = 0,\ \ H_2 \vec{\psi}_2 = H_2^{\dagger} \vec{\psi}_2 = 0.
\end{equation}
The above equation implies that $H_2$ has two zero eigenvalues, which in turn implies that $H_2=0$ or 
\begin{equation}
 H_2 = \left(\begin{array}{cc} 
              t &  t\\ 
              -t & -t\\ 
              \end{array}\right),
\end{equation}
having only one non-zero eigenvector $(-x,x)$. Thus, the problem reduces to the $U=1$ class. 

Using the same analysis, the $U=(2,2,2)$ case in Fig. \ref{fig:u22-cls-configs} (d) in the main text also reduces to $U=1$. Redefining the $x,y$ directions by tilting them $45$ degrees, we can transform the problem to a $U=(2,1,0)$ problem.

Now if we consider the $U=(2,2,0)$ case in Fig. \ref{fig:u22-cls-configs} (e), the eigenvalue problem and destructive interference conditions in Eqs. \eqref{eq:u22-gen-eig-prob} are:
\begin{equation}
 \begin{aligned}
  & H_0 \vec{\psi}_1 + H_1 \vec{\psi}_2 + H_2 \vec{\psi}_3 = E_{FB} \vec{\psi}_1,\\
  & H_0 \vec{\psi}_2 + H_1^{\dagger} \vec{\psi}_1 + H_2 \vec{\psi}_4  = E_{FB} \vec{\psi}_2, \\
  & H_0 \vec{\psi}_3 + H_1 \vec{\psi}_4 + H_2^{\dagger} \vec{\psi}_1 = E_{FB} \vec{\psi}_3,\\
  & H_0 \vec{\psi}_4 + H_1^{\dagger} \vec{\psi}_3 + H_2^{\dagger} \vec{\psi}_2 = E_{FB} \vec{\psi}_4,\\
  & H_1 \vec{\psi}_{1} = 0,\ \ H_1 \vec{\psi}_{3} = 0, \\
  & H_1^{\dagger} \vec{\psi}_{2} = 0 ,\ \ H_1^{\dagger} \vec{\psi}_{4} = 0,\\
  & H_2 \vec{\psi}_{1} = 0,\ \ H_2 \vec{\psi}_{2} = 0,\\
  & H_2^{\dagger} \vec{\psi}_{3} = 0, \ \ H_2^{\dagger} \vec{\psi}_{4} = 0.
 \end{aligned}
 \label{ev_prob_nu2_u22}
\end{equation}
As we see from the above equation, the $2\times2$ matrices $H_1,H_2$ have two zero modes, either implying $H_1=H_2=0$ or reducing to the
$U=1$ case, according to the previous subsection. 

Similarly, it can also be shown that for the  $U=(2,2,1$) in Fig. \ref{fig:u22-cls-configs} (b), the only possibility is either $H_1=H_2=0$ or a reduction to $U=1$.

\section{More than two bands with nearest neighbor hoppings}

\subsection{U=(2,1,0) case} 
\label{app:nn-u21}

Putting the values $U_2=1,\ s=0$ to Eq. \eqref{eq:u22-gen-eig-prob}, we get the eigenvalue problem and destructive interference conditions in this case as follows 
\begin{equation}
    \begin{aligned}
       H_1 \vert \psi_2 \rangle & = (E_{FB} - H_0) \vert \psi_1 \rangle, \\ 
       \langle \psi_1 \vert H_1 &= \langle \psi_1 \vert (E_{FB} - H_0), \\ 
       H_1 \vert \psi_1 \rangle &= 0 , \\
       \langle \psi_2 \vert H_1 &= 0, \\
       H_2 \vert \psi_1 \rangle &= 0, \\
       H_2 \vert \psi_2 \rangle &= 0, \\
       \langle \psi_1 \vert H_2 &=0, \\
       \langle \psi_2 \vert H_2 &=0. \
    \end{aligned}
    \label{eq:u21-eig-prob-app}
\end{equation} 
Using destructing interference \eqref{eq:u21-eig-prob-app}, we eliminate $H_1, H_2$ from the eigenvalue problem to get the CLS conditions as below 
\begin{equation}
    \begin{aligned}
        \langle \psi_2 \vert (E_{FB} - H_0) \vert \psi_1 \rangle &= 0, \\
        \langle \psi_1 \vert (E_{FB} - H_0) \vert \psi_1 \rangle &= \langle \psi_2 \vert (E_{FB} - H_0) \vert \psi_2 \rangle. \
    \end{aligned}
    \label{eq:u21-cls-cons-app}
\end{equation} 
The destructive interference conditions suggests following form of $H_1, H_2$: 
\begin{equation}
    H_1 = Q_2 \vert u \rangle \langle v \vert Q_1, \quad H_2 = Q_{12} M Q_{12} \; ,
    \label{eq:u21-hx-hy-def-app}
\end{equation}
where $\vert u \rangle, \vert v \rangle$ are arbitrary vectors, $M$ is an arbitrary matrix, $Q_i$ is a transverse projector on $\vert \psi_i \rangle$, and $Q_{12}$ is transverse projector on $\vert \psi_{i=1,2} \rangle$, i.e. $Q_{12} \vert \psi_{i=1,2} \rangle = 0$ (see Appendix \ref{app:inv-egv-u2}).
Then the eigenvalue problem \eqref{eq:u21-eig-prob-app} becomes 
\begin{equation*}
    \begin{aligned}
       Q_2 \vert u \rangle \langle v \vert Q_1 \vert \psi_2 \rangle &= (E_{FB} - H_0) \vert \psi_1 \rangle, \\
        \langle \psi_1 \vert Q_2 \vert u \rangle \langle v \vert Q_1 &= \langle \psi_2 \vert (E_{FB} - H_0) \; ,
    \end{aligned}
\end{equation*} 
which gives 
\begin{equation*}
    Q_2 \vert u \rangle = \frac{1}{\langle v \vert Q_1 \vert \psi_2 \rangle} (E_{FB} - H_0) \vert \psi_1 \rangle , \quad \langle v \vert Q_1 = \frac{\langle v \vert Q_1 \vert \psi_2 \rangle}{\langle \psi_1 \vert (E_{FB} - H_0) \vert \psi_1 \rangle} \langle \psi_2 \vert (E_{FB} - H_0) \; .
\end{equation*} 
Putting this into \eqref{eq:u21-hx-hy-def-app} we get the solution 
\begin{equation}
    \begin{aligned}
       H_1 &= \frac{(E_{FB} - H_0)\vert \psi_1 \rangle \langle \psi_2 \vert (E_{FB} - H_0)}{\langle \psi_1 \vert (E_{FB} - H_0)\vert \psi_1 \rangle}, \\
       H_2 &= Q_{12} M Q_{12}.
    \end{aligned}
\end{equation} 

\subsection{U=(2,2,1) case}
\label{app:nn-u221}

Putting the values $U_2=2,\ s=1$ to Eq. \eqref{eq:u22-gen-eig-prob}, we get the eigenvalue problem and destructive interference conditions in this case as follows
\begin{equation}
 \begin{aligned}
    & H_1 \Vec{\psi}_2 + H_2 \vec{\psi}_3 = (E_{FB} -H_0) \vec{\psi}_2, \\
    & H_1^{\dagger} \vec{\psi}_1 = (E_{FB} - H_0) \Vec{\psi}_2, \\
    & H_2^{\dagger} \vec{\psi}_1 = (E_{FB} - H_0) \vec{\psi}_3, \\
    & H_1 \vec{\psi}_1 = 0, \quad H_1^{\dagger} \Vec{\psi}_2 = 0, \\
    & H_1 \vec{\psi}_3 =0, \quad H_2 \vec{\psi}_1 = 0, \\
    & H_2 \vec{\psi}_2 = 0, \quad H_2^{\dagger} \vec{\psi}_3 = 0, \\
    & H_1^\dagger \vec{\psi}_3 + H_2^{\dagger} \Vec{\psi}_2 = 0.
 \end{aligned}  
 \label{eq:lshape-gen-eig-prob-app}
\end{equation} 

Using the destructive interference conditions in \eqref{eq:lshape-gen-eig-prob-app}, we can eliminate $H_1,H_2$ and obtain the CLS constraints:
\begin{equation}
  \begin{aligned}
    & \langle \vec{\psi}_2 \vert H_0 \vert \vec{\psi}_1 \rangle = E_{FB} \langle \vec{\psi}_2 \vert \vec{\psi}_1 \rangle, \\
    & \langle \vec{\psi}_3 \vert H_0 \vert \vec{\psi}_1 \rangle = E_{FB} \langle \vec{\psi}_3 \vert \vec{\psi}_1 \rangle, \\
    &  \langle \vec{\psi}_3 \vert H_0 \vert \vec{\psi}_2 \rangle = E_{FB} \langle \vec{\psi}_3 \vert \vec{\psi}_2 \rangle, \\
    & \langle \vec{\psi}_2 \vert \left( E_{FB} - H_0 \right) \vert \vec{\psi}_2 \rangle + \langle \vec{\psi}_3 \vert \left( E_{FB} - H_0 \right) \vert \vec{\psi}_3 \rangle = \langle \vec{\psi}_1 \vert \left( E_{FB} - H_0 \right) \vert \vec{\psi}_1 \rangle.
  \end{aligned}
  \label{eq:non-lin-const-L-shape-app}
\end{equation} 
Here, bra/ket notations are equivalent to the corresponding vector, i.e. $\vec{\psi}_{i}=\vert \psi_{i} \rangle$.

We can now define
\begin{equation}
    H_2 \vert \psi_3 \rangle = \vert y \rangle = (E_{FB} -H_0) \vert \psi_1 \rangle - H_1 \vert \psi_2 \rangle, \quad \langle \psi_3 \vert H_1 = - \langle \psi_2 \vert H_2 = \langle z \vert. 
    \label{eq:y-z-def-app}
\end{equation}
Using destructive interference conditions (and by multiplying $\vert \psi_i \rangle, i=1,2,3$ to the above equations from left or right) we can get 
\begin{equation}
    \vert y \rangle \rightarrow Q_{2,3} \vert y \rangle, \langle z \vert \rightarrow \langle z \vert Q_{1,2,3}.
    \label{eq:y_z_vectors-app}
\end{equation} 

For $\nu=3$, the only possibility to satisfy the second equation above for $\langle z\vert Q_{1,2,3}\ne0$ is that $\vert\psi_{1}\rangle,\vert\psi_{2}\rangle,\vert\psi_{3}\rangle$ are linearly dependent, 
\begin{equation}
    c_{1}\vert\psi_{1}\rangle+c_{2}\vert\psi_{2}\rangle+c_{3}\vert\psi_{3}\rangle=0.
\end{equation} 
As long as all the vectors are not proportional, the CLS is not reducible to the U=1 class. For example, the Lieb lattice has $\psi_{1}=\psi_{2}+\psi_{3}$ and it is a $U=(2,2,1)$ class. 

\subsubsection{$\nu>3$ case}

\paragraph{{\bf Resolving $H_1$:}}

Suppose $H_1$ can be written as the sum of two projectors:
\begin{equation}
    H_1 = \vert a \rangle \langle b \vert + \vert c \rangle \langle d \vert. 
\end{equation} 
Then using the destructive interference conditions in \eqref{eq:lshape-gen-eig-prob-app}, we have 
\begin{equation}
    H_1 = Q_2 ( \vert a \rangle \langle b \vert + \vert c \rangle \langle d \vert ) Q_{1,3}.
    \label{eq:Hx_lshape_proj}
\end{equation} 
Putting \eqref{eq:Hx_lshape_proj} and \eqref{eq:y_z_vectors-app} into \eqref{eq:lshape-gen-eig-prob-app}, we have 
\begin{align}
    Q_2 \vert a \rangle \langle b \vert Q_{1,3} \vert \psi_2 \rangle + Q_2 \vert c \rangle \langle d \vert Q_{1,3} \vert \psi_2 \rangle &= (E_{FB} - H_0) \vert \psi_1 \rangle - Q_{2,3} \vert y \rangle, \label{eq:hx-eq-1} \\
    \langle \psi_1 \vert Q_2 \vert a \rangle \langle b \vert Q_{1,3} + \langle \psi_1 \vert Q_2 \vert c \rangle \langle d \vert  Q_{1,3} &= \langle \psi_2 \vert (E_{FB} - H_0), \label{eq:hx-eq-2} \\ 
    \langle \psi_3 \vert Q_2 \vert a \rangle \langle b \vert Q_{1,3} + \langle \psi_3 \vert Q_2 \vert c \rangle \langle d \vert  Q_{1,3} &= \langle z \vert Q_{1,2,3}. \label{eq:hx-eq-3}
\end{align} 
In Eqs. (\ref{eq:hx-eq-1}--\ref{eq:hx-eq-3}), there are three equations and four vectors. We can therefore freely choose one of the vectors. Suppose $\vert a \rangle$ is free a variable, and then 
\begin{equation}
    Q_{2}\vert c\rangle=\frac{1}{\langle d\vert Q_{1,3}\vert\psi_{2}\rangle}\left(\left(E_{FB}-H_{0}\right)\vert\psi_{1}\rangle-Q_{2,3}\vert y\rangle-Q_{2}\vert a\rangle\langle b\vert Q_{1,3}\vert\psi_{2}\rangle\right).
\end{equation} 
Inserting this into \eqref{eq:hx-eq-2} and \eqref{eq:hx-eq-3} we have
\begin{equation}
    \begin{aligned}           
        \langle b\vert Q_{1,3} &= \frac{\langle b\vert Q_{1,3}\vert\psi_{2}\rangle}{\langle\psi_{1}\vert\left(E_{FB}-H_{0}\right)\vert\psi_{1}\rangle-\langle\psi_{1}\vert Q_{2,3}\vert y\rangle}\langle\psi_{2}\vert\left(E_{FB}-H_{0}\right) \\ & +\frac{\langle\psi_{1}\vert\left(E_{FB}-H_{0}\right)\vert\psi_{1}\rangle-\langle\psi_{1}\vert Q_{2,3}\vert y\rangle-\langle\psi_{1}\vert Q_{2}\vert a\rangle\langle b\vert Q_{1,3}\vert\psi_{2}\rangle}{\left(\langle\psi_{1}\vert\left(E_{FB}-H_{0}\right)\vert\psi_{1}\rangle-\langle\psi_{1}\vert Q_{2,3}\vert y\rangle\right)\langle\psi_{3}\vert Q_{2}\vert a\rangle}\langle z\vert Q_{1,2,3},\\
        \langle d\vert Q_{1,3} &= \frac{\langle d\vert Q_{1,3}\vert\psi_{2}\rangle}{\langle\psi_{1}\vert\left(E_{FB}-H_{0}\right)\vert\psi_{1}\rangle-\langle\psi_{1}\vert Q_{2,3}\vert y\rangle}\langle \psi_{2} \vert \left(E_{FB}-H_{0}\right) \\ & -\frac{\langle\psi_{1}\vert Q_{2}\vert a\rangle\langle d\vert Q_{1,3}\vert\psi_{2}\rangle}{\left(\langle\psi_{1}\vert\left(E_{FB}-H_{0}\right)\vert\psi_{1}\rangle-\langle\psi_{1}\vert Q_{2,3}\vert y\rangle\right)\langle\psi_{3}\vert Q_{2}\vert a\rangle}\langle z\vert Q_{1,2,3}.
    \end{aligned}
\end{equation}
Putting these solutions into \eqref{eq:Hx_lshape_proj} gives
 \begin{equation}
     \begin{aligned}
        H_1 &=\frac{1}{\langle\psi_{3}\vert Q_{2}\vert a\rangle}Q_{2}\vert a\rangle\langle z\vert Q_{1,2,3}\\&+\frac{1}{\langle\psi_{1}\vert\left(E_{FB}-H_{0}\right)\vert\psi_{1}\rangle-\langle\psi_{1}\vert Q_{2,3}\vert y\rangle}\left(E_{FB}-H_{0}\right)\vert\psi_{1}\rangle\langle\psi_{2}\vert\left(E_{FB}-H_{0}\right)\\&-\frac{\langle\psi_{3}\vert Q_{2}\vert a\rangle-\langle\psi_{1}\vert Q_{2}\vert a\rangle}{\left(\langle\psi_{1}\vert\left(E_{FB}-H_{0}\right)\vert\psi_{1}\rangle-\langle\psi_{1}\vert Q_{2,3}\vert y\rangle\right)\langle\psi_{3}\vert Q_{2}\vert a\rangle}Q_{2,3}\vert y\rangle\langle\psi_{2}\vert\left(E_{FB}-H_{0}\right)\\&-\frac{\langle\psi_{1}\vert Q_{2}\vert a\rangle}{\left(\langle\psi_{1}\vert\left(E_{FB}-H_{0}\right)\vert\psi_{1}\rangle-\langle\psi_{1}\vert Q_{2,3}\vert y\rangle\right)\langle\psi_{3}\vert Q_{2}\vert a\rangle}\left(E_{FB}-H_{0}\right)\vert\psi_{1}\rangle\langle z\vert Q_{1,2,3}.
     \end{aligned}
     \label{eq:hx-sol-1}
 \end{equation} 
 This solution must satisfy \eqref{eq:hx-eq-1}, which requires 
 \begin{equation}
     \begin{aligned}
        Q_{2,3}\vert y\rangle & \rightarrow\frac{\langle\psi_{3}\vert\left(E_{FB}-H_{0}\right)\vert\psi_{3}\rangle}{\langle\psi_{1}\vert Q_{2,3}\vert y\rangle}Q_{2,3}\vert y\rangle,\\ 
        Q_{2}\vert a\rangle&\rightarrow Q_{1,2}\vert a\rangle,
     \end{aligned}
 \end{equation} 
 and putting into \eqref{eq:hx-sol-1} yields
 \begin{equation}
     \begin{aligned}
         H_1&=\frac{1}{\langle\psi_{3}\vert Q_{1,2}\vert a\rangle}Q_{1,2}\vert a\rangle\langle z\vert Q_{1,2,3} +\frac{\left(E_{FB}-H_{0}\right)\vert\psi_{1}\rangle \langle\psi_{2}\vert\left(E_{FB}-H_{0}\right)}{\langle\psi_{2}\vert\left(E_{FB}-H_{0}\right)\vert\psi_{2}\rangle}\\ 
         & -\frac{\langle\psi_{3}\vert\left(E_{FB}-H_{0}\right)\vert\psi_{3}\rangle}{\langle\psi_{2}\vert\left(E_{FB}-H_{0}\right)\vert\psi_{2}\rangle \langle\psi_{1}\vert Q_{2,3}\vert y\rangle}Q_{2,3}\vert y\rangle \langle\psi_{2}\vert\left(E_{FB}-H_{0}\right).
     \end{aligned}
 \end{equation}

\paragraph{{\bf Resolving $H_2$:}} 

Suppose $H_2$ has the following form: 
\begin{equation}
    H_2=Q_{3}\left(\vert u\rangle\langle v\vert+\vert w\rangle\langle x\vert\right)Q_{1,2}. 
    \label{eq:u221-hy}
\end{equation} 
The inverse eigenvalue problem for this $H_2$ is 
\begin{eqnarray}
    Q_{3}\vert u\rangle\langle v\vert Q_{1,2}\vert\psi_{3}\rangle+Q_{3}\vert w\rangle\langle x\vert Q_{1,2}\vert\psi_{3}\rangle &=& \frac{\langle\psi_{3}\vert\left(E_{FB}-H_{0}\right)\vert\psi_{3}\rangle}{\langle\psi_{1}\vert Q_{2,3}\vert y\rangle}Q_{2,3}\vert y\rangle, \label{eq:hy-1} \\ 
    \langle\psi_{1}\vert Q_{3}\vert u\rangle\langle v\vert Q_{1,2}+\langle\psi_{1}\vert Q_{3}\vert w\rangle\langle x\vert Q_{1,2} &=& \langle\psi_{3}\vert(E_{FB}-H_{0}), \label{eq:hy-2} \\ 
    \langle\psi_{2}\vert Q_{3}\vert u\rangle\langle v\vert Q_{1,2}+\langle\psi_{2}\vert Q_{3}\vert w\rangle\langle x\vert Q_{1,2} &=& -\langle z\vert Q_{1,2,3}. \label{eq:hy-3}
\end{eqnarray} 
Suppose then that $\vert u \rangle$ is a free parameter, so from \eqref{eq:hy-1} we have
\begin{equation}
    Q_{3}\vert w\rangle=\frac{\langle\psi_{3}\vert\left(E_{FB}-H_{0}\right)\vert\psi_{3}\rangle}{\langle\psi_{1}\vert Q_{2,3}\vert y\rangle\langle x\vert Q_{1,2}\vert\psi_{3}\rangle}Q_{2,3}\vert y\rangle-\frac{\langle v\vert Q_{1,2}\vert\psi_{3}\rangle}{\langle x\vert Q_{1,2}\vert\psi_{3}\rangle}Q_{3}\vert u\rangle.
\end{equation} 
Putting this into \eqref{eq:hy-2} and \eqref{eq:hy-3} gives
\begin{equation}
    \begin{aligned}
        \langle x\vert Q_{1,2} &=\frac{\langle x\vert Q_{1,2}\vert\psi_{3}\rangle}{\langle\psi_{3}\vert\left(E_{FB}-H_{0}\right)\vert\psi_{3}\rangle}\langle\psi_{3}\vert(E_{FB}-H_{0})\\ &+\frac{\langle x\vert Q_{1,2}\vert\psi_{3}\rangle\langle\psi_{1}\vert Q_{3}\vert u\rangle}{\langle\psi_{3}\vert\left(E_{FB}-H_{0}\right)\vert\psi_{3}\rangle\langle\psi_{2}\vert Q_{3}\vert u\rangle}\langle z\vert Q_{1,2,3}, \\
        \langle v\vert Q_{1,2} &=\frac{\langle v\vert Q_{1,2}\vert\psi_{3}\rangle}{\langle\psi_{3}\vert\left(E_{FB}-H_{0}\right)\vert\psi_{3}\rangle}\langle\psi_{3}\vert(E_{FB}-H_{0})\\ & +\frac{\langle v\vert Q_{1,2}\vert\psi_{3}\rangle\langle\psi_{1}\vert Q_{3}\vert u\rangle-\langle\psi_{3}\vert\left(E_{FB}-H_{0}\right)\vert\psi_{3}\rangle}{\langle\psi_{3}\vert\left(E_{FB}-H_{0}\right)\vert\psi_{3}\rangle\langle\psi_{2}\vert Q_{3}\vert u\rangle}\langle z\vert Q_{1,2,3}.
    \end{aligned}
\end{equation} 
 Putting the above solutions for $\vert v\rangle,\ \vert w\rangle,\ \vert x\rangle$ into \eqref{eq:u221-hy} we have
 \begin{equation}
 \begin{aligned}
     H_2 &= \frac{1}{\langle\psi_{1}\vert Q_{2,3}\vert y\rangle}Q_{2,3}\vert y\rangle\langle\psi_{3}\vert(E_{FB}-H_{0})\\ 
     &+\frac{1}{\langle\psi_{2}\vert Q_{3}\vert u\rangle}\left(\frac{\langle\psi_{1}\vert Q_{3}\vert u\rangle}{\langle\psi_{1}\vert Q_{2,3}\vert y\rangle}Q_{2,3}\vert y\rangle-Q_{3}\vert u\rangle\right)\langle z\vert Q_{1,2,3}.
\end{aligned}
 \end{equation} 
 
 Therefore, the solution for the U=(2,2,1), $\nu>3$ case is 
 \begin{equation}
     \begin{aligned}
         H_1&=\frac{1}{\langle\psi_{3}\vert Q_{1,2}\vert a\rangle}Q_{1,2}\vert a\rangle\langle z\vert Q_{1,2,3} +\frac{\left(E_{FB}-H_{0}\right)\vert\psi_{1}\rangle \langle\psi_{2}\vert\left(E_{FB}-H_{0}\right)}{\langle\psi_{2}\vert\left(E_{FB}-H_{0}\right)\vert\psi_{2}\rangle}\\ 
         & -\frac{\langle\psi_{3}\vert\left(E_{FB}-H_{0}\right)\vert\psi_{3}\rangle}{\langle\psi_{2}\vert\left(E_{FB}-H_{0}\right)\vert\psi_{2}\rangle \langle\psi_{1}\vert Q_{2,3}\vert y\rangle}Q_{2,3}\vert y\rangle \langle\psi_{2}\vert\left(E_{FB}-H_{0}\right),\\
         H_2&=\frac{1}{\langle\psi_{1}\vert Q_{2,3}\vert y\rangle}Q_{2,3}\vert y\rangle\langle\psi_{3}\vert(E_{FB}-H_{0}) \\ & +\frac{1}{\langle\psi_{2}\vert Q_{3}\vert u\rangle}\left(\frac{\langle\psi_{1}\vert Q_{3}\vert u\rangle}{\langle\psi_{1}\vert Q_{2,3}\vert y\rangle}Q_{2,3}\vert y\rangle-Q_{3}\vert u\rangle\right)\langle z\vert Q_{1,2,3},
     \end{aligned}
     \label{eq:sol-u221-hxy}
 \end{equation} 
 where $E_{FB},\ H_0,\ \vert \psi_i \rangle, i=1,2,3$ are chosen respecting the constraints  \eqref{eq:non-lin-const-L-shape}, and $\vert y \rangle, \vert z \rangle, \vert a \rangle, \vert u \rangle$ are free parameters.  
 
 When $\vert z \rangle = 0$ and $Q_{2,3} \vert y \rangle = (E_{FB} - H_0 ) \vert \psi_1 \rangle$, the solution reduces to that of the special case in Appendix \ref{app:u221-special}.

 \subsubsection{Three-band case} 
 
 In the case of three bands, $Q_{123}=0$, the only possibility that $Q_{1,2,3}\ne0$ thus $Q_{1,2,3}\vert z\rangle\ne0$ is 
\[
c_{1}\vert\psi_{1}\rangle+c_{2}\vert\psi_{2}\rangle+c_{3}\vert\psi_{3}\rangle=0.
\]
 Also, since
\[
Q_{12}\vert\psi_{3}\rangle=Q_{13}\vert\psi_{2}\rangle=Q_{23}\vert\psi_{1}\rangle=0,
\]
any $Q_{12},Q_{13},Q_{23}$ can be used as
$Q_{123}$. We use $Q_{23}$ for $Q_{123}$, and then, for given the inverse eigenvalue problem, suppose 
\[
\vert\psi_{1}\rangle=\alpha\vert\psi_{2}\rangle+\beta\vert\psi_{3}\rangle.
\]
The CLS constraints become 
\begin{equation}
    \begin{aligned}
        \langle\psi_{2}\vert\left(E_{FB}-H_{0}\right)\vert\psi_{2}\rangle&=0,\\\langle\psi_{3}\vert\left(E_{FB}-H_{0}\right)\vert\psi_{3}\rangle&=0,\\\langle\psi_{3}\vert\left(E_{FB}-H_{0}\right)\vert\psi_{2}\rangle&=0,\\\left(E_{FB}-H_{0}\right)\left(\alpha\vert\psi_{2}\rangle+\beta\vert\psi_{3}\rangle\right)&=0.
    \end{aligned}
    \label{eq:u221-cls-cons-3band}
\end{equation}
 Then, for the CLS, $E_{FB}$, and $H_{0}$, satisfying the above constraints,
the eigenvalue problem and destructive interference conditions become
\begin{equation}
\begin{aligned}
\beta\langle\psi_{3}\vert H_1&=\langle\psi_{2}\vert\left(E_{FB}-H_{0}\right),\\\alpha\langle\psi_{2}\vert H_2&=\langle\psi_{3}\vert\left(E_{FB}-H_{0}\right),\\H_1\vert\psi_{2}\rangle&=H_1\vert\psi_{3}\rangle=0,\\H_2\vert\psi_{2}\rangle&=H_2\vert\psi_{3}\rangle=0,\\\langle\psi_{2}\vert H_1&=0,\\\langle\psi_{3}\vert H_2&=0.
\end{aligned}
\label{eq:u221-eig-prob-3band}
\end{equation}

\paragraph{Resolving $H_1$:} The inverse eigenvalue problem for $H_1$ is 
\[
\begin{aligned}\langle\psi_{3}\vert H_1 & =\langle\psi_{2}\vert\left(E_{FB}-H_{0}\right).\end{aligned}
\]
 Suppose $H_1$ is a projector 
\[
H_1=Q_{2}\vert a\rangle\langle b\vert Q_{23},
\]
 and then the inverse eigenvalue problem for $H_1$ becomes
\[
\begin{aligned}
\langle b\vert Q_{23} & =\frac{1}{\langle\psi_{3}\vert Q_{2}\vert a\rangle}\langle\psi_{2}\vert\left(E_{FB}-H_{0}\right),
\end{aligned}
\]
 which gives 
\[
H_1=\frac{Q_{2}\vert a\rangle\langle\psi_{2}\vert\left(E_{FB}-H_{0}\right)}{\langle\psi_{3}\vert Q_{2}\vert a\rangle},
\]
 where $\vert a\rangle$ is fixed freely (i.e. a free parameter) and
not proportional to $\vert \psi_{2} \rangle$. 

Similarly, we can resolve $H_2$ as
\[
H_2=\frac{Q_{3}\vert c\rangle\langle\psi_{3}\vert\left(E_{FB}-H_{0}\right)}{\langle\psi_{2}\vert Q_{3}\vert c\rangle},
\]
 where as above $\vert c\rangle$ is fixed freely and not proportional to $\vert\psi_{2}\rangle$. 

Therefore, the final solution of the U=(2,2,1) three-band problem is 
\begin{equation}
    \begin{aligned}
        H_1 &=\frac{Q_{2}\vert a\rangle\langle\psi_{2}\vert\left(E_{FB}-H_{0}\right)}{\langle\psi_{3}\vert Q_{2}\vert a\rangle}, \\
        H_2 &= \frac{Q_{3}\vert c\rangle\langle\psi_{3}\vert\left(E_{FB}-H_{0}\right)}{\langle\psi_{2}\vert Q_{3}\vert c\rangle},
    \end{aligned}
    \label{eq:u221-hxy-sol-3band}
\end{equation} 
where $E_{FB},\ H_0,\ \vert \psi_1 \rangle,\ \vert \psi_2 \rangle $ are chosen respecting the constraints \eqref{eq:u221-cls-cons-3band} and $\vert a \rangle,  \vert c \rangle$ are free parameters.
 
 \subsection{Special case for U=(2,2,1) with nearest neighbor hoppings} 
 \label{app:u221-special}

In this section we consider the case when $\vert z \rangle = 0$. Then the eigenvalue problem \eqref{eq:lshape-gen-eig-prob-app} reads
\begin{equation}
 \begin{aligned}
    & H_1 \Vec{\psi}_2 + H_2 \vec{\psi}_3 = (E_{FB} -H_0) \vec{\psi}_1, \\
    & H_1^{\dagger} \vec{\psi}_1 = (E_{FB} - H_0) \Vec{\psi}_2, \\
    & H_2^{\dagger} \vec{\psi}_1 = (E_{FB} - H_0) \vec{\psi}_3, \\
    & H_1 \vec{\psi}_1 = 0, \quad H_1^{\dagger} \Vec{\psi}_2 = 0, \\
    & H_1 \vec{\psi}_3 = H_1^{\dagger} \vec{\psi}_3 = 0, \\
    & H_2 \vec{\psi}_1 = 0, \quad H_2^{\dagger} \vec{\psi}_3 = 0, \\
    & H_2 \Vec{\psi}_2 = H_2^{\dagger} \Vec{\psi}_2 = 0.
 \end{aligned}  
 \label{eq:lshap-eig-prob}
\end{equation} 
The CLS constraints here are the same with \eqref{eq:non-lin-const-L-shape}.
Note the bra/ket notations are equivalent to the corresponding vector, i.e. $\vec{\psi}_{i,j}=\vert \psi_{i,j} \rangle$.

Assume that a CLS satisfying \eqref{eq:non-lin-const-L-shape} is given, and then we solve \eqref{eq:lshap-eig-prob}. We write the $H_1,H_2$ as single projector 
\begin{equation}
    H_1 = \vert x \rangle \langle y \vert, \ \ H_2 = \vert v \rangle \langle w \vert. 
\end{equation} 
In order to satisfy the destructive interference conditions we introduce the following operators: 
\begin{equation}
    \begin{aligned}
        Q_{23} & = I - \frac{Q_3 \vert \psi_2 \rangle \langle \psi_2 \vert}{\langle \psi_2 \vert Q_3 \vert \psi_2 \rangle} - \frac{Q_2 \vert \psi_3 \rangle \langle \psi_3 \vert}{\langle \psi_3 \vert Q_2 \vert \psi_3 \rangle}, \\
        Q_{13} & = I - \frac{\vert \psi_1 \rangle \langle \psi_1 \vert Q_3}{\langle \psi_1 \vert Q_3 \vert \psi_1 \rangle} - \frac{ \vert \psi_3 \rangle \langle \psi_3 \vert Q_1}{\langle \psi_3 \vert Q_1 \vert \psi_3 \rangle}, \\
        Q_{12} & = I - \frac{\vert \psi_1 \rangle \langle \psi_1 \vert Q_2}{\langle \psi_1 \vert Q_2 \vert \psi_1 \rangle} - \frac{ \vert \psi_2 \rangle \langle \psi_2\vert Q_1}{\langle \psi_2 \vert Q_1 \vert \psi_2 \rangle}, \\
        Q_i & = I - \frac{\vert \psi_i \rangle \langle \psi_i \vert}{\langle \psi_i \vert \psi_i \rangle}.
    \end{aligned}
\end{equation} 

Then we can write $H_1,H_2$ as 
\begin{equation}
   H_1 = Q_{23} \vert x \rangle \langle y \vert Q_{13},\ \ H_2 = Q_{23}\vert v \rangle \langle w \vert Q_{12}, 
   \label{eq:hx-hy-proj}
\end{equation} 
and putting \eqref{eq:hx-hy-proj} into \eqref{eq:lshap-eig-prob} we get 
\begin{equation}
    \begin{aligned}
       & Q_{23} \vert x \rangle \langle y \vert Q_{13} \vert \psi_2 \rangle +  Q_{23}\vert v \rangle \langle w \vert Q_{12} \vert \psi_3 \rangle = ( E_{FB} - H_0 ) \vert \psi_1 \rangle, \\
       & Q_{13}^{\dagger} \vert y \rangle \langle x \vert Q_{23}^{\dagger} \vert \psi_1 \rangle = ( E_{FB} - H_0 ) \vert \psi_2 \rangle, \\
       & Q_{12}^{\dagger} \vert w \rangle \langle v \vert Q_{23}^{\dagger} \vert \psi_1 \rangle = ( E_{FB} - H_0 ) \vert \psi_3 \rangle.
    \end{aligned}
    \label{eq:lshape-eige-proj}
\end{equation} 
In \eqref{eq:lshape-eige-proj}, there are three equations and four vectors, $\vert x \rangle, \vert y \rangle, \vert v \rangle, \vert w \rangle$, so the vectors can be chosen freely. We choose $\vert v \rangle = \vert x \rangle$, and then 
    \begin{align}
       & Q_{23} \vert x \rangle (\langle y \vert Q_{13} \vert \psi_2 \rangle +  \langle w \vert Q_{12} \vert \psi_3 \rangle) = ( E_{FB} - H_0 ) \vert \psi_1 \rangle, \label{eq:lshape-eige-proj-1}\\
       & Q_{13}^{\dagger} \vert y \rangle  = \frac{( E_{FB} - H_0 ) \vert \psi_2 \rangle}{\langle x \vert Q_{23}^{\dagger} \vert \psi_1 \rangle}, \label{eq:lshape-eige-proj-2}\\
       & Q_{12}^{\dagger} \vert w \rangle = \frac{( E_{FB} - H_0 ) \vert \psi_3 \rangle}{\langle v \vert Q_{23}^{\dagger} \vert \psi_1 \rangle}. \label{eq:lshape-eige-proj-3} 
    \end{align} 
Now putting \eqref{eq:lshape-eige-proj-2} and \eqref{eq:lshape-eige-proj-3} into \eqref{eq:lshape-eige-proj-1}, and using the last identity in the compatibility constraints \eqref{eq:non-lin-const-L-shape}, we get 
\begin{equation}
    Q_{23} \vert x \rangle = \frac{\langle \psi_1 \vert Q_{23} \vert x \rangle}{\langle \psi_1 \vert (E_{FB} - H_0 ) \vert \psi_1 \rangle} (E_{FB} - H_0 ) \vert \psi_1 \rangle. 
    \label{eq:proj-sol-x}
\end{equation} 
Putting \eqref{eq:lshape-eige-proj-2}, \eqref{eq:lshape-eige-proj-3}, and \eqref{eq:proj-sol-x} into \eqref{eq:hx-hy-proj}, we get 
\begin{equation}
    \begin{aligned}
        H_1 & = \frac{(E_{FB} - H_0 ) \vert \psi_1 \rangle \langle \psi_2 \vert ( E_{FB} - H_0 )}{\langle \psi_1 \vert (E_{FB} - H_0 ) \vert \psi_1 \rangle}, \\
        H_2 & = \frac{(E_{FB} - H_0 ) \vert \psi_1 \rangle \langle \psi_3 \vert ( E_{FB} - H_0 )}{\langle \psi_1 \vert (E_{FB} - H_0 ) \vert \psi_1 \rangle}.
    \end{aligned}
\end{equation} 
Note that this solution is not limited to three bands. If we define a transverse operator 
\begin{equation}
    Q_{123} = I - \frac{\vert \psi_1 \rangle \langle \psi_1 \vert Q_{23}^{\dagger}}{\langle \psi_1 \vert Q_{23}^{\dagger} \vert \psi_1 \rangle} - \frac{\vert \psi_2 \rangle \langle \psi_2 \vert Q_{13}}{\langle \psi_2 \vert Q_{13} \vert \psi_2 \rangle} - \frac{\vert \psi_3 \rangle \langle \psi_3 \vert Q_{12}}{\langle \psi_3 \vert Q_{12} \vert \psi_3 \rangle},
\end{equation} 
then we can write 
\begin{equation}
    \begin{aligned}
        H_1 & = \frac{(E_{FB} - H_0 ) \vert \psi_1 \rangle \langle \psi_2 \vert ( E_{FB} - H_0 )}{\langle \psi_1 \vert (E_{FB} - H_0 ) \vert \psi_1 \rangle} + Q_{23} K Q_{123}, \\
        H_2 & = \frac{(E_{FB} - H_0 ) \vert \psi_1 \rangle \langle \psi_3 \vert ( E_{FB} - H_0 )}{\langle \psi_1 \vert (E_{FB} - H_0 ) \vert \psi_1 \rangle} + Q_{23} M Q_{123},
    \end{aligned}
\end{equation} 
where $K, M$ are arbitrary $\nu \times \nu$ matrices. Because of operators $Q_{23},\ Q{123}$, the extra $K, M$ terms do not effect the eigenvalue problem \eqref{eq:lshap-eig-prob}. Note that when $\nu=3$, the operator $Q_{123} =0$; therefore, the extra term vanishes for $\nu=3$.

\subsection{U=(2,2,0) case with three bands}
\label{app:nn-u220-nu3} 

Putting the values $U_2=2,\ s=0$ to Eq. \eqref{eq:u22-gen-eig-prob}, we get the eigenvalue problem and destructive interference conditions in this case as follows
\begin{equation}
\begin{aligned}
H_1 \psi _{2,1} + H_2 \psi _{1,2} &= \left(E_{FB}  - H_0\right) \psi _{1,1}\\
H_1^{\dagger} \psi _{1,1} + H_2 \psi _{2,2} &= \left(E_{FB}  - H_0\right) \psi _{2,1},\\
H_1 \psi _{2,2} + H_2^{\dagger} \psi _{1,1} &= \left(E_{FB}  - H_0\right) \psi _{1,2},\\
H_1^{\dagger} \psi _{1,2} + H_2^{\dagger} \psi _{2,1} &= \left(E_{FB}  - H_0\right) \psi _{2,2},
\end{aligned}
\label{eq:square_u22_eig_prob}
\end{equation}
with destructive interference conditions
\begin{equation}
 \begin{aligned}
  H_1 \psi _{1,1} &= H_1 \psi _{1,2} = 0,\\
  H_1^{\dagger} \psi _{2,1 } &= H_1^{\dagger} \psi _{2,2} = 0,\\ 
  H_2 \psi _{1,1 } &= H_2 \psi _{2,1} = 0,\\ 
  H_2^{\dagger} \psi _{1,2} &= H_2^{\dagger} \psi _{2,2} = 0.
 \end{aligned}
\end{equation}

Since $H_1, H_2$ have two zero modes, we can parameterize the hopping matrices $H_1,H_2$ in the following way: 

\begin{equation}
 H_1=|x\rangle \langle y|=\left(
\begin{array}{ccc}
 a d & a e & a f \\
 b d & b e & b f \\
 c d & c e & c f \\
\end{array}
\right),\ \ H_2=|u\rangle \langle v|=\left(
\begin{array}{ccc}
 g r & g s & g t \\
 h r & h s & h t \\
 l r & l s & l t \\
\end{array}
\right),
\label{eq:Hxy}
\end{equation}

where

\begin{equation}
 |x\rangle =\left(
\begin{array}{c}
 a \\
 b \\
 c \\
\end{array}
\right),\ \ \ |y\rangle =\left(
\begin{array}{c}
 d \\
 e \\
 f \\
\end{array}
\right),\ \ \ |u\rangle =\left(
\begin{array}{c}
 g \\
 h \\
 l \\
\end{array}
\right),\ \ \ |v\rangle =\left(
\begin{array}{c}
 r \\
 s \\
 t \\
\end{array}
\right).
\end{equation}

Matrices $H_1,H_2$ have two zero modes.

\begin{equation}
 \begin{aligned}
  H_1:& \left(
\begin{array}{c}
 -f \\
 0 \\
 d \\
\end{array}
\right) , \left(
\begin{array}{c}
 -e \\
 d \\
 0 \\
\end{array}
\right),\\
H_1^{\dagger}:& \left(
\begin{array}{c}
 -c \\
 0 \\
 a \\
\end{array}
\right) , \left(
\begin{array}{c}
 -b \\
 a \\
 0 \\
\end{array}
\right),\\
H_2:& \left(
\begin{array}{c}
 -t \\
 0 \\
 r \\
\end{array}
\right) , \left(
\begin{array}{c}
 -s \\
 r \\
 0 \\
\end{array}
\right),\\
H_2^{\dagger}:& \left(
\begin{array}{c}
 -l \\
 0 \\
 g \\
\end{array}
\right) , \left(
\begin{array}{c}
 -h \\
 g \\
 0 \\
\end{array}
\right).
 \end{aligned}
\end{equation}

Destructive interference conditions impose numerous constraints on $H_1,H_2$ and the CLS. Here we construct a parameterization of $ H_1,H_2 $ and the
CLS such that, with the parameterization, they satisfy the destructive interference conditions by default.

Consider that $\psi _{1,1}$ is the right zero mode of both $H_1, H_2$, and therefore it is parallel to the cross product of $y$ and $\nu$ and also parallel to one of the right
zero eigenvectors of $H_1,H_2$ as

\begin{equation}
\psi _{1,1}=\alpha  (y \times v) \parallel \left( \begin{array}{c}
 -f \\
 0 \\
 d \\
\end{array} \right) \parallel  \left(
\begin{array}{c}
 -t \\
 0 \\
 r \\
\end{array} \right). 
\end{equation}

Then $\psi _{2,1}$ is the left zero mode of $H_1 $ and the right zero mode of $H_2$, and therefore parallel to the cross product of $x$ and $\mu$ and also parallel
to one of the left (right) zero eigenvectors of $H_1 \left(H_2\right)$ as

\begin{equation}
\psi_2=\beta  (x\times v) \parallel \left(
\begin{array}{c}
 -b \\
 a \\
 0 \\
\end{array}
\right) \parallel \left(
\begin{array}{c}
 -s \\
 r \\
 0 \\
\end{array}
\right). 
\end{equation}

Next, $\psi _{1,2}$ is the right zero mode of $H_1$ and the left zero mode of $H_2$, and therefore parallel to the cross product of $y$ and $\mu$ and also parallel
to one of the right (left) zero eigenvectors of $H_1 \left(H_2\right)$ as

\begin{equation}
 \psi _{1,2}=\gamma  (y\times u)\parallel \left(
\begin{array}{c}
 -e \\
 d \\
 0 \\
\end{array}
\right)\parallel \left(
\begin{array}{c}
 -h \\
 g \\
 0 \\
\end{array}
\right). 
\end{equation}

Finally, $\psi _{2,2}$ is the left zero mode of both $H_1$ and $ H_2$, and therefore parallel to the cross product of $x$ and $\mu$ and also parallel to one of the
left zero eigenvectors of $ H_1,H_2$ as

\begin{equation}
\psi _{2,2}=\eta  (x\times u)\parallel \left(
\begin{array}{c}
 -c \\
 0 \\
 a \\
\end{array}
\right)\parallel \left(
\begin{array}{c}
 -l \\
 0 \\
 g \\
\end{array}
\right). 
\end{equation}

From all the above, we can see that

\begin{equation}
\begin{aligned}
 t&=f,\\
 d&=r=a=g,\\
 b&=s,\\
 e&=h,\\
 c&=l.
\end{aligned}
\label{eq:Hxy_elements}
\end{equation}

For simplicity, we choose all proportionality factors to be 1. Then the expressions for all $\psi$ reduce to the following equations:

\begin{equation}
\begin{aligned}
\psi_1 &= \left(\begin{array}{c} (-b f+e f) \alpha  \\ 
                                    0 \\ 
                                    (a b-a e) \alpha \} \\ \end{array} \right) =
 \left( \begin{array}{c} -f \text{$  \alpha $1} \\ 
                                      0 \\ 
                                      a \text{$\alpha $1} \\ \end{array} \right),\\
\psi_2 &= \left( \begin{array}{c} (-b c+b f) \beta  \\
 (a c-a f) \beta  \\
 0 \\
\end{array} \right) =
\left( \begin{array}{c} -b \text{$\beta $1} \\
                                    a \text{$\beta $1} \\
                                    0 \\ \end{array} \right),\\
\psi_3 &= \left(\begin{array}{c}
 (c e-e f) \gamma  \\
 (-a c+a f) \gamma  \\
 0 \\
\end{array}\right) =
\left(\begin{array}{c}
 -e \text{$\gamma $1} \\
 a \text{$\gamma $1} \\
 0 \\
\end{array}\right),\\
\psi_4 &= \left( \begin{array}{c}
 (b c-c e) \eta  \\
 0 \\
 (-a b+a e) \eta  \\
\end{array} \right) =
\left( \begin{array}{c}
 -c \text{$\eta $1} \\
 0 \\
 a \text{$\eta $1} \\
\end{array} \right).
\end{aligned}
\label{eq:psies}
\end{equation}

We solve the above equations to get

\begin{equation}
\text{$\alpha $1}=(b-e) \alpha ,\ \ \text{$\beta $1}=(c-f) \beta ,\ \ \text{$\gamma $1}=-c \gamma +f \gamma ,\ \ \text{$\eta $1}=-b \eta +e \eta. 
\label{eq:psies_coef}
\end{equation}

We can set one of the pre-factors to be 1; here, $\eta =1$. Then \eqref{eq:psies} becomes

\begin{equation}
\begin{aligned}
\psi _{1,1} &=\left(
\begin{array}{c}
 -(b-e) f \alpha  \\
 0 \\
 a (b-e) \alpha  \\
\end{array}
\right),\\
\psi_2 &=\left(
\begin{array}{c}
 -b (c-f) \beta  \\
 a (c-f) \beta  \\
 0 \\
\end{array}
\right),\\
\psi_3 &=\left(
\begin{array}{c}
 -e (-c \gamma +f \gamma ) \\
 a (-c \gamma +f \gamma ) \\
 0 \\
\end{array}
\right)\\
\psi_4 &= \left(
\begin{array}{c}
 -c (-b+e) \\
 0 \\
 a (-b+e) \\
\end{array} \right).
\end{aligned}
\end{equation}

We define $H_0$ as

\begin{equation}
H_0=\left(
\begin{array}{ccc}
 0 & 0 & 0 \\
 0 & 1 & 0 \\
 0 & 0 & \epsilon  \\
\end{array}
\right).
\end{equation}

Putting \eqref{eq:Hxy_elements} and \eqref{eq:psies_coef} into \eqref{eq:Hxy}, we get

\begin{equation}
H_1=\left(
\begin{array}{ccc}
 a^2 & a e & a f \\
 a b & b e & b f \\
 a c & c e & c f \\
\end{array}
\right),\\ \ _y=\left(
\begin{array}{ccc}
 a^2 & a b & a f \\
 a e & b e & e f \\
 a c & b c & c f \\
\end{array}
\right).
\label{eq:Hxy_new}
\end{equation}

Now the eigenvalue problem \eqref{eq:square_u22_eig_prob} becomes

\begin{equation}
\begin{aligned}
\left(
\begin{array}{c}
 (b-e) \left(-a^2 (c-f) (\beta +\gamma )+f \alpha  E_{FB} \right) \\
 -a (b-e) (c-f) (b \beta +e \gamma ) \\
 a (b-e) (-c (c-f) (\beta +\gamma )+\alpha  (\epsilon -E_{FB} )) \\
\end{array}
\right)==\left(
\begin{array}{c}
 0 \\
 0 \\
 0 \\
\end{array}
\right),\\
\\
\left(
\begin{array}{c}
 (c-f) \left(a^2 (b-e) (1+\alpha )+b \beta  E_{FB} \right) \\
 a (c-f) ((b-e) e (1+\alpha )+\beta -\beta  E_{FB} ) \\
 a (b-e) (c-f) (c+f \alpha ) \\
\end{array}
\right) &= \left( \begin{array}{c}
 0 \\
 0 \\
 0 \\
\end{array} \right),\\
\left( \begin{array}{c}
 (c-f) \left(a^2 (b-e) (1+\alpha )-e \gamma  E_{FB} \right) \\
 a (c-f) (b (b-e) (1+\alpha )+\gamma  (-1+E_{FB} )) \\
 a (b-e) (c-f) (c+f \alpha ) \\
\end{array}
\right) &=\left( \begin{array}{c}
 0 \\
 0 \\
 0 \\
\end{array} \right),\\
\left( \begin{array}{c}
 (b-e) \left(-a^2 (c-f) (\beta +\gamma )-c E_{FB} \right) \\
 -a (b-e) (c-f) (b \beta +e \gamma ) \\
 a (b-e) (-(c-f) f (\beta +\gamma )-\epsilon +E_{FB} ) \\
\end{array}
\right) &=\left( \begin{array}{c}
 0 \\
 0 \\
 0 \\
\end{array} \right).
\end{aligned}
\end{equation}

Assuming that $a\neq 0,\ c\neq f,\ b\neq e$, we solve the above equations. Also in the eigenvalue problem we can set $ a=1$. (We have set $\eta =1$ already. 
This can still be done by dividing all equations by $a$, then dividing all equations by $\frac{\eta }{a}$; thus, we can set $\eta $=1 as well.) They read:

\begin{equation}
\begin{aligned}
b & =\frac{\sqrt{2} (-1+E_{FB} )}{E_{FB}  \sqrt{\frac{-\sqrt{-\alpha  (-1+E_{FB} )^2 E_{FB} ^4 \left(4 (1+\alpha )^2-\alpha  E_{FB} ^2\right)}+(-1+E_{FB}
) E_{FB}  \left(-2 (1+\alpha )^2+\alpha  E_{FB} ^2\right)}{(1+\alpha )^2 E_{FB} ^2}}}, \\
c &= \frac{\sqrt{\alpha } \sqrt{-\epsilon +E_{FB} }}{\sqrt{E_{FB} }}, \\
e &=-\frac{\sqrt{\frac{-\sqrt{-\alpha  (-1+E_{FB} )^2 E_{FB} ^4 \left(4 (1+\alpha
)^2-\alpha  E_{FB} ^2\right)}+(-1+E_{FB} ) E_{FB}  \left(-2 (1+\alpha )^2+\alpha  E_{FB} ^2\right)}{(1+\alpha )^2 E_{FB} ^2}}}{\sqrt{2}}, \\
f &=-\frac{\sqrt{-\epsilon +E_{FB} }}{\sqrt{\alpha } \sqrt{E_{FB} }}, \\
\beta &= \frac{-\alpha  (-1+E_{FB} ) E_{FB} ^3+\sqrt{-\alpha  (-1+E_{FB} )^2 E_{FB} ^4 \left(4 (1+\alpha )^2-\alpha  E_{FB} ^2\right)}}{2 (1+\alpha
) (-1+E_{FB} ) E_{FB} ^2},\\ \ \ \
\gamma &=-\frac{\alpha  (-1+E_{FB} ) E_{FB} ^3+\sqrt{-\alpha  (-1+E_{FB} )^2 E_{FB} ^4 \left(4 (1+\alpha )^2-\alpha  E_{FB} ^2\right)}}{2 (1+\alpha ) (-1+E_{FB} ) E_{FB} ^2}.
\end{aligned}
\end{equation}

There are many possible solutions, with the above solution representing one of them.

\section{Next nearest neighbor hoppings}
\label{app:2d-nnn}

\subsection{U=(2,2,1) case}
\label{app:nnn-u221}

The configuation of this case is shown in Fig. \ref{fig:u22-nnn-config} (c) in the main text. Putting the values $U_2=2,\ s=1$ to Eq. \eqref{eq:u22-nnn-eig-prob}, we get the eigenvalue problem and destructive interference conditions in this case as follows 
\begin{equation}
    \begin{aligned}
       H_1 \vert \psi_2 \rangle + H_2 \vert \psi_3 \rangle &= (E_{FB} - H_0) \vert \psi_1 \rangle, \\
       H_1^\dagger \vert \psi_1 \rangle + H_3^\dagger \vert \psi_3 \rangle &= (E_{FB} - H_0) \vert \psi_2 \rangle , \\
       H_2^\dagger \vert \psi_1 \rangle + H_3 \vert \psi_2 \rangle &= (E_{FB} - H_0) \vert \psi_3 \rangle ,\\
       H_1 \vert \psi_1 \rangle &= 0, \\
       \langle \psi_2 \vert H_1 &= 0, \\
       H_2 \vert \psi_1 \rangle &= 0, \\
       \langle \psi_3 \vert H_2 &= 0, \\
       H_3 \vert \psi_3 \rangle &= 0, \\
       \langle \psi_2 \vert H_3 &= 0, \\
       H_1 \vert \psi_3 \rangle + H_3 \vert \psi_1 \rangle &= 0, \\
       H_2 \vert \psi_2 \rangle + H_3^\dagger \vert \psi_1 \rangle &= 0, \\
       \langle \psi_3 \vert H_1 + \langle \psi_2 \vert H_2 &= 0 \; .
    \end{aligned}
    \label{eq:nnn-triang-eig-prob}
\end{equation} 

We define 
\[
\begin{aligned}H_1^{\dagger}\vert\psi_{3}\rangle & =-H_2^{\dagger}\vert\psi_{2}\rangle=Q_{1}\vert x\rangle,\\
H_2\vert\psi_{2}\rangle & =-H_3^{\dagger}\vert\psi_{1}\rangle=Q_{3}\vert y\rangle,\\
H_1\vert\psi_{3}\rangle & =-H_3\vert\psi_{1}\rangle=Q_{2}\vert z\rangle,\\
H_2\vert\psi_{3}\rangle & =Q_{3}\vert u\rangle,\\
H_3\vert\psi_{2}\rangle & =Q_{2}\vert v\rangle,\\
H_3^{\dagger}\vert\psi_{3}\rangle & =Q_{3}\vert w\rangle.
\end{aligned}
\]
 Then the eigenvalue problem \eqref{eq:nnn-triang-eig-prob} decouples
into inverse eigenvalue problems of $H_1,H_2,H_3$ respectively.
The inverse eigenvalue problem for $H_1$ is 
\begin{equation}
\begin{aligned}H_1\vert\psi_{2}\rangle & =(E_{FB}-H_{0})\vert\psi_{1}\rangle-Q_{3}\vert u\rangle,\\
H_1\vert\psi_{3}\rangle & =Q_{2}\vert z\rangle,\\
\langle\psi_{1}\vert H_1 & =\langle\psi_{2}\vert(E_{FB}-H_{0})-\langle w\vert Q_{3},\\
\langle\psi_{3}\vert H_1 & =\langle x\vert Q_{1},\\
H_1\vert\psi_{1}\rangle & =0,\\
\langle\psi_{2}\vert H_1 & =0,
\end{aligned}
\label{eq:diag-l-shape-hx-eig-prob}
\end{equation}
 and the inverse eigenvalue problem for $H_2$ is 
\begin{equation}
\begin{aligned}H_2\vert\psi_{2}\rangle & =Q_{3}\vert y\rangle,\\
H_2\vert\psi_{3}\rangle & =Q_{3}\vert u\rangle,\\
\langle\psi_{1}\vert H_2 & =\langle\psi_{3}\vert(E_{FB}-H_{0})-\langle v\vert Q_{2},\\
\langle\psi_{2}\vert H_2 & =-\langle x\vert Q_{1},\\
H_2\vert\psi_{1}\rangle & =0,\\
\langle\psi_{3}\vert H_2 & =0,
\end{aligned}
\label{eq:diag-l-shape-hy-eig-prob}
\end{equation}
while the inverse eigenvalue problem for $H_3$ is 
\begin{equation}
\begin{aligned}H_3\vert\psi_{1}\rangle & =-Q_{2}\vert z\rangle,\\
H_3\vert\psi_{2}\rangle & =Q_{2}\vert v\rangle,\\
\langle\psi_{1}\vert H_3 & =-\langle y\vert Q_{3},\\
\langle\psi_{3}\vert H_3 & =\langle w\vert Q_{3},\\
H_3\vert\psi_{3}\rangle & =0,\\
\langle\psi_{2}\vert H_3 & =0.
\end{aligned}
\label{eq:diag-l-shape-hyx-eig-prob}
\end{equation}

By multiplying $\vert\psi_{1}\rangle,\vert\psi_{2}\rangle,\vert\psi_{3}\rangle$
from the left and right to the above inverse eigenvalue problems and comparing
them, we can show that $\vert x\rangle,\vert y\rangle,\vert z\rangle,\vert u\rangle,\vert v\rangle,\vert z\rangle,\vert\psi_{1}\rangle,\vert\psi_{2}\rangle,\vert\psi_{3}\rangle$
should satisfy the following constraints 
\[
\begin{aligned}
\langle\psi_{1}\vert(\lambda-H_{0})\vert\psi_{1}\rangle-\langle\psi_{1}\vert Q_{3}\vert u\rangle&=\langle\psi_{2}\vert(\lambda-H_{0})\vert\psi_{2}\rangle-\langle w\vert Q_{3}\vert\psi_{2}\rangle=\langle\psi_{1}\vert H_{x}\vert\psi_{2}\rangle,\\\langle\psi_{2}\vert(\lambda-H_{0})\vert\psi_{3}\rangle&=\langle\psi_{1}\vert Q_{2}\vert z\rangle=\langle\psi_{1}\vert H_{x}\vert\psi_{3}\rangle,\\\langle\psi_{2}\vert(\lambda-H_{0})\vert\psi_{1}\rangle&=\langle w\vert Q_{3}\vert\psi_{1}\rangle,\\\langle\psi_{2}\vert(\lambda-H_{0})\vert\psi_{1}\rangle&=\langle\psi_{2}\vert Q_{3}\vert u\rangle,\\\langle\psi_{3}\vert(\lambda-H_{0})\vert\psi_{1}\rangle&=\langle x\vert Q_{1}\vert\psi_{2}\rangle=\langle\psi_{3}\vert H_{x}\vert\psi_{2}\rangle,\\\langle x\vert Q_{1}\vert\psi_{3}\rangle&=\langle\psi_{3}\vert Q_{2}\vert z\rangle=\langle\psi_{3}\vert H_{x}\vert\psi_{3}\rangle,\\\langle\psi_{3}\vert(\lambda-H_{0})\vert\psi_{2}\rangle&=\langle\psi_{1}\vert Q_{3}\vert y\rangle=\langle\psi_{1}\vert H_{y}\vert\psi_{2}\rangle,\\\langle\psi_{3}\vert(\lambda-H_{0})\vert\psi_{3}\rangle-\langle v\vert Q_{2}\vert\psi_{3}\rangle&=\langle\psi_{1}\vert Q_{3}\vert u\rangle=\langle\psi_{1}\vert H_{y}\vert\psi_{3}\rangle,\\\langle\psi_{3}\vert(\lambda-H_{0})\vert\psi_{1}\rangle&=\langle v\vert Q_{2}\vert\psi_{1}\rangle,\\\langle\psi_{2}\vert Q_{3}\vert y\rangle&=-\langle x\vert Q_{1}\vert\psi_{2}\rangle=\langle\psi_{2}\vert H_{y}\vert\psi_{2}\rangle,\\\langle\psi_{2}\vert Q_{3}\vert u\rangle&=-\langle x\vert Q_{1}\vert\psi_{3}\rangle=\langle\psi_{2}\vert H_{y}\vert\psi_{3}\rangle,\\\langle y\vert Q_{3}\vert\psi_{1}\rangle&=\langle\psi_{1}\vert Q_{2}\vert z\rangle=-\langle\psi_{1}\vert H_{yx}\vert\psi_{1}\rangle,\\\langle\psi_{1}\vert Q_{2}\vert v\rangle&=-\langle y\vert Q_{3}\vert\psi_{2}\rangle=\langle\psi_{1}\vert H_{yx}\vert\psi_{2}\rangle,\\\langle w\vert Q_{3}\vert\psi_{1}\rangle&=-\langle\psi_{3}\vert Q_{2}\vert z\rangle=\langle\psi_{3}\vert H_{yx}\vert\psi_{1}\rangle,\\\langle\psi_{3}\vert Q_{2}\vert v\rangle&=\langle w\vert Q_{3}\vert\psi_{2}\rangle=\langle\psi_{3}\vert H_{yx}\vert\psi_{2}\rangle.
\end{aligned}
\]

If we assume $H_1$ has the following form 
\[
H_1=\frac{Q_{2}\vert z\rangle\langle x\vert Q_{1}}{\langle\psi_{3}\vert Q_{2}\vert z\rangle}+Q_{23}\vert a\rangle\langle b\vert Q_{13},
\]
 where $\langle\psi_{3}\vert Q_{2}\vert z\rangle=\langle x\vert Q_{1}\vert\psi_{3}\rangle=\langle\psi_{3}\vert H_1\vert\psi_{3}\rangle$
(this is achieved by multiplying $\vert\psi_{3}\rangle$ to the second and
fourth equation of (\ref{eq:diag-l-shape-hx-eig-prob})), then all
equations except for the first and third are satisfied in
the inverse eigenvalue problem (\ref{eq:diag-l-shape-hx-eig-prob}) for
$H_1$. The first and third equations become
\begin{equation}
\begin{aligned}\frac{Q_{2}\vert z\rangle\langle x\vert Q_{1}\vert\psi_{2}\rangle}{\langle\psi_{3}\vert Q_{2}\vert z\rangle}+Q_{23}\vert a\rangle\langle b\vert Q_{13}\vert\psi_{2}\rangle & =Q_{2}\vert U\rangle,\\
\frac{\langle\psi_{1}\vert Q_{2}\vert z\rangle\langle x\vert Q_{1}}{\langle\psi_{3}\vert Q_{2}\vert z\rangle}+\langle\psi_{1}\vert Q_{23}\vert a\rangle\langle b\vert Q_{13} & =\langle W\vert Q_{1},
\end{aligned}
\label{eq:diag-l-shape-hx-eig-reduced}
\end{equation}
 where 
\begin{equation}
\begin{aligned}Q_{2}\vert U\rangle & =H_1\vert\psi_{2}\rangle=(E_{FB}-H_{0})\vert\psi_{1}\rangle-Q_{3}\vert u\rangle,\\
\langle W\vert Q_{1} & =\langle\psi_{1}\vert H_1=\langle\psi_{2}\vert(E_{FB}-H_{0})-\langle w\vert Q_{3}.
\end{aligned}
\label{eq:UW}
\end{equation}
 Solving (\ref{eq:diag-l-shape-hx-eig-reduced}) for $Q_{23}\vert a\rangle,\ \langle b\vert Q_{13}$
we get 
\[
\begin{aligned}Q_{23}\vert a\rangle & =\frac{1}{\langle b\vert Q_{13}\vert\psi_{2}\rangle}\left(\frac{\langle\psi_{3}\vert Q_{2}\vert z\rangle Q_{2}\vert U\rangle-\langle x\vert Q_{1}\vert\psi_{2}\rangle Q_{2}\vert z\rangle}{\langle\psi_{3}\vert Q_{2}\vert z\rangle}\right),\\
\langle b\vert Q_{13} & =\frac{\langle b\vert Q_{13}\vert\psi_{2}\rangle\left(\langle\psi_{3}\vert Q_{2}\vert z\rangle\langle W\vert Q_{1}-\langle\psi_{1}\vert Q_{2}\vert z\rangle\langle x\vert Q_{1}\right)}{\langle\psi_{1}\vert Q_{2}\vert U\rangle\langle\psi_{3}\vert Q_{2}\vert z\rangle-\langle\psi_{1}\vert Q_{2}\vert z\rangle\langle x\vert Q_{1}\vert\psi_{2}\rangle}.
\end{aligned}
\] 
Thus the solution for $H_1$ is 
{\footnotesize
\begin{equation}
    \begin{aligned}
        H_1&=\frac{Q_{2}\vert z\rangle\langle x\vert Q_{1}}{\langle\psi_{3}\vert Q_{2}\vert z\rangle}+\frac{\left(\langle\psi_{3}\vert Q_{2}\vert z\rangle(E_{FB}-H_{0})\vert\psi_{1}\rangle-\langle\psi_{3}\vert Q_{2}\vert z\rangle Q_{3}\vert u\rangle-\langle x\vert Q_{1}\vert\psi_{2}\rangle Q_{2}\vert z\rangle\right)}{\langle\psi_{3}\vert Q_{2}\vert z\rangle\left(\left(\langle\psi_{1}\vert(E_{FB}-H_{0})\vert\psi_{1}\rangle-\langle\psi_{1}\vert Q_{3}\vert u\rangle\right)\langle\psi_{3}\vert Q_{2}\vert z\rangle-\langle\psi_{1}\vert Q_{2}\vert z\rangle\langle x\vert Q_{1}\vert\psi_{2}\rangle\right)} \\
        & \times \left(\langle\psi_{3}\vert Q_{2}\vert z\rangle\langle\psi_{2}\vert(E_{FB}-H_{0})-\langle\psi_{3}\vert Q_{2}\vert z\rangle\langle w\vert Q_{3}-\langle\psi_{1}\vert Q_{2}\vert z\rangle\langle x\vert Q_{1}\right) \; .
    \end{aligned}
\end{equation}
}

Using the same procedure, we get the solution for $H_2,H_3$ as 
{\footnotesize
\begin{equation}
    \begin{aligned}
        H_2&=-\frac{Q_{3}\vert y\rangle\langle x\vert Q_{1}}{\langle\psi_{2}\vert Q_{3}\vert y\rangle}+\frac{\left(\langle\psi_{2}\vert Q_{3}\vert y\rangle Q_{3}\vert u\rangle+\langle x\vert Q_{1}\vert\psi_{3}\rangle Q_{3}\vert y\rangle\right)}{\langle\psi_{2}\vert Q_{3}\vert y\rangle\left(\langle\psi_{2}\vert Q_{3}\vert y\rangle\langle\psi_{1}\vert Q_{3}\vert u\rangle+\langle x\vert Q_{1}\vert\psi_{3}\rangle\langle\psi_{1}\vert Q_{3}\vert y\rangle\right)} \\
        & \times \left(\langle\psi_{2}\vert Q_{3}\vert y\rangle\langle\psi_{3}\vert(E_{FB}-H_{0})-\langle\psi_{2}\vert Q_{3}\vert y\rangle\langle v\vert Q_{2}+\langle\psi_{1}\vert Q_{3}\vert y\rangle\langle x\vert Q_{1}\right), \\
        H_3&=-\frac{Q_{2}\vert z\rangle\langle y\vert Q_{3}}{\langle\psi_{1}\vert Q_{2}\vert z\rangle}+\frac{\left(\langle\psi_{1}\vert Q_{2}\vert z\rangle Q_{2}\vert v\rangle+\langle y\vert Q_{3}\vert\psi_{2}\rangle Q_{2}\vert z\rangle\right)\left(\langle\psi_{1}\vert Q_{2}\vert z\rangle\langle w\vert Q_{3}+\langle\psi_{3}\vert Q_{2}\vert z\rangle\langle y\vert Q_{3}\right)}{\langle\psi_{1}\vert Q_{2}\vert z\rangle\left(\langle\psi_{1}\vert Q_{2}\vert z\rangle\langle\psi_{3}\vert Q_{2}\vert v\rangle+\langle y\vert Q_{3}\vert\psi_{2}\rangle\langle\psi_{3}\vert Q_{2}\vert z\rangle\right)} \; .
    \end{aligned}
\end{equation}
}

\subsection{U=(2,1,0) case} 
\label{app:nnn-u21}

This case is shown in Fig. \ref{fig:u22-nnn-config} (d). Putting the values $U_2=1,\ s=0$ to Eq. \eqref{eq:u22-nnn-eig-prob}, we get the eigenvalue problem and destructive interference conditions in this case as follows 
\begin{equation}
    \begin{aligned}
       H_1 \vert \psi_2 \rangle &= (E_{FB} -H_0) \vert \psi_1 \rangle , \\
       \langle \psi_1 \vert H_1 &= \langle \psi_2 \vert (E_{FB} - H_0) , \\
       H_1 \vert \psi_1 \rangle &=0 , \\
       \langle \psi_2 \vert H_1 &= 0, \\
       H_2 \vert \psi_1 \rangle &= 0, \\
       \langle \psi_2 \vert H_2 &= 0, \\
       H_3 \vert \psi_1 \rangle &= 0 , \\
       \langle \psi_2 \vert H_3 &= 0 , \\
       H_2 \vert \psi_2 \rangle + H_3^\dagger \vert \psi_1 \rangle &= 0, \\
       \langle \psi_1 \vert H_2 + \langle \psi_2 \vert H_3^\dagger &= 0. \\
    \end{aligned}
    \label{eq:u21-nnn-eig-prob}
\end{equation}
Using destructive interference conditions, we eliminate $H_1,H_2,H_3$ to get the following CLS constraints  
\begin{equation}
    \begin{aligned}
       \langle \psi_2 \vert (E_{FB} - H_0) \vert \psi_1 \rangle &= 0 , \\
       \langle \psi_1 \vert (E_{FB} - H_0) \vert \psi_1 \rangle &= \langle \psi_2 \vert (E_{FB} - H_0) \vert \psi_2 \rangle \; .
    \end{aligned}
    \label{eq:u21-nnn-cls-cons}
\end{equation} 
The destructive interference conditions (third to eighth equations in \eqref{eq:u21-nnn-eig-prob}) suggest that
\begin{equation}
    \begin{aligned}
       H_1 &= Q_2 \vert a \rangle \langle b \vert Q_1 , \\ 
       H_2 &= Q_2 \vert c \rangle \langle d \vert Q_1 , \\
       H_3 &= Q_2 \vert e \rangle \langle f \vert Q_1, \\
    \end{aligned}
\end{equation}
where $\vert a \rangle, \vert b \rangle, \vert c \rangle, \vert d \rangle, \vert e \rangle, \vert f \rangle$ are arbitrary vectors. Then, the last two equations in \eqref{eq:u21-nnn-eig-prob} give 
\begin{equation}
    \begin{aligned}
       Q_2 \vert c \rangle \langle d \vert Q_1 \vert \psi_2 \rangle &= - Q_1 \vert f \rangle \langle e \vert Q_2 \vert \psi_1 \rangle , \\ 
       \langle \psi_1 \vert Q_2 \vert c \rangle \langle d \vert Q_1 &= - \langle \psi_2 \vert Q_1 \vert f \rangle \langle e \vert Q_2 \; .
    \end{aligned}
\end{equation} 
This indicates that 
\begin{equation}
    \begin{aligned}
       Q_2 \vert a \rangle & \propto Q_1 \vert b \rangle ,\quad \forall a,b, \\
       Q_1 \vert c \rangle & \propto Q_2 \vert d \rangle, \quad \forall c,d. \\
    \end{aligned}
\end{equation} 
The above condition implies that $\vert a \rangle, \vert b \rangle, \vert c \rangle, \vert d \rangle$ are perpendicular to $\vert \psi_1 \rangle, \vert \psi_2 \rangle$ at the same time. Therefore, we can write $H_2, H_3$ as
\begin{equation}
    \begin{aligned}
       H_2 &= Q_{12} \vert c \rangle \langle d \vert Q_{12} , \\
       H_3 &= Q_{12} \vert e \rangle \langle f \vert Q_{12} \; .
    \end{aligned}
\end{equation} 
Then the last destructive interference conditions become
\begin{equation}
    \begin{aligned}
       H_2 \vert \psi_2 \rangle &= 0, \\
       \langle \psi_1 \vert H_2 &= 0, \\
       H_3 \vert \psi_2 \rangle &= 0, \\
       \langle \psi_1 \vert H_3 &=0. \\
    \end{aligned}
\end{equation} 

For given $\vert \psi_1 \rangle,\ \vert \psi_2 \rangle,\ H_0,\ E_{FB}$ satisfying constraints \eqref{eq:u21-nnn-cls-cons}, the eigenvalue problem \eqref{eq:u21-nnn-eig-prob} then becomes an inverse eigenvalue problem of $H_1$ as
\begin{equation}
    \begin{aligned}
      H_1 \vert \psi_2 \rangle &= (E_{FB} -H_0) \vert \psi_1 \rangle , \\
      \langle \psi_1 \vert H_1 &= \langle \psi_2 \vert (E_{FB} - H_0) \; ,
    \end{aligned}
\end{equation} 
which gives the following solution 
\begin{equation}
      H_1 = \frac{(E_{FB} - H_0) \vert \psi_1 \rangle \langle \psi_2 \vert (E_{FB} - H_0) }{\langle \psi_1 \vert (E_{FB} - H_0) \vert \psi_1 \rangle}.
\end{equation} 

The final solution is 
\begin{equation}
    \begin{aligned}
        H_1 &= \frac{(E_{FB} - H_0) \vert \psi_1 \rangle \langle \psi_2 \vert (E_{FB} - H_0) }{\langle \psi_1 \vert (E_{FB} - H_0) \vert \psi_1 \rangle}, \\
       H_2 &= Q_{12} \vert c \rangle \langle d \vert Q_{12} , \\
       H_3 &= Q_{12} \vert e \rangle \langle f \vert Q_{12}. \\
    \end{aligned}
\end{equation}


\chapter{Supplementary materials for the non-Hermitian flatband generator} 

\section{CLS-based generator} 
\label{app:cls-gen}

This method is based on compact localized states as a direct extension of the method introduced in Ref. \cite{maimaiti2017compact}.  Unlike the Hermitian case, the non-Hermitian FB does not necessarily host CLSs \cite{maimaiti2017compact,maimaiti2019universal}. In the case that the non-Hermitian FB does host a CLS of class $U$ \cite{maimaiti2017compact,maimaiti2019universal}, then the band is fully flat, i.e. both real and imaginary parts are $k$ independent. More precisely, the CLS $\Psi_{CLS}=(\vec{\psi}_1,\vec{\psi}_2,\dots,\vec{\psi}_U)$ is the eigenvector of the $U \times U$ tri-diagonal block matrix
\begin{equation}
    \mathcal{H}_U = \left(\begin{array}{cccccc}
        H_0 & H_l & 0 & 0 & \dots & 0\\
        H_r & H_0 & H_l & 0 & \dots & 0\\
        0 & \ddots & \ddots & \ddots & \ddots & \vdots\\
        \vdots & ~ & ~ & ~ & ~ & \vdots\\
        0 & \dots & 0 & H_r & H_0 & H_l\\
        0 & \dots & 0 & 0 & H_r & H_0
    \end{array}\right), \; 
    \label{eq:cls-def-HU}
\end{equation}
with eigenenergy $E_{FB}=E_1 + i E_2,\ E_1,E_2 \in \mathcal{R}$. The following destructive interference condition must be satisfied 
\begin{equation}
    H_l \vec{\psi}_1 = H_r \vec{\psi}_U = 0.
    \label{eq:dest-intf-cond}
\end{equation}
Therefore, a \emph{necessary} condition for the existence of a non-Hermitian CLS reads
\begin{equation}
    \det H_l = \det H_r = 0.
    \label{eq:necessary-cond-cls}
\end{equation} 
We rewrite the CLS problem~(\ref{eq:cls-def-HU}--\ref{eq:dest-intf-cond}) as 
\begin{eqnarray}
    H_l \vec{\psi}_2 &=& (E_{FB} - H_0)\vec{\psi}_1 \label{eig-1-app}, \\
    H_r \vec{\psi}_{j-1} + H_l \vec{\psi}_{j+1} &=& (E_{FB} - H_0)\vec{\psi}_j , 2 \le j \le U-1
    \label{eig-2-app},\\
    H_r \vec{\psi}_{U-1} &=& (E_{FB} - H_0)\vec{\psi}_U
     \label{eig-3-app}, \\
    H_l \vec{\psi}_1 &=& H_r \vec{\psi}_U = 0
    \label{eig-4-app},\\
    \vec{\psi}_j &=& 0 \;,\;  j<0,\,j>U. \;
    \label{eig-5-app}
\end{eqnarray}
 The non-Hermitian FB generator is the set of all possible matrices $H_0,H_r,H_l$ that satisfy Eqs. (\ref{eig-1-app}--\ref{eig-5-app}).

 We consider a two-band problem, i.e. $\nu=2$ sites per unit cell. In this case, using the same argument as the previous section, $H_0$ can take the form as given in \eqref{eq:def-H0}. With the proper choice of CLS $\Psi_{CLS}$ and FB energy $E_{FB}$, solving Eqs. (\ref{eig-1-app}--\ref{eig-5-app}) becomes an inverse eigenvalue of finding $H_r,H_l$.

\subsection{$U=1$ case}

In this case, the eigenvalue problem (\ref{eig-1-app}--\ref{eig-5-app}) becomes
\begin{equation}
 \begin{aligned}
  & H_0 \vec{\psi} = \lambda \vec{\psi},\\
  & H_r \vec{\psi} = 0,\\
  & H_l \vec{\psi} = 0.
 \end{aligned}
 \label{eq:cls-eq-u1}
\end{equation} 
Suppose $\vec{\psi}=(x,y) \ne 0$, then \eqref{eq:cls-eq-u1} becomes 
\begin{equation}
\begin{aligned}
    \nu y & = \lambda x, \\
    \mu y & = \lambda y, \\
    a x + b y & = \lambda x, \\
    c x + d y & = \lambda y, \\
    f x + g y & = \lambda x, \\
    h x + l y & = \lambda y. \; 
\end{aligned}
\label{eq:cls-eq-u1-1}
\end{equation}
We solve \eqref{eq:cls-eq-u1} for different forms of $H_0$. 

\paragraph{Degenerate $H_0$} 

In this case $\mu=\nu=0$ and \eqref{eq:cls-eq-u1-1} yields the following solution:
\begin{equation*}
 \begin{aligned}
  & \lambda = 0,\\
  & a  = -\frac{y}{x} b,\\
  & c  = -\frac{y}{x} d,\\
  & f  =-\frac{y}{x} g,\\
  & h  =-\frac{y}{x} l.\\
 \end{aligned}
\end{equation*} 
Thus the hopping matrices, CLS, and FB energy are
 \begin{gather*}
   H_{0}=\begin{pmatrix}0 & 0\\
                        0 & 0
         \end{pmatrix},\\
   H_{l}=\begin{pmatrix}-\frac{y}{x} g & g\\
                        -\frac{y}{x} l & l
         \end{pmatrix},\\
   H_{r}=\begin{pmatrix}-\frac{y}{x} b & b\\
                        -\frac{y}{x} d & d
         \end{pmatrix},\\
   \vec{\psi}=(x,y),\\
   \lambda = 0.
\end{gather*}

\paragraph{Abnormal $H_0$} 
In this case $\mu=0,\ \nu=1$ and \eqref{eq:cls-eq-u1-1} gives 
\begin{equation*}
 \begin{aligned}
  & y = 0,\\
  & a  = 0,\\
  & c  = 0,\\
  & f  =0,\\
  & h  =0, \; 
 \end{aligned}
\end{equation*}
which gives the following hopping matrices, CLS, and FB energy:
 \begin{gather*}
   H_{0}=\begin{pmatrix}0 & 1\\
                        0 & 0
         \end{pmatrix},\\
   H_{l}=\begin{pmatrix}0 & g\\
                        0 & l
         \end{pmatrix},\\
   H_{r}=\begin{pmatrix} 0 & b\\
                         0 & d
         \end{pmatrix},\\
   \vec{\psi}={x,0},\\
   \lambda =0.
\end{gather*} 

\paragraph{Non-degenerate $H_0$}
In this case $\mu=1,\ \nu=0$ and \eqref{eq:cls-eq-u1} has the following solution:
 \begin{gather*}
   H_{0}=\begin{pmatrix}0 & 0\\
                        0 & 1
         \end{pmatrix},\\
   H_{l}=\begin{pmatrix}f & 0\\
                        g & 0
         \end{pmatrix},\\
   H_{r}=\begin{pmatrix} a & 0\\
                         c & 0
         \end{pmatrix},\\
   \vec{\psi}=(0,1), \\
   \lambda = 1 \; .
\end{gather*} 
Here, $y$ is normalized to be 1.

\subsection{U=2 case}
\label{app:cls-gen-u2}

In this case the eigenvalue problem (\ref{eig-1-app}--\ref{eig-5-app}) becomes
\begin{equation}
 \begin{aligned}
  & H_0 \psi_1 + H_l \psi_2 = \lambda \psi_1, \\
  & H_0 \psi_2 + H_r \psi_1 = \lambda \psi_2, \\
  & H_l \psi_1 = 0,\ \ H_r \psi_2 =0.
 \end{aligned}
 \label{u2_eigv_prb}
\end{equation} 
We can parameterize $H_r, H_l$ in the following way
\begin{equation}
    H_r = \begin{pmatrix} a & b \\
    c & \frac{b c}{a} \end{pmatrix}, \quad 
    H_l = \begin{pmatrix} f & g \\
    h & \frac{g h}{f} \end{pmatrix},
    \label{Hr-Hl-dif-para}
\end{equation} 
 which satisfies the destructive interference conditions by definition. Then we can choose $\psi_1, \psi_2$ to be zero eigenvectors of $H_l, H_r$, respectively, as
 \begin{equation}
     \psi_1 = \begin{pmatrix} -\frac{g}{f} \\ 1 \end{pmatrix}, \quad \psi_2 = \begin{pmatrix} -\frac{b}{a} \\ 1 \end{pmatrix}.
 \end{equation} 
 Then \eqref{u2_eigv_prb} becomes 
 \begin{equation}
     \begin{aligned}
         \begin{pmatrix} -\frac{b f}{a}+g+\nu \\ -\frac{b h}{a}+\frac{g h}{f}+\mu \end{pmatrix} &= \begin{pmatrix} -\frac{g \lambda }{f} \\ \lambda \end{pmatrix}, \\ 
         \begin{pmatrix} -\frac{a g}{f}+b+\nu  \\ \frac{b c}{a}-\frac{c g}{f}+\mu \end{pmatrix} &= \begin{pmatrix} -\frac{b \lambda }{a} \\ \lambda \end{pmatrix}.
     \end{aligned}
     \label{u2_eigv_prob_1}
 \end{equation} 
 Solving the above we get 
 \begin{equation}
     \begin{aligned}
         c &= \frac{a^2 \mu -a^2 \lambda }{a \nu +b \lambda },\\ 
         f &= -a-\lambda, \\ 
         g &= -\frac{(a+\lambda ) (a b+a \nu +b \lambda )}{a^2},\\ 
         h &= \frac{a^2 (\lambda -\mu )}{a \nu +b \lambda }.
     \end{aligned}
     \label{u2_sol}
 \end{equation} 
 Putting the corresponding values of $\mu,\nu$ in \eqref{u2_sol} we get solutions for degenerate, non-degenerate, and abnormal cases. The band structure for solution \eqref{u2_sol} is
 \begin{equation*}
     \begin{aligned}
         E_{FB} &= \lambda, \\ 
         E_k &= \frac{a e^{i k} (a \nu +b \mu )}{a \nu +b \lambda }+\frac{e^{-i k} (-b \mu  (a+\lambda )-a \nu  (a+\mu ))}{a \nu +b \lambda }-\lambda +\mu.
     \end{aligned}
 \end{equation*}
 
 \subsection{U=3 case} 
 \label{app:cls-gen-u3}
 
 In this case we have 
\begin{equation}
 \begin{aligned}
  & H_0 \psi_1 + H_l \psi_2 = \lambda \psi_1, \\
  & H_0 \psi_2 + H_r \psi_1 + H_l \psi_3 = \lambda \psi_2, \\
  & H_0 \psi_3 + H_r \psi_2 = \lambda \psi_3, \\
  & H_l \psi_1 = 0,\ \ H_r \psi_3 =0.
 \end{aligned}
 \label{u3_eigv_prb}
\end{equation} 
We use the same parameterization as in \eqref{Hr-Hl-dif-para} and choose the CLS as 
 \begin{equation}
     \psi_1 = \begin{pmatrix} -\frac{g}{f} \\ 1 \end{pmatrix}, \quad \psi_2 = \begin{pmatrix} \alpha \\ \beta \end{pmatrix}, \quad \psi_3 = \begin{pmatrix} -\frac{b}{a} \\ 1 \end{pmatrix},
 \end{equation} 
which satisfies the destructive interference conditions by definition. Then \eqref{u3_eigv_prb} becomes
\begin{equation}
  \begin{aligned}
     \begin{pmatrix} \alpha  f+\beta  g+\nu \\ \frac{\beta  g h}{f}+\alpha  h+\mu \end{pmatrix} &= \begin{pmatrix} -\frac{g \lambda }{f} \\ \lambda \end{pmatrix}, \\
     \begin{pmatrix} -\frac{b f}{a}-\frac{a g}{f}+\beta  \nu +b+g \\ \frac{a (-c g+\beta  f \mu +g h)+b f (c-h)}{a f}\end{pmatrix} &= \begin{pmatrix}  \lambda \\ \beta  \lambda \end{pmatrix}, \\
     \begin{pmatrix} a \alpha +b \beta +\nu \\  \frac{b \beta  c}{a}+\alpha  c+\mu \end{pmatrix} &= \begin{pmatrix} -\frac{b \lambda }{a} \\ \lambda \end{pmatrix}.
 \end{aligned}
 \label{u3_eigv_prob_1}
\end{equation}
Solving the above equation we have 
 \begin{equation}
     \begin{aligned}
         b &= \frac{\left(-\lambda \pm \sqrt{\lambda ^2-4 a f}\right) (f \nu +g \lambda )+2 a f g}{2 f^2},\\ 
         c &= \frac{(\lambda -\mu ) \left(\lambda  \left(\lambda \pm \sqrt{\lambda ^2-4 a f}\right)-2 a f\right)}{2 (f \nu +g \lambda )}, \\ 
         h &= \frac{f^2 (\mu -\lambda )}{f \nu +g \lambda },\\ 
         \alpha &= \frac{g \lambda  (f-a)-\left(2 a f \nu \pm g (a-f) \sqrt{\lambda ^2-4 a f}\right)}{2 a f^2},\\ 
         \beta &= -\frac{\lambda  (a+f)\pm (a-f) \sqrt{\lambda ^2-4 a f}}{2 a f}.
     \end{aligned}
     \label{u3_sol}
 \end{equation} 
 This solution \eqref{u3_sol} is the same as the solution \eqref{eq:single-flat-gen-sol-1} with the band calculation method. By inserting corresponding values of $\mu,\nu$, we can get solutions for different cases. 
 
 When $a=-f-\lambda$, the solution \eqref{u3_sol} for the $U=3$ case reduces to solution \eqref{u2_sol} for the $U=2$ case.

\subsection{Inverse eigenvalue method for CLS approach}
\label{app:nh-inv-eig-method}

We can write the inverse eigenvalue problem for the $U=2$ case as
\begin{equation}
 \begin{aligned}
  & H_r \vert \psi_2 \rangle  = \left( \lambda - H_0 \right \vert  \psi_1 \rangle, \\
  & H_l \vert \psi_1 \rangle = \left( \lambda - H_0 \right) \vert \psi_2 \rangle, \\
  & H_r \vert \psi_1 \rangle = 0,\ \ H_l \vert \psi_2 \rangle  = 0,
 \end{aligned}
\end{equation}
with solution
\begin{equation}
 \begin{aligned}
  & H_r = \frac{\left( \lambda - H_0 \right) \vert \psi_1 \rangle \langle \psi_2 \vert}{\langle \psi_2 \vert Q_1 \vert \psi_2 \rangle} Q_1,\\
  & H_l = \frac{\left( \lambda - H_0 \right) \vert \psi_2 \rangle \langle \psi_1 \vert}{\langle \psi_1 \vert Q_2 \vert \psi_1 \rangle} Q_2.
 \end{aligned}
\end{equation} 
If we use the following ansatz
\begin{equation}
  \psi_1 = \vert \theta \rangle = \begin{pmatrix} \cos \theta \\ \sin \theta \end{pmatrix},\ \psi_2 = \alpha \vert \varphi \rangle = \alpha \begin{pmatrix} \cos \varphi \\ \sin \varphi \end{pmatrix} \; ,
\end{equation}
 the solution is given for parameters $\theta,\ \varphi,\ \alpha,\ \lambda$, and the FB energy is $E_{FB}=\lambda$.

\section{Solving completely flat bands} 

A completely flat band has $k$-independent real and imaginary parts. Our starting point solving this case is Eq. \eqref{eq:band-eqn}, which is 
\begin{equation}
\begin{aligned}
      x_k + y_k & =\mu+e^{ik}(a+d)+e^{-ik}(f+l),\\
      x_k y_k & =e^{2ik}\det H_{r}+e^{-2ik}\det H_{l} \\ & + (\nu f-            
      \mu h)e^{ik} + (\nu a-\mu c)e^{-ik} \\ & +df-cg-bh+al. \; 
\end{aligned}
\label{eq:band-eqn-app} 
\end{equation} 
 We assume one of $x_k, y_k$ or both are $k$ independent and solve \eqref{eq:band-eqn-app} to find the FB Hamiltonian.

\subsection{Both bands are completely flat}
\label{app:both-fb}

In this case, both $x_k, y_k$ in \eqref{eq:band-eqn-app} are $k$ independent. We assume $x_{k}=x$ and $y_{k}=y$, then \eqref{eq:band-eqn-app} becomes
\begin{equation}
  \begin{aligned}
      x+y & =\mu+e^{ik}(a+d)+e^{-ik}(f+l),\\
      xy & =e^{2ik}\det H_{r}+e^{-2ik}\det H_{l}+(\nu f-\mu h)e^{ik} \\ & +(\nu a - \mu c)e^{-ik} + df - cg - bh + al.
  \end{aligned}
\label{eq:bands-eqn-allf-xy}    
\end{equation} 
Requiring the polynomial of $e^{ik}$ to vanish
gives the following equations:
\begin{equation}
\begin{aligned}
&a+d=0,\\
&f+l=0,\\
&\det\,H_{r}=ad-bc=0,\\
&\det\,H_{l}=fl-hg=0,\\
&\nu f-\mu h=0,\\
&\nu a-\mu c=0,\\
&xy=df-cg-bh+al,\\
&x+y= \mu .
\end{aligned}
\label{eq:bands-eqn-allf-xy-eqs-app}
\end{equation} 
From these it follows $d=-a$, $l=-f$, and either $f=a=0$ or $h=c=0$,
or none.

Solving \eqref{eq:bands-eqn-allf-xy-eqs-app} for degenerate, non-degenerate, and abnormal cases separately, we can get $H_l,H_r$ that gives both bands as completely flat.

\paragraph{Degenerate case:}

Here we have $\mu=\nu=0$, and $bc+a^{2}=0$, $hg+f^{2}=0$, and $y=-x$. Therefore, $c=-a^{2}/b$, $h=-f^{2}/g$, and
\begin{gather}
H_{r}=\begin{pmatrix}a & b\\
-\frac{a^{2}}{b} & -a
\end{pmatrix},\label{eq:bands-eqn-allf-xy-Hlr-sol-zero}\\
H_{l}=\begin{pmatrix}f & g\\
-\frac{f^{2}}{g} & -f
\end{pmatrix},\\
x^{2}=2af-\frac{a^{2}g}{b}-\frac{bf^{2}}{g}.
\end{gather}

\paragraph{Non-degenerate $H_{0}$}

In this case, $\mu=1,\ \nu=0$ and $h=c=0$, $ad=fl=0$, and $y=1-x$.
Therefore, $a=d=f=l=0$ and either $b$ or $c$ are zero and either
$h$ or $g$ are zero, giving
\begin{gather}
H_{r}=\begin{pmatrix}0 & b\\
0 & 0
\end{pmatrix}\quad\begin{pmatrix}0 & 0\\
c & 0
\end{pmatrix},\label{eq:bands-allf-xy-Hlr-sol-ndeg}\\
H_{l}=\begin{pmatrix}0 & g\\
0 & 0
\end{pmatrix}\quad\begin{pmatrix}0 & 0\\
h & 0
\end{pmatrix},\\
x(1-x)=-cg-bh.
\end{gather}
There are four possible solutions of $bc=0,\ hg=0$. Interestingly,
two out of the four imply $x=0,\ y=1$ or $x=1,\ y=0$, that are very
likely $U=1$ class.

\paragraph{Abnormal $H_{0}$}

Here, $\mu=0,\ \nu=1$, $f=a=0$, and $y=-x$. Thus
$d=l=0$ and $bc=hg=0$, and
\begin{gather}
H_{r}=\begin{pmatrix}0 & b\\
0 & 0
\end{pmatrix}\quad\begin{pmatrix}0 & 0\\
c & 0
\end{pmatrix},\label{eq:bands-eqn-allf-xy-Hlr-sol-an}\\
H_{l}=\begin{pmatrix}0 & g\\
0 & 0
\end{pmatrix}\quad\begin{pmatrix}0 & 0\\
h & 0
\end{pmatrix},\\
x^{2}=cg+bh.
\end{gather} 

\subsection{One band is completely flat}
\label{app:one-band-fb}

Requiring either $x$ or $y$ in \eqref{eq:band-eqn-app} will yield $\det H_r =0,\ \det H_l = 0$, and therefore we can parameterize $H_r,H_l$ as
\begin{equation}
    H_r = \begin{pmatrix} a & b\\
                          c & \frac{b c}{a} \end{pmatrix}, 
    H_r = \begin{pmatrix} f & g\\
                          h & \frac{g h}{f} \end{pmatrix}, 
\end{equation} 
which makes $H_r,H_l$ to be singular by definition. We assume that only $x_{k}=x$ is flat, so then \eqref{eq:band-eqn-app} becomes
\begin{equation}
  \begin{aligned}
      x+y_k &=\frac{b c e^{i k}}{a}+a e^{i k}+e^{-i k} \left(\frac{g h}{f}+f\right)+\mu),\\
      x y_k & =\frac{(a g-b f) (a h-c f)}{a f} +e^{i k} (a \mu -c \nu )\\ &+e^{-i k} (f \mu -h \nu ).
  \end{aligned}
\label{eq:bands-eqn-allf-xy-1}    
\end{equation} 
This results in the following equations:
\begin{equation}
  \begin{aligned}
      y_{k} & =\frac{b c e^{i k}}{a}+a e^{i k}+e^{-i k} \left(\frac{g h}{f}+f\right)+\mu -x,\\
      y_{k} & =\frac{e^{i k} (a \mu -c \nu )+e^{-i k} (f \mu -h \nu )}{x}, \\
      & + \frac{(a g-b f) (a h-c f)}{a f x}.  \;
  \end{aligned}
\label{eq:bands-eqn-of-generic-2}    
\end{equation}
Consequently, equating powers of $e^{ik}$, we find 
\begin{equation}
\begin{aligned}
  & \frac{b c}{a}+a=\frac{a \mu -c \nu }{x},\\  & \frac{g h}{f}+f=\frac{f \mu -h \nu }{x}\\ & \mu -x=\frac{(a g-b f) (a h-c f)}{x (a f)}
\end{aligned}
\label{eq:bands-of-generic-3}
\end{equation}

Solving \eqref{eq:bands-of-generic-3} gives
\begin{equation}
 \begin{aligned}
 b & =\frac{\left(-x\pm \sqrt{x^2-4 a f}\right) (f \nu +g x)+2 a f g}{2 f^2},\\ 
 c &=\frac{(x-\mu ) \left(\left(x^2\pm x \sqrt{x^2-4 a f}\right)-2 a f\right)}{2 (f \nu +g x)},\\
  h & =\frac{f^2 (\mu -x)}{f \nu +g x}.
    \end{aligned} 
    \label{eq:single-flat-gen-sol-1}
\end{equation}  
Then the band structure is 
\begin{equation}
\begin{aligned}
    E_{FB}&= x,\\ 
    E_k &= \frac{2 (f \nu +g \mu ) \left(-(x-\mu )+\left(a e^{i k}+f e^{-i k}\right)\right)}{2 (f \nu +g x)} \\ 
    & \mp \frac{ e^{i k} \nu  (x-\mu ) \left(\sqrt{x^2-4 a f}-x\right)}{2 (f \nu +g x)} \; .
\end{aligned}
\label{eq:single-flat-gen-band}
\end{equation}

Putting corresponding values of $\mu,\nu$ into \eqref{eq:single-flat-gen-sol-1} and \eqref{eq:single-flat-gen-band}, we can get solutions for degenerate, non-degenerate, and abnormal cases with corresponding band structures.

\section{Solving partially flat bands}
\label{app:partial-fb} 

We assume that $x_k=x_1 + i x_2,\ y=y_1 + i y_2$, and $x_1,x_2,y_1,y_2 \in \mathcal{R}$, so then expanding \eqref{eq:band-eqn-app} yields
  \begin{equation}
    \begin{aligned} 
      x_k + y_k &= x_{1} +y_{1} +i\left( x_{2} + y_{2} \right) = \mu \\ &+\cos(k)(a+d+f+l)\\ &-   
      i\sin(k)(a+d-f-l),\\ 
      x_k y_k &=x_{1}y_{1}-
      x_{2}y_{2}+i\left(x_{2}y_{1}+x_{1}y_{2}\right) \\ &=al-bh-cg+df\\&+(a\mu-
      c\nu+f\mu-h\nu)\cos(k)\\ &+(\det H_{l}+\det H_{r})\cos(2k)\\ &+i(\left(-
      a\mu+c\nu+f\mu-h\nu\right)\sin(k)\\ & +(\det H_{l}-\det H_{r})\sin(2k)). \;
    \end{aligned}
    \label{eq:re-im-gen-eq}
  \end{equation} 
Equating real and imaginary parts of \eqref{eq:re-im-gen-eq} gives
   \begin{eqnarray}
    x_{1}+y_{1}&=&\mu+(a+d+f+l)\cos(k), \label{eq:complex-exp-band-eq-1}\\
    x_{2}+y_{2}&=&-(a+d-f-
    l)\sin(k),\label{eq:complex-exp-band-eq-1-1}\\
    x_{1}y_{1}-x_{2}y_{2}&=&al-bh-cg+df \notag \\   
    && +(a\mu-c\nu+f\mu-h\nu)\cos(k) \label{eq:complex-exp-band-eq-1-2} \\ 
    &&+(\det H_{l}+\det H_{r})\cos(2k),  \notag \\
    x_{2}y_{1}+x_{1}y_{2}&=&\left(-a\mu+c\nu+f\mu-
    h\nu\right)\sin(k) \label{eq:complex-exp-band-eq-2} \\ 
    &&  +(\det H_{l}-\det H_{r})\sin(2k).  \notag \; 
  \end{eqnarray} 
Solving Eqs. (\ref{eq:complex-exp-band-eq-1}--\ref{eq:complex-exp-band-eq-2}) under the condition that some of $x_1,\ x_2,\ y_1,\ y_2$ are $k$ independent, we get the solution for partially flat bands.

\subsection{Real parts of both bands are flat}
\label{app:both-re-fb}

In this case, $x_1,\ y_1$ are $k$ independent, so Eqs. (\ref{eq:complex-exp-band-eq-1}--\ref{eq:complex-exp-band-eq-1-1}) give
\begin{equation}
    \begin{aligned}
        y_1 &= \mu- x_1 + (a+d+f+l)\cos(k),\\
        y_2 &= -x_2-(a+d-f-l)\sin(k).\; 
    \end{aligned}
    \label{eq:both-re-fb-y1-y2}
\end{equation}
We put this equation into (\ref{eq:complex-exp-band-eq-1-2}--\ref{eq:complex-exp-band-eq-2}) and solve for $x_2$. Then, requiring $x_1$ to be $k$ independent by setting the coefficients of $k$-independent terms to zero, we get the following equations: 
\begin{equation}
    \begin{aligned}
       a+d+f+l&= 0, \\ 
       a  -c +f  -h & =0, \\ 
       a d+f l -b c-g h & =\frac{\left(X + Y \right) \left( X + Z \right) }{2 (\mu -2 x_1)^2},\\
       -a d+b c+f l-g h & =0,  \; 
    \end{aligned}
    \label{eq:both_real_fb}
\end{equation}
where $X=x_1 (-a-d+f+l),\ Y=a \mu -c \nu -f \mu +h \nu,\ Z=c \nu +d \mu -h \nu -l \mu$. 

Solving \eqref{eq:both_real_fb} for degenerate, non-degenerate, and abnormal cases separately, we get the $H_r,H_l$ that gives the real parts of both bands as flat.  

\paragraph{Degenerate case} 

In this case $\mu=\nu=0$ and \eqref{eq:both_real_fb} becomes
\begin{equation}
    \begin{aligned}
        a+d+f+l &=0, \\
        \frac{1}{8} (a+d-f-l)^2+b c+g h &=a d+f l, \\
        -a d+b c+f l-g h &=0,\\
        \frac{1}{8} (a+d-f-l)^2+a l+d f+x_1^2 &=b h+c g.
    \end{aligned}
    \label{eq:both-real-degen-eqs}
\end{equation}
If we consider $x_1$ as a parameter, then the solution for \eqref{eq:both-real-degen-eqs} is 
  \begin{align*}
      H_{0}&=\begin{pmatrix}0 & 0\\
                        0 & 0
                \end{pmatrix},\\
         H_{l}&=\begin{pmatrix} -f & \frac{\pm 2 \sqrt{A} + B}{(a-d)^2}\\
                        -\frac{\pm 2 \sqrt{A} + B}{4 b^2} & -a-d-f
               \end{pmatrix},\\
         H_{r}&=\begin{pmatrix} a & b\\
                -\frac{(a-d)^2}{4 b} & d
               \end{pmatrix},\; 
  \end{align*}
where $A=b^2 x_1^2 \left((d-a) (a+d+2                f)+x_1^2\right),\ B=b (a-d) (a+d+2 f)-2 b x_1^2$. 
Then the band structure is 
  \begin{align}
     x_k = -x_1+i (a+d) \sin (k),\\
     y_k = x_1+i (a+d) \sin (k).
  \end{align}

On the other hand, if we consider $x_1$ as a function of $H_r,H_l$, then the solution for \eqref{eq:both-real-degen-eqs} is
  \begin{align}
      H_{0}=\begin{pmatrix}0 & 0\\
                        0 & 0
                \end{pmatrix},\\
      H_l = \begin{pmatrix} f & -\frac{(a+d+2 f)^2}{4 h} \\
                            h & -a-d-f 
            \end{pmatrix},\\
      H_r = \begin{pmatrix}a & b \\
        -\frac{(a-d)^2}{4 b} & d \\
            \end{pmatrix},\\
     x_1 = \pm \frac{(a-d) (a+d+2 f)+4 b h}{4 \sqrt{b} \sqrt{h}},\; 
  \end{align}
and the band structure is 
  \begin{align}
      x_k &= -\frac{\sqrt{b h ((a-d) (a+d+2 f)+4 b h)^2}}{4 b h} \notag \\ & +\frac{4 i b h (a+d) \sin (k)}{4 b h} ,\\
      y_k &= \frac{\sqrt{b h ((a-d) (a+d+2 f)+4 b h)^2}}{4 b h} \notag \\ & + \frac{4 i b h (a+d) \sin (k)}{4 b h}.
  \end{align}

\paragraph{Non-degenerate case} 

In this case $\mu=1,\ \nu=0$ and \eqref{eq:both_real_fb} becomes 
        \begin{align}
        & a+d+f+l= 0, \\ 
        & a + f =0, \\ 
        & \frac{\left(X+a-f\right) \left(X+d-l\right)}{2 \left(1-2 x_1\right){}^2}+b c+g h=a d+f l\\ 
        & -a d+b c+f l-g h = 0,\\
        & \frac{ x_1\left(1- x_1\right)(a-d-f+l)^2}{2 \left(1-2 x_1\right)^2}=-(b h+c g+x_1) \\ 
        & +a l+d f+x_1^2 \notag \; ,
      \end{align} 
where $X=x_1 (-a-d+f+l)$. Then the solution is 
  \begin{equation}
      \begin{aligned}
         H_{0}&=\begin{pmatrix}0 & 0\\
                        0 & 1
                \end{pmatrix},\\
         H_{r}&=\begin{pmatrix} -f & -\frac{\left(x_1-1\right) x_1 (f-l)^2}{c  
                           \left(1-2 x_1\right){}^2}\\
                        c & -l
               \end{pmatrix},\\
         H_{l}&=\begin{pmatrix}f & \frac{\left(x_1-1\right) x_1 (C+D)}{2 c \left(1-2 x_1\right){}^2}\\
                        \frac{c (C-D)}{2 (f-l)^2} & l
               \end{pmatrix},
      \end{aligned}
  \end{equation}
where $C=\sqrt{4(f-l)^2+(1-2x_1)^2} \left(1-2 x_1\right),\ D=2 (f-l)^2 + (1 - 2x_1)^2$.

Then the band structure is 
\begin{equation}
    \begin{aligned}
        x_k & = x_1-\frac{2 i \sin (k) \left(x_1 (f+l)-f\right)}{2 x_1-1},\\
        y_k & = -\frac{2 i \sin (k) \left(x_1 (f+l)-l\right)}{2 x_1-1}-x_1+1 \; .
    \end{aligned}
\end{equation}
Obviously, the real parts $x_1,1-x_1$ are $k$ independent, i.e flat.

\paragraph{Abnormal case}

In this case $\mu=0,\ \nu=1$ and  \eqref{eq:both_real_fb} becomes     
\begin{align}
        & a+d+f+l =0, \\
        & -c - h = 0, \\
        & \frac{\left(X+c-h\right) \left(X-c+h\right)}{8 x_1^2}  = a d + f l - b c - g h\\
        & -a d+b c+f l-g h  = 0, \\
        & b h+c g+\frac{(c-h)^2}{8 x_1^2} =\frac{1}{8} (a+d-f-l)^2 \\ & +a l+d f+x_1^2 \notag \; ,
    \end{align}  
where $X=x_1 (-a-d+f+l)$. Then the solution is 
  \begin{equation}
      \begin{aligned}
         H_{0}&=\begin{pmatrix}0 & 1\\
                        0 & 0
                \end{pmatrix},\\
         H_{l}&=\begin{pmatrix} f & \frac{h^2-x_1^2 \left(d+f-x_1\right){}^2}{4 h x_1^2}\\
                        h & x_1-d
               \end{pmatrix},\\
         H_{r}&=\begin{pmatrix}-f-x_1 & \frac{x_1^2 \left(d+f+x_1\right){}^2-h^2}{4 h x_1^2}\\
                        -h & d
               \end{pmatrix}.\;
      \end{aligned}
  \end{equation} 
The band structure is 
\begin{equation}
    \begin{aligned}
        x_k & = -x_1+i \sin (k) \left(d-f+\frac{h}{x_1}-x_1\right),\\
        y_k & = x_1-i \sin (k) \frac{ \left(x_1 \left(-d+f+x_1\right)+h\right)}{x_1} \; .
    \end{aligned}
\end{equation} 

\subsection{Real part of one band is flat} 

In this case, either $x_1$ or $y_1$ is $k$ independent in Eqs. \eqref{eq:complex-exp-band-eq-1}--\eqref{eq:complex-exp-band-eq-2}. If we assume $x_1$ is $k$ independent and solve equations \eqref{eq:complex-exp-band-eq-1}--\eqref{eq:complex-exp-band-eq-2} for $x_1$, then according to \eqref{eq:complex-exp-band-eq-1} we have 
\begin{equation}
    \begin{aligned}
        x_1 &= \mu, \\
        y_1 &= (a+d+f+l)Cos (k).
    \end{aligned}
\end{equation}
(In a similar way we can assume $y_1$ is $k$ independent, and then $y_1=\mu,\ x_1=(a+d+f+l)Cos (k)$.) Equations (\ref{eq:complex-exp-band-eq-1}--\ref{eq:complex-exp-band-eq-2}) become 
\begin{equation}
    \begin{aligned}
        x_2+y_2 &= \sin (k) (a+d-f-l), \\ 
        x_2 y_2 &= -a l+b h+\cos (k) (\nu  (c+h)+d \mu +l \mu ) \\ 
        & +c g-d f-(\det H_l + \det H_r) \cos (2 k), \\
        \mu  y_2 &= \sin (k) (a \mu -c \nu -f \mu +h \nu ) \\ 
        & -x_2 \cos (k) (a+d+f+l), \\ 
        & +(\det H_r - \det H_l) \sin (2 k).
    \end{aligned}
    \label{eq:band-eq-single-re-flat}
\end{equation}
For convenience, we make the following replacement of variables: 
\begin{equation}
    \begin{aligned}
        \det H_r &= a d-b c,\\
        \det H_l &= f l-g h,\\
        V &= a+d+f+l,\\
        X &= a+d-f-l,\\
        Y &= a \mu -c \nu +f \mu -h \nu,\\
        Z &= a \mu -c \nu -f \mu +h \nu, \\
        W &= a l-b h-c g+d f.
    \end{aligned}
    \label{eq:rep-var}
\end{equation}
Equation \eqref{eq:band-eq-single-re-flat} then becomes 
    \begin{align}
        x_1+y_1 &=V \cos (k)+\mu, \label{eq:single-re-fb-1}\\ 
        x_2+y_2 &=X \sin (k), \label{eq:single-re-fb-2}\\ 
        x_1 y_1-x_2 y_2 &= (\det H_l+\det H_r) \cos (2 k) \label{eq:single-re-fb-3 }\\ 
        & + Y \cos (k) +W, \notag\\
        x_2 y_1+x_1 y_2 &=-(\det H_l-\det H_r)\sin (2k). \label{eq:single-re-fb-4} \\ 
        & + Z \sin (k) \notag
    \end{align}
We can solve \eqref{eq:single-re-fb-4} for three variables: $x_2,\ y_2$, and a third variable that is one of $X,Y,Z,U,V,W$. We require the third variable to be $k$ independent by zeroing the coefficients of all $k$-dependent terms. This gives a set of equations, and solving them gives the solution for this case (the real part of one band is flat). The following are our results.

\paragraph{Degenerate case:} In this case $\mu=0,\ \nu=0$ and the solution is
\begin{equation}
    \begin{aligned}
        H_r & = \left(
        \begin{array}{cc}
            a & b \\
            \frac{(a+f) (b (d+f)+(d-a) g)}{(b+g)^2} & d \\
        \end{array}
        \right), \\
        H_l &= \left(
            \begin{array}{cc}
            f & g \\
            \frac{(a+f) (g (a+l)+b (l-f))}{(b+g)^2} & l \\
            \end{array}
        \right). \; 
    \end{aligned}
    \label{eq:single-re-fb-degen-sol}
\end{equation} 
The band structure is
\begin{equation}
    \begin{aligned}
        E_1 &= -\frac{2 i \sin (k) (b f-a g)}{b+g},\\ 
        E_2 &= \frac{\left(e^{ i k} (b (a+d+f)+d g)\right)}{b+g}, \\ 
        & + \frac{\left(e^{-i k}(a g+b l+f g+g l)\right)}{b+g}.
    \end{aligned}
\end{equation} 

\paragraph{Abnormal case:} In this case $\mu=0,\ \nu=1$ and the solution is 
\begin{equation}
    H_r = \left(
\begin{array}{cc}
 a & b \\
 0 & d \\
\end{array}
\right), \quad H_l = \left(
\begin{array}{cc}
 -a & g \\
 0 & l \\
\end{array}
\right).
\label{eq:single-re-fb-abn-sol}
\end{equation} 
The band structure is 
\begin{equation}
    \begin{aligned}
        E_1 &= 2 i a \sin (k),\\ 
        E_2 &= d e^{i k}+e^{-i k} l+1.
    \end{aligned}
\end{equation}
Note that when $a=-f$, the solution \eqref{eq:single-re-fb-degen-sol} for the degenerate case reduces to the abnormal case \eqref{eq:single-re-fb-abn-sol}. 

\paragraph{Non-degenerate case:} In this case $\mu=1,\ \nu=0$ and the solution is
\begin{equation}
    H_r = \left(
\begin{array}{cc}
 a & 0 \\
 c & d \\
\end{array}
\right), \quad H_l = \left(
\begin{array}{cc}
 f & 0 \\
 h & -d \\
\end{array}
\right).
\end{equation} 
The band structure is
\begin{equation}
    \begin{aligned}
        E_1 &= 1+2 i d \sin (k),\\ 
        E_2 &= e^{-i k} \left(f+a e^{2 i k}\right).
    \end{aligned}
\end{equation} 

\subsection{Imaginary parts of both bands are flat}
\label{app:both-im-fb}

In this case, $x_2,\ y_2$ are $k$ independent in Eqs. \eqref{eq:complex-exp-band-eq-1}--\eqref{eq:complex-exp-band-eq-2}. Using the same procedure as in Section \ref{app:both-re-fb}, solving \eqref{eq:complex-exp-band-eq-1}--\eqref{eq:complex-exp-band-eq-2} for $y_2$ and requiring $y_1$ to be $k$ independent, we obtain 
\begin{equation}
    \begin{aligned}
      a+d & = f+l,\\
      b c+f l &= a d+g h,\\
      16 x_2^2 (b c-a d) &= -(a \mu +\nu  (h-c)-f \mu )^2  \\ 
      & - x_2^2(a+d+f+l)^2,\\
      \mu  x_2 (a+d+f+l) &= 2 x_2 (a \mu -\nu  (c+h)+f \mu ),\\
      8 x_2^4 + 2 \mu^2 x_2^2 &= 8 x_2^2 (a l-b h-c g+d f) \\ 
      & - x_2^2 (a+d+f+l)^2, \\ 
      & +(a \mu +\nu  (h-c)-f \mu )^2.
    \end{aligned}
    \label{eq:both_im_fb_cond}
\end{equation} 

\paragraph{Degenerate case} 

In this case $\mu=\nu=0$ and \eqref{eq:both_im_fb_cond} becomes 
\begin{equation}
    \begin{aligned}
       a+d &=f+l,\\
       b c+f l &=a d+g h,\\
       16 x_2^2 (b c-a d) &= -x_2^2(a+d+f+l)^2,\\
        x_2^2(a+d+f+l)^2 &= 8 x_2^2(a l-b h-c g+d f)\\ & -8 x_2^4.
    \end{aligned}
\end{equation}
The solution is 
  \begin{equation}
      \begin{aligned}
         H_{0}&=\begin{pmatrix}0 & 0\\
                        0 & 0
                \end{pmatrix},\\
         H_{l}&=\begin{pmatrix} f & \frac{b \left(F+2 x_2^2\right)}{(a-d)^2}\\
                        \frac{F-2 x_2^2}{4 b} & a+d-f
               \end{pmatrix},\\
         H_{r}&=\begin{pmatrix} a & b\\
                        -\frac{(a-d)^2}{4 b} & d
               \end{pmatrix}\; ,
      \end{aligned}
  \end{equation}
where $F=-2 \sqrt{x_2^2 (d-a) (a+d-2 f)+x_2^4}+(d-a) (a+d-2 f)$. Then the band structure is
\begin{equation}
    \begin{aligned}
       x_k &= \frac{ 2 b (a-d)^2 (a+d)\cos k }{2 b (a-d)^2}\\
       &- \frac{ 2 \sqrt{-b^2 x_2^2 (a-d)^4}}{2 b (a-d)^2},\\
       y_k &= \frac{2 b (a+d) (a-d)^2 \cos k}{2 b (a-d)^2} \\
       &+ \frac{2 \sqrt{-b^2 x_2^2 (a-d)^4}}{2 b (a-d)^2}.
    \end{aligned}
\end{equation} 

\paragraph{Non-degenerate case:} In this case $\mu=1,\ \nu=0$ and \eqref{eq:both_im_fb_cond} becomes 
\begin{equation}
    \begin{aligned}
       & a+d=f+l,\\
       & b c+f l=a d+g h,\\
       & x_2^2 \left(16 (b c-a d)+(a+d+f+l)^2\right)+(a-f)^2=0,\\
       & x_2 (a-d+f-l)=0,\\
       & x_2^2 \left(8 (a l-b h-c g+d f)-(a+d+f+l)^2-2\right)+(a-f)^2=8 x_2^4.
    \end{aligned}
\end{equation}
The solution is 
  \begin{equation}
      \begin{aligned}
         H_{0}&=\begin{pmatrix}0 & 0\\
                        0 & 1
                \end{pmatrix},\\
         H_{l}&=\begin{pmatrix} f & \frac{-G}{(a-f)^2}\\
                        \frac{\left(4 x_2^2+1\right) (a-f)^4}{16 x_2^2 G} & a
               \end{pmatrix},\\
         H_{r}&=\begin{pmatrix} a & b\\
                        -\frac{\left(4 x_2^2+1\right) (a-f)^2}{16 b x_2^2} & f
               \end{pmatrix},\; 
      \end{aligned}
  \end{equation} 
where $G=2 \sqrt{b^2 x_2^2 \left(x_2^2-(a-f)^2\right)}+b (a-f)^2-2 b x_2^2$. Then the band structure is 
\begin{equation}
    \begin{aligned}
        x_k & =-\frac{K}{4 b G x_2^2 (a-f)^2}+(a+f) \cos (k)+\frac{1}{2},\\
        y_k &= \frac{K}{4 b G x_2^2 (a-f)^2}+(a+f) \cos (k)+\frac{1}{2},
    \end{aligned}
\end{equation}
where 
\begin{equation*}
    \begin{aligned}
        K &= \Bigg[b G x_2^2 (a-f)^4 \Bigg(4 x_2^2 \Big(b^2 (a-f)^4-4 i b G (a-f) \sin (k) \\ 
        & +b G \left(1-2 (a-f)^2\right)+G^2\Big)+b^2 (a-f)^4 \\ 
        &-2 b G (a-f)^2 \cos (2 k)+G^2\Bigg)\Bigg]^{\sfrac{1}{2}}.
    \end{aligned}
\end{equation*} 

\paragraph{Abnormal case}

In this case $\mu=0,\ \nu=1$ and \eqref{eq:both_im_fb_cond} becomes 
\begin{equation}
    \begin{aligned}
       a+d &=f+l,\\
       b c+f l &=a d+g h,\\
       (c-h)^2 &=-16 x_2^2  (b c-a d)\\ 
       &-x_2^2 (a+d+f+l)^2,\\
       x_2 (c+h) &=0,\\ 
       (c-h)^2 - 8 x_2^4 &= x_2^2 (a+d+f+l)^2 \\ 
       & - 8 x_2^2 (a l-b h-c g+d f).
    \end{aligned}
\end{equation}
The solution is 
  \begin{equation}
      \begin{aligned}
         H_{0}&=\begin{pmatrix}0 & 1\\
                        0 & 0
                \end{pmatrix},\\
         H_{l}&=\begin{pmatrix} d-i x_2 & \frac{c}{4 x_2^2}-\frac{x_2^2}{c}\\
                        -c & d+i x_2
               \end{pmatrix},\\
         H_{r}&=\begin{pmatrix} d & -\frac{c}{4 x_2^2}\\
                                c & d
               \end{pmatrix}\; .
      \end{aligned}
  \end{equation}
The band structure is
\begin{equation}
    \begin{aligned}
       x_k &= 2 d \cos k -\frac{ \sqrt{c^2 x_2^2 \left(c \sin (k)+i x_2^2\right){}^2}}{c x_2^2}, \\
       y_k &= 2 d \cos k + \frac{ \sqrt{c^2 x_2^2 \left(c \sin (k)+i x_2^2\right){}^2}}{c x_2^2}.
    \end{aligned}
\end{equation} 

\subsection{Imaginary part of one band is flat}

Using the same method as the case in which the real part of one band is flat, we can also solve the case in which the imaginary part of one band is flat. To do so, we require $x_2$ ($y_2$) to be $k$ independent in Eqs. \eqref{eq:complex-exp-band-eq-1}--\eqref{eq:complex-exp-band-eq-2}. The only possibility is $x_2=0,\ y_2=(a+d-f-l)$, and then following the same steps as the real part of one band case we get the following results. 
\paragraph{Degenerate case:}
\begin{equation}
    \begin{aligned}
        H_r &= \begin{pmatrix}a & b\\
                        \frac{(a-f) (g (a-d)+b (d-f))}{(b-g)^2} & d
                \end{pmatrix}, \\ 
        H_l &= \begin{pmatrix}f & g\\
                       \frac{(a-f) (g (a-l)+b (l-f))}{(b-g)^2} & l
                \end{pmatrix}.
    \end{aligned}
\end{equation} 
The band structure is 
\begin{equation}
  \begin{aligned}
      E_1 &= \frac{2 \cos (k) (b f-a g)}{b-g}, \\ 
      E_2 &= \frac{ e^{ i k} (b (a+d-f)-d g)}{b-g},\\
      &+\frac{ e^{-i k} (a g+b l-f g-g l)}{b-g}.
  \end{aligned}
\end{equation} 

\paragraph{Abnormal case:}
\begin{equation}
    H_r = \left(
\begin{array}{cc}
 a & b \\
 0 & d \\
\end{array}
\right), \quad 
    H_l = \left(
\begin{array}{cc}
 a & g \\
 0 & l \\
\end{array}
\right).
\end{equation} 
The band structure is 
\begin{equation}
  \begin{aligned}
      E_1 &= 2 a \cos (k), \\ 
      E_2 &= e^{-i k} \left(l+d e^{2 i k}\right).
  \end{aligned}
\end{equation} 

\paragraph{Non-degenerate case}
In this case, the solution is the same as the abnormal case, and the band structure is 
\begin{equation}
  \begin{aligned}
      E_1 &= 2 a \cos (k), \\
      E_2 &= d e^{i k}+e^{-i k} l+1.
  \end{aligned}
\end{equation}

\subsection{Modulus of a band is flat}
\label{app:mod-fb}

Suppose $x=r e^{i\theta_{k}}$ in Eq. \eqref{eq:bands-eqn-allf-xy}, then 
\begin{equation}
  \begin{aligned}
      y	& =	e^{ik}(a+d)+e^{-ik}(f+l)+\mu-re^{i\theta_{k}}, \\
      y	& =	\frac{1}{r}e^{-i\theta_{k}}\Big(e^{ik}(a\mu-c\nu)+al-bh-cg+df \\ & +\det H_{l}e^{-2ik}+\det H_{r}e^{2ik}+e^{-ik}(f\mu-h\nu)\Big).
  \end{aligned}
\end{equation} 

If we assume $\theta_k = m k ,\ m \ne 0$, then 
\begin{equation}
  \begin{aligned}
     & e^{i\left(m-1\right)k}(f+l) +e^{i\left(m+1\right)k}(a+d)  +\mu e^{imk} -re^{2imk} \\ 
     &=\frac{1}{r} \Big(e^{-ik}(f\mu-h\nu) \\ & +al-bh-cg+df+e^{ik}(a\mu-c\nu) \\ & + \det H_{l}e^{-2ik} +\det H_{r}e^{2ik} \Big).
  \end{aligned}
  \label{eq:generic-m}
\end{equation} 
By equating the same powers of $e^{i k}$ in \eqref{eq:generic-m}, we can show that, when $m>1$ or $m<-1$, the only possible solution for \eqref{eq:generic-m} is $r=0$. Therefore, only when $m=\pm 1$ do we have a non-trivial solution. 
\\

\subsubsection{m=1 case} 

In this case, equating the coefficients of the same powers of $e^{ik}$ on the two sides of \eqref{eq:generic-m} we get 
\begin{equation}
  \begin{aligned}
        f+l & =\frac{1}{r}\left(al-bh-cg+df\right),\\
        \mu & =\frac{1}{r}\left(a\mu-c\nu\right),\\
        a+d-r & =\frac{1}{r}\det H_{r},\\
        \det H_{l} & =0,\\
        f\mu-h\nu & =0.
  \end{aligned}
  \label{eq:mod-fb-m1-eqs}
\end{equation} 

\paragraph{Degenerate case:} In this case $\mu=0,\ \nu=0$ and \eqref{eq:mod-fb-m1-eqs} gives the following solution: 
\begin{equation}
   \begin{aligned}
     H_{r} & =\left(\begin{array}{cc}
         a & b\\
        -\frac{(a-r)(r-d)}{b} & d
       \end{array}\right),\\
    H_{l} & =\left(\begin{array}{cc}
             f & \frac{bf}{a-r}\\
           \frac{l(a-r)}{b} & l
           \end{array}\right). \;
\end{aligned}  
\end{equation}
 Then the band structure is 
 \begin{equation}
   \begin{aligned}
     E_{1} & =e^{ik}r,\\
     E_{2} & =e^{-ik}\left(ae^{2ik}+de^{2ik}+f-e^{2ik}r+l\right).
   \end{aligned}
\end{equation} 

\paragraph{Non-degenerate case:} In this case $\nu=0,\ \mu=1$ and \eqref{eq:mod-fb-m1-eqs} gives
\begin{equation}
  \begin{aligned}H_{r} & =\left(\begin{array}{cc}
r & 0\\
c & d
\end{array}\right),\\
H_{l} & =\left(\begin{array}{cc}
0 & 0\\
h & l
\end{array}\right) \; ,
\end{aligned}  
\end{equation}
with band structure 
\begin{equation}
  \begin{aligned}
    E_{1} & =e^{ik}r,\\
    E_{2} & =de^{ik}+e^{-ik}l+1.
  \end{aligned}  
\end{equation} 

\paragraph{Abnormal case:} In this case $\mu=0,\ \nu=1$ and \eqref{eq:mod-fb-m1-eqs} gives
\begin{equation}
    \begin{aligned}H_{r} & =\left(\begin{array}{cc}
              r & b\\
             0 & d
            \end{array}\right),\\
         H_{l} & =\left(\begin{array}{cc}
             0 & g\\
            0 & l
            \end{array}\right),
     \end{aligned}
\end{equation} 
 with band structure
\begin{equation}
  \begin{aligned}
    E_{1} & =e^{ik}r,\\
    E_{2} & =de^{ik}+e^{-ik}l+1.
  \end{aligned}  
\end{equation} 

\subsubsection{ $m=-1$ case} 

Similar to the $m=1$ case, by equating the same powers of $e^{i k}$ in \eqref{eq:generic-m}, we have 
\begin{equation}
   \begin{aligned}
       f+l-r & =\frac{1}{r}\det H_{l},\\
       \frac{1}{r}\left(f\mu-h\nu\right) & =\mu,\\
       a+d & =\frac{1}{r}\left(al-bh-cg+df\right),\\
       \det H_{r} & =0,\\
       a\mu-c\nu & =0.
   \end{aligned} 
\end{equation} 

\paragraph{Degenerate case:} In this case $\mu=0,\ \nu=0$ and \eqref{eq:generic-m} gives
\begin{equation}
   \begin{aligned}
        H_{r} & =\left(\begin{array}{cc}
        a & b\\
        \frac{ad}{b} & d
        \end{array}\right),\\
        H_{l} & =\left(\begin{array}{cc}
        f & \frac{b(f-r)}{a}\\
        \frac{a(l-r)}{b} & l
        \end{array}\right).
   \end{aligned}
\end{equation}
 Then the band structure is 
\begin{equation}
  \begin{aligned}
    E_{1} & =e^{-ik}r,\\
    E_{2} & =e^{-ik}\left(ae^{2ik}+de^{2ik}+f+l-r\right).
   \end{aligned}  
\end{equation}

\paragraph{Non-degenerate case:} In this case $\nu=0,\ \mu=1$ and \eqref{eq:generic-m} gives the following solution:
\begin{equation}
    \begin{aligned}
       H_{r} & =\left(\begin{array}{cc}
       0 & 0\\
       c & d
       \end{array}\right),\\
       H_{l} & =\left(\begin{array}{cc}
       r & 0\\
       h & l
       \end{array}\right).
   \end{aligned}
\end{equation}
 Then the band structure is 
\begin{equation}
    \begin{aligned}
       E_{1} & =e^{-ik}r,\\
       E_{2} & =de^{ik}+e^{-ik}l+1.
    \end{aligned}
\end{equation} 

\paragraph{Abnormal case:} In this case $\nu=1,\ \mu=0$ and \eqref{eq:generic-m} gives
\begin{equation}
    \begin{aligned}
       H_{r} & =\left(\begin{array}{cc}
       0 & b\\
       0 & d
       \end{array}\right),\\
       H_{l} & =\left(\begin{array}{cc}
       r & g\\
       0 & l
       \end{array}\right).
   \end{aligned}
\end{equation}
 Then the band structure is 
\begin{equation}
    \begin{aligned}
       E_{1} & =e^{-ik}r,\\
       E_{2} & =e^{-ik}\left(l+de^{2ik}\right).
    \end{aligned}
\end{equation}



\thispagestyle{empty}

\end{document}